\numberwithin{equation}{section}
\DeclareMathOperator{\eig}{eig}
\DeclareMathOperator{\Imaginary}{Im}
\DeclareMathOperator{\Real}{Re}
\DeclareMathOperator{\dr}{dr}
\DeclareMathOperator{\eff}{eff}
\DeclareMathOperator{\Id}{Id}
\DeclareMathOperator{\Mat}{Mat}
\DeclareMathOperator{\SymMat}{SymMat}
\DeclareMathOperator{\dist}{dist}
\DeclareMathOperator{\supp}{supp}
\DeclareMathOperator{\Infl}{Infl}
\DeclareMathOperator{\sgn}{sgn}
\newtheorem{thm}{Theorem}[section]
\newtheorem{prop}[thm]{Proposition}
\newtheorem{lem}[thm]{Lemma}
\newtheorem{cor}[thm]{Corollary}
\theoremstyle{remark}
\newtheorem{rem}[thm]{Remark}
\theoremstyle{definition}
\newtheorem{definition}[thm]{Definition}
\newtheorem{example}[thm]{Example}
\newtheorem{assumption}[thm]{Assumption}
\crefname{thm}{Theorem}{Theorems}
\crefname{prop}{Proposition}{Propositions}
\crefname{lem}{Lemma}{Lemmas}
\crefname{cor}{Corollary}{Corollaries}
\crefname{conj}{Conjecture}{Conjectures}
\crefname{rem}{Remark}{Remarks}
\crefname{assumption}{Assumption}{Assumptions}
\crefname{definition}{Definition}{Definitions}
\crefname{que}{Question}{Questions}
\crefname{example}{Example}{Examples}
\title{Effective Velocities in the Toda Lattice}
\author{Amol Aggarwal} 
\begin{document}
	
	\begin{abstract}

		In this paper we consider the Toda lattice $(\bm{p}(t); \bm{q}(t))$ at thermal equilibrium, meaning that its variables $(p_i)$ and $(e^{q_i-q_{i+1}})$ are independent Gaussian and Gamma random variables, respectively. This model can be thought of a dense collection of many ``quasiparticles'' that act as solitons. We establish a law of large numbers for the trajectory of these quasiparticles, showing that they travel with approximately constant velocities, which are explicit. Our proof is based on a direct analysis of the asymptotic scattering relation, an equation (proven in \cite{LC}) that approximately governs the dynamics of quasiparticles locations. This makes use of a regularization argument that essentially linearizes this relation, together with concentration estimates for the Toda lattice's (random) Lax matrix. 
		
	\end{abstract}

	\maketitle 
	
	\tableofcontents

	\section{Introduction}  
	\label{Chain} 
	
	\subsection{Preface} 
	
	A basic tenet of integrable systems (essentially dating back Zakharov's study of dilute soliton gases \cite{KES}) is that, under natural random initial data, they can be thought of as dense collections of objects called ``quasiparticles'' that behave as \emph{solitons}; the latter are (loosely speaking) localized, wave-like functions that retain their shape as they propagate in time. As such, each quasiparticle possesses a time-independent spectral parameter $\lambda_j$ and a time-dependent location $Q_j (t)$. Under this interpretation, relevant quantities describing the system (such as local charges and currents) should be approximable by simple functions of the quasiparticle data.
	
	Therefore, once this putative quasiparticle description is made sense of, the immediate question that arises for one interested in the long-time behavior of the integrable system is to analyze the limiting trajectories of the associated quasiparticles. It has been broadly predicted in the physics literature that, under invariant initial data, the quasiparticle of spectral parameter $\lambda_j$ should travel with an approximately constant \emph{effective velocity} $v_{\eff} (\lambda_j)$, which satisfies an equation of the form 
	\begin{flalign}
		\label{fvlambda} 
		f(\lambda_j) = v_{\eff} (\lambda_j) + \displaystyle\int_{-\infty}^{\infty}  \big( v_{\eff} (\lambda_j) - v_{\eff} (\lambda) \big) \mathfrak{s} (\lambda_j, \lambda) \varrho (\lambda) d \lambda.
	\end{flalign}
	
	\noindent Here, $\varrho$ is a density function prescribing the relative proportions of quasiparticle spectral parameters in the system, and $\mathfrak{s}$ and $f$ are model-dependent parameters (namely, the scattering shift and bare soliton velocity, respectively). This was originally posited for the Korteweg--de Vries (KdV) equation, first when the initial data is dilute in \cite{KES} and later when it is dense by El in \cite{TLE}. More recently, it was proposed broadly for (classical and quantum) integrable systems, starting in the independent works of Bertini--Collura--de Nardis--Fagotti \cite{TEP} and Castro-Alvaredo--Doyon--Yoshimura \cite{EHSOE}. 
	
	The program of mathematically making sense of this quasiparticle description, and justifying these asymptotic predictions, had previously only been realized for two integrable systems. The first is the hard rods (one-dimensional hard spheres) model; the second is the box-ball system, a cellular automaton introduced by Takahashi--Satsuma \cite{SCA}. In the former, the quasiparticles are the rods; in the latter, they are more hidden but can be recovered through an elementary, combinatorial algorithm \cite{SCA}. For both systems, the linear quasiparticle trajectory\footnote{One might also ask about fluctuations for the quasiparticle trajectories. The physics works of De Nardis--Bernard--Doyon \cite{HDIS} and Gopalakrishnan--Huse--Khemani--Vasseur \cite{HOSQ} predict that they should be diffusive and scale to a Brownian motion. This was proven for the hard rods model by Boldrighini--Suhov \cite{OHRC}, Presutti--Wick \cite{MSF}, Ferrari-Franceschini--Grevino--Spohn \cite{HRHF}, and Ferrari--Olla \cite{MDFGHR}, and for the box-ball system by Olla--Sasada--Suda \cite{SLSBS}.} with an effective velocity satisfying \eqref{fvlambda} has been established; this was done for the hard rods model by Boldrighini--Dubroshin--Suhov \cite{OHR} and for the box-ball system by Ferrari--Nguyen--Rolla--Wang \cite{SDBS}.
	
	For other integrable systems under invariant initial data, even a precise definition for the quasiparticle locations $(Q_j)$ did not seem to appear in the literature until recently. Still, a signature of the above asymptotics was established by Girotti--Grava--Jenkins--McLaughlin--Minakov \cite{SGAR}, who studied a (deterministic) profile formed by a many-soliton solution of the modified KdV equation. By altering this solution to incorporate one large ``tracer'' soliton, they could track its location by the solution's maximum. Through a Riemann--Hilbert analysis, they proved that this tracer soliton, which should be thought of as a single ``large'' quasiparticle, has a linear limiting trajectory, with an effective velocity satisfying an equation of the form \eqref{fvlambda}.

	The integrable system that we study in this paper is the Toda lattice, whose quasparticle description under certain random initial data was recently introduced in \cite{LC}. This model is a Hamiltonian dynamical system $(\bm{p}(t); \bm{q}(t))$, where $\bm{p}(t) = (p_i(t))$ and $\bm{q}(t) = (q_i(t))$ are indexed by a one-dimensional integer lattice $i \in \mathscr{I}$ that could either be an interval $\mathscr{I} = [N_1, N_2]$, a torus $\mathscr{I} = \mathbb{Z} / N \mathbb{Z}$, or the full line $\mathscr{I} = \mathbb{Z}$ (we typically focus on the former here). Its Hamiltonian is given by
	\begin{flalign*}
		\mathfrak{H} (\bm{p}; \bm{q}) = \displaystyle\sum_{i \in \mathscr{I}} \bigg( \displaystyle\frac{p_i^2}{2} + e^{q_i - q_{i+1}} \bigg),
	\end{flalign*}
	
	\noindent so the dynamics $\partial_t q_i = \partial_{p_i} \mathfrak{H}(\bm{p}; \bm{q})$ and $\partial_t p_i = -\partial_{q_i} \mathfrak{H} (\bm{p}; \bm{q})$ are  
	\begin{flalign}
		\label{qtderivative} 
		\partial_t q_i (t) = p_i (t), \quad \text{and} \quad \partial_t p_i (t) = e^{q_{i-1} (t) - q_i (t)} - e^{q_i (t) - q_{i+1} (t)}.
	\end{flalign}
	
	\noindent This model may be thought of as a system of particles moving on the real line, with locations $(q_i)$ and momenta $(p_i)$. It was originally introduced by Toda \cite{VCNI} as a Hamiltonian dynamic that admits soliton solutions. Since the works of Flaschka \cite{LEI} and Manakov \cite{CIS} exhibiting its full set of conserved quantities, and that of Moser \cite{FMPL} determining its scattering shift, the Toda lattice has become recognized as an archetypal example of a completely integrable system. 
	
	We consider the Toda lattice under perhaps its most natural invariant measure, given by \emph{thermal equilibrium}. Given parameters $\beta, \theta > 0$, this means (see \Cref{mubeta2}) that we sample the $(p_i)$ and $(e^{(q_i - q_{i+1})/2})$ as independent random variables, with probability densities $C_{\beta} e^{-\beta x^2 / 2}$ and $C_{\beta,\theta} x^{2\theta-1} e^{-\beta x^2}$, respectively, where $C_{\beta}, C_{\beta,\theta} > 0$ are normalization constants. 
	
	Now we must explain what the associated quasiparticle spectral parameters and locations are. The quasiparticle spectral parameters $\lambda_j$ have long been understood. They are defined to be the conserved quantities for the Toda lattice, given by the eigenvalues of its \emph{Lax matrix}. This is the tridiagonal, symmetric matrix $\bm{L}(t) = [L_{ij}(t)]$ (where $i, j \in \mathscr{I}$), whose diagonal and off-diagonal entries are the $(p_i)$ and $(e^{(q_i - q_{i+1})/2})$, respectively \cite{LEI,CIS}. When $\mathscr{I}$ is large and $(\bm{p} (t); \bm{q} (t))$ is random, $\bm{L}(t)$ becomes a high-dimensional random matrix, whose eigenvalue density then prescribes the distribution of quasiparticle spectral parameters in the Toda lattice under thermal equilibrium; this (up to a normalization) was denoted by $\varrho$ in \eqref{fvlambda}. Its computation was addressed by Spohn \cite{GECC} (after initial work of Opper \cite{ASCEC}), who predicted formulas for its limiting density (and derived expectations for local currents), which were later verified in works of Mazzuca, Guionnet, and Memin \cite{DSME,LDC,EIST}.

	The definition for the location $Q_j (t)$, of the $j$-th quasiparticle at time $t$, is more recent; it appeared and was analyzed in the mathematics paper \cite{LC}, and it was earlier hypothesized in the physics work of Bulchandini--Cao--Moore \cite{KTQL}. To explain it (simplifying slightly; see \Cref{lbetaeta} below for its full description), let $\bm{u}_j (t) = (u_j (i;t))_{i \in \mathscr{I}}$ denote the unit eigenvector of $\bm{L}(t)$ with eigenvalue $\lambda_j$. Results on random tridiagonal matrices due to Kunz--Souillard \cite{SSO} and Schenker \cite{LRBM}   imply that, if $\bm{L}(t)$ is under thermal equilibrium, then $\bm{u}_j (t)$ is \emph{exponentially localized}. This means that it admits some ``center'' $\varphi_t (j) \in \mathscr{I}$ such that $|u_j (i; t)| \le C e^{-c|i-\varphi_t (j)|}$ likely holds for any $i \in \mathscr{I}$. We view this center $\varphi_t (j)$ as the index of the particle associated with the $j$-th quasiparticle. So, we define the $j$-th quasiparticle's location on $\mathbb{R}$ to be this particle's position $Q_j (t) = q_{\varphi_t (j)} (t)$; \cite{LC} showed that this is a viable definition for quasiparticle locations, in that it satisfies the postulates suggested in the physics literature.
	
	Under this notation, the prediction for the asymptotic quasiparticle trajectories admits a precise formulation, given by
	\begin{flalign}
		\label{qjt00} 
		Q_j (t) \approx Q_j (0) + t v_{\eff} (\lambda_j),
	\end{flalign} 
	
	\noindent where $v_{\eff}$ solves \eqref{fvlambda}, with the $\varrho$ there given by the Lax matrix density of states computed in the above-mentioned works \cite{ASCEC,GECC,DSME} (with $\mathfrak{s} (\lambda, \mu) = 2 \log |\lambda - \mu|$ and $f(\lambda) = \lambda$ for the Toda lattice).
	
	The purpose of this paper is to prove \eqref{qjt00} when the thermal equilibrium parameter $\theta$ is sufficiently small\footnote{This constraint on $\theta$ should be artificial. It arises for us since we only know that a certain matrix is (quantitatively) invertible for $\theta$ sufficiently small (though we suspect it is true for all $\theta>0$); see \Cref{Estimate2} and \Cref{smatrixchi} below for further information.}; see \Cref{vestimate} below. Our starting point is an approximation, established in \cite{LC} (and predicted in \cite{KES,TEP,EHSOE,SGGH,NCIS}), for the evolution of the quasiparticle locations $(Q_j (t))$; it states that 
		\begin{flalign}
		\label{qktqk01} 
		Q_k (t) \approx Q_k (0) + \lambda_k t -  2 \displaystyle\sum_{j : Q_j(t) < Q_k (t)} \log |\lambda_k - \lambda_j| + 2 \displaystyle\sum_{j : Q_j(0) < Q_k (0)} \log |\lambda_k - \lambda_j|. 
	\end{flalign}

	\noindent The relation \eqref{qktqk01} can be interpreted as follows. The $k$-th quasiparticle, initially at $Q_k (0)$, moves with velocity $\lambda_k$ until it meets another quasiparticle, say the $j$-th one. At that moment, the $k$-th quasiparticle instantaneously moves forward or backward by $2 \log |\lambda_k - \lambda_j|$, depending on whether it met the $j$-th quasiparticle from the right or left, respectively (this could make it pass another quasiparticle, producing a ``cascade'' of such interactions, but no two quasiparticles can interact with each other more than once in this way). Then, the $k$-th quasiparticle proceeds at velocity $\lambda_k$ until meeting another quasiparticle, when the procedure repeats. The reason for the choice $2 \log |\lambda_k - \lambda_j|$ is that it is the Toda lattice's scattering shift, describing the phase displacement of just two quasiparticles passing through each other.
	
	This is directly analogous to the behavior of solitons in integrable systems. Indeed, in for example the Korteweg--de Vries (KdV) equation, a soliton solution takes the form of a traveling wave. Its amplitude (shape) is conserved and determined by a spectral parameter; its position is the wave's peak, which moves linearly (``freely'') in time. When two distant solitons evolve under the KdV equation, they initially proceed independently, until they get close to each other. They then undergo a fairly intricate interaction where they interfere, during which their precise locations are ``blurred.'' After this, they emerge as independent solitons; their shapes are the same as before the interaction, but their positions are translated (relative to their free evolutions) by an overall scattering shift. In this way, there is a close parallel between the soliton and quasiparticle dynamics (so, the terms ``soliton'' and ``quasiparticle'' are sometimes used interchangeably). 

	 We refer to \eqref{qktqk01} as the \emph{asymptotic scattering relation}. It is also sometimes called the ``collision rate ansatz'' or ``flea-gas algorithm.'' The latter was originally termed in the context of certain quantum integrable systems by Doyon--Yoshimura--Caux \cite{SGGH} and was later studied in greater generality in the paper \cite{NCIS} Doyon--H\"{u}bner--Yoshimura (who also highlighted its resemblance to the dynamics of solitons, for classical integrable systems, and wave packets \cite{DGHG}, for quantum ones).
	
	The proof of \eqref{qjt00}, with $v_{\eff}$ satisfying \eqref{fvlambda}, is based on an analysis of the asymptotic scattering relation \eqref{qktqk01}, and requires little information about the Toda lattice \eqref{qtderivative} itself. In fact, if one were to assume \eqref{qjt00} for some function $v_{\eff}$ only dependent on $\lambda_j$, then a concise heuristic (see \cite[Appendix B]{LC}) would indicate that \eqref{qktqk01} is a discretization of \eqref{fvlambda}. Indeed, this intuition is what led to the predicted form of \eqref{fvlambda} in \cite{KES,TEP,EHSOE}.
	
	Unfortunately, we do not know how to verify this hypothesis directly. So we proceed differently, based on a regularization argument that can be used to approximately linearize \eqref{qktqk01}, with concentration estimates for the (random) Toda lattice Lax matrix $\bm{L}(t)$. To explain this further, it will be convenient to set up some additional notation (which is anyways needed to state our main results), so we defer the proof outline to \Cref{Analysis0} below.

	We next describe our results in more detail. Throughout, for any $a, b \in \mathbb{R}$, set $\llbracket a, b \rrbracket = [a, b] \cap \mathbb{Z}$. A vector $\bm{v} = (v_1, v_2, \ldots , v_n) \in \mathbb{C}^n$ is a unit vector if $\sum_{i=1}^n v_i^2 = 1$. For any real symmetric $n \times n$ matrix $\bm{M}$, let $\eig \bm{M} = (\lambda_1, \lambda_2, \ldots , \lambda_n)$ denote the eigenvalues of $\bm{M}$, counted with multiplicity and ordered so that $\lambda_1 \ge \lambda_2 \ge \cdots \ge \lambda_n$. 
	
	\subsection{Toda Lattice and Thermal Equilibrium} 
	\label{Open} 
	
	In this section we recall the Toda lattice on an interval, and its thermal equilibrium initial data. Throughout, we fix integers $N_1 \le N_2$ and set $N = N_2 - N_1 + 1$ (which will prescribe the interval's endpoints and length, respectively). 
	
	The state space of the Toda lattice on the interval $\llbracket N_1, N_2 \rrbracket$, also called the \emph{Toda lattice}, is given by a pair of $N$-tuples $( \bm{p} (t); \bm{q}(t) ) \in \mathbb{R}^N \times \mathbb{R}^N$, where $\bm{p}(t) = ( p_{N_1} (t), p_{N_1+1} (t), \ldots , p_{N_2} (t) )$ and $\bm{q}(t) = ( q_{N_1} (t), q_{N_1+1}(t), \ldots , q_{N_2} (t) )$; both are indexed by a real number $t \ge 0$ called the time. Given any \emph{initial data} $( \bm{p}(0); \bm{q}(0) ) \in \mathbb{R}^N \times \mathbb{R}^N$, the joint evolution of $( \bm{p}(t); \bm{q}(t) )$ for $t \ge 0$ is prescribed by the system of ordinary differential equations 
	\begin{flalign}
		\label{qtpt} 
		\partial_t q_j (t) = p_j (t), \quad \text{and} \quad \partial_t p_j (t) = e^{q_{j-1} (t) - q_j (t)} - e^{q_j (t) - q_{j+1} (t)},
	\end{flalign}
	
	\noindent for all $(j, t) \in \llbracket N_1, N_2 \rrbracket \times \mathbb{R}_{\ge 0}$; here, we set $q_{N_1-1}(t) = -\infty$ and $q_{N_2+1}(t) = \infty$ for all $t \ge 0$. One might interpret this as the dynamics for $N$ points (indexed by $\llbracket N_1, N_2 \rrbracket$) moving on the real line, whose locations and momenta at time $t \ge 0$ are given by the $(q_i(t))$ and $(p_i(t))$, respectively. 	
	
	The system of differential equations \eqref{qtpt} is equivalent to the Hamiltonian dynamics generated by the Hamiltonian $\mathfrak{H} : \mathbb{R}^N \times \mathbb{R}^N \rightarrow \mathbb{R}$ that is defined, for any $\bm{p} = (p_0, p_1, \ldots , p_{N-1}) \in \mathbb{R}^N$ and $\bm{q} = (q_0, q_1, \ldots , q_{N-1}) \in \mathbb{R}^N$, by setting
	\begin{flalign}
		\label{hpq}
		\mathfrak{H} (\bm{p}; \bm{q}) = \displaystyle\sum_{j=0}^{N-1} \bigg( \displaystyle\frac{p_j^2}{2} + e^{q_j - q_{j+1}} \bigg),
	\end{flalign} 
	
	\noindent where $q_N = \infty$. The existence and uniqueness of solutions to \eqref{qtpt} for all time $t \ge 0$, under arbitrary initial data $(\bm{p}; \bm{q}) \in \mathbb{R}^N \times \mathbb{R}^N$, is thus a consequence of the Picard--Lindel\"{o}f theorem (see, for example, the proof of \cite[Theorem 12.6]{OINL}). 
	
	It will often be useful to reparameterize the variables of the Toda lattice, following \cite{LEI}. To that end, for any $(i, t) \in \llbracket N_1, N_2 \rrbracket \times \mathbb{R}_{\ge 0}$, define 
	\begin{flalign}
		\label{abr} 
		r_i (t) = q_{i+1} (t) - q_i (t); \qquad a_i (t) = e^{-r_i(t)/2}; \qquad b_i (t) = p_i (t).
	\end{flalign}
	
	\noindent Denoting $\bm{a}(t) = ( a_{N_1} (t), a_{N_1+1} (t), \ldots , a_{N_2} (t) ) \in \mathbb{R}_{\ge 0}^N$ and $\bm{b}(t) = ( b_{N_1} (t), b_{N_1+1} (t), \ldots , b_{N_2} (t) ) \in \mathbb{R}^N$, the $( \bm{a} (t); \bm{b} (t) )$ are sometimes called \emph{Flaschka variables}; they satisfy $r_{N_2}(t) = q_{N_2+1}(t) - q_{N_2} (t) = \infty$ and $a_{N_2} (t) = e^{-r_{N_2}(t) / 2} = 0$. Then, \eqref{qtpt} is equivalent to the system  
	\begin{flalign}
		\label{derivativepa} 
		\partial_t a_j (t) = \displaystyle\frac{a_j (t)}{2} \cdot \big(b_j (t) - b_{j+1} (t) \big), \qquad \text{and} \qquad \partial_t b_j (t) = a_{j-1} (t)^2 - a_j (t)^2,
	\end{flalign} 
	
	\noindent for each $(j, t) \in \llbracket N_1, N_2 \rrbracket \times \mathbb{R}_{\ge 0}$. 
	
	It will at times be necessary to define the original Toda state space variables $( \bm{p}(t); \bm{q}(t) )$ from the Flaschka variables $( \bm{a}(t); \bm{b}(t) )$; it suffices to do this at $t = 0$, as $( \bm{p} (t); \bm{q}(t) )$ is determined from $( \bm{p}(0); \bm{q}(0) )$, by \eqref{qtpt}. We explain how to do this when $0 \in \llbracket N_1, N_2 \rrbracket$ (as otherwise we may translate $(N_1, N_2)$ to guarantee that this inclusion holds).\footnote{In this work, we will usually have $N_1$ and $N_2$ be large negative and large positive integers, respectively, and we will be interested in the $(p_i(t); q_i (t))$ for $i$ in some interval containing $0$.} By \eqref{abr}, the Flaschka variables $\bm{a}(0)$ only specifies the differences between consecutive entries in $\bm{q}(0)$, so the former only determines the latter up to an overall shift. We will fix this shift by setting $q_0 (0) = 0$, that is, we define $(\bm{p}(0); \bm{q}(0) )$ by imposing  
	\begin{flalign}
		\label{q00} 
		q_0 (0) = 0; \qquad q_{i+1} (0) - q_i (0) = - 2 \log a_i (0); \qquad p_i (0) = b_i (0),
	\end{flalign} 
	
	\noindent  for each $i \in \llbracket N_1, N_2 \rrbracket$. Then, $( \bm{p} (0); \bm{q} (0) )$ is called the Toda state space intial data associated with $(\bm{a} (0); \bm{b}(0) )$. The evolution $( \bm{p}(t), \bm{q}(t) )$ of this initial data under \eqref{qtpt} is called the Toda state space dynamics associated with $( \bm{a}(t), \bm{b}(t) )$; observe that we may have $q_0 (t) \ne 0$ if $t \ne 0$.	
	
	In this work, we mainly consider the Toda lattice under a specific class of random initial data; it is sometimes referred to as thermal equilibrium, and is given by independent Gamma random variables for the $\bm{a}$ Flaschka variables, and independent Gaussian random variables for the $\bm{b}$ ones.
	
	\begin{definition}
		\label{mubeta2} 
		
		Fix real numbers $\beta, \theta > 0$. The \emph{thermal equilibrium} with parameters $(\beta, \theta; N)$ is the product measure $\mu = \mu_{\beta, \theta} = \mu_{\beta, \theta; N-1,N}$ on $\mathbb{R}^{N-1} \times \mathbb{R}^N$ defined by 
		\begin{flalign*}
			\mu (d \bm{a}; d \bm{b}) = \bigg( \displaystyle\frac{2 \beta^{\theta}}{\Gamma(\theta)} \bigg)^{N-1}  (2 \pi \beta^{-1})^{-N/2} \cdot \displaystyle\prod_{j=0}^{N-2} a_j^{2\theta-1} e^{-\beta a_j^2} da_j  \displaystyle\prod_{j=0}^{N-1} e^{-\beta b_j^2/2} db_j,
		\end{flalign*}
		
		\noindent where $\bm{a} = (a_0, \ldots , a_{N-2}) \in \mathbb{R}_{\ge 0}^{N-1}$ and $\bm{b} = (b_0, b_1, \ldots , b_{N-1}) \in \mathbb{R}^N$. It will be convenient to view $\mu_{\beta,\theta;N-1,N}$ as a measure on $\mathbb{R}^N \times \mathbb{R}^N$ by, if we denote $\hat{\bm{a}}(0) = (a_0, a_1, \ldots , a_{N-2}, 0) \in \mathbb{R}_{\ge 0}^N$, then also saying $(\hat{\bm{a}}; \bm{b}) \in \mathbb{R}^N \times \mathbb{R}^N$ is sampled under $\mu_{\beta,\theta;N-1,N}$. 
		
	\end{definition} 

	Thermal equilibrium is related to invariant measures for the Toda lattice; the latter are measures on the Flaschka variable initial data $(\bm{a}(0);\bm{b}(0))$ such that, for any $t \ge 0$, the law of $(\bm{a}(t); \bm{b}(t))$ under the Toda lattice is the same as that of $(\bm{a}(0); \bm{b}(0))$. The Toda lattice on a finite interval $\llbracket N_1, N_2 \rrbracket$ admits no nontrivial invariant measures. However, the Toda lattice on the full line $\mathbb{Z}$ does, among which the thermal equilibrium product measure of \Cref{mubeta2} (extrapolated to when $N=\infty$) is perhaps the most natural one. By \cite[Proposition 2.5]{LC}, with high probability, the Toda lattice at thermal equilbrium on $\mathbb{Z}$ up to some time $T \ge 0$ can be closely approximated (with error decaying exponentially in $T$) by the Toda lattice on $\llbracket N_1, N_2 \rrbracket$, as long as $T \ll N$. Thus, asymptotic questions about the Toda lattice run for some large time $T$, on $\mathbb{Z}$ (or a large torus; see \cite[Proposition 4.3]{LC}) under thermal equilibrium, can be recovered from those about the Toda lattice on $\llbracket N_1, N_2 \rrbracket$. For this reason, we will focus on the latter throughout. 
	
	\subsection{Dressing Operator and Effective Velocities}
	\label{TOperator}
	
	In this section we introduce the dressing operator that will prescribe the effective velocities governing quasiparticle trajectories in the Toda lattice; this will follow \cite[Equation (3.57)]{HSIMS}. Throughout, we fix real numbers $\beta, \theta > 0$; the constants below may depend on $\beta$ and $\theta$, even when not explicitly stated.
	
	We begin by introducing certain quantities and functions. The $\mathfrak{F}$ and $\varrho_{\beta}$ below originally appeared in the study of high-temperature beta ensembles in \cite[Equation (16)]{IEC} and \cite[Theorem 1.1(ii)]{SMRME};\footnote{In those works, our $\beta$ is set to $1$, but it can taken to be arbitrary by rescaling $x$ to $\beta^{1/2} x$.} the $\varrho$ originally appeared in \cite[Equation (3.5)]{GECC}.
	
	\begin{definition}
		
		\label{frho} 
		
		Define the real number
		\begin{flalign}
			\label{alpha} 
			\alpha =  \log \beta - \displaystyle\frac{\Gamma'(\theta)}{\Gamma(\theta)}, \qquad \text{and assume that $\alpha \ne 0$}.
		\end{flalign} 

		\noindent For any $x \in \mathbb{R}$, also set   
		\begin{flalign*} 
			\mathfrak{F} (\theta; x) = \bigg( \displaystyle\frac{\theta}{\Gamma(\theta)} \bigg)^{1/2} \displaystyle\int_0^{\infty} y^{\theta-1} e^{\mathrm{i} xy - y^2/2} dy.
		\end{flalign*}
		
		\noindent Then, define the function $\varrho_{\beta} : \mathbb{R} \rightarrow \mathbb{R}$ and the \emph{density of states} $\varrho : \mathbb{R} \rightarrow \mathbb{R}$ by for any $x \in \mathbb{R}$ setting\footnote{That $\mathfrak{F}(\theta;x) \ne 0$ for all $x \in \mathbb{R}$ is a quick consequence of the fact (see \cite[Section 3.3]{SMRME}) that $\mathfrak{F}$ is smooth and that $\varrho_{\beta}$ has all its moments bounded.}
		\begin{flalign*}
			\varrho_{\beta} (x) = \varrho_{\beta; \theta} (x) = \Big( \displaystyle\frac{\beta}{2\pi} \Big)^{1/2} \cdot | \mathfrak{F} (\theta; \beta^{1/2} x) |^{-2} \cdot e^{- \beta x^2 / 2}; \qquad \varrho (x) = \partial_{\theta} \big( \theta \cdot \varrho_{\beta; \theta} (x) \big).
		\end{flalign*}
		
	\end{definition} 
	
	Let us make several comments about \Cref{frho}. First, the reason $\varrho$ is often called the density of states is that it denotes the empirical eigenvalue density for the Lax matrix of the Toda lattice under thermal equilibrium; see \Cref{lf} below. Second, there are several reasons for our assumption that $\alpha \ne 0$. One is that the dressing operator (defined by \Cref{trho} and \Cref{fdr}), which is needed to define the effective velocity (\Cref{v}) appearing in our results, is no longer well-defined if $\alpha = 0$; see \Cref{sigmaalpha0} below. Another is that it is natural in the context of the Toda lattice, as it is equivalent to the density of particles in the system being finite.

	\begin{definition}
			\label{tbeta} 
	
		For any functions $f, g : \mathbb{R} \rightarrow \mathbb{C}$, define the inner product
		\begin{flalign}
			\label{betaproduct} 
			\langle f, g \rangle_{\varrho} = \displaystyle\int_{-\infty}^{\infty} f(x) \overline{g(x)} \varrho (x) dx,
		\end{flalign}
		
		\noindent when it is finite. Let $\mathcal{H}$ be the Hilbert space associated with this inner product; denote the norm on $\mathcal{H}$ by $\| f \|_{\mathcal{H}} = \langle f, f \rangle_{\varrho}^{1/2}$ for any $f \in \mathcal{H}$. Observe that $\varrho_{\beta} \in \mathcal{H}$ and that $\varsigma_k \in \mathcal{H}$ for any integer $k \ge 0$, where $\varsigma_k : \mathbb{R} \rightarrow \mathbb{R}$ is the polynomial function defined by setting 
		\begin{flalign}
			\label{kx} 
			\varsigma_k (x) = x^k, \qquad \text{for all $x \in \mathbb{R}$}.
		\end{flalign}
		
		\noindent For any function $h \in \mathcal{H}$, denote the associated multiplication operator by $\bm{h}$, defined by setting $\bm{h} f = hf$ for any $f : \mathbb{R} \rightarrow \mathbb{R}$; if $h \in \mathcal{H}$ is constant (that is, of the form $h = a \varsigma_0$ for some $a \in \mathbb{R}$), we identify $h = \bm{h}$. Further define the integral operator $\bm{\mathrm{T}}$ by setting  
		\begin{flalign}
			\label{operatort}
			\bm{\mathrm{T}} f (x) = 2 \displaystyle\int_{-\infty}^{\infty} \log |x-y| f(y) dy,
		\end{flalign}
		
		\noindent for any function $f : \mathbb{R} \rightarrow \mathbb{R}$ such that the above integral is finite for almost all $x \in \mathbb{R}$. 
		
	\end{definition} 
	
	The following two lemmas will be shown in \Cref{ProofT0} below. 
	
	\begin{lem} 
		\label{trhobounded} 
		
		The operator $\bm{\mathrm{T} \varrho_{\beta}}$ is a bounded operator on $\mathcal{H}$. 
		
	\end{lem}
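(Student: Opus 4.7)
The plan is to bound the operator norm directly via a weighted Cauchy--Schwarz estimate on the explicit kernel of $\bm{\mathrm{T}\varrho_\beta}$. Since this operator is the composition of $\bm{\mathrm{T}}$ with multiplication by $\varrho_\beta$, for any $f \in \mathcal{H}$ we have
\begin{flalign*}
	(\bm{\mathrm{T}\varrho_\beta} f)(x) = 2 \int_{-\infty}^{\infty} \log|x-y|\, \varrho_\beta(y) f(y)\, dy.
\end{flalign*}
Applying the Cauchy--Schwarz inequality with the weight $\varrho$ inside the integral gives
\begin{flalign*}
	|(\bm{\mathrm{T}\varrho_\beta} f)(x)|^2 \le 4 \|f\|_{\mathcal{H}}^2 \int_{-\infty}^{\infty} \log^2|x-y|\, \frac{\varrho_\beta(y)^2}{\varrho(y)}\, dy,
\end{flalign*}
so, integrating against $\varrho(x)$, the lemma would follow from showing the double integral
\begin{flalign*}
	I := \iint_{\mathbb{R}^2} \log^2|x-y|\, \varrho(x)\, \frac{\varrho_\beta(y)^2}{\varrho(y)}\, dx\, dy < \infty.
\end{flalign*}

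I would first bound the inner $x$-integral uniformly in $y$. The singularity of $\log^2|x-y|$ at $x=y$ is locally integrable against the bounded density $\varrho$; away from $y$, I would use $\log^2|x-y| \le C\bigl(1 + \log^2(1+|x|) + \log^2(1+|y|)\bigr)$. Combined with the Gaussian decay of $\varrho(x) \sim e^{-\beta x^2/2}$ at infinity, which is visible from the identity
\begin{flalign*}
	\varrho(x) = \Bigl(\tfrac{\beta}{2\pi}\Bigr)^{1/2} e^{-\beta x^2/2}\, \partial_\theta\bigl[\theta\,|\mathfrak{F}_\theta(\beta^{1/2}x)|^{-2}\bigr],
\end{flalign*}
this yields $\int \log^2|x-y|\, \varrho(x)\, dx \le C\bigl(1 + \log^2(1+|y|)\bigr)$. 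The remaining task is to show
\begin{flalign*}
	\int_{-\infty}^{\infty} \bigl(1 + \log^2(1+|y|)\bigr)\, \frac{\varrho_\beta(y)^2}{\varrho(y)}\, dy < \infty,
\end{flalign*}
for which I would extract a $e^{-\beta y^2/2}$ Gaussian factor from $\varrho_\beta(y)^2/\varrho(y)$ (since both $\varrho_\beta$ and $\varrho$ contain a $e^{-\beta y^2/2}$ factor, leaving a ratio $e^{-\beta y^2/2}$), absorbing the polylogarithmic weight.

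The main obstacle is the remaining (non-Gaussian) factor in the ratio $\varrho_\beta(y)^2/\varrho(y)$, namely
\begin{flalign*}
	\frac{|\mathfrak{F}_\theta(\beta^{1/2} y)|^{-4}}{\partial_\theta\bigl[\theta\, |\mathfrak{F}_\theta(\beta^{1/2} y)|^{-2}\bigr]}.
\end{flalign*}
Controlling this requires bounds on $\mathfrak{F}_\theta$ and its $\theta$-derivative, in particular verifying that the denominator is bounded below by a positive function growing at most polynomially in $|y|$ (so the ratio grows at most polynomially and is absorbed by the $e^{-\beta y^2/2}$ factor). I would draw on the estimates for $\mathfrak{F}_\theta$ from \cite{SMRME} (together with the assumption $\alpha \ne 0$ from \eqref{alpha}, which is what ensures non-degeneracy of the $\theta$-derivative) to secure this. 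Once this decay of $\varrho_\beta^2/\varrho$ is in hand, $I < \infty$ follows, and hence so does the boundedness of $\bm{\mathrm{T}\varrho_\beta}$ on $\mathcal{H}$.
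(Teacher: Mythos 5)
You set up the same Cauchy--Schwarz bound as the paper, and the reduction to showing
\begin{flalign*}
\iint (\log|x-y|)^2\, \varrho(x)\, \frac{\varrho_\beta(y)^2}{\varrho(y)}\, dy\, dx < \infty
\end{flalign*}
is correct and matches the paper's opening step. The gap is in controlling the ratio $\varrho_\beta^2/\varrho$. You correctly reduce to needing a lower bound on the non-Gaussian factor $\partial_\theta\bigl[\theta\,|\mathfrak{F}_\theta(\beta^{1/2}y)|^{-2}\bigr]$, but then defer this to ``estimates for $\mathfrak{F}_\theta$ from \cite{SMRME}'' together with the observation that $\alpha \ne 0$; that deferral is precisely where the substance lies and it is not established. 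The paper's route is to invoke \Cref{rhorho}, which gives the identity $\varrho = \sigma\varrho_\beta$ with $\sigma = \theta(\bm{\mathrm{T}}\varrho + \alpha)$ (so $\varrho_\beta^2/\varrho = \sigma^{-1}\varrho_\beta$), and then \Cref{alphat}, which supplies the uniform pointwise lower bound $\bm{\mathrm{T}}\varrho(x) + \alpha > c > 0$. That lower bound is a nontrivial standalone result, proven in the paper via a random-matrix argument (exponential decay of Lax-matrix resolvent entries from \cite{LRBM}, combined with a Thouless-type formula for $|G_{N_1N_2}|$), not by direct asymptotic analysis of $\mathfrak{F}_\theta$; and it is strictly stronger than the bare hypothesis $\alpha \ne 0$, which by itself only rules out a single degeneracy (see \Cref{sigmaalpha0}) rather than giving a pointwise bound. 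There is also a small logical slip in ``denominator bounded below by a positive function growing at most polynomially, so the ratio grows at most polynomially'': what makes the ratio polynomially bounded is a lower bound on the denominator \emph{away from zero} together with a polynomial upper bound on the numerator $|\mathfrak{F}_\theta|^{-4}$ (the latter does follow from \Cref{rhoexponential}). In short, your Cauchy--Schwarz framework is exactly the paper's, but the key pointwise estimate you defer is the content of \Cref{alphat}, and without it (or an equivalent) the proof does not close.
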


	\begin{lem} 
		\label{trho} 
		
		The operator $(\theta^{-1} - \bm{\mathrm{T}} \bm{\varrho_{\beta}}): \mathcal{H} \rightarrow \mathcal{H}$ is a bijection.
	\end{lem}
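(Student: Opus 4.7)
The plan is to exploit a natural self-adjoint structure of $\bm{\mathrm{T}}\bm{\varrho_\beta}$ combined with the Fredholm alternative. The key observation is that the bilinear form $\langle \bm{\mathrm{T}}\bm{\varrho_\beta} f, g\rangle_{L^2(\varrho_\beta dx)} = 2\iint \log|x-y|\,\varrho_\beta(x)\varrho_\beta(y)\, f(y)\overline{g(x)}\,dx\,dy$ is symmetric in $f$ and $g$, so $\bm{\mathrm{T}}\bm{\varrho_\beta}$ is self-adjoint on $L^2(\mathbb{R},\varrho_\beta\,dx)$. Since our target space is $\mathcal{H}=L^2(\varrho\,dx)$ and not $L^2(\varrho_\beta\,dx)$, I would first verify the continuous embedding $\mathcal{H}\hookrightarrow L^2(\varrho_\beta\,dx)$: a direct computation from \Cref{frho} gives $\varrho/\varrho_\beta = 1 - 2\theta\,\partial_\theta \log|\mathfrak{F}(\theta;\beta^{1/2}\cdot)|$, and together with the asymptotics $|\mathfrak{F}(\theta;x)|\sim |x|^{-\theta}$ as $|x|\to \infty$ this ratio is bounded below by a positive constant and grows at most logarithmically.

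The main steps are then as follows. First, conjugate by the unitary $f\mapsto \varrho_\beta^{1/2}f$ to transform $\bm{\mathrm{T}}\bm{\varrho_\beta}$ on $L^2(\varrho_\beta\,dx)$ into the symmetric $L^2(\mathbb{R},dx)$ integral operator $\bm{L}$ with kernel $K(x,y) = 2\log|x-y|\,\varrho_\beta(x)^{1/2}\varrho_\beta(y)^{1/2}$. The Gaussian factor in $\varrho_\beta$ and local integrability of $\log^2|x-y|$ yield $K\in L^2(dx\otimes dy)$, so $\bm{L}$ is Hilbert--Schmidt, hence compact and self-adjoint. By the Fredholm alternative, bijectivity of $\theta^{-1}\Id-\bm{L}$ on $L^2(dx)$ reduces to injectivity, and injectivity of $\theta^{-1}\Id-\bm{\mathrm{T}}\bm{\varrho_\beta}$ on $L^2(\varrho_\beta\,dx)$ transfers immediately to injectivity on its subspace $\mathcal{H}$. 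For surjectivity onto $\mathcal{H}$, given $h\in\mathcal{H}\subset L^2(\varrho_\beta\,dx)$, I would solve $(\theta^{-1}-\bm{\mathrm{T}}\bm{\varrho_\beta})g = h$ in $L^2(\varrho_\beta\,dx)$ and then bootstrap the regularity of $g$ via the rearranged identity $g = \theta h + \theta\,\bm{\mathrm{T}}\bm{\varrho_\beta}g$, invoking \Cref{trhobounded} to conclude $g \in \mathcal{H}$.

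The main obstacle is the injectivity step: ruling out nonzero $\phi\in L^2(\varrho_\beta\,dx)$ satisfying $2\int \log|x-y|\varrho_\beta(y)\phi(y)\,dy = \theta^{-1}\phi(x)$. I expect to attack this via Fourier analysis, using that $\log|x|$ has Fourier multiplier proportional to $|\xi|^{-1}$ (modulo a delta contribution at the origin) and that $\varrho_\beta$ admits the explicit description in terms of $\mathfrak{F}$ from \Cref{frho}, so the eigenvalue equation should convert on the Fourier side into a closed algebraic identity that forces $\phi\equiv 0$. The hypothesis $\alpha \neq 0$ from \Cref{alpha} presumably enters this analysis to exclude a single potential zero-mode, likely corresponding to a constant or another explicit reference function linked to the Euler--Lagrange equation for the high-temperature Gaussian $\beta$-ensemble equilibrium measure \cite{SMRME}; identifying $\alpha$ precisely with the ``resonance'' that would otherwise block injectivity (and would occur exactly when $\alpha=0$, cf.\ the footnote accompanying \Cref{trho}) is the delicate step.
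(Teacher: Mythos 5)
Your proposal starts with a valid observation that the paper does not use: conjugating $\bm{\mathrm{T}}\bm{\varrho_\beta}$ by $f\mapsto \varrho_\beta^{1/2}f$ gives a Hilbert--Schmidt (hence compact) self-adjoint operator on $L^2(dx)$, since $\iint (\log|x-y|)^2\varrho_\beta(x)\varrho_\beta(y)\,dx\,dy<\infty$ by the Gaussian tail of $\varrho_\beta$. The embedding $\mathcal{H}\hookrightarrow L^2(\varrho_\beta\,dx)$ (equivalently $\varrho/\varrho_\beta$ bounded below), the Fredholm reduction to injectivity, and the regularity bootstrap for surjectivity onto $\mathcal{H}$ are all correctly set up. Your instinct that $\alpha\ne 0$ rules out a single potential zero-mode is also essentially right: \Cref{sigmaalpha0} shows that when $\alpha=0$ the function $\sigma=\theta(\bm{\mathrm{T}}\varrho+\alpha)$ lies in $\ker(\theta^{-1}-\bm{\mathrm{T}}\bm{\varrho_\beta})$.

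The genuine gap is the injectivity step, which you identify as the main obstacle but do not close, and the Fourier route you propose will not close it. The eigenvalue equation $2\int\log|x-y|\,\varrho_\beta(y)\phi(y)\,dy=\theta^{-1}\phi(x)$ does not become ``a closed algebraic identity'' under Fourier transform: while $\widehat{\log|\cdot|}$ is a multiplier proportional to $|\xi|^{-1}$, the pointwise multiplication by $\varrho_\beta$ becomes a \emph{convolution} $\widehat{\varrho_\beta}*\widehat{\phi}$ on the Fourier side, so one obtains a nontrivial integral equation in $\xi$, not a pointwise relation. Substituting $\psi=\varrho_\beta\phi$ does not help either since $\varrho_\beta^{-1}$ is not tempered. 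The paper instead obtains injectivity (and surjectivity, making the compactness unnecessary) from a coercivity argument for the operator $\bm{\mathrm{S}}=\theta^{-1}\bm{\sigma}-\bm{\mathrm{T}}\bm{\varrho}$ on $\mathcal{H}$, which factors as $\bm{\mathrm{S}}=(\theta^{-1}-\bm{\mathrm{T}}\bm{\varrho_\beta})\bm{\sigma}$ via \Cref{rhorho}. The coercivity rests on two structural inputs you did not invoke: (i) the classical negative-definiteness of the kernel $\log|x-y|$ on functions with vanishing $\varrho$-mean (\Cref{tg0}), which handles the subspace $\mathcal{H}_0$; and (ii) the quantitative lower bound $\bm{\mathrm{T}}\varrho(x)+\alpha>c>0$ (\Cref{alphat}), which itself is proved via resolvent localization estimates for the random Lax matrix and the Thouless formula, combined with the explicit identity \eqref{salpha} showing that the constant function $\varsigma_0$ spans the orthogonal complement of $\mathcal{H}_0$ and is an $\bm{\mathrm{S}}$-eigenvector with eigenvalue $\alpha$. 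Without these inputs—or a replacement for them—the injectivity claim remains unproven in your approach.
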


	\begin{definition}
		\label{fdr} 
		
		We call $(\theta^{-1} - \bm{\mathrm{T}} \bm{\varrho_{\beta}})^{-1} : \mathcal{H} \rightarrow \mathcal{H}$ the \emph{dressing operator}. For any $f \in \mathcal{H}$, set
		\begin{flalign*}
			f^{\dr} = (\theta^{-1} - \bm{\mathrm{T}} \bm{\varrho_{\beta}})^{-1} f \in \mathcal{H}.
		\end{flalign*}
		
	\end{definition} 
	
	Expressions describing the Toda lattice will often involve the function $(\varsigma_0^{\dr})^{-1}$, so we must ensure that $\varsigma_0^{\dr} \ne 0$. The following lemma confirms this; its proof is given in \Cref{TProperty} below.
	
	\begin{lem}
		\label{positive0}
		
		There exists a constant $c > 0$ such that $\varsigma_0^{\dr} (x) \cdot \sgn(\alpha) > c$ for all $x \in \mathbb{R}$.  
	\end{lem}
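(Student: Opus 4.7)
The plan is to establish the explicit formula
\[
\varsigma_0^{\dr}(x) = \frac{\varrho(x)}{\alpha \cdot \varrho_{\beta}(x)} \qquad \text{for all } x \in \mathbb{R},
\]
and then deduce the lemma from positivity and regularity of $\varrho$ and $\varrho_{\beta}$.

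First, I would exploit the fact that $\varrho_{\beta}$ is the equilibrium density of a high-temperature $\beta$-ensemble (\cite{SMRME,IEC}), so it satisfies an Euler--Lagrange identity of the shape
\[
\log \varrho_{\beta}(x) + \frac{\beta x^2}{2} - 2\theta \int_{-\infty}^{\infty} \log|x-y|\, \varrho_{\beta}(y)\, dy = C(\beta,\theta),
\]
where $C$ is independent of $x$. Differentiating in $\theta$ and using $\varrho = \partial_{\theta}(\theta \varrho_{\beta})$, which rearranges to $\theta \partial_{\theta} \varrho_{\beta} = \varrho - \varrho_{\beta}$, the two $\theta$-derivatives of the logarithmic-potential term combine to produce
\[
\frac{1}{\theta} \cdot \frac{\varrho(x)}{\varrho_{\beta}(x)} - 2\int_{-\infty}^{\infty} \log|x-y|\, \varrho(y)\, dy \;=\; \partial_{\theta} C + \frac{1}{\theta}.
\]
I would then identify the right-hand side as exactly $\alpha$, by evaluating $C$ through the explicit form $\log \varrho_{\beta}(x) = \tfrac{1}{2} \log (\beta/2\pi) - 2 \log |\mathfrak{F}(\theta; \beta^{1/2} x)| - \tfrac{\beta x^2}{2}$ at a convenient $x$ (say $x=0$, where $\mathfrak{F}(\theta;0)$ is a closed-form expression in $\Gamma(\theta)$ and $\Gamma(\theta/2)$) and differentiating in $\theta$; the contribution of $\partial_\theta \log|\mathfrak{F}(\theta;0)|$ produces the digamma term $\Gamma'(\theta)/\Gamma(\theta)$, and the remaining pieces assemble into $\log \beta - \Gamma'(\theta)/\Gamma(\theta) = \alpha$.

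With the constant identified, the identity rewrites as $(\theta^{-1} - \bm{\mathrm{T}} \bm{\varrho_{\beta}})\bigl[\varrho/(\alpha \varrho_{\beta})\bigr] = 1$, and the injectivity in \Cref{trho} forces $\varsigma_0^{\dr} = \varrho/(\alpha\, \varrho_{\beta})$. Consequently $\sgn(\alpha) \cdot \varsigma_0^{\dr}(x) = \varrho(x) / (|\alpha|\, \varrho_{\beta}(x))$. The denominator $\varrho_{\beta}$ is continuous and strictly positive on $\mathbb{R}$ by its defining formula and the non-vanishing of $\mathfrak{F}$. For the numerator, strict pointwise positivity of $\varrho$ follows from its density-of-states interpretation for the random tridiagonal Lax matrix (established in \cite{DSME,GECC,LDC}), and together with continuity this yields a uniform positive lower bound on any compact set. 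For the tails, the asymptotic $|\mathfrak{F}(\theta;x)| \sim C |x|^{-\theta}$ as $|x|\to\infty$ gives $\varrho_\beta(x) \sim c|x|^{2\theta} e^{-\beta x^2/2}$ and $\varrho(x)/\varrho_{\beta}(x) \sim \theta (2 \log|x| + \alpha)$, so $\sgn(\alpha)\cdot\varrho/\varrho_\beta \to +\infty$. Patching the two regimes gives the desired uniform lower bound $c > 0$.

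The main obstacle is the precise identification $\partial_{\theta} C + \theta^{-1} = \alpha$: the fact that this quantity is $x$-independent is abstract, but pinning down its exact value as the specific combination $\log \beta - \Gamma'(\theta)/\Gamma(\theta)$ requires careful bookkeeping with the function $\mathfrak{F}$, the evaluation of $\mathfrak{F}(\theta;0)$, and the logarithmic moment $\int \log|y|\,\varrho_{\beta}(y)\,dy$. A secondary nuisance is the strict pointwise positivity of $\varrho$ on bounded intervals, which I would invoke from the prior works on the Toda Lax matrix rather than reprove.
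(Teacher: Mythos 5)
Your derivation of the explicit formula $\varsigma_0^{\dr}(x) = \varrho(x)/(\alpha\,\varrho_{\beta}(x))$ is essentially the paper's own route: the paper establishes exactly this in Corollary~\ref{rho0}, by way of Lemma~\ref{rhorho}, whose proof differentiates the Nakano--Trinh Euler--Lagrange identity \cite{SERL} in $\theta$ just as you propose. So that half of the argument matches the paper (the ``main obstacle'' you flag, pinning down the constant as $\alpha$, is handled in the proof of Lemma~\ref{rhorho} by differentiating the explicit partition function $Z_\theta$).

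Where you diverge is in establishing the uniform lower bound, and this is where there is a genuine gap. The paper proves the equivalent statement $\bm{\mathrm{T}}\varrho(x) + \alpha > c$ uniformly over $x\in\mathbb{R}$ as Lemma~\ref{alphat}, via a nontrivial random-matrix argument: the Thouless-type formula (Lemma~\ref{gn1n2}), Schenker's fractional-moment decay of the resolvent (Lemma~\ref{gijexponential}), and the identity $2\mathbb{E}[\log L_{j,j+1}] = -\alpha$ (Lemma~\ref{aintegral}). Your argument instead black-boxes the compact-set part by invoking ``strict pointwise positivity of $\varrho$'' as a known fact from \cite{DSME,GECC,LDC}. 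But note that via Lemma~\ref{rhorho} we have $\varrho(x) = \theta\big(\bm{\mathrm{T}}\varrho(x)+\alpha\big)\varrho_\beta(x)$ with $\varrho_\beta>0$, so strict positivity of $\varrho$ at a point is \emph{equivalent} to positivity of $\bm{\mathrm{T}}\varrho(x)+\alpha$ there, which is a pointwise version of precisely the inequality you are trying to prove. None of the cited references establishes this positivity (they compute the density of states and its moments/LDP, not a lower bound), and the paper in fact develops Lemma~\ref{alphat} exactly because it cannot be cited away. Your tail estimate via the asymptotics of $\mathfrak{F}$ is correct in spirit, though there is a sign slip: you should conclude $\varrho/\varrho_\beta \to +\infty$ (unconditionally), hence $\sgn(\alpha)\cdot\varsigma_0^{\dr} = \varrho/(|\alpha|\varrho_\beta)\to +\infty$; as written, ``$\sgn(\alpha)\cdot\varrho/\varrho_\beta\to+\infty$'' would be false for $\alpha<0$. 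The net assessment: the first half is the paper's proof rephrased, but the positivity step needs the RMT input of Lemma~\ref{alphat} (or an independent proof of strict positivity of the density of states), which your proposal leaves unjustified.
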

	
	We next define the effective velocity for quasiparticles in the Toda lattice under thermal equilibrium.
	
	\begin{definition}
		\label{v} 
		
		Define the \emph{effective velocity} $v_{\eff} \in \mathcal{H}$ by setting $v_{\eff} (x) = \varsigma_0^{\dr} (x)^{-1} \cdot \varsigma_1^{\dr} (x)$ for each $x \in \mathbb{R}$.
		
	\end{definition}
	
	This definition of the effective velocity is the standard one in the physics literature; see \cite[Equation (6.20)]{HSIMS}. The fact that it satisfies \eqref{fvlambda} is implied by \Cref{vt} below.

	\subsection{Results}
	
	\label{MatrixL} 
	
	In this section we state our primary results. To do so, we must first recall the Lax matrix and associated localization centers for the Toda lattice. Throughout, we fix integers $N_1 \le N_2$ and set $N = N_2 - N_1 + 1$. Let $( \bm{a}(t); \bm{b}(t) ) \in \mathbb{R}_{\ge 0}^N \times \mathbb{R}^N$ be a pair of $N$-tuples indexed by $t \in \mathbb{R}_{\ge 0}$, where $\bm{a}(t) = ( a_j (t) )$ and $\bm{b}(t) = ( b_j (t) )$ satisfies the system \eqref{derivativepa} for each $(j, t) \in \llbracket N_1, N_2 \rrbracket \times \mathbb{R}$; assume $a_{N_2}(t) = 0$ for each $t \in \mathbb{R}_{\ge 0}$. The associated Lax matrix (introduced in \cite{LEI,CIS}) is defined as follows.
	
	\begin{definition}
		\label{matrixl} 
		
		For any real number $t \ge 0$, the \emph{Lax matrix} $\bm{L}(t) = [ L_{ij}] = [L_{ij}(t)]$ is an $N \times N$ real symmetric matrix, with entries indexed by $i, j \in \llbracket N_1, N_2 \rrbracket$, defined as follows. Set
		\begin{flalign*} 
			L_{ii} = b_i (t), \quad \text{for each $i \in \llbracket N_1, N_2 \rrbracket$}; \qquad L_{i,i+1} = L_{i+1,i} = a_i (t), \quad \text{for each $i \in \llbracket N_1, N_2-1 \rrbracket$}.
		\end{flalign*} 
		
		\noindent Also set $L_{ij} = 0$ for any $i,j \in \llbracket N_1, N_2 \rrbracket$ with $|i-j| \ge 2$. 
		
	\end{definition} 
	
	A fundamental feature of the Lax matrix is that its eigenvalues are preserved under the Toda dynamics \eqref{derivativepa}. This was originally due to \cite{LEI}; see also \cite[Section 2]{FMPL}.
	
	\begin{lem}[\cite{LEI,FMPL}]
		\label{ltt}
		
		For any real numbers $t, t' \in \mathbb{R}_{\ge 0}$, we have $\eig \bm{L} (t) = \eig \bm{L}(t')$. 
		
	\end{lem}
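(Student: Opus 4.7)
The plan is to give the classical Flaschka Lax pair argument: exhibit an antisymmetric matrix $\bm{B}(t)$ satisfying $\partial_t \bm{L}(t) = [\bm{B}(t), \bm{L}(t)]$, then conjugate $\bm{L}(t)$ by the orthogonal flow generated by $\bm{B}(t)$.

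Concretely, I would first define the $N \times N$ antisymmetric matrix $\bm{B}(t) = [B_{ij}(t)]$ with entries indexed by $i,j \in \llbracket N_1, N_2 \rrbracket$ by setting $B_{i,i+1}(t) = -\tfrac{1}{2} a_i(t)$, $B_{i+1,i}(t) = \tfrac{1}{2} a_i(t)$, and $B_{ij}(t) = 0$ whenever $|i-j| \ne 1$. Then I would verify, by a direct entrywise computation from \Cref{matrixl}, that
\begin{flalign*}
[\bm{B}(t), \bm{L}(t)]_{ii} = a_{i-1}(t)^2 - a_i(t)^2, \qquad [\bm{B}(t), \bm{L}(t)]_{i,i+1} = \tfrac{1}{2} a_i(t) \bigl( b_i(t) - b_{i+1}(t) \bigr),
\end{flallign*}
with all other entries equal to zero. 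Comparing with \eqref{derivativepa}, these identities say precisely $\partial_t \bm{L}(t) = [\bm{B}(t), \bm{L}(t)]$.

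Next, I would let $\bm{U}(t)$ denote the unique solution on $\mathbb{R}_{\ge 0}$ of the linear matrix ODE $\partial_t \bm{U}(t) = \bm{B}(t) \bm{U}(t)$ with $\bm{U}(0) = \Id$, which exists by Picard--Lindel\"of since $\bm{B}(t)$ is continuous in $t$. Because $\bm{B}(t)^\top = -\bm{B}(t)$, we have $\partial_t ( \bm{U}(t)^\top \bm{U}(t) ) = \bm{U}(t)^\top ( \bm{B}(t)^\top + \bm{B}(t) ) \bm{U}(t) = 0$, so $\bm{U}(t)^\top \bm{U}(t) \equiv \Id$; i.e., $\bm{U}(t)$ is orthogonal for all $t \ge 0$.

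Now set $\widetilde{\bm{L}}(t) = \bm{U}(t) \bm{L}(0) \bm{U}(t)^\top$. A direct differentiation gives $\partial_t \widetilde{\bm{L}}(t) = \bm{B}(t) \widetilde{\bm{L}}(t) - \widetilde{\bm{L}}(t) \bm{B}(t) = [\bm{B}(t), \widetilde{\bm{L}}(t)]$, and $\widetilde{\bm{L}}(0) = \bm{L}(0)$. Since $\bm{L}(t)$ also satisfies this linear ODE with the same initial data, uniqueness (applied to the linear ODE $\partial_t \bm{X} = \bm{B}\bm{X} - \bm{X}\bm{B}$ on the space of $N \times N$ matrices) forces $\bm{L}(t) = \bm{U}(t) \bm{L}(0) \bm{U}(t)^\top$ for all $t \ge 0$. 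As orthogonal conjugation preserves the spectrum, $\eig \bm{L}(t) = \eig \bm{L}(0)$ for every $t \ge 0$, and the conclusion $\eig \bm{L}(t) = \eig \bm{L}(t')$ follows.

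There is no real obstacle here beyond bookkeeping; the only thing worth being careful about is the factor of $\tfrac{1}{2}$ in $\bm{B}$, which is dictated by the coefficient $\tfrac{1}{2}$ in the first equation of \eqref{derivativepa}, and the boundary entries at $i = N_1$ and $i = N_2$, where the vanishing convention $a_{N_2}(t) = 0$ (together with the absence of any $a_{N_1 - 1}$) ensures that the Lax identity holds at the endpoints without modification.
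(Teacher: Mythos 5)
Your proof is correct and is exactly the classical Flaschka--Moser Lax-pair argument that the cited references \cite{LEI,FMPL} give for this result (the paper itself states the lemma without proof, attributing it to those sources); the same conjugating orthogonal flow is also used by the paper later, in the proof of \Cref{centerdistance}, where the skew-symmetric generator is called $\bm{P}$. Your commutator identities, the boundary remark about $a_{N_2}(t) = 0$ (and the implicit $a_{N_1-1}(t) = 0$), and the uniqueness argument for the time-dependent linear ODE $\partial_t \bm{X} = [\bm{B}(t),\bm{X}]$ are all sound; the only blemish is a LaTeX typo (\verb|\end{flallign*}|) that would prevent compilation.
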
 
	
	Lemma \ref{ltt} provides a large family of conserved quantities for the Toda lattice, given by the eigenvalues of the Lax matrix. However, these are ``non-local,'' in the sense that they depend on all of the Flaschka variables $(\bm{a}(t);\bm{b}(t))$, as opposed to only the $(a_i(t),b_i(t))$ for $i$ in some (uniformly) bounded interval. Still, in certain cases, they are ``approximately local,'' in that they only depend on a few entries of $\bm{L}(t)$, up to a small error. These entries will correspond to those on which the associated eigenvectors of $\bm{L}(t)$ are mainly supported. Such entries are called localization centers, given (in a more general context) by the below definition. 
	
	\begin{definition}
		\label{ucenter}
		
		Let $\bm{u} = ( u(N_1), u(N_1+1), \ldots , u(N_2) ) \in \mathbb{R}^N$ be a unit vector. For any $\zeta \in \mathbb{R}_{\ge 0}$, we call an index $\varphi \in \llbracket N_1, N_2 \rrbracket$ a \emph{$\zeta$-localization center} for $\bm{u}$ if $| u(\varphi) | \ge \zeta$. 
		
		Next, let $\bm{M} = [M_{ij}]$ be a symmetric $N \times N$ matrix, with entries indexed by $i, j \in \llbracket N_1, N_2 \rrbracket$. If $\lambda \in \eig \bm{M}$, then we call $\varphi \in \llbracket N_1, N_2 \rrbracket$ a $\zeta$-localization center for $\lambda$ with respect to $\bm{M}$ if $\varphi$ is a $\zeta$-localization center for some unit eigenvector $\bm{u} \in \mathbb{R}^N$ of $\bm{M}$ with eigenvalue $\lambda$. Further let $(\bm{u}_1, \bm{u}_2, \ldots , \bm{u}_N)$ denote an orthonormal eigenbasis of $\bm{M}$. We call a bijection $\varphi : \llbracket 1, N \rrbracket \rightarrow \llbracket N_1, N_2 \rrbracket$ a \emph{$\zeta$-localization center bijection} for $\bm{M}$ if $\varphi(j)$ is a $\zeta$-localization center for $\bm{u}_j$ for each $j \in \llbracket 1, N \rrbracket$.
		
		By {\cite[Lemma 2.7]{LC}}, any symmetric $N \times N$ matrix admits a $(2N)^{-1}$-localization center bijection.
		
	\end{definition} 
	
	Localization centers are in general not always unique. However, they are ``approximately unique'' (that is, up to some small error, and with high probability) when the entries $(\bm{a};\bm{b})$ of the Lax matrix are sufficiently random, such as under thermal equilibrium; see \Cref{centerdistance} below. We next require some notation on the Toda lattice at thermal equilibrium, which will frequently be adopted throughout the remainder of this paper. 
		
		\begin{assumption}
			\label{lbetaeta} 
			
			Fix real numbers\footnote{Throughout this paper, constants may depend on $\beta$ and $\theta$, even when not explicitly stated.} $\beta, \theta > 0$, and assume that $\alpha \ne 0$ (recall \eqref{alpha}). For each real number $t \ge0 $, let $\bm{L}(t) = [L_{ij}(t)]$ denote the Lax matrix for the Toda lattice $(\bm{a}(t);\bm{b}(t))$ on $\llbracket N_1, N_2 \rrbracket$ (as in \Cref{matrixl}). Set $\eig \bm{L}(t) = (\lambda_1, \lambda_2, \ldots , \lambda_N)$, which does not depend on $t$ by \Cref{ltt}. At $t = 0$, abbreviate $\bm{L} = \bm{L}(0)$ and $(\bm{a}; \bm{b}) = ( \bm{a}(0); \bm{b}(0))$. Assume that the initial data $(\bm{a}; \bm{b})$ is sampled under the thermal equilibrium $\mu_{\beta, \theta; N-1,N}$ from \Cref{mubeta2}. Let $( \bm{p}(s); \bm{q}(s) )$, over $s \in \mathbb{R}_{\ge 0}$, denote the Toda state space dynamics associated with $( \bm{a}(t); \bm{b}(t))$, as in \Cref{Open}. Let $T \ge 1$ and $\zeta \ge 0$ be real numbers satisfying
			\begin{flalign}
				\label{n1n2zetat} 
				N_1 \le -N(\log N)^{-1} \le N (\log N)^{-1} \le N_2; \qquad 1 \le T \le N (\log N)^{-6}; \qquad  \zeta \ge e^{-100(\log N)^{3/2}}.
			\end{flalign}
			
			\noindent For each $s \in \mathbb{R}$, let $\bm{u}_j (s) \in \mathbb{R}^N$ denote a unit eigenvector of $\bm{L}(s)$ with eigenvalue $\lambda_j$. Under the orthonormal basis $(\bm{u}_1 (s), \bm{u}_2 (s), \ldots , \bm{u}_N (s))$ of $\bm{L}(s)$, let $\varphi_s: \llbracket 1, N \rrbracket \rightarrow \llbracket N_1, N_2 \rrbracket$ be a $\zeta$-localization center bijection for $\bm{L}(s)$, and denote 
			\begin{flalign}
				\label{qjs2}
				Q_j (s) = q_{\varphi_s (j)} (s), \qquad \text{for each $(j, s) \in \llbracket 1, N \rrbracket \times \mathbb{R}_{\ge 0}$}.
			\end{flalign}

		\end{assumption}
		
		Let us briefly explain the bounds in \eqref{n1n2zetat}. The first indicates that $0$ is in the ``bulk'' of the interval $\llbracket N_1, N_2 \rrbracket$ (that is, not too close to its endpoints), and the second indicates that the time scale $T$ is sublinear in $N$ (so as to guarantee that the boundary of $\llbracket N_1, N_2 \rrbracket$ does not asymptotically affect its bulk under the Toda lattice); although we will not impose this, it is beneficial to imagine that $T \in [N^{\delta}, N^{1-\delta}]$ for some small constant $\delta>0$. The third ensures that $\zeta$ is not too small.
		
		Under \Cref{lbetaeta}, we view $Q_j (s)$ as the ``location on $\mathbb{R}$'' of the eigenvalue $\lambda_j \in \eig \bm{L} (s)$; see \cite[Sections 1 and 2]{LC} for a justification as to why, and how this is in agreement with the notions from the physics literature. The following theorem, to be proven in \Cref{ProofV} below, then states the asymptotic velocity of this location $Q_j$ is with high probability given by the effective velocity $v_{\eff} (\lambda_j)$ from \Cref{v}, if $\theta \le \theta_0$ is sufficiently small.

	\begin{thm}
		\label{vestimate} 
		
		Adopt \Cref{lbetaeta}, and fix $\beta > 0$. There is a constant $\theta_0 = \theta_0 (\beta) > 0$ so that, whenever $\theta \in (0, \theta_0)$, there exists a real number $c = c(\beta, \theta) > 0$ such that the following holds with probability at least $1 - c^{-1} e^{-c(\log N)^2}$. For any integer $j \in \llbracket 1, N \rrbracket$ satisfying 
		\begin{flalign} 
			\label{j0} 
			N_1 + T (\log N)^5 \le \varphi_0 (j) \le N_2 - T(\log N)^5,
		\end{flalign} 
		
		\noindent we have 
		\begin{flalign}
			\label{qtq0} 
			\displaystyle\sup_{t \in [0,T]} \big| Q_j (t) - Q_j (0) - tv_{\eff} (\lambda_j) \big| \le T^{1/2} (\log N)^{35}.
		\end{flalign} 
		
	\end{thm}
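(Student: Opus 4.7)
My plan is to combine the asymptotic scattering relation \eqref{qktqk01} (from \cite{LC}) with a self-consistent linearization around the ansatz $\widetilde{Q}_j(s) := Q_j(0) + s v_{\eff}(\lambda_j)$. Setting $\Delta_k(t) := Q_k(t) - \widetilde{Q}_k(t)$ and $\Phi_k(s) := \sum_{j : Q_j(s) < Q_k(s)} \log|\lambda_k - \lambda_j|$, the relation \eqref{qktqk01} rearranges to
\begin{flalign*}
\Delta_k(t) = t\bigl( \lambda_k - v_{\eff}(\lambda_k) \bigr) - 2 \bigl( \Phi_k(t) - \Phi_k(0) \bigr) + \mathcal{E}_k(t),
\end{flalign*}
where $\mathcal{E}_k(t)$ is the quantitative error attached to \eqref{qktqk01} in \cite{LC}. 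Thus it suffices to show $\Phi_k(t) - \Phi_k(0) = \tfrac{t}{2}\bigl(\lambda_k - v_{\eff}(\lambda_k)\bigr) + O(T^{1/2}(\log N)^{35})$, uniformly for $t\in[0,T]$ and $k$ satisfying \eqref{j0}, with the required probability.

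The first step is to regularize, to tame both the singularity of $\log|x|$ at zero and the discontinuity of $\mathbf{1}_{Q_j(s) < Q_k(s)}$: I would truncate the logarithm at level $-(\log N)^C$ and smooth the order comparison on a scale tied to the parameter $\zeta$ from \Cref{lbetaeta}. Regularization errors are absorbed using concentration estimates for the Lax matrix $\bm{L}(0)$, namely eigenvalue rigidity and local laws (in the spirit of \cite{DSME,LDC}), together with the exponential eigenvector localization of \cite{SSO,LRBM} already underlying the bijection $\varphi_s$. Quantitatively, these ensure eigenvalue gaps are at least $e^{-(\log N)^{3/2}}$ and that counts of localization centers in macroscopic intervals concentrate around their means.

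The heart of the argument is the linearization. Substituting $\widetilde{Q}_j$ in place of $Q_j(t)$ inside $\Phi_k(t)$, the only indices $j$ contributing to $\Phi_k(t) - \Phi_k(0)$ are those for which $Q_j(0) - Q_k(0)$ lies between $0$ and $t(v_{\eff}(\lambda_k) - v_{\eff}(\lambda_j))$, weighted by the appropriate sign and by $\log|\lambda_k - \lambda_j|$. Under thermal equilibrium the joint empirical distribution of $(Q_j(0), \lambda_j)$ concentrates (after rescaling positions $\varphi_0(j)$ by the mean spacing $\alpha$ from \eqref{alpha}) around a deterministic measure, and the signed log-sum over flipped pairs converges to a Riemann integral proportional to
\begin{flalign*}
t \displaystyle\int \bigl( v_{\eff}(\lambda_k) - v_{\eff}(\mu) \bigr) \log|\lambda_k - \mu| \, \varrho(\mu)\, d\mu.
\end{flalign*}
By the integral equation \eqref{fvlambda} satisfied by $v_{\eff}$ (a direct consequence of \Cref{v} together with the dressing identities $(\theta^{-1} - \bm{\mathrm{T}}\bm{\varrho_{\beta}}) \varsigma_0^{\dr} = 1$ and $(\theta^{-1} - \bm{\mathrm{T}}\bm{\varrho_{\beta}}) \varsigma_1^{\dr} = \varsigma_1$ from \Cref{fdr}), this equals $\tfrac{t}{2}(\lambda_k - v_{\eff}(\lambda_k))$. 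The $T^{1/2}$-scale fluctuations around this mean arise from a central-limit-type bound on the sum over the random set of flipped pairs, while the $(\log N)^{35}$ factor absorbs the log-cutoff cost, eigenvalue gap losses, and union bounds over $k$ and a fine time grid.

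To close the proof I would run a bootstrap/continuity step. Assuming $\sup_{s\le t,\,j} |\Delta_j(s)| \le E$ on a dense time mesh (propagated to all $t$ by continuity), the error from replacing $Q_j(s)$ by $\widetilde{Q}_j(s)$ inside $\Phi_k$ is captured, at leading order, by a linear operator whose continuum analogue is $\theta^{-1} - \bm{\mathrm{T}}\bm{\varrho_{\beta}}$; by \Cref{trho} this is invertible, and a quantitative discretized version contracts $E$ down to $T^{1/2}(\log N)^{35}$, closing the bootstrap. The principal obstacle is establishing this quantitative, $N$-uniform invertibility of the discretized operator. This is precisely where the hypothesis $\theta < \theta_0$ enters (cf.\ the footnote to the theorem): for small $\theta$ the large scalar $\theta^{-1}$ dominates $\bm{\mathrm{T}}\bm{\varrho_{\beta}}$, making invertibility perturbative, while extending the argument to all $\theta > 0$ would demand a much finer spectral analysis of the dressing operator on a discretized function space.
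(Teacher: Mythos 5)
Your overall scaffolding — start from the asymptotic scattering relation, regularize the logarithm and the order indicator, feed in concentration for the Lax matrix eigenvalues and their localization centers, linearize, and close with quantitative invertibility that requires $\theta$ small — matches the paper's strategy at a high level. But two of the steps you gloss over are where the real work happens, and as stated they would not go through.

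First, your bootstrap is set up by substituting $\widetilde Q_j(s)$ into $\Phi_k$, in effect Taylor-expanding the scattering relation in $t$ around the free trajectories. The difficulty is that the error in the scattering relation (and the map $t\mapsto Q_j(t) = q_{\varphi_t(j)}(t)$ itself) is not continuous in $t$ because the localization-center bijection $\varphi_t$ jumps; a direct linearization in $t$ produces a non-differentiable residual that cannot simply be propagated through a continuity argument. The paper handles this by introducing a separate ``proxy'' dynamic $\mathfrak{Q}_j(s)$ that exactly solves the regularized, differentiated scattering relation (\Cref{qjt2}), proving $|\mathfrak{Q}_j - Q_{\varphi_0^{-1}(j)}| = \mathcal{O}(\mathfrak{M}^{1/2}(\log N)^C)$ by an integral (not differential) comparison, and only then linearizing around $v_{\eff}$. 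Your proposal has no analogue of this step, and without it the $t$-differentiation is not licensed.

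Second, and more concretely, you misidentify the operator whose quantitative invertibility is the bottleneck. You name $\theta^{-1} - \bm{\mathrm{T}}\bm{\varrho_\beta}$ and appeal to \Cref{trho}, but that operator is invertible on $\mathcal{H}$ for \emph{all} $\theta > 0$ with $\alpha\neq 0$; no smallness of $\theta$ is needed there, and its invertibility on the Hilbert space says nothing $N$-uniform about a finite matrix. The matrix that actually must be inverted is $\bm S$ from \eqref{sij2}, obtained by formally differentiating the regularized scattering relation: its diagonal is $1 + 2\sum_k \mathfrak{l}(\Lambda_j-\Lambda_k)\chi'(\cdot)$ and its off-diagonal is $-2\mathfrak{l}\,\chi'$, and the invertibility used is strict diagonal dominance. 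The reason small $\theta$ helps is \emph{not} that $\theta^{-1}$ is large; rather, as $\theta\downarrow 0$ the parameter $|\alpha| = |\log\beta - \Gamma'(\theta)/\Gamma(\theta)| \to\infty$, which makes the off-diagonal mass $\sim 2|\alpha|^{-1}\int |\mathfrak{l}(\Lambda_j-x)|\varrho(x)\,dx$ small compared to the ``$+1$'' on the diagonal (\Cref{beta0theta} and \Cref{qjt}). Your rationale, though superficially plausible, points at the wrong mechanism, and would give a false sense that the restriction on $\theta$ is about the dressing operator rather than about diagonal dominance of the discretized scattering matrix.

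A smaller omission: the $T^{1/2}$ rate comes from optimizing the regularization scale $\mathfrak{M}$ — concentration gives error $\mathcal{O}(\mathfrak{M}^{1/2})$, while the linearized-derivative comparison gives $\mathcal{O}(t\mathfrak{M}^{-1/2})$, and taking $\mathfrak{M}=t$ balances them. You assert the $T^{1/2}$ scaling via a ``CLT-type bound'' without exhibiting this trade-off, which is what actually produces the exponent.
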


	We have two comments on \Cref{vestimate}. First, the condition \eqref{j0} in \Cref{vestimate} indicates that the initial location of $\lambda_j$ (through its localization center) is in the bulk of $\llbracket N_1, N_2 \rrbracket$ (otherwise, boundary effects on the interval might become more visible and make \eqref{qtq0} invalid). Second, the error given by the right side of \eqref{qtq0} below is $T^{1/2+o(1)}$ (if $T \in [N^{\delta}, N^{1-\delta}]$). Since the fluctuations of the location $Q_j$ are believed to be diffusive\footnote{Our methods are also quite suggestive of this; see \Cref{ZkEstimate} below.} \cite{HDIS,HOSQ,HSIMS}, this should be essentially optimal.

	\subsection{Notation} 
	
	\label{Notation} 
	
	For any point $z \in \mathbb{C}$ and set $\mathcal{A} \subseteq \mathbb{C}$, denote $\dist (z, \mathcal{A}) = \inf_{s \in \mathcal{A}} |z-s|$. Denote the complement of any event $\mathsf{E}$ by $\mathsf{E}^{\complement}$. Denote the set of $n \times n$ real matrices by $\Mat_{n \times n}$. For any $\bm{M} \in \Mat_{n \times n}$, denote its transpose by $\bm{M}^{\mathsf{T}}$. Denote the set of $n \times n$ real symmetric matrices by $\SymMat_{n \times n} = \{ \bm{M} \in \Mat_{n \times n} : \bm{M} = \bm{M}^{\mathsf{T}} \}$. As in \Cref{matrixl}, it will often be convenient to index the rows and columns of $n \times n$ matrices by index sets different from $\llbracket 1, n \rrbracket$. Given a nonempty index set $\mathscr{I} \subset \mathbb{Z}$ of size $n = |\mathscr{I}|$, let $\Mat_{\mathscr{I}}$ denote the set of $n \times n$ real matrices $\bm{M} = [M_{ij}]_{i,j \in \mathscr{I}} \in \Mat_{n \times n}$, whose rows and columns are indexed by $\mathscr{I}$; also let $\SymMat_{\mathscr{I}} = \Mat_{\mathscr{I}} \cap \SymMat_{n \times n}$ denote the set of real symmetric matrices whose rows and columns are indexed by $\mathscr{I}$.

	Throughout, given some integer parameter $N \ge 1$ and event $\mathsf{E}_N$ depending on $N$, we say that $\mathsf{E}_N$ \emph{holds with overwhelming probability} if there exists a constant $c > 0$ such that $\mathbb{P} [\mathsf{E}_N^{\complement}] \le c^{-1} e^{-c(\log N)^2}$. In this case, we call $\mathsf{E}_N$ \emph{overwhelmingly probable}. Observe that, whenever proving that $\mathsf{E}_N$ is overwhelmingly probable, we may assume $N \ge N_0$ is sufficiently large; we will often do this implicitly (and without comment) throughout this work.

	\subsection*{Acknowledgements} 
	
	The author heartily thanks Alexei Borodin, Jeremy Quastel, and Herbert Spohn for valuable conversations. The author is also very grateful to the anonymous referees for their helpful suggestions. This work was partially supported by a Clay Research Fellowship and a Packard Fellowship for Science and Engineering.

	\section{Proof Outline} 
	
	\label{Analysis0} 
	
	In this section we outline the proof of \Cref{vestimate}, providing an outline for organization of the remainder of the paper in the process. We adopt the notation and assumptions of that theorem throughout. We also assume for notational convenience that $\alpha > 0$, and we do not carefully track the precise power of $\log N$ that appears in the errors below (writing this exponent as $C$ throughout, which might change between appearances).

	\subsection{Regularization and Concentration Bounds}
	
	\label{Estimate1} 
	
	We begin with the asymptotic scattering relation for the $Q_j$ (see \Cref{ztlambda2} below), which states that we likely have
	\begin{flalign}
		\label{qktqk0} 
		Q_k (t) - Q_k (0) + 2 \displaystyle\sum_{i=1}^N (\mathbbm{1}_{Q_i (t) < Q_k (t)} - \mathbbm{1}_{Q_i (0) < Q_k (0)}) \cdot \log |\lambda_k - \lambda_i| = \lambda_k t + \mathcal{O} ( (\log N)^C). 
	\end{flalign} 
	
	\noindent An issue with \eqref{qktqk0} is that its left side is not linear in the $(Q_i)$, due to the presence of the indicator functions there, so we will regularize the latter using a cutoff function. To implement this, fix a real number $\mathfrak{M} \in [T^{1/2}, T]$, and let $\chi (x)$ be a smooth approximation for $\mathbbm{1}_{x > 0}$ on scale $\mathfrak{M}$, such that $\chi'$ is even. More specifically, we will have that $\chi' (x)$ is even in $x$; that $\chi (x) = \mathbbm{1}_{x>0}$ for $|x| > \mathfrak{M}$; and that $\chi' (x) = \mathcal{O} (\mathfrak{M}^{-1})$ and $\chi'' (x) = \mathcal{O} (\mathfrak{M}^{-2})$ for all $x \in \mathbb{R}$. To avoid the singularity of $\log |\lambda_i - \lambda_k|$ at $i=k$, it will also be convenient to introduce the function $\mathfrak{l} (x) = (\log|x^2+\mathfrak{d}^2|)/2$, for some very small real number $\mathfrak{d} \ll 1$ (we will in particular take $\mathfrak{d} = e^{-5(\log N)^2}$).
	
	Under this notation, we will first show the concentration bound (\Cref{chiconcentration} below), which indicates that for any $s \in [0, t]$ we with high probability have
	\begin{flalign}
		\label{qchiq} 
		\displaystyle\sum_{i=1}^N \big( \chi (Q_k(s)-Q_i(s)) - \mathbbm{1}_{Q_k (s) - Q_i (s) > 0} \big) \cdot \mathfrak{l} (\lambda_k - \lambda_i) = \mathcal{O} (\mathfrak{M}^{1/2} (\log N)^C).
	\end{flalign}
	
	\noindent This is will essentially be a consequence of the more general concentration bound \Cref{concentrationh3}, to be proven in \Cref{Localization2} and \Cref{Proof12} below, indicating that we likely have
	\begin{flalign}
		\label{qgq2} 
		\begin{aligned} 
		\displaystyle\sum_{i=1}^N F(& \lambda_i) \cdot G(Q_k (s) - Q_i(s)) \cdot \mathfrak{l} (\lambda_k-\lambda_i) \\
		& = \displaystyle\int_{-\infty}^{\infty} F(\lambda) \log |\lambda_k - \lambda| \varrho(\lambda) d \lambda \displaystyle\int_{-\infty}^{\infty} G(\alpha q) dq + \mathcal{O} (\mathfrak{M}^{1/2} (\log N)^C),
		\end{aligned} 
	\end{flalign}
	
	\noindent for functions $F$ and $G$ satisfying $\supp G \subseteq [-\mathfrak{M}, \mathfrak{M}]$ (and other properties listed in \Cref{fgab} below). Indeed, taking $F = 1$ and $G (x) = \chi (x) - \mathbbm{1}_{x>0}$ in \eqref{qgq2} yields \eqref{qchiq}, as this $G$ is odd (so that the second integral on the right side of \eqref{qgq2} is equal to $0$). 
	
	Before continuing to analyze \eqref{qktqk0}, let us briefly explain why \eqref{qgq2} should hold (as it will be used again below). Set $Q_i = Q_i (s)$. We first use the ``approximate locality'' of the Lax matrix eigenvalues (\Cref{lleigenvalues} below); this indicates that, up to small error and with high probability, a given $\lambda_i$ only depends on the entries of $\bm{L}(s) $ with indices close to the associated localization center $\varphi_s (i)$. In particular, $\lambda_i$ essentially depends only on the increments $q_{j+1} - q_j$ for $j \approx \varphi_s (i)$, or equivalently on $Q_j - Q_i$ for $Q_j \approx Q_i$ (\Cref{estimateik} below). Hence, the $(\lambda_i)$-dependent parts on the left side of \eqref{qgq2} should approximately decouple from the $(Q_i)$-dependent parts, namely,
	\begin{flalign}
		\label{sum3} 
		\displaystyle\sum_{i=1}^N F(& \lambda_i) \cdot G(Q_k - Q_i) \cdot \mathfrak{l} (\lambda_k-\lambda_i) \approx \displaystyle\frac{1}{N} \displaystyle\sum_{i=1}^N F(\lambda_i) \cdot \mathfrak{l} (\lambda_k - \lambda_i) \cdot \displaystyle\sum_{i=1}^N G(Q_k - Q_i).
	\end{flalign}
	
	\noindent Next, the average spacing between consecutive particles in the Toda lattice in thermal equilbrium is given by $\mathbb{E}[q_{i+1}-q_i] = \alpha$ (\eqref{qiqjs4} below). This indicates that
	\begin{flalign}
		\label{sum4} 
		\displaystyle\sum_{i=1}^N G(Q_k - Q_i) \approx \displaystyle\int_{-\infty}^{\infty} G(\alpha q) dq.
	\end{flalign}
	
	\noindent Moreover, it was shown in \cite{DSME} that $\varrho$ is the limiting empirical spectral distribution for the Lax matrix $\bm{L}$ (\Cref{lf} below). As such (and using the fact that $\mathfrak{l}(x) \approx \log |x|$),
	\begin{flalign}
		\label{sum5} 
		\displaystyle\frac{1}{N} \displaystyle\sum_{i=1}^N F(\lambda_i) \cdot \mathfrak{l} (\lambda_k - \lambda_i) \approx \displaystyle\int_{-\infty}^{\infty} F(\lambda) \log |\lambda_k - \lambda| \varrho (\lambda) d\lambda.
	\end{flalign}
	
	\noindent Combining \eqref{sum3}, \eqref{sum4}, and \eqref{sum5} yields \eqref{qgq2}. The error of about $\mathfrak{M}^{1/2}$ on the right side of \eqref{qgq2} arises from the fact that the left side of \eqref{qgq2} likely constitutes $\mathcal{O}(\mathfrak{M})$ nonzero terms (as $\supp G \subseteq [-\mathfrak{M}, \mathfrak{M}]$), which are nearly independent (by the approximate locality of the $(\lambda_i)$, with the independence of the entries in $\bm{L}$).

	\subsection{Proxy Dynamics and Their Analysis}
	
	\label{Estimate2} 
	
	 Inserting \eqref{qchiq} into \eqref{qktqk0} (and replacing $\log |\lambda_k - \lambda_i|$ there with $\mathfrak{l}(\lambda_k-\lambda_i)$) yields 
	\begin{flalign}
		\label{qktqk02} 
		\begin{aligned} 
		Q_{k} (t) - Q_{k} (0) + 2 \displaystyle\sum_{i=1}^N \big( \chi \big(Q_{k} (t) - Q_i (t) \big) - \chi \big(Q_{k} (0) - Q_{i} (& 0) \big) \big) \cdot \mathfrak{l} (\lambda_k - \lambda_i) \\
		& = \lambda_k t + \mathcal{O} ( \mathfrak{M}^{1/2} (\log N)^C).
		\end{aligned} 
	\end{flalign} 
	
	\noindent Next, we would like to differentiate both sides of \eqref{qktqk02} in $t$, which would yield for $s \in [0, t]$ that
	\begin{flalign*}
		Q_{k}' (s) + 2 \displaystyle\sum_{i=1}^N \big( Q_{k}' (s) - Q_{i}' (s) \big) \cdot \chi' \big( Q_{k} (s) - Q_{i} (s) \big) \cdot \mathfrak{l} (\lambda_k - \lambda_i) \approx \lambda_k.
	\end{flalign*} 
	
	\noindent However, this is not quite accurate, as the error $\mathcal{O}( \mathfrak{M}^{1/2} (\log N)^C)$ on the right side of \eqref{qktqk02} is not differentiable in $t$ (in fact, it is not continuous in $t$, since the $\varphi_t (i)$ and thus $Q_i (t)$ are not). To circumvent this, we instead introduce a ``proxy dynamic'' $(\mathfrak{Q}_k (s))$ satisfying the equation (that is essentially the above one that we would have liked for $(Q_k(s))$ to satisfy, but without the error)
	\begin{flalign}
		\label{qk2} 
		\mathfrak{Q}_k' (s) + 2 \displaystyle\sum_{i=1}^N \big( \mathfrak{Q}_k' (s) - \mathfrak{Q}_i' (s) \big) \cdot \chi' \big( \mathfrak{Q}_k (s) - \mathfrak{Q}_i (s) \big) \cdot \mathfrak{l} (\lambda_k - \lambda_i) = \lambda_k;
	\end{flalign}
	
	\noindent see \Cref{qjt2} below.\footnote{The definition there is in fact slightly different, since it involves the reindexing $j = \varphi_0 (k)$ that orders the Lax matrix eigenvalues by their initial positions (it also involves additional boundary terms when $\mathfrak{Q}_k$ is too close to leftmost or rightmost particles, which are asymptotically irrelevant but convenient for the proofs). For simplicity, we ignore that reindexing here.} We then show as \Cref{q2q} that these proxy dynamics are indeed close to the original ones, namely, 
	\begin{flalign}
		\label{qkqk} 
		\mathfrak{Q}_k (s) = Q_k (s) + \mathcal{O} (\mathfrak{M}^{1/2} (\log N)^C).
	\end{flalign}  
	
	The benefit of \eqref{qk2} is that it is linear in $(\mathfrak{Q}_k')$ if we view the $(\mathfrak{Q}_k)$ as fixed. Let us explain why $v_{\eff} (\lambda_k)$ should be an approximate solution (for $\mathfrak{Q}_k' (s)$) of \eqref{qk2}. Replacing $\mathfrak{Q}_k' (s)$ with $v_{\eff} (\lambda_k)$ (and $\chi' (\mathfrak{Q}_k (s) - \mathfrak{Q}_i (s))$ with $\chi' (Q_k (s) - Q_i (s))$) on the left side of \eqref{qk2} gives 
	\begin{flalign}
		\label{v2} 
		\begin{aligned}
		& v_{\eff} (\lambda_k) + 2 \displaystyle\sum_{i=1}^N \big( v_{\eff} (\lambda_k) - v_{\eff} (\lambda_i) \big) \cdot \chi' \big( Q_k (s) - Q_i (s) \big) \cdot \mathfrak{l} (\lambda_k - \lambda_i) \\
		& \quad = v_{\eff} (\lambda_k) + 2 \displaystyle\int_{-\infty}^{\infty} \big( v_{\eff} (\lambda_k) - v_{\eff} (\lambda) \big)  \log |\lambda_k - \lambda| \varrho (\lambda) d \lambda \displaystyle\int_{-\infty}^{\infty} \chi' (\alpha q) dq + \mathcal{O} (\mathfrak{M}^{-1/2} (\log N)^C),
		\end{aligned} 
	\end{flalign}
	
	\noindent where the approximation holds with high probability due to the ($F \in \{ v_{\eff}, 1 \} $ and $G = \chi'$ case of the) concentration bound \eqref{qgq2}; observe that the error on the right side of \eqref{v2} improves on that in \eqref{qgq2} by a factor of $\mathfrak{M}^{-1}$ since $\chi' = \mathcal{O}(\mathfrak{M}^{-1})$. Since the total integral of $\chi' (\alpha q)$ is equal to $\alpha^{-1}$ (as $\chi (x) = \mathbbm{1}_{x>0}$ for $|x| > \mathfrak{M}$), we deduce recalling the definition \eqref{operatort} of $\bm{\mathrm{T}}$ that 
	\begin{flalign*}
		v_{\eff} (\lambda_k &) + 2 \displaystyle\sum_{i=1}^N \big( v_{\eff} (\lambda_k) - v_{\eff} (\lambda_i) \big) \cdot \chi' \big( Q_k (s) - Q_i (s) \big) \cdot \mathfrak{l} (\lambda_k - \lambda_i) \\
		& \quad = v_{\eff} (\lambda_k) + \alpha^{-1} \cdot \big( v_{\eff} (\lambda_k) \cdot \bm{\mathrm{T}} \varrho (\lambda_k) - \bm{\mathrm{T}} \bm{\varrho} v_{\eff} (\lambda_k) \big) + \mathcal{O} (\mathfrak{M}^{-1/2} (\log N)^C),
	\end{flalign*} 
	
	\noindent where $\bm{\mathrm{T}} \bm{\varrho} v_{\eff}$ denotes the composition of the operators $\bm{\mathrm{T}}$ and $\bm{\varrho}$ applied to the function $v_{\eff}$. Using the identity (see \Cref{rho0} and \Cref{vt} below) for $v_{\eff}$ given by 
	\begin{flalign}
		\label{vlambda01} 
		v_{\eff} (\lambda_k) + \alpha^{-1} \cdot \big( v_{\eff} (\lambda_k) \cdot \bm{\mathrm{T}} \varrho (\lambda_k) - \bm{\mathrm{T}} \bm{\varrho} v_{\eff} (\lambda_k) \big) = \lambda_k,
	\end{flalign} 
	
	\noindent it follows that 
	\begin{flalign}
		\label{v3} 
		v_{\eff} (\lambda_k) + 2 \displaystyle\sum_{i=1}^N \big( v_{\eff} (\lambda_k) - v_{\eff} (\lambda_i) \big) \cdot \chi' \big( Q_k (s) - Q_i (s) \big) \cdot \mathfrak{l} (\lambda_k - \lambda_i) = \lambda_k + \mathcal{O} (\mathfrak{M}^{-1/2} (\log N)^C),
	\end{flalign}
	
	\noindent which indeed verifies that $v_{\eff}$ is an approximate solution for $\mathfrak{Q}_k'$ of \eqref{qk2}. 
	
	We would like to use this to deduce that $\mathfrak{Q}_k' (s) \approx v_{\eff} (\lambda_k)$. To that end, denote $\mathfrak{w}_i = \mathfrak{Q}_i' (s) - v_{\eff} (\lambda_i)$. Then, subtracting \eqref{v3} from \eqref{qk2} (and again replacing $\chi' (\mathfrak{Q}_k (s) - \mathfrak{Q}_i (s))$ with $\chi' (Q_k (s) - Q_i (s))$) yields
	\begin{flalign*} 
	& \mathfrak{w}_k + 2 \displaystyle\sum_{i=1}^N ( \mathfrak{w}_k - \mathfrak{w}_i) \cdot \chi' \big( Q_k (s) - Q_i (s) \big) \cdot \mathfrak{l} (\lambda_k - \lambda_i) = \mathcal{O}(\mathfrak{M}^{-1/2} (\log N)^C).
	\end{flalign*} 
	
	\noindent Viewing $(Q_k)$ as fixed and denoting $\bm{\mathfrak{w}} = (\mathfrak{w}_k)$, this is a matrix equation of the form $\bm{S} \bm{\mathfrak{w}} = \mathcal{O} (\mathfrak{M}^{-1/2} (\log N)^C)$ for some explicit matrix $\bm{S}$ (see \eqref{sij2} below). We would like for $\bm{S}$ to be (quantitatively) invertible, which we believe to be true, at least for some choice of $\chi$ (indeed, its definition allows quite a bit of freedom in fixing $\chi$). While we do not know how to prove this in full generality, we do if $\theta \le \theta_0$ is sufficiently small, in which case $\bm{S}$ is in fact strictly diagonally dominant (see \Cref{sk1k2inverse} and \Cref{qjt}). So, for $\theta \le \theta_0$, it follows that $|\mathfrak{Q}_k' (s) - v_{\eff}(\lambda_k)| = |\mathfrak{w}_k| = \mathcal{O} (\mathfrak{M}^{-1/2} (\log N)^C)$; see \Cref{qv}. Integrating this over $s \in [0, t]$ yields 
	\begin{flalign*} 
		|\mathfrak{Q}_k (t) - \mathfrak{Q}_k (0) - t v_{\eff} (\lambda_k)| \le \mathcal{O} (t\mathfrak{M}^{-1/2} (\log N)^C),
	\end{flalign*} 	
	
	\noindent which with \eqref{qkqk} implies 
	\begin{flalign*}
		|Q_k (t) - Q_k (t) - t v_{\eff} (\lambda_k)| \le \mathcal{O} ( \mathfrak{M}^{1/2} (\log N)^C + t \mathfrak{M}^{-1/2} (\log N)^C).
	\end{flalign*}
	
	\noindent Taking $\mathfrak{M} = t$ then yields \Cref{vestimate}.

	\subsection{Heuristics for Fluctuations}

	\label{ZkEstimate} 
	
	Although we will not pursue a mathematical justification in this work, in this section we briefly provide some heuristic commentary on the fluctuations for the $Q_k$; this will suggest that they are diffusive, as also predicted in \cite{HDIS,HOSQ} and \cite[Chapter 15]{HSIMS}. In what follows, we will write $A \approx B$ if $A$ and $B$ agree up to the diffusive scale, namely, if $A = B + o(t^{1/2})$. Instead of fixing $\mathfrak{M} = t$, we will take $t^{1/2} \ll \mathfrak{M} \ll t$, so that $t^{-1} \mathfrak{M}^{1/2} (\log N)^C \approx 0$.
	 
	 Now let $Z_k = Z_k (t)$ denote the fluctuations of $Q_k (t)$, that is, define it to satisfy 
	 \begin{flalign*} 
	 	Q_k (t) = Q_k (0) + t v_{\eff} (\lambda_k) + t^{1/2} Z_k;
	 \end{flalign*} 
	 
	 \noindent we would like to see that $Z_k$ is a random variable of order $1$. Inserting this into \eqref{qktqk02}, we find 
	\begin{flalign*}
		t v_{\eff} (\lambda_k) + t^{1/2} Z_k \approx t \lambda_k + 2 \displaystyle\sum_{i = N_1}^{N_2} \mathfrak{l} (\lambda_k - \lambda_i) \cdot \big( \chi(Q_k (0) - Q_i (0)) - \chi (Q_k(t)-Q_i(t)) \big).
	\end{flalign*} 
	
	\noindent Taylor expanding gives
	\begin{flalign*}
		\chi (Q_k (t) - Q_i (t)) & = \chi \big( Q_k (0) - Q_i (0) + t \big( v_{\eff} (\lambda_k) -  v_{\eff} (\lambda_i) \big) \big) \\
		& \qquad + t^{1/2} (Z_k - Z_i) \cdot \chi' \big( Q_k (0) - Q_i (0) + t \big( v_{\eff} (\lambda_k) - v_{\eff} (\lambda_i) \big) \big) + \mathcal{O} (t \mathfrak{M}^{-2}),
	\end{flalign*}
	
	\noindent and thus (using the facts that the sums over $i$ are supported on $\mathcal{O}(\mathfrak{M})$ terms and $t \mathfrak{M}^{-1} = o(t^{1/2})$)
	\begin{flalign}
		\label{vzequation} 
		\begin{aligned} 
		t & \big( v_{\eff} (\lambda_k) - \lambda_k \big) + t^{1/2} Z_k \\
		& \approx 2 \displaystyle\sum_{i=N_1}^{N_2} \mathfrak{l}(\lambda_k-\lambda_i) \cdot \Big( \chi (Q_k (0) - Q_i (0)) - \chi \big(Q_k (0) - Q_i(0) + t \big( v_{\eff}(\lambda_k)-v_{\eff}(\lambda_i) \big) \big) \Big) \\
		& \qquad + 2t^{1/2} \displaystyle\sum_{i=N_1}^{N_2} (Z_i - Z_k) \cdot \mathfrak{l} (\lambda_k - \lambda_i) \cdot \chi'  \big( Q_k(0)-Q_i(0) + t \big(v_{\eff}(\lambda_k)-v_{\eff}(\lambda_i) \big) \big).
		\end{aligned}
	\end{flalign}
	
	 Let us Taylor expand the second term of the above statement. For an index $i \in \llbracket N_1, N_2 \rrbracket$ to contribute to this sum, it must hold that $\chi' (Q_k (0) - Q_i (0) + t(v_{\eff} (\lambda_k) - v_{\eff} (\lambda_i))) \ne 0$, meaning that $|k-i| = \mathcal{O} (t)$, in which case we will have $Q_k (0) - Q_i (0) = \alpha ( \varphi(k)-\varphi(i)) + \mathcal{O} (t^{1/2})$ (see \eqref{qiqjs4}, ignoring the logarithmic corrections). Therefore,
	\begin{flalign*}
		\chi' \big( Q_k (0 & ) - Q_i (0) + t\big( v_{\eff} (\lambda_k) - v_{\eff} (\lambda_i) \big) \big) \\
		& = \chi' \big( \alpha(\varphi(k) - \varphi(i)) + t \big( v_{\eff} (\lambda_k) - v_{\eff} (\lambda_i) \big) \big) + \mathcal{O} ( t^{1/2} \mathfrak{M}^{-2}),
	\end{flalign*}
	
	\noindent by a Taylor expansion. Upon insertion into the second sum on the right side of \eqref{vzequation}, this yields (again using the facts that this sum is supported on $\mathcal{O}(\mathfrak{M})$ terms and $t \mathfrak{M}^{-1} = o(t^{1/2})$)
		\begin{flalign}
		\label{vzequation2} 
		\begin{aligned} 
			t & \big( v_{\eff} (\lambda_k) - \lambda_k \big) + t^{1/2} Z_k \\
			& \approx 2 \displaystyle\sum_{i=N_1}^{N_2} \mathfrak{l}(\lambda_k-\lambda_i) \cdot \Big( \chi (Q_k (0) - Q_i (0)) - \chi \big(Q_k (0) - Q_i(0) + t \big( v_{\eff}(\lambda_k)-v_{\eff}(\lambda_i) \big) \big) \Big) \\
			& \qquad + 2t^{1/2} \displaystyle\sum_{i=N_1}^{N_2} (Z_i - Z_k) \cdot \mathfrak{l} (\lambda_k - \lambda_i) \cdot \chi'  \big( \alpha (\varphi(k) - \varphi(i) ) + t \big(v_{\eff}(\lambda_k)-v_{\eff}(\lambda_i) \big) \big).
		\end{aligned}
	\end{flalign}
	
	\noindent The benefit of \eqref{vzequation2} is that it provides a linear equation for the $(Z_i)$ that only depends on the Toda lattice through its initial data $(Q_i (0))$ (and not on its evolution); in this way, to analyze the $(Z_i)$ one requires purely static (as opposed to dynamical) information about the Lax matrix $\bm{L}$. 
		
	Now let us analyze the first sum on the right side of \eqref{vzequation2}, which we denote by $\Psi$. The calculations around \eqref{vlambda01} suggest that $\mathbb{E}[\Psi] \approx t v_{\eff} (\lambda_k) - t\lambda_k$, and the approximate independence between the $(\lambda_i)$ mentioned at the end of \Cref{Estimate1} suggest that its fluctuations should be diffusive and converge in the scaling limit to a Gaussian process\footnote{More specifically, it is plausible by the explicit form of $\Psi$ that its flucutations converge to (a variant of) a L\'{e}vy--Chentsov field \cite[Equation (30)]{HRHF}, which also appears in the fluctuations of the hard rods model.} $\Xi$, so that 
	\begin{flalign*} 
		\Psi \approx tv_{\eff} (\lambda_k) - t\lambda_k + t^{1/2} \cdot \Xi_k. 
	\end{flalign*} 
	
	\noindent Inserting this into \eqref{vzequation2} gives
	\begin{flalign}
		\label{zkxik} 
		Z_k + 2\displaystyle\sum_{i=N_1}^{N_2} (Z_k - Z_i) \cdot \mathfrak{l} (\lambda_k - \lambda_i) \cdot \chi' \big( \alpha (\varphi(k) - \varphi(i) ) + t \big( v_{\eff} (\lambda_k) - v_{\eff} (\lambda_i) \big) \big) = \Xi_k + o(1).
	\end{flalign}
	
	\noindent This provides a linear equation for the $(Z_k)$ in terms of the $(\Xi_k)$, thereby indicating that $Z_k$ should be a random variable of order $1$ (as $\Xi_k$ is). So, the fluctuations of $Q_k$ should indeed be diffusive. One might use \eqref{zkxik} further to predict an expression for the fluctuations $(Z_k)$ through the random field $(\Xi_k)$, analogously to what was done in \Cref{Estimate2} (or \cite[Appendix B]{LC}), but we will not pursue this further here.

	\section{Miscellaneous Preliminaries} 
	
	\label{MatrixLattice}

	\subsection{Properties of $\bm{\mathrm{T}}$} 
	
	\label{TProperty}
	
	Here, we state various properties of the operator $\bm{\mathrm{T}}$ from \Cref{TOperator}. Throughout, we recall the constant $\alpha$ and functions $\varrho$ and $\varrho_{\beta}$ from \Cref{frho}, as well as the Hilbert space $\mathcal{H}$, associated inner product, and integral operator $\bm{\mathrm{T}}$ from \Cref{tbeta}. 
	
	We begin with the following lemma that bounds $\varrho_{\beta}$, its derivative, and $\varrho$. The first two estimates in \eqref{estimate0} are due to \cite[Lemma 2.2]{RET}; we establish the third in \Cref{ProofIntegral} below. 
	
	\begin{lem}
		
		\label{rhoexponential} 
		
		There exists a constant $C > 1$ such that 
		\begin{flalign}
			\label{estimate0} 
			\begin{aligned} 
			\varrho_{\beta} (x) \le C (|x &|+1)^{2 \theta} e^{-\beta x^2/2}; \qquad \varrho_{\beta}' (x) \le C(|x|+1) \varrho_{\beta} (x); \\
			& \varrho (x) \le C (|x|+1)^{2\theta+1} e^{-\beta x^2/2}.
			\end{aligned} 
		\end{flalign} 
	\end{lem}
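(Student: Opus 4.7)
The first two estimates of \eqref{estimate0} are invoked directly from \cite[Lemma 2.2]{RET}, which gives the stated Gaussian-with-polynomial-prefactor bound on $\varrho_{\beta}(x)$ and the logarithmic-derivative bound $|\varrho_{\beta}'(x)| \le C(|x|+1) \varrho_{\beta}(x)$. So the substantive content is to derive the bound on $\varrho$ from these.

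My plan for the third estimate is to expand $\varrho$ via the product rule in \Cref{frho}, writing
\begin{flalign*}
\varrho(x) = \partial_\theta\bigl(\theta \cdot \varrho_{\beta;\theta}(x)\bigr) = \varrho_{\beta;\theta}(x) + \theta \cdot \partial_\theta \varrho_{\beta;\theta}(x).
\end{flalign*}
The first term is already controlled by the first estimate: $\varrho_{\beta;\theta}(x) \le C(|x|+1)^{2\theta} e^{-\beta x^2/2} \le C(|x|+1)^{2\theta+1} e^{-\beta x^2/2}$. For the second term, I would differentiate the explicit formula for $\varrho_{\beta;\theta}(x)$ in $\theta$, which reduces (up to the bounded $\theta$-derivative of the prefactor $(\theta/\Gamma(\theta))^{1/2}$) to bounding $\partial_\theta |\mathfrak{F}_\theta(\beta^{1/2}x)|^{-2}$. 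Using $\partial_\theta |\mathfrak{F}_\theta|^{-2} = -2 |\mathfrak{F}_\theta|^{-4} \Real\bigl(\partial_\theta \mathfrak{F}_\theta \cdot \overline{\mathfrak{F}_\theta}\bigr)$ and the integral representation, the derivative $\partial_\theta \mathfrak{F}_\theta(u)$ differs from $\mathfrak{F}_\theta(u)$ essentially by the insertion of $\log y$ in the integrand.

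The main technical step will be controlling this logarithmic insertion. I would split the $y$-integral into $\{y \le 1\}$ and $\{y \ge 1\}$. On $\{y \le 1\}$ the $\log y$ is integrable against $y^{\theta-1} e^{-y^2/2}$ and gives an $\mathcal{O}(1)$ contribution, while on $\{y \ge 1\}$ we can write $\log y = \int_1^y s^{-1} ds$ and use integration by parts (or alternatively, write $y^{\theta-1}\log y = \partial_\theta y^{\theta-1}$ and use Cauchy estimates in $\theta$) to upgrade to a bound of the form $|\partial_\theta \mathfrak{F}_\theta(u)| \le C(|u|+1) \cdot \sup_{\theta'} |\mathfrak{F}_{\theta'}(u)|$ for $\theta'$ in a neighborhood of $\theta$. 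Combining this with the lower bound $|\mathfrak{F}_\theta(u)|^{-2} \le C(|u|+1)^{2\theta} \cdot 2\pi/\beta$ that is implicit in the first estimate, one concludes $|\partial_\theta \varrho_{\beta;\theta}(x)| \le C(|x|+1)^{2\theta+1} e^{-\beta x^2/2}$, which together with the first term yields the claim.

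The main obstacle is precisely the extraction of the single extra factor $(|x|+1)$ from the $\log y$ insertion; the estimates in \cite[Lemma 2.2]{RET} are formulated only for $\varrho_{\beta}$ and its $x$-derivative, not directly for $\theta$-derivatives, so one must either redo the stationary-phase/saddle-point analysis of \cite{RET} with $\log y$ present, or—preferably—use Cauchy's integral formula in the complex parameter $\theta$ together with the already-established $\theta$-uniform bounds on $\varrho_{\beta;\theta}$ to extract the derivative bound without repeating the analysis.
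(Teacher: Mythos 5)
Your decomposition $\varrho = \varrho_{\beta;\theta} + \theta\,\partial_\theta\varrho_{\beta;\theta}$ and the plan to attack $\partial_\theta\varrho_{\beta;\theta}$ through $\partial_\theta\mathfrak{F}_\theta$ is the natural route, and you are honest that the crux (the bound on $\int_0^\infty y^{\theta-1}\log y\,e^{\mathrm{i}uy - y^2/2}\,dy$) is left unresolved. That unresolved step is a genuine gap: you assert $|\partial_\theta\mathfrak{F}_\theta(u)| \le C(|u|+1)\sup_{\theta'}|\mathfrak{F}_{\theta'}(u)|$ but give no argument that would produce the algebraic factor, and in fact the heuristic saddle-point scaling (the dominant contribution to $\mathfrak{F}_\theta(u)$ for large $|u|$ comes from $y\sim|u|^{-1}$) suggests the extra factor should be $\log(|u|+2)$ rather than $(|u|+1)$, which is stronger and more plausible but still needs proof. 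Your fallback, Cauchy's estimate in the complex parameter $\theta$, has its own snag that you do not flag: $\varrho_{\beta;\theta}(x)\propto\big(\mathfrak{F}_\theta(\sqrt\beta x)\,\mathfrak{F}_\theta(-\sqrt\beta x)\big)^{-1}e^{-\beta x^2/2}$ continues holomorphically in $\theta$, but the Cauchy bound on a disk of radius $1/2$ requires the first estimate (equivalently, a lower bound on $|\mathfrak{F}_{\theta'}(\pm u)|$ uniform in $u$) to hold for complex $\theta'$ on that disk, and neither \cite[Lemma 2.2]{RET} as quoted nor anything you write supplies that.

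Regarding the comparison: the paper supplies no proof at all for the third estimate beyond the assertion that it ``follows from the first two, with \Cref{frho},'' so there is no detailed argument to match against. Note, though, that the paper invokes \emph{both} of the first two estimates, whereas your argument makes no use of the second (the $x$-derivative bound $|\varrho_\beta'(x)|\le C(|x|+1)\varrho_\beta(x)$); this mismatch suggests either that the paper has a different, slicker reduction in mind, or that the paper's attribution is loose and the third estimate really does require re-running the analysis behind \cite[Lemma 2.2]{RET} with the $\log y$ insertion present, as you anticipate. In either case the proposal as written is incomplete at its central technical step, which you should close before regarding this as a proof rather than a plan.
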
 
	
	We establish the following relation between $\varrho$ and $\varrho_{\beta}$ in \Cref{ProofIntegral} below.

	\begin{lem}
		\label{rhorho} 
		
		For any $x \in \mathbb{R}$, we have $\varrho(x) = \theta \cdot ( \bm{\mathrm{T}} \varrho (x) + \alpha ) \cdot \varrho_{\beta} (x)$.
		
	\end{lem}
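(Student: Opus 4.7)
The plan is to rewrite the claim as a $\theta$-derivative identity and reduce it to an equilibrium equation for $\varrho_\beta$ of the type established in the high-temperature $\beta$-ensemble literature.

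First, I would rearrange the claim. Since $\varrho = \partial_\theta(\theta \varrho_\beta) = \varrho_\beta(1 + \theta \partial_\theta \log \varrho_\beta)$, dividing the target equality by $\theta \varrho_\beta$ rewrites it as
\begin{flalign*}
	\partial_\theta \log \varrho_\beta(x) = \bm{\mathrm{T}} \varrho(x) + \alpha - \theta^{-1}.
\end{flalign*}
The Gaussian decay from \Cref{rhoexponential} permits differentiating the integral defining $\bm{\mathrm{T}}$ in $\theta$ under the integral sign (dominated convergence), yielding $\bm{\mathrm{T}} \varrho = \partial_\theta (\theta \bm{\mathrm{T}} \varrho_\beta)$. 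It therefore suffices to prove
\begin{flalign}
	\label{planned}
	\partial_\theta \big( \log \varrho_\beta(x) - \theta \bm{\mathrm{T}} \varrho_\beta(x) \big) = \alpha - \theta^{-1}.
\end{flalign}

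Second, I would deduce \eqref{planned} from the following equilibrium equation for $\varrho_\beta$: there is a function $c(\theta)$, independent of $x$, with
\begin{flalign*}
	\log \varrho_\beta(x) + \tfrac{1}{2}\beta x^2 - \theta \bm{\mathrm{T}} \varrho_\beta(x) = c(\theta), \qquad \text{for all } x \in \mathbb{R}.
\end{flalign*}
This is the Euler--Lagrange characterization of $\varrho_\beta$ as the equilibrium measure of the high-temperature Gaussian $\beta$-ensemble at parameter $\theta$ with potential $\beta x^2 / 2$; it is implicit in the variational setup of \cite{IEC,SMRME}. Granting it, differentiating in $\theta$ (the term $\beta x^2 / 2$ is $\theta$-independent) shows that the left side of \eqref{planned} is $c'(\theta)$, so it remains to verify $c'(\theta) = \alpha - \theta^{-1}$. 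Since $\int \varrho_\beta\,dx$ is finite and positive (from \Cref{rhoexponential}, together with the normalizations in \Cref{frho}), integrating the equilibrium equation against $\varrho_\beta(x)\,dx$ expresses $c(\theta)$ explicitly through $\int \varrho_\beta \log \varrho_\beta$, $\int x^2 \varrho_\beta$, and $\iint \log |x-y| \varrho_\beta(x) \varrho_\beta(y)\,dx\,dy$, each of which admits a closed form in $(\beta,\theta)$ from the integral representation of $\mathfrak{F}_\theta$ (cf. \cite{DSME,SMRME}); a routine $\theta$-differentiation then gives $c'(\theta) = \log \beta - \Gamma'(\theta)/\Gamma(\theta) - \theta^{-1} = \alpha - \theta^{-1}$, as required.

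The principal obstacle is the equilibrium equation. If a direct citation cannot be extracted cleanly from \cite{IEC,SMRME}, then it can be derived from the Fourier representation of $\mathfrak{F}_\theta$: using the identity $\log \varrho_\beta(x) + \beta x^2 / 2 = \tfrac{1}{2} \log(\beta / 2\pi) - 2 \log | \mathfrak{F}_\theta(\beta^{1/2} x) |$, differentiating in $x$ reduces the equilibrium equation to
\begin{flalign*}
	-\partial_s \log | \mathfrak{F}_\theta(s) |^2 = 2\theta (2\pi)^{-1/2} \,\mathrm{PV}\!\int_{-\infty}^{\infty} \frac{| \mathfrak{F}_\theta(u) |^{-2} e^{-u^2/2}}{s - u}\,du,
\end{flalign*}
which follows by unfolding $| \mathfrak{F}_\theta(s) |^2 = (\theta / \Gamma(\theta)) \iint_{(0,\infty)^2} (yz)^{\theta - 1} e^{is(y - z) - (y^2 + z^2)/2}\,dy\,dz$, interchanging the order of integration (absolutely convergent by the Gaussian factors), and performing a standard principal-value computation against $(s - u)^{-1}$. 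The decay bounds of \Cref{rhoexponential} ensure convergence throughout, and an overall constant of integration in $x$ is fixed by the $x \to \infty$ behavior on both sides.
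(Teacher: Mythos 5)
Your plan is, at the structural level, the paper's proof: both reduce the claim to the equilibrium (Euler--Lagrange) equation for $\varrho_\beta$ and then differentiate in $\theta$, using $\varrho = \partial_\theta(\theta\varrho_\beta)$. The paper gets the equilibrium equation by directly citing \cite[Theorem 3.15]{SERL}, which is exactly the identity you write with $c(\theta) = -\log Z_\theta$. The genuine divergence---and the place your argument has a gap---is in how $c'(\theta)$ is pinned down. The paper computes $Z_\theta$ in closed form as the $N\to\infty$ limit of a ratio of Gaussian $\beta$-ensemble partition functions, using the Dumitriu--Edelman formula $Z_{N;u,v} = (2\pi)^{N/2} u^{-v\binom{N}{2}-N/2}\prod_{j=1}^N \Gamma(1+vj)/\Gamma(1+v)$ from \cite{MME}, obtaining $Z_\theta = (2\pi)^{1/2}\beta^{-\theta-1/2}\Gamma(\theta+1)$ and hence $c'(\theta) = -\partial_\theta\log Z_\theta = \log\beta - \Gamma'(\theta+1)/\Gamma(\theta+1) = \alpha - \theta^{-1}$ after $\Gamma'(\theta+1)/\Gamma(\theta+1) = \theta^{-1} + \Gamma'(\theta)/\Gamma(\theta)$.

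Your proposed alternative for $c'(\theta)$---integrating the equilibrium equation against $\varrho_\beta$ and asserting that $\int \varrho_\beta\log\varrho_\beta$, $\int x^2\varrho_\beta$, and the logarithmic energy $\iint \log|x-y|\,\varrho_\beta(x)\varrho_\beta(y)\,dx\,dy$ all have citable closed forms---is circular as stated. The logarithmic energy is the free energy of the model, i.e.\ essentially $\log Z_\theta$ itself; no closed form is extracted for it in \cite{DSME,SMRME}, and obtaining one would require exactly the partition-function computation the paper carries out. Your fallback route---rederiving the equilibrium equation from the Fourier representation of $\mathfrak{F}_\theta$---amounts to reproving \cite[Theorem 3.15]{SERL} from scratch; it is plausible but is avoidable work, and even if completed it does not by itself determine the $x$-independent constant $c(\theta)$, which is the crux. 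So the idea is right, but the determination of $c'(\theta)$ needs the explicit $Z_\theta$, not a moment computation.
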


	The next lemma states that $\bm{\mathrm{T}} \varrho (x) + \alpha$ is uniformly bounded away from $0$. We provide its proof in \Cref{ProofBound} below.
	
	\begin{lem}
		\label{alphat}
		
		There exists a constant $c > 0$ such that $\bm{\mathrm{T}} \varrho (x) + \alpha > c$, for any $x \in \mathbb{R}$.
		
	\end{lem}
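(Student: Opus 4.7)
The plan is to derive the explicit identity $\bm{\mathrm{T}}\varrho + \alpha = (\alpha/\theta)\,\varsigma_0^{\mathrm{dr}}$ as an element of $\mathcal{H}$, after which the lemma reduces immediately to \Cref{positive0}. Denote $V(x) = \bm{\mathrm{T}}\varrho(x) + \alpha$. The first task is to confirm $V \in \mathcal{H}$: by \Cref{rhoexponential} the density $\varrho$ has Gaussian decay, so the integral $\bm{\mathrm{T}}\varrho(x) = 2\int \log|x-y|\varrho(y)\,dy$ converges absolutely and grows at most like $C\log(|x|+2)$ as $|x|\to\infty$, whence $\int V^2 \varrho\,dx < \infty$.

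The key step is to apply the operator $\bm{\mathrm{T}}$ to the pointwise identity $\varrho = \theta V \varrho_\beta$ of \Cref{rhorho}. By linearity and the definition of the multiplication operator $\bm{\varrho_\beta}$, this gives $\bm{\mathrm{T}}\varrho = \theta\,\bm{\mathrm{T}}(V\varrho_\beta) = \theta\,\bm{\mathrm{T}}\bm{\varrho_\beta} V$. Substituting into the definition of $V$ yields
\[
V = \bm{\mathrm{T}}\varrho + \alpha = \theta\,\bm{\mathrm{T}}\bm{\varrho_\beta} V + \alpha,
\]
and identifying the constant $\alpha$ with the function $\alpha\,\varsigma_0$ rearranges this into $(\theta^{-1} - \bm{\mathrm{T}}\bm{\varrho_\beta})V = (\alpha/\theta)\,\varsigma_0$. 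Invoking the invertibility assertion of \Cref{trho} together with the definition of the dressing operator in \Cref{fdr} then produces the desired identity $V = (\alpha/\theta)\,\varsigma_0^{\mathrm{dr}}$.

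Finally, \Cref{positive0} gives $\sgn(\alpha)\cdot\varsigma_0^{\mathrm{dr}}(x) > c > 0$ uniformly in $x$. Since $\alpha/\theta$ shares its sign with $\alpha$, multiplication by $\alpha/\theta$ preserves strict positivity, and one obtains
\[
\bm{\mathrm{T}}\varrho(x) + \alpha \;=\; \frac{\alpha}{\theta}\,\varsigma_0^{\mathrm{dr}}(x) \;=\; \frac{|\alpha|}{\theta}\cdot\sgn(\alpha)\,\varsigma_0^{\mathrm{dr}}(x) \;>\; \frac{c|\alpha|}{\theta} \;>\; 0,
\]
which is the claim. There is no substantial obstacle once \Cref{rhorho}, \Cref{trho}, and \Cref{positive0} are in hand; the only care required is checking that the operator manipulation $\bm{\mathrm{T}}\varrho = \theta\bm{\mathrm{T}}\bm{\varrho_\beta}V$ is permitted on $V \in \mathcal{H}$, which follows from the decay estimates in \Cref{rhoexponential} combined with the boundedness statement of \Cref{trhobounded}.
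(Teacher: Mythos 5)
Your proposal is circular. The lemma you are trying to prove, \Cref{alphat}, is logically upstream of everything you invoke. Specifically: the paper's proof of \Cref{positive0} reads ``This follows from the second statement in \eqref{rho2} and \Cref{alphat}'' --- so you cannot use \Cref{positive0} to establish \Cref{alphat}. Worse, the invertibility statement \Cref{trho} (and hence the very definition of the dressing operator and of $\varsigma_0^{\dr}$) is proven in the appendix via \Cref{cs0}, whose proof uses ``$\theta^{-1}\cdot\sigma(x) > c$ \ldots by \Cref{alphat}''; likewise \Cref{trhobounded}, which you invoke at the end, uses ``$\sigma(y) > c$ for some real number $c > 0$ (by \Cref{alphat}).'' Thus every ingredient of your argument rests on the lemma you are attempting to prove. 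The identity you derive, $\bm{\mathrm{T}}\varrho + \alpha = (\alpha/\theta)\,\varsigma_0^{\dr}$, is correct (it is the second equation in \eqref{rho2}), but it merely shows that the positivity of $\bm{\mathrm{T}}\varrho + \alpha$ and the positivity of $\sgn(\alpha)\,\varsigma_0^{\dr}$ are equivalent statements; it does not establish either one.

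The paper's actual proof is of an entirely different character and uses no operator theory at all: it represents $\bm{\mathrm{T}}\varrho(x) + \alpha$ as a limit of expected logarithms of a resolvent entry $G_{N_1 N_2}(E)$ of the random Lax matrix, via the Thouless-type formula of \Cref{gn1n2} and the probabilistic identity $\alpha = -2\,\mathbb{E}[\log L_{j,j+1}]$ from \Cref{aintegral}. The quantitative positive lower bound then comes from the Schenker exponential-localization estimate \Cref{gijexponential}, which bounds $\mathbb{E}[|G_{N_1 N_2}(E)|^{1/2}]$ by $c^{-1}e^{-cN}$, combined with concavity of $\log$ and the convergence \Cref{lf} of the empirical spectral measure to $\varrho$. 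To repair your argument you would need an independent sign argument for $\varsigma_0^{\dr}$ (or equivalently for $\sigma$) that does not pass through \Cref{alphat}; the paper supplies no such route, and the random-matrix proof appears to be the intended mechanism.
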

	
	The following corollary follows quickly from \Cref{rhorho}; we establish it in \Cref{Proofrho} below.
	
	\begin{cor}
		\label{rho0}
		
		For any $x \in \mathbb{R}$, we have 
		\begin{flalign} 
			\label{rho2} 
			\varrho (x) = \alpha \cdot \varsigma_0^{\dr} (x) \cdot \varrho_{\beta} (x), \qquad \text{and} \qquad \varsigma_0^{\dr} (x) = \displaystyle\frac{\theta}{\alpha} \cdot \big( \bm{\mathrm{T}} \varrho (x) + \alpha \big).
		\end{flalign} 
		
	\end{cor}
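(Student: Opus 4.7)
My plan is to derive both identities from \Cref{rhorho} by reinterpreting the relation it provides as an equation in the Hilbert space $\mathcal{H}$ and then inverting $\theta^{-1} - \bm{\mathrm{T}}\bm{\varrho_\beta}$ using \Cref{trho}. Since $\varrho_\beta(x) > 0$ for all $x \in \mathbb{R}$ (this is immediate from its explicit definition in \Cref{frho} together with the nonvanishing of $\mathfrak{F}$), the quotient $h := \varrho/\varrho_\beta$ is pointwise well-defined. \Cref{rhorho} rearranges to $h(x) = \theta\, \bm{\mathrm{T}}\varrho(x) + \theta\alpha$; using the factorization $\bm{\mathrm{T}}\varrho = \bm{\mathrm{T}}\bm{\varrho_\beta}\, h$ and dividing by $\theta$, this becomes
\[
\bigl(\theta^{-1} - \bm{\mathrm{T}}\bm{\varrho_\beta}\bigr) h \; = \; \alpha \cdot \varsigma_0.
\]

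The key technical point is verifying $h \in \mathcal{H}$, so that the bijection from \Cref{trho} may be applied. By the identity $h = \theta\, \bm{\mathrm{T}}\varrho + \theta \alpha$, it suffices to control $\bm{\mathrm{T}}\varrho(x) = 2\int \log|x-y|\,\varrho(y)\, dy$. Splitting this integral into the regions $|y-x| \le 1$ and $|y-x| > 1$, one bounds the first piece by a constant (using that $\log|x-y|$ is locally integrable and $\varrho$ is bounded), and the second by $C \log(|x|+2)$ (using the Gaussian tail bound $\varrho(y) \le C(|y|+1)^{2\theta+1} e^{-\beta y^2/2}$ from \Cref{rhoexponential} to control the contribution from $|y| \gtrsim |x|$). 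This yields $|h(x)| \le C \log(|x|+2)$, and since $\varrho$ has Gaussian decay, $\int h^2 \varrho\, dx < \infty$, so $h \in \mathcal{H}$. Applying $(\theta^{-1} - \bm{\mathrm{T}}\bm{\varrho_\beta})^{-1}$ to both sides of the displayed equation then gives $h = \alpha \varsigma_0^{\dr}$, which multiplied by $\varrho_\beta$ is the first identity $\varrho = \alpha\, \varrho_\beta\, \varsigma_0^{\dr}$.

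The second identity follows immediately by substitution: inserting $h = \alpha \varsigma_0^{\dr}$ into the \Cref{rhorho} relation $h = \theta(\bm{\mathrm{T}}\varrho + \alpha)$ and dividing through by $\alpha$ produces $\varsigma_0^{\dr} = (\theta/\alpha)(\bm{\mathrm{T}}\varrho + \alpha)$. The only genuine work is the logarithmic growth estimate for $\bm{\mathrm{T}}\varrho$ needed to place $h$ in $\mathcal{H}$; everything else is algebraic manipulation and invocations of the already-established \Cref{rhorho} and \Cref{trho}.
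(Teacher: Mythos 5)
Your proof is correct and takes essentially the same route as the paper, just streamlined: you work directly with $h = \varrho/\varrho_\beta$ and the conjugation identity $\bm{\mathrm{T}}\varrho = \bm{\mathrm{T}}\bm{\varrho_\beta}h$ to reduce \Cref{rhorho} to $(\theta^{-1}-\bm{\mathrm{T}}\bm{\varrho_\beta})h = \alpha\varsigma_0$, then invert using \Cref{trho}; the paper does the same thing one step less directly, passing through \Cref{hbetarhof} and the operator $\theta^{-1}-\bm{\varrho_\beta}\bm{\mathrm{T}}$ before conjugating. Both arguments hinge on the same point you correctly flag as ``the only genuine work'': establishing $h = \varrho/\varrho_\beta \in \mathcal{H}$ via the $O(\log(|x|+2))$ growth of $\bm{\mathrm{T}}\varrho$ together with the Gaussian decay of $\varrho$ (the paper handles this with a one-line remark, again via \Cref{rhorho}).
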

	
	\begin{proof}[Proof of \Cref{positive0}] 
		
		This follows from the second statement in \eqref{rho2} and \Cref{alphat}.
	\end{proof}
	
	The following lemma provides an alternative expression for the effective velocity $v_{\eff}$ from \Cref{v}. It was originally shown as \cite[Equations (6.20) and (6.21)]{HSIMS}, though we provide its quick proof in \Cref{Proofrho} below.

	\begin{lem}[{\cite[Equations (6.20) and (6.21)]{HSIMS}}] 
		\label{vt} 
		
		We have $(  \theta^{-1} \cdot \bm{\varsigma_0^{\dr}} - \alpha^{-1} \cdot \bm{\mathrm{T}} \bm{\varrho}) v_{\eff} = \varsigma_1$.
	
	\end{lem}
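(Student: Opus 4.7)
The plan is to unwind all the definitions and reduce the identity to the defining equation of $\varsigma_1^{\dr}$, using \Cref{rho0} as the crucial bridge between $\varrho$ and $\varrho_{\beta}$.

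First I would rewrite the left-hand side in terms of $\varsigma_1^{\dr}$. By \Cref{v}, $v_{\eff}(x) = \varsigma_0^{\dr}(x)^{-1} \varsigma_1^{\dr}(x)$, which \Cref{positive0} makes sense of pointwise. Multiplying by the multiplication operator $\bm{\varsigma_0^{\dr}}$ therefore gives $\bm{\varsigma_0^{\dr}} v_{\eff} = \varsigma_1^{\dr}$, so the target identity is equivalent to
\begin{flalign*}
\theta^{-1} \varsigma_1^{\dr} - \alpha^{-1} \bm{\mathrm{T}} \bm{\varrho}\, v_{\eff} = \varsigma_1.
\end{flalign*}
Next, by the definition of the dressing operator in \Cref{fdr}, $\varsigma_1^{\dr}$ satisfies $(\theta^{-1} - \bm{\mathrm{T}} \bm{\varrho_{\beta}}) \varsigma_1^{\dr} = \varsigma_1$, which rearranges to
\begin{flalign*}
\theta^{-1} \varsigma_1^{\dr} = \varsigma_1 + \bm{\mathrm{T}}\bigl( \varrho_{\beta}\, \varsigma_1^{\dr}\bigr).
\end{flalign*}
So the whole task is to identify $\bm{\mathrm{T}}(\varrho_{\beta} \varsigma_1^{\dr})$ with $\alpha^{-1} \bm{\mathrm{T}}\bm{\varrho}\, v_{\eff}$.

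This is where \Cref{rho0} enters. The first identity there gives $\varrho = \alpha\, \varsigma_0^{\dr}\, \varrho_{\beta}$, hence $\varrho_{\beta}\, \varsigma_0^{\dr} = \alpha^{-1} \varrho$ pointwise. Using $\varsigma_1^{\dr} = \varsigma_0^{\dr}\, v_{\eff}$, I get $\varrho_{\beta}\, \varsigma_1^{\dr} = \alpha^{-1} \varrho\, v_{\eff}$, and therefore
\begin{flalign*}
\bm{\mathrm{T}}\bigl( \varrho_{\beta}\, \varsigma_1^{\dr} \bigr) = \alpha^{-1} \bm{\mathrm{T}}(\varrho\, v_{\eff}) = \alpha^{-1}\, \bm{\mathrm{T}} \bm{\varrho}\, v_{\eff}.
\end{flalign*}
Combining with the previous display yields $\theta^{-1} \varsigma_1^{\dr} - \alpha^{-1}\, \bm{\mathrm{T}} \bm{\varrho}\, v_{\eff} = \varsigma_1$, which is the required statement.

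There is no real obstacle beyond bookkeeping, but I should briefly justify that every operation is valid inside $\mathcal{H}$: \Cref{rhoexponential} ensures $\varrho, \varrho_{\beta}$ have Gaussian decay, \Cref{positive0} provides a uniform lower bound on $|\varsigma_0^{\dr}|$ so dividing by it preserves membership in $\mathcal{H}$, and \Cref{trhobounded} together with \Cref{trho} guarantee that $\bm{\mathrm{T}}\bm{\varrho_{\beta}}$ acts boundedly on the vectors we apply it to. With those remarks in place the chain of equalities above is a short exercise in manipulating the definitions.
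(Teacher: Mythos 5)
Your proof is correct and follows essentially the same route as the paper's: you invoke $v_{\eff} = (\varsigma_0^{\dr})^{-1}\varsigma_1^{\dr}$, the identity $\alpha^{-1}\varrho = \varrho_\beta\,\varsigma_0^{\dr}$ from \Cref{rho0}, and the defining relation $(\theta^{-1}-\bm{\mathrm{T}}\bm{\varrho_\beta})\varsigma_1^{\dr}=\varsigma_1$, exactly the three facts the paper chains together. The only difference is cosmetic: you rearrange the dressing-operator identity before substituting, whereas the paper substitutes into the left side directly; either way the computation is the same.
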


	Given a function $f \in \mathcal{H}$, we next have the following pointwise estimates on $f^{\dr}$ and its derivative (in terms of $f$). Their proofs will be given in \Cref{DerivativefProof} below.
	
	\begin{lem} 
		\label{xf2} 
		
		There exists a constant $C>1$ such that the following holds. For any function $f \in \mathcal{H}$ and real number $x \in \mathbb{R}$, we have 
		\begin{flalign*} 
			|f^{\dr} (x)| \le C \cdot |f(x)| + C \| f \|_{\mathcal{H}} \cdot \log (|x|+2).
		\end{flalign*} 
		
	\end{lem}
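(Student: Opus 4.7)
The plan is to start from the defining identity for the dressing operator. Since $f^{\dr} = (\theta^{-1} - \bm{\mathrm{T}}\bm{\varrho_{\beta}})^{-1} f$, applying $(\theta^{-1} - \bm{\mathrm{T}}\bm{\varrho_{\beta}})$ to both sides and rearranging yields the pointwise identity
\begin{flalign*}
f^{\dr}(x) = \theta f(x) + \theta \cdot \bm{\mathrm{T}}(\varrho_{\beta} f^{\dr})(x) = \theta f(x) + 2\theta \int_{-\infty}^{\infty} \log|x-y| \, \varrho_{\beta}(y) f^{\dr}(y) \, dy.
\end{flalign*}
The first term contributes the $C|f(x)|$ part of the claimed bound directly, so it remains to control the integral on the right by $C \|f\|_{\mathcal{H}} \log(|x|+2)$.

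First I would apply Cauchy--Schwarz in the $\mathcal{H}$ inner product (so the weight $\varrho$ appears), obtaining
\begin{flalign*}
\left| \bm{\mathrm{T}}(\varrho_{\beta} f^{\dr})(x) \right| \le 2 \left( \int_{-\infty}^{\infty} (\log|x-y|)^2 \cdot \frac{\varrho_{\beta}(y)^2}{\varrho(y)} \, dy \right)^{1/2} \cdot \| f^{\dr} \|_{\mathcal{H}}.
\end{flalign*}
By \Cref{rho0} and \Cref{alphat}, there is a constant $c>0$ with $\varrho(y) \ge c\, \varrho_{\beta}(y)$, so $\varrho_{\beta}^2/\varrho \le C \varrho_{\beta}$. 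This reduces the task to estimating $\int (\log|x-y|)^2 \varrho_{\beta}(y) \, dy$.

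Next I would split this integral into the region $|x-y| \le 1$ and its complement. On the former, $\varrho_{\beta}$ is uniformly bounded (by \Cref{rhoexponential}), while $\int_{|x-y|\le 1}(\log|x-y|)^2 dy$ is a finite constant, so this piece is $O(1)$. On the latter, I use the elementary inequality $(\log|x-y|)^2 \le C \log^2(|x|+2) + C \log^2(|y|+2)$; the first term pulls out of the integral leaving $\int \varrho_{\beta}(y)\, dy = O(1)$, and the second term is integrable against $\varrho_{\beta}$ by the Gaussian decay in \Cref{rhoexponential}. Altogether this gives $\int (\log|x-y|)^2 \varrho_{\beta}(y) dy \le C \log^2(|x|+2)$, so taking square roots yields a factor of $C \log(|x|+2)$.

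Finally, I need $\|f^{\dr}\|_{\mathcal{H}} \le C \|f\|_{\mathcal{H}}$. This follows because $\theta^{-1} - \bm{\mathrm{T}}\bm{\varrho_{\beta}}$ is a bounded operator on the Hilbert space $\mathcal{H}$ (by \Cref{trhobounded}) and a bijection (by \Cref{trho}), so the bounded inverse theorem furnishes a boundedness constant for its inverse. Combining all three estimates gives the claim. The only mildly delicate step is the near-diagonal part of the log-squared integral, but the logarithmic singularity is square-integrable so this is routine; the rest of the argument is essentially a Cauchy--Schwarz bookkeeping, with no serious obstacle.
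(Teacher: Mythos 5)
Your argument is correct, but it routes around the heart of the paper's proof. Both you and the paper start from the identity $f^{\dr}(x)=\theta f(x)+2\theta\int\log|x-y|\varrho_{\beta}(y)f^{\dr}(y)\,dy$ and both then apply Cauchy--Schwarz with an $\varrho$-weighting, but you factor the integrand so that the second Cauchy--Schwarz factor involves $\varrho_{\beta}^2/\varrho\le c^{-1}\varrho_{\beta}$ and the first is literally $\|f^{\dr}\|_{\mathcal H}$, and you then invoke the bounded inverse (open mapping) theorem — via \Cref{trhobounded} and \Cref{trho} — to get $\|f^{\dr}\|_{\mathcal H}\le C\|f\|_{\mathcal H}$ abstractly. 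The paper instead factors so that the first factor is $\|h\|_{\mathcal H}$ with $h=\sigma^{-1}f^{\dr}$ and then does a hands-on spectral estimate on $\bm{\mathrm{S}}=\theta^{-1}\bm{\sigma}-\bm{\mathrm{T}\varrho}$ (splitting $h$ into its $\mathbb{C}\cdot\varsigma_0$ and $\mathcal{H}_0$ components, using the eigenvalue $\alpha$ on $\mathbb{C}\cdot\varsigma_0$, nonpositivity of $\bm{\mathrm{T}\varrho}$ on $\mathcal{H}_0$, and the lower bound on $\sigma$) to derive $\|f\|_{\mathcal H}^2\ge c\|h\|_{\mathcal H}^2$ with an explicit constant. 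Your version is shorter and logically sound given the lemmas you cite; what the paper's longer argument buys is a self-contained, quantitative constant for the inverse norm rather than an appeal to the non-constructive open mapping theorem, which fits the paper's style of tracking explicit spectral information about $\bm{\mathrm{S}}$ that it also uses in proving \Cref{trho} and \Cref{cs0}.
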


	\begin{lem} 
		\label{derivativefg}
		
		There exists a constant $C>1$ such that the following holds. For any differentiable function $f \in \mathcal{H}$ such that $f' \in \mathcal{H}$, and real number $x \in \mathbb{R}$, we have 
		\begin{flalign*}
			|\partial_x f^{\dr} (x)| \le C \cdot |f'(x)| + C (  \| f' \|_{\mathcal{H}} + \| f \|_{\mathcal{H}} ) \cdot \log (|x|+2).
		\end{flalign*} 
	\end{lem}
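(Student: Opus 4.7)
The plan is to derive an equation satisfied by $\partial_x f^{\dr}$ and then reduce the estimate to an application of \Cref{xf2}. Starting from the defining relation $\theta^{-1} f^{\dr} - \bm{\mathrm{T}} \bm{\varrho_{\beta}} f^{\dr} = f$, I use the identity $\partial_x \bm{\mathrm{T}} g (x) = \bm{\mathrm{T}} g'(x)$ valid for any differentiable $g$ with sufficient decay; this follows from writing $\bm{\mathrm{T}} g(x) = 2 \int \log|u| g(x-u) du$ and differentiating under the integral, justified by the Gaussian decay of $\varrho_{\beta}$ and $\varrho_{\beta}'$ in \Cref{rhoexponential} together with the logarithmic pointwise bound on $f^{\dr}$ from \Cref{xf2}. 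Applying this to $g = \varrho_{\beta} f^{\dr}$ and differentiating both sides of the defining relation yields
\begin{flalign*}
(\theta^{-1} - \bm{\mathrm{T}} \bm{\varrho_{\beta}}) \partial_x f^{\dr} = f' + \bm{\mathrm{T}} \bm{\varrho_{\beta}'} f^{\dr},
\end{flalign*}
so $\partial_x f^{\dr} = (f' + \bm{\mathrm{T}} \bm{\varrho_{\beta}'} f^{\dr})^{\dr}$ by \Cref{trho} and \Cref{fdr}. A direct application of \Cref{xf2} then gives
\begin{flalign*}
|\partial_x f^{\dr}(x)| \le C|f'(x)| + C| \bm{\mathrm{T}} \bm{\varrho_{\beta}'} f^{\dr}(x) | + C \bigl( \|f'\|_{\mathcal{H}} + \| \bm{\mathrm{T}} \bm{\varrho_{\beta}'} f^{\dr} \|_{\mathcal{H}} \bigr) \log(|x|+2),
\end{flalign*}
so it remains to bound the additional term $\bm{\mathrm{T}} \bm{\varrho_{\beta}'} f^{\dr}$ pointwise by $C\|f\|_{\mathcal{H}} \log(|x|+2)$ and in $\mathcal{H}$-norm by $C\|f\|_{\mathcal{H}}$.

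The pointwise estimate I obtain via Cauchy--Schwarz in $\mathcal{H}$:
\begin{flalign*}
| \bm{\mathrm{T}} \bm{\varrho_{\beta}'} f^{\dr}(x) | \le 2 \| f^{\dr} \|_{\mathcal{H}} \cdot \left( \int_{-\infty}^{\infty} \log^2 |x-y| \cdot \frac{\varrho_{\beta}'(y)^2}{\varrho(y)} dy \right)^{1/2}.
\end{flalign*}
By \Cref{rhoexponential} one has $\varrho_{\beta}'(y)^2 \le C(|y|+1)^2 \varrho_{\beta}(y)^2$, and by \Cref{rho0} together with \Cref{positive0} the ratio $\varrho_{\beta}(y)/\varrho(y)$ is uniformly bounded. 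Hence $\varrho_{\beta}'(y)^2/\varrho(y) \le C(|y|+1)^2 \varrho_{\beta}(y)$, which has Gaussian decay, and a straightforward split of the integral into $|y-x| \ge 1$ and $|y-x| < 1$ yields $\int \log^2|x-y| \cdot (|y|+1)^2 \varrho_{\beta}(y) dy \le C\log^2(|x|+2)$. Combining this with $\|f^{\dr}\|_{\mathcal{H}} \le C\|f\|_{\mathcal{H}}$ (from boundedness of the dressing operator, a consequence of \Cref{trhobounded} and \Cref{trho}) gives the desired pointwise control. The $\mathcal{H}$-norm bound then follows immediately by integrating the pointwise bound against $\varrho$, since $\int \log^2(|x|+2) \varrho(x) dx < \infty$ by the Gaussian decay of $\varrho$ in \Cref{rhoexponential}.

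The main obstacle is technical rather than conceptual: one must justify differentiating $\bm{\mathrm{T}}$ under the integral and ensure $f^{\dr}$ is classically differentiable. Both reduce to checking adequate decay for $\varrho_{\beta}$, $\varrho_{\beta}'$, and $f^{\dr}$ (where the latter uses the a priori bound from \Cref{xf2}), and are supplied by \Cref{rhoexponential}. Putting the three bounds together produces
\begin{flalign*}
|\partial_x f^{\dr}(x)| \le C|f'(x)| + C\bigl(\|f'\|_{\mathcal{H}} + \|f\|_{\mathcal{H}}\bigr) \log(|x|+2),
\end{flalign*}
as claimed.
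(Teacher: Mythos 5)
Your proof is correct and follows essentially the same route as the paper: differentiate the defining relation to obtain $\partial_x f^{\dr} = (f' + \bm{\mathrm{T}}\bm{\varrho_\beta'} f^{\dr})^{\dr}$, apply \Cref{xf2}, then control $\bm{\mathrm{T}}\bm{\varrho_\beta'} f^{\dr}$ pointwise by Cauchy--Schwarz in $\mathcal{H}$ together with the bounds $|\varrho_\beta'| \le C(|\cdot|+1)\varrho_\beta$ and $\varrho_\beta \lesssim \varrho$ from \Cref{rhoexponential}, \Cref{rho0}, and \Cref{positive0}. The only cosmetic difference is how you obtain $\|f^{\dr}\|_{\mathcal{H}} \le C\|f\|_{\mathcal{H}}$: you invoke boundedness of the dressing operator (via \Cref{trhobounded}, \Cref{trho}, and the bounded inverse theorem), whereas the paper extracts it directly from the intermediate estimate in the proof of \Cref{xf2}; both are valid.
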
 
	
	\begin{cor}
		\label{derivativev}
		
		There exists a constant $C>1$ such that, for any real number $A \ge 2$, we have
		\begin{flalign*}
			\displaystyle\sup_{|x| \le A} | v_{\eff} (x) | \le CA; \qquad \displaystyle\sup_{|x| \le A} | \partial_x v_{\eff} (x) | \le CA \log A. 
		\end{flalign*}
		
	\end{cor}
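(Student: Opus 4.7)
The plan is to reduce both bounds to pointwise estimates on $\varsigma_0^{\dr}$, $\varsigma_1^{\dr}$, and their derivatives, using the lemmas listed just above. By \Cref{v}, we have $v_{\eff} = \varsigma_1^{\dr}/\varsigma_0^{\dr}$, so I would first check that both $\|\varsigma_0\|_{\mathcal{H}}$ and $\|\varsigma_1\|_{\mathcal{H}}$ are finite; this is immediate from \Cref{rhoexponential}, since $\varrho(x)$ decays like $(|x|+1)^{2\theta+1}e^{-\beta x^2/2}$, making all polynomial moments bounded. \Cref{positive0} gives $|\varsigma_0^{\dr}(x)| \ge c > 0$, so the quotient is well-defined and behaves like its numerator up to multiplication by a constant.

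For the first bound, I would apply \Cref{xf2} with $f = \varsigma_1$ to obtain $|\varsigma_1^{\dr}(x)| \le C|x| + C\|\varsigma_1\|_{\mathcal{H}}\log(|x|+2)$, and then divide by $|\varsigma_0^{\dr}(x)| \ge c$. For $|x| \le A$ with $A\ge 2$, this produces $|v_{\eff}(x)| \le C_1 A + C_2 \log(A+2) \le C'A$, which is the desired estimate.

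For the derivative, I would apply the quotient rule to get
\begin{flalign*}
\partial_x v_{\eff}(x) = \frac{\partial_x \varsigma_1^{\dr}(x) \cdot \varsigma_0^{\dr}(x) - \varsigma_1^{\dr}(x) \cdot \partial_x \varsigma_0^{\dr}(x)}{\varsigma_0^{\dr}(x)^2},
\end{flalign*}
and then bound each factor separately. \Cref{derivativefg} applied to $\varsigma_1$ (with derivative identically $1$) gives $|\partial_x \varsigma_1^{\dr}(x)| \le C + C(\|\varsigma_1\|_{\mathcal{H}} + 1)\log(|x|+2)$, and applied to $\varsigma_0$ (with derivative identically $0$) gives $|\partial_x \varsigma_0^{\dr}(x)| \le C\|\varsigma_0\|_{\mathcal{H}}\log(|x|+2)$. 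Combined with the previous bound on $\varsigma_1^{\dr}$, the estimate $|\varsigma_0^{\dr}(x)| \le C\log(|x|+2) + C$ from \Cref{xf2}, and the lower bound $|\varsigma_0^{\dr}(x)| \ge c$, the numerator is $O((|x|+1)\log(|x|+2))$ and the denominator is bounded below by $c^2$. For $|x| \le A$ with $A \ge 2$, using $A+1 \le 2A$ and $\log(A+2) \le 2\log A$, this yields $|\partial_x v_{\eff}(x)| \le CA\log A$.

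There is no real obstacle here; the only mildly delicate point is absorbing the small-$x$ contributions (where the logarithm beats the linear factor) into the bound for all $A \ge 2$, which is handled by the monotonicity in $A$ of the right-hand sides $CA$ and $CA\log A$ together with the uniform lower bound on $|\varsigma_0^{\dr}|$ from \Cref{positive0}.
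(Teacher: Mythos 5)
Your proof is correct and takes essentially the same route as the paper: express $v_{\eff} = \varsigma_1^{\dr}/\varsigma_0^{\dr}$ via Definition \ref{v}, bound the numerator and denominator pointwise with Lemmas \ref{xf2}, \ref{derivativefg}, and \ref{positive0}, and apply the quotient rule for the derivative. The only cosmetic difference is that you carry the division by $\varsigma_0^{\dr}(x)^2$ as a single fraction while the paper splits the quotient rule into two additive terms with factors $|\varsigma_0^{\dr}|^{-1}$ and $|\varsigma_0^{\dr}|^{-2}$, which produces the same final estimate.
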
 
	
	The following lemma, to be shown in \Cref{IntegralEstimate} below, lower bounds a particular integral if $\theta$ is sufficiently small (and will be used to verify strict diagonal dominance of a certain matrix; see \Cref{qjt} below).
	
	\begin{lem}
		
		\label{beta0theta}
		
		Fix $\beta > 0$. There exists a real number $\theta_0 = \theta_0 (\beta) > 0$ such that the following holds whenever $\theta \in (0, \theta_0)$. Let $\mathfrak{d} \in [0,1)$ be a real number, and define the function $\mathfrak{l} = \mathfrak{l}_{\mathfrak{d}}: \mathbb{R} \rightarrow \mathbb{R}$ by setting $\mathfrak{l}(x) = \log (x^2+\mathfrak{d}^2)/2$ for each $x \in \mathbb{R}$. Then, for any $\lambda \in \mathbb{R}$, we have 
		\begin{flalign*}
			\Bigg| 2\alpha^{-1} \displaystyle\int_{-\infty}^{\infty} \mathfrak{l} (x - \lambda)  \varrho (x) dx + 1 \Bigg| \ge 2 |\alpha|^{-1} \displaystyle\int_{-\infty}^{\infty} | \mathfrak{l}(x-\lambda) |  \varrho (x) dx + \displaystyle\frac{1}{2}. 
		\end{flalign*}
		
	\end{lem}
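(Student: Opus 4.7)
The approach is to decompose $\mathfrak{l}$ into its positive and negative parts and leverage the divergence $\alpha=\log\beta-\Gamma'(\theta)/\Gamma(\theta)\to+\infty$ as $\theta\to 0^+$ against a uniform bound on the (compactly supported) negative part's contribution.

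Write $\mathfrak{l}_\pm(u)=\max(\pm\mathfrak{l}(u),0)$ and $I_\pm(\lambda)=\int\mathfrak{l}_\pm(x-\lambda)\varrho(x)\,dx$; observe that $\mathfrak{l}_-$ is supported in $\{|u|<\sqrt{1-\mathfrak{d}^2}\}\subseteq[-1,1]$ with $\int\mathfrak{l}_-\le 2$, uniformly in $\mathfrak{d}\in[0,1)$. Since $\Gamma'(\theta)/\Gamma(\theta)\to-\infty$ as $\theta\to 0^+$, for $\theta$ sufficiently small one has $\alpha>0$ and in fact $\alpha\sim 1/\theta$. The inequality to prove becomes $|2\alpha^{-1}(I_+-I_-)+1|\ge 2\alpha^{-1}(I_++I_-)+\tfrac12$.

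The first reduction is to show that it suffices to have $\alpha\ge 8\sup_{\lambda\in\mathbb{R}} I_-(\lambda)$. Indeed, under this hypothesis, $2\alpha^{-1}I_-\le\tfrac14$, so $2\alpha^{-1}(I_+-I_-)+1\ge\tfrac34>0$, and the desired inequality reads
\[ 2\alpha^{-1}(I_+-I_-)+1\ \ge\ 2\alpha^{-1}(I_++I_-)+\tfrac12, \]
which simplifies to $4\alpha^{-1}I_-\le\tfrac12$, i.e., to the assumption itself. The crude bound $I_-(\lambda)\le\|\varrho\|_{L^\infty}\cdot\int\mathfrak{l}_-\le 2\|\varrho\|_{L^\infty}$ is uniform in $\lambda$, so the problem reduces further to bounding $\|\varrho\|_{L^\infty}$ uniformly for $\theta\in(0,\theta_0)$ and comparing with $\alpha$.

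The uniform $L^\infty$-bound on $\varrho$ can be extracted from Lemma~\ref{rhorho}, which gives $\varrho(x)=\theta(\bm{\mathrm{T}}\varrho(x)+\alpha)\varrho_\beta(x)$. Integrating the defining formula $\varrho=\partial_\theta(\theta\varrho_{\beta;\theta})$ yields $\int\varrho\,dx=1$, and splitting the kernel of $\bm{\mathrm{T}}$ into the near region $|x-y|<1$ (handled by the local integrability of $\log$ together with the pointwise bound on $\varrho$ from Lemma~\ref{rhoexponential}) and the far region (bounded using $\log|x-y|\le\log(|x|+2)+\log(|y|+2)$ and the finite logarithmic moments of $\varrho$) gives $|\bm{\mathrm{T}}\varrho(x)|\le C(\log(|x|+2)+1)$ with a constant uniform in $\theta$ small. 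Combining with the identity $\theta\alpha\to 1$ as $\theta\to 0^+$ and the Gaussian tail of $\varrho_\beta$ (which converges to $(\beta/2\pi)^{1/2}e^{-\beta x^2/2}$ and absorbs the logarithmic factor), one deduces $\|\varrho\|_{L^\infty}\le C(\beta)$ uniformly for $\theta\in(0,\theta_0)$. Since $\alpha\to\infty$ while this bound stays finite, the condition $\alpha\ge 16\,C(\beta)\ge 8\sup_\lambda I_-(\lambda)$ is satisfied for $\theta$ small enough, and the proof concludes.

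The main obstacle I anticipate is establishing the uniform-in-$\theta$ bound $\|\varrho\|_{L^\infty}\le C(\beta)$. The constants appearing in Lemma~\ref{rhoexponential} are allowed to depend on $\theta$, so one must retrace the $\theta$-dependence in the asymptotics of $\mathfrak{F}(\theta;\cdot)$ (entering $\varrho_\beta$) and in the interaction term $\theta(\bm{\mathrm{T}}\varrho+\alpha)$, making sure the cancellation $\theta\alpha\to 1$ persists pointwise in $x$. The algebraic reduction to $\alpha\ge 8 I_-$ and the bound $I_-\le 2\|\varrho\|_\infty$ are elementary; the analytic content is the small-$\theta$ behavior of the density of states.
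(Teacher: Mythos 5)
Your algebraic decomposition $\mathfrak{l}=\mathfrak{l}_+-\mathfrak{l}_-$ is correct and leaner than what the paper does. The paper instead writes $|\mathfrak{l}(\cdot-\lambda)|=\mathfrak{l}_1+\mathfrak{l}_2$ with $\mathfrak{l}_1$ $1$-Lipschitz and $\mathfrak{l}_2$ compactly supported near $\lambda$, and then bounds the full integral $\int|\mathfrak{l}(x-\lambda)|\varrho(x)\,dx$ by a constant; this requires tail control and forces it to invoke \emph{both} halves of \Cref{1thetarho} — the uniform $L^\infty$ bound on $\varrho$ \emph{and} the weak convergence of $\varrho$ to the Gaussian as $\theta\to 0$. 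Your observation that $I_+$ cancels exactly across the inequality reduces everything to $I_-\le\alpha/8$, and since $\mathfrak{l}_-$ is supported in $[-1,1]$ with $\int\mathfrak{l}_-\le 2$, only $\|\varrho\|_{L^\infty}$ matters and the weak-convergence step drops out entirely. (You also get the sign right: $\Gamma'(\theta)/\Gamma(\theta)\to-\infty$ forces $\alpha\to+\infty$; the paper's proof contains the typo $\alpha<-8C$, though it in fact only uses $|\alpha|>8C$.)

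The gap — which you flag yourself — is the uniform bound $\sup_x\varrho(x)\le C(\beta)$ for $\theta\in(0,\theta_0)$. This is precisely the first statement of \Cref{1thetarho}, and it is a real ingredient, not a formality. The paper does not obtain it by retracing $\theta$-dependence in $\mathfrak{F}$ and $\varrho_\beta$; it instead applies a Wegner estimate for random tridiagonal operators to the Lax matrix $\bm{L}$, exploiting that each diagonal entry $L_{ii}$ has conditional density bounded uniformly in $\theta$ (since the $b_i$ are standard Gaussians), and then identifies $\varrho$ as the limiting density of states via \Cref{lf}. Your proposed analytic derivation from $\varrho=\theta(\bm{\mathrm{T}}\varrho+\alpha)\varrho_\beta$ has a circularity at the near-diagonal: bounding $2\int_{|x-y|<1}\log|x-y|\,\varrho(y)\,dy$ already needs $\|\varrho\|_{L^\infty}$, so one must close a bootstrap of the form $\|\varrho\|_{L^\infty}\le 4\theta\|\varrho_\beta\|_{L^\infty}\|\varrho\|_{L^\infty}+\cdots$, and that in turn needs $\|\varrho_\beta\|_{L^\infty}$ (equivalently a uniform lower bound on $|\mathfrak{F}(\theta;\cdot)|$) controlled uniformly for small $\theta$ — exactly the $\theta$-dependence that \Cref{rhoexponential}, as stated, does not track. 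The cleanest repair is to cite \Cref{1thetarho}(1) directly and keep your otherwise sharper decomposition; making the $\mathfrak{F}$-based derivation rigorous would be genuinely more work than the probabilistic route.
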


	\subsection{Random Lax Matrices} 
	
	\label{Localization}
	
	In this section we describe various properties of Lax matrices whose Flaschka variables are sampled from thermal equilibrium. The following two lemmas approximate the distance between Toda particles $(q_i (t))$ under thermal equilibrium initial data. The first does this for $t=0$; the second does this for general $t \ge 0$ (in which case one requires a restriction on the particle indices $i$).
	
	\begin{lem}[{\cite[Lemma 3.12]{LC}}]
		\label{qij} 
		
		Adopt \Cref{lbetaeta}. There exists a constant $c > 0$ such that the following holds. For any distinct indices $i, j \in \llbracket N_1, N_2 \rrbracket$ and real number $R \ge 1$, we have 
		\begin{flalign*}
			\mathbb{P} \big[ | q_j (0) - q_i (0) - \alpha(j-i) | \ge R \big] \le 2 (e^{-cR^2/|i-j|} + e^{-cR}).
		\end{flalign*}
	\end{lem}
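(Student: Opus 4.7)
The plan is to reduce the lemma to a concentration bound for a sum of independent sub-exponential random variables. Assume without loss of generality that $i < j$. By the definition \eqref{abr} of the Flaschka variables and \eqref{q00}, we have the telescoping identity
\begin{flalign*}
q_j(0) - q_i(0) = \sum_{k=i}^{j-1} \bigl( q_{k+1}(0) - q_k(0) \bigr) = -2 \sum_{k=i}^{j-1} \log a_k(0).
\end{flalign*}
Under thermal equilibrium (\Cref{mubeta2}), the variables $a_i, a_{i+1}, \ldots, a_{j-1}$ are independent, each with density proportional to $a^{2\theta-1} e^{-\beta a^2}$ on $[0,\infty)$. Setting $X_k = -2 \log a_k(0)$, these $X_k$ are i.i.d., so it suffices to prove: (i) $\mathbb{E}[X_k] = \alpha$; and (ii) the centered variable $X_k - \alpha$ is sub-exponential with parameters independent of $k$.

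For (i), change variables $Y_k = a_k^2$, which is $\Gamma(\theta, \beta)$-distributed; then $X_k = -\log Y_k$, so $\mathbb{E}[X_k] = \log \beta - \psi(\theta) = \alpha$ by \eqref{alpha}, where $\psi = \Gamma'/\Gamma$ is the digamma function. For (ii), I would control the two tails of $X_k$ separately. The left tail of $Y_k$ near $0$ scales like $\mathbb{P}[Y_k \le y] \lesssim y^\theta$, giving $\mathbb{P}[X_k \ge R] = \mathbb{P}[Y_k \le e^{-R}] \lesssim e^{-\theta R}$. The right tail decays like $\mathbb{P}[Y_k \ge y] \lesssim y^{\theta-1} e^{-\beta y}$, giving $\mathbb{P}[X_k \le -R] = \mathbb{P}[Y_k \ge e^R]$, which is doubly exponentially small in $R$. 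Equivalently, one can compute the moment generating function of $X_k - \alpha$ directly: $\mathbb{E}[e^{s(X_k - \alpha)}] = \beta^s \Gamma(\theta-s) / \Gamma(\theta) \cdot e^{-s\alpha}$, which is finite (and bounded by $e^{Cs^2}$) for $|s| \le \theta/2$, which is exactly the sub-exponential Bernstein condition with parameters depending only on $\beta$ and $\theta$.

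Having established that the $X_k - \alpha$ are i.i.d. sub-exponential, I would conclude by applying the standard Bernstein inequality to $\sum_{k=i}^{j-1} (X_k - \alpha)$, which has $|i-j|$ terms. This yields
\begin{flalign*}
\mathbb{P}\Bigl[ \Bigl| \sum_{k=i}^{j-1} (X_k - \alpha) \Bigr| \ge R \Bigr] \le 2 \bigl( e^{-cR^2/|i-j|} + e^{-cR} \bigr),
\end{flalign*}
for a constant $c = c(\beta,\theta) > 0$. Since $q_j(0) - q_i(0) - \alpha(j-i) = \sum_{k=i}^{j-1}(X_k - \alpha)$, this is precisely the asserted bound.

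The main (mild) obstacle is ensuring the sub-exponential constants are uniform and that the Bernstein bound gives the stated form with the correct $|i-j|$ in the Gaussian exponent; this is handled cleanly by the explicit formula for $\mathbb{E}[e^{s(X_k - \alpha)}]$ in terms of the Gamma function, which is analytic and nonvanishing on a fixed neighborhood of $s=0$ depending only on $\theta$. Nothing here needs Toda dynamics — the lemma is purely a statement about the initial product measure $\mu_{\beta,\theta}$.
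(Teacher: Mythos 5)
The paper does not prove this lemma itself; it cites it as \cite[Lemma 3.12]{LC}, so there is no in-paper argument to compare against. Your proof is correct and is surely the same argument one would give in \cite{LC}: telescope the increments $q_{k+1}(0)-q_k(0) = -2\log a_k(0)$ using \eqref{q00}, note that under $\mu_{\beta,\theta;N-1,N}$ the $a_k$ over $k \in \llbracket i, j-1\rrbracket$ are i.i.d., and apply Bernstein. The two key computations check out: with $Y_k = a_k^2 \sim \Gamma(\theta,\beta)$ and $X_k = -\log Y_k$, one has $\mathbb{E}[X_k] = \log\beta - \Gamma'(\theta)/\Gamma(\theta) = \alpha$ (matching \eqref{alpha}), and the moment generating function is $\mathbb{E}[e^{sX_k}] = \beta^s\Gamma(\theta-s)/\Gamma(\theta)$, finite for $s < \theta$, which gives the sub-exponential Bernstein condition on a fixed neighborhood of $s=0$ with constants depending only on $\beta,\theta$. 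Bernstein's inequality for a sum of $|i-j|$ such centered i.i.d. terms then yields exactly $2(e^{-cR^2/|i-j|} + e^{-cR})$ via $\exp(-\min(A,B)) \le e^{-A}+e^{-B}$. The observation that the statement is purely about the product measure and needs nothing from the Toda dynamics is also right.
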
 
	
	\begin{lem}[{\cite[Lemma 7.2]{LC}}]
	
		\label{qijsalpha} 
		
		Adopt \Cref{lbetaeta}. The following two statements hold with overwhelming probability.

		\begin{enumerate} 
			
			\item For any $s \in [0,T]$ and $i, j \in \llbracket N_1 + T(\log N)^3, N_2 - T (\log N)^3 \rrbracket$, we have
			\begin{flalign}
				\label{qiqjs4}
				\big| q_i (s) - q_j (s) - \alpha (i-j) \big| \le |i-j|^{1/2} (\log N)^2.
			\end{flalign}
			
			\item For any $s \in [0,T]$ and $i \in \llbracket N_1, N_2 \rrbracket$ with $|i-j| \ge T (\log N)^5$, we have 
			\begin{flalign}
				\label{qiqjs5}
				\big(  q_i (s) - q_j (s) \big) \cdot \sgn (\alpha i - \alpha j) \ge \displaystyle\frac{|\alpha|}{2} \cdot |i-j|.
			\end{flalign}
			
		\end{enumerate} 
		
	\end{lem}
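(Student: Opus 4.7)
The plan is to reduce both parts to the static estimate \Cref{qij} by combining invariance of the thermal equilibrium measure under the Toda flow on $\mathbb{Z}$ with the finite-interval approximation \cite[Proposition 2.5]{LC} and a discretization of the time interval $[0, T]$. The underlying idea is that, on $\mathbb{Z}$, the joint law of the Flaschka increments is exactly preserved in $s$, so a fixed-time estimate at $s = 0$ propagates to every $s \in [0, T]$; on the finite interval this is only approximately true, but the deviation is $e^{-c(\log N)^2}$ small away from the boundary.

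For part 1 at a fixed time $s \in [0, T]$ and bulk indices $i, j \in \llbracket N_1 + T(\log N)^3, N_2 - T(\log N)^3 \rrbracket$, I would couple the Toda lattice on $\llbracket N_1, N_2 \rrbracket$ to the Toda lattice on $\mathbb{Z}$, both initialized from identically distributed thermal equilibrium Flaschka variables. By \cite[Proposition 2.5]{LC}, with overwhelming probability these two evolutions agree up to error $e^{-c(\log N)^2}$ on bulk indices, which is precisely why the $T(\log N)^3$ buffer appears. On $\mathbb{Z}$ the thermal equilibrium is invariant, so the joint law of the increments $(r_k(s))_{k \in \mathbb{Z}}$ equals the joint law at time $0$; hence $q_i(s) - q_j(s) = \sum_{k=j}^{i-1} r_k(s)$ has the same distribution as $q_i(0) - q_j(0)$, to which \Cref{qij} applies with $R = |i-j|^{1/2} (\log N)^2$, producing a probability bound of $e^{-c(\log N)^4}$ that survives the union bound over all pairs in $\llbracket N_1, N_2 \rrbracket^2$.

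To promote this to a uniform-in-$s$ statement, I would discretize $[0, T]$ into a grid of spacing $T N^{-10}$. Since $\partial_s q_i(s) = p_i(s)$ and $\max_{i, s} |p_i(s)|$ is bounded by $\| \bm{L}(s) \|_{\mathrm{op}} = \| \bm{L}(0) \|_{\mathrm{op}} \le C (\log N)$ with overwhelming probability (using the Gaussian tails of the $b_i$ and conservation of the spectrum from \Cref{ltt}), this grid resolution is far finer than needed to transfer pointwise-in-$s$ control to uniform-in-$s$ control, at the cost of only a polynomial union bound.

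Part 2 would then follow by combining part 1 with the observation that $|i - j| \ge T(\log N)^5$ forces at least one of $i, j$ to lie in the bulk, since otherwise both being within $T(\log N)^3$ of an endpoint would force $|i-j| \le 2T(\log N)^3 < T(\log N)^5$. Assuming by symmetry that $j$ is in the bulk, pick an auxiliary bulk index $i^\star$ with $|i - i^\star| \le T(\log N)^3$ and apply part 1 to $(i^\star, j)$; the short block $q_i(s) - q_{i^\star}(s)$ is controlled by an analogous argument confined to the local neighborhood of the boundary, where the Flaschka variables still retain approximate thermal equilibrium statistics through a boundary-localized version of \cite[Proposition 2.5]{LC}. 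Adding the two controls gives a bound of the form $|q_i(s) - q_j(s) - \alpha(i-j)| \le C |i-j|^{1/2} (\log N)^{5/2}$, which under $|i-j| \ge T(\log N)^5$ is dominated by $|\alpha (i-j)|/2$ and yields the sign statement \eqref{qiqjs5}. The main obstacle I anticipate is precisely this near-boundary short-block control: since invariance of thermal equilibrium fails on the finite interval, one has to argue separately that the local law of Flaschka variables within $T(\log N)^3$ of $N_1$ or $N_2$ agrees with thermal equilibrium to error $e^{-c(\log N)^2}$ for all $s \in [0,T]$, which in turn is the substantive content of a localized finite-speed-of-propagation-style estimate.
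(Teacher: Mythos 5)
This lemma is not proved in the paper; it is imported verbatim from \cite[Lemma 7.2]{LC}, so there is no internal proof to compare against. Judged on its own terms, your treatment of part (1) is sound in outline and matches the machinery this paper uses elsewhere: the infinite-volume coupling you describe is exactly what \Cref{ltl0} packages (a bulk comparison of $\bm{L}(t)$ to a fresh thermal-equilibrium sample, with the $T(\log N)^3$ buffer absorbing the finite speed of propagation $\sim 200 T\log N$ from \Cref{a2p2}), after which \Cref{qij} and a time mesh finish the job. Two small imprecisions: the failure probability from \Cref{qij} at $R=|i-j|^{1/2}(\log N)^2$ is only $e^{-c(\log N)^2}$ when $|i-j|$ is of order one (not $e^{-c(\log N)^4}$), though this still survives the polynomial union bound; and passing from $|a_k(s)-\tilde a_k(s)|\le e^{-K/4}$ to closeness of $q_i(s)-q_j(s)$ requires a lower bound $a_k(s)\ge e^{-(\log N)^2}$, as in \eqref{q4}.

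Part (2) has a genuine gap. Your short-block step asks that the Flaschka variables within distance $T(\log N)^3$ of $N_1$ or $N_2$ retain thermal-equilibrium statistics at time $s$ up to error $e^{-c(\log N)^2}$, via a ``boundary-localized'' propagation estimate. No such statement is available: finite speed of propagation only protects indices at distance $\gtrsim 200\, T\log N$ from the endpoints, and your short block lies entirely inside the region the boundary can influence by time $T$, where the law of $(a_i(s),b_i(s))$ genuinely drifts away from equilibrium. Relatedly, the two-sided bound $|q_i(s)-q_j(s)-\alpha(i-j)|\le C|i-j|^{1/2}(\log N)^{5/2}$ you claim is stronger than what holds near the boundary (for $T$ large the drift there can be of order $T\log N \gg T^{1/2}(\log N)^5$) and is not needed for \eqref{qiqjs5}. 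The correct route is cruder and avoids the boundary issue altogether: on the overwhelming event $\bigcap_{r}\mathsf{BND}_{\bm{L}(r)}(\log N)$ one has $|q_i(s)-q_i(0)|\le \int_0^s |p_i(r)|\,dr \le T\log N$ for \emph{every} index $i$, since $p_i(r)=L_{ii}(r)$ is bounded by the (conserved) operator norm of $\bm{L}$. Combining this deterministic drift bound with the $s=0$ estimate of \Cref{qij}, valid for all pairs because thermal equilibrium holds everywhere at time zero, gives
$(q_i(s)-q_j(s))\sgn(\alpha i-\alpha j)\ge |\alpha||i-j|-|i-j|^{1/2}(\log N)^2-2T\log N$,
and the hypothesis $|i-j|\ge T(\log N)^5$ is exactly what makes both error terms negligible against $|\alpha||i-j|/2$. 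This also explains why part (2) carries the stronger separation $T(\log N)^5$ rather than the $T(\log N)^3$ bulk buffer of part (1).
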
 	
	
	Next, we will frequently require that the eigenvalues of a Lax matrix are bounded or separated from each other. The following definition provides notation for these two events, and the lemma below it states that a Lax matrix under thermal equilibrium likely satisfies both. 
	
	\begin{definition}
		\label{adelta}
		
		Fix real numbers $A, \delta > 0$; let $\mathscr{I}$ denote an index set; and let $\bm{M} = [M_{ij}] \in \SymMat_{\mathscr{I}}$. Define the events
		\begin{flalign*}
			\mathsf{BND}_{\bm{M}} (A) = \Bigg\{ \displaystyle\max_{i,j \in \mathscr{I}} |M_{ij}| \le A \Bigg\} \cap \Bigg\{ \displaystyle\max_{\lambda \in \eig \bm{M}} |\lambda| \le A \Bigg\}; \quad \mathsf{SEP}_{\bm{M}} (\delta) = \Bigg\{ \displaystyle\min_{\substack{\nu, \nu' \in \eig \bm{M} \\ \nu \ne \nu'}} |\nu - \nu'| \ge \delta \Bigg\}. 
		\end{flalign*}
	\end{definition} 
	
	\begin{lem}[{\cite[Lemmas 3.15 and 3.18]{LC}}]
		\label{l0eigenvalues}
		
		Adopt \Cref{lbetaeta}. There exists a constant $c > 0$ such that the following two statements hold. 
		
		\begin{enumerate} 
			\item For any real number $A \ge 1$, we have 
		\begin{flalign*} 
			\mathbb{P} \Bigg[ \bigcap_{t \in \mathbb{R}_{\ge 0}} \mathsf{BND}_{\bm{L}(t)} (A) \Bigg] \ge 1 - c^{-1} N e^{-cA^2}.
		\end{flalign*} 
		
		\item For any real number $\delta > 0$, we have $\mathbb{P} [ \mathsf{SEP}_{\bm{L}} (\delta) ] \ge 1 - c^{-1} (\delta N^3 + e^{-cN^2})$.
		\end{enumerate} 
		
	\end{lem}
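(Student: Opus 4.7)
The plan is to establish the two parts separately. Part (1) is driven by conservation of the Hamiltonian, and Part (2) is driven by the absolute continuity of the Lax matrix entries (together with tridiagonal structure).

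For Part (1), I would first observe from \eqref{hpq}--\eqref{abr} that $\mathfrak{H}(t) = \sum_{j}\bigl(b_j(t)^2/2 + a_j(t)^2\bigr)$ is conserved in $t$ under the Toda dynamics \eqref{derivativepa}. Since every summand is nonnegative, $b_j(t)^2/2 + a_j(t)^2 \le \mathfrak{H}(0)$ for every $j$ and $t \ge 0$, so
\begin{flalign*}
\sup_{t \ge 0} \max_{i,j} |L_{ij}(t)| \le \sqrt{2\mathfrak{H}(0)}.
\end{flalign*}
The tridiagonal structure also yields $\|\bm{L}(t)\|_{\mathrm{op}} \le 3\max_{i,j}|L_{ij}(t)|$, so the eigenvalue bound follows from the entry bound. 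It thus suffices to show $\mathbb{P}[\mathfrak{H}(0) > cA^2] \le c^{-1}Ne^{-cA^2}$. Since $\mathfrak{H}(0)$ is a sum of $N$ independent sub-exponential random variables ($b_j^2$ are scaled chi-square, $a_j^2$ are Gamma), Bernstein's inequality gives $\mathbb{P}[\mathfrak{H}(0) > t] \le \exp(-c\min(t, t^2/N))$; for $A^2 \ge CN$ this produces the claimed bound, while for smaller $A$ the bound $c^{-1}Ne^{-cA^2} \ge 1$ is vacuous for $c$ chosen sufficiently small.

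For Part (2), the most direct approach is to invoke the Dumitriu--Edelman tridiagonal identification: the Lax matrix $\bm{L}$ at $t=0$ is (after rescaling) precisely the Hermite $\beta$-ensemble with $\beta_{\mathrm{ens}} = 2\theta$, so the joint eigenvalue density is proportional to $\prod_{i<j}|\lambda_i - \lambda_j|^{2\theta}$ times a product of one-dimensional weights. On the event $\{\max_k|\lambda_k| \le cN\}$, which by Part (1) applied with $A = cN$ has probability at least $1 - c^{-1}e^{-cN^2}$, one integrates the joint density over the region $\{|\lambda_i - \lambda_j| \le \delta\}$ for a fixed pair and then union-bounds over the $O(N^2)$ pairs. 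Alternatively, conditioning on all variables except one diagonal entry $b_i$, Hadamard's first-variation formula gives $\partial_{b_i}\lambda_k(\bm{L}) = u_k(i)^2$ with the sum over $k$ equaling one; thus each gap $\lambda_k - \lambda_\ell$ is a strictly monotone real-analytic function of $b_i$ whose total variation is controlled, and the bounded Gaussian density of $b_i$ yields an $O(\delta)$ bound on the probability that a specific gap is smaller than $\delta$. Together with the $O(N^2)$ choices of pairs (and an additional factor of $N$ from the operator-norm tail or from summing density-derivative contributions), this gives the $\delta N^3$ term.

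The main obstacle is Part (2): the actual proof must track the $N$-dependence carefully to obtain the $\delta N^3$ power. The Hadamard-type anti-concentration argument is clean but requires controlling the total range over which the gap varies as $b_i$ ranges over the real line, which in turn uses the tail bound from Part (1) so the two parts interlock. Part (1) is essentially mechanical once Hamiltonian conservation and the tridiagonal norm bound are noted. Since the estimate on eigenvalue separation will later be used only qualitatively (to avoid degeneracies of $\bm{L}$), the precise constant in the exponent and the exact polynomial in $N$ are not essential, which gives some flexibility in the execution.
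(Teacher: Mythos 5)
This lemma is not proved in the paper; it is imported from \cite{LC} (Lemmas 3.15 and 3.18), so there is no in-paper proof to compare your attempt against. Evaluating the attempt on its own merits, Part (1) has a critical error.

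Your reduction via Hamiltonian conservation gives $\sup_t \max_{i,j} |L_{ij}(t)| \le \sqrt{2\mathfrak{H}(0)}$, and you then try to show $\mathbb{P}[\mathfrak{H}(0) > cA^2] \le c^{-1}Ne^{-cA^2}$. But $\mathfrak{H}(0)$ is a sum of $\sim 2N$ nonnegative terms each with mean of order one, so $\mathbb{E}[\mathfrak{H}(0)] \asymp N$ and $\mathfrak{H}(0)$ concentrates around this mean. Hence $\mathbb{P}[\mathfrak{H}(0) > cA^2] \approx 1$ whenever $A \lesssim \sqrt{N}$, while the target bound $c^{-1}Ne^{-cA^2}$ is already small once $A^2 \gtrsim \log N$. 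Your claim that the bound is vacuous ``for smaller $A$'' misses the entire range $\sqrt{\log N} \lesssim A \lesssim \sqrt{N}$, which is precisely the range in which the lemma is invoked throughout the paper (e.g.\ with $A = \log N$). The correct mechanism is not Hamiltonian conservation but \emph{eigenvalue} conservation (\Cref{ltt}): since $\bm{L}(t)$ is symmetric, $\max_{i,j}|L_{ij}(t)| \le \|\bm{L}(t)\|_{\mathrm{op}} = \max_k|\lambda_k|$, and the $\lambda_k$ are $t$-independent, so $\|\bm{L}(t)\|_{\mathrm{op}} = \|\bm{L}(0)\|_{\mathrm{op}} \le 3\max_{i,j}|L_{ij}(0)|$. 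A union bound over the $\sim 2N$ entries of $\bm{L}(0)$, each of which has a sub-Gaussian tail under thermal equilibrium, then yields the stated $c^{-1}Ne^{-cA^2}$ bound for all $t$ at once.

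For Part (2), two remarks. First, the Dumitriu--Edelman identification you invoke is not exact here: in their Hermite $\beta$-ensemble the off-diagonals are $\chi$ with degrees of freedom decreasing linearly down the matrix, whereas under \Cref{mubeta2} the off-diagonals $a_i$ are i.i.d., so $\bm{L}$ is a high-temperature analogue, not the classical log-gas, and the stated joint density formula does not apply. Second, your Hadamard first-variation argument asserts that each gap $\lambda_k - \lambda_\ell$ is strictly monotone in a diagonal entry $b_i$; this is not true in general since $\partial_{b_i}(\lambda_k - \lambda_\ell) = u_k(i)^2 - u_\ell(i)^2$ can change sign. A Wegner/Minami-type anti-concentration argument is the right spirit, but the monotonicity shortcut as stated would need to be replaced by a genuine control on the distribution of gaps (e.g.\ through the conditional density of a single diagonal entry and the unit total variation $\sum_k u_k(i)^2 = 1$), and I would not call the execution routine.
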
 
	
	The next lemma realizes the function $\varrho$ from \Cref{frho} as the limiting spectral distribution of a random Lax matrix under thermal equilibrium; it follows from \cite[Lemma 4.3]{DSME}.\footnote{In fact, \cite{DSME} considers the Lax matrix $\hat{\bm{L}} = [\hat{L}_{ij}] \in \SymMat_{N \times N}$ for the periodic Toda lattice, which differs from $\bm{L}$ only in its $(1,N)$ and $(N,1)$ entries. There, we instead have $\hat{L}_{1N} = \hat{L}_{N1} = b_N$, where $b_N$ has the same law as (and is independent from) the $(b_i(0))$. It is quickly seen from the Weyl interlacing inequality that this alteration does not affect the convergence \eqref{rhofconverge}.}

	\begin{lem}[{\cite[Lemma 4.3]{DSME}}]
		
		\label{lf} 
		
		Adopt \Cref{lbetaeta}, and denote $\bm{L} = \bm{L}_N$. For any bounded, continuous function $f : \mathbb{R} \rightarrow \mathbb{R}$, we have
		\begin{flalign}
			\label{rhofconverge}
			\displaystyle\lim_{N\rightarrow \infty} \mathbb{E} \Bigg[ \displaystyle\frac{1}{N} \cdot \displaystyle\sum_{\lambda \in \eig \bm{L}_N} f(\lambda) \Bigg] = \displaystyle\int_{-\infty}^{\infty} f(\lambda) \varrho(\lambda) d \lambda.
		\end{flalign}
		
	\end{lem}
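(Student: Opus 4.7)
The plan is to reduce the assertion to \cite[Lemma 4.3]{DSME}, combined with a rank-perturbation bound coming from Weyl interlacing. First I would recall that \cite[Lemma 4.3]{DSME} establishes the density-of-states convergence \eqref{rhofconverge} for a tridiagonal random matrix $\bm{L}_N'$ whose nonzero entries are distributed according to the same Gaussian and Gamma laws as in \Cref{mubeta2}, but with a slightly different choice of boundary condition (for instance, the periodic Toda Lax matrix on the torus, or a full Jacobi matrix with all $N-1$ off-diagonal entries sampled as Gamma variables). The joint law of the entries matches ours from \Cref{mubeta2}, up to a relabeling of indices along $\llbracket N_1, N_2 \rrbracket$, which is harmless by translation invariance of the product measure.

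Next I would directly compare $\bm{L}_N$ with $\bm{L}_N'$. By construction the two matrices agree except on a symmetric submatrix of size $O(1)$ (the boundary corners and/or the single entry $a_{N_2}$, which is set to $0$ in our convention), so $\bm{E} := \bm{L}_N - \bm{L}_N'$ is a real symmetric matrix of rank $r \le 4$. The Weyl interlacing inequality (in its Cauchy interlacing form for low-rank symmetric perturbations) then implies that the sorted eigenvalues of $\bm{L}_N$ and $\bm{L}_N'$ agree up to a reshuffling of at most $r$ positions, so for any bounded $f$ we have
\begin{flalign*}
\Bigg| \frac{1}{N} \sum_{\lambda \in \eig \bm{L}_N} f(\lambda) - \frac{1}{N} \sum_{\lambda \in \eig \bm{L}_N'} f(\lambda) \Bigg| \le \frac{2r \| f \|_\infty}{N}.
\end{flalign*}
Taking expectations, the right-hand side tends to $0$ as $N \to \infty$ for any bounded continuous $f$.

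Combining the two steps, \eqref{rhofconverge} for $\bm{L}_N'$ from \cite{DSME} transfers directly to $\bm{L}_N$. The only point requiring care is matching the precise law and indexing convention of \Cref{mubeta2} with the thermal-equilibrium measure used in \cite{DSME}, but this is a direct comparison. I do not anticipate any substantive technical obstacle here, since the heavy lifting (identifying the limiting density $\varrho$ explicitly) has already been done in \cite{DSME}; the role of the present lemma is only to port that computation from their reference Jacobi ensemble to our slightly different open-interval Lax matrix.
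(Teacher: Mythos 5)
Your overall strategy — cite [DSME, Lemma 4.3] for a closely related tridiagonal ensemble and bridge the difference via Weyl interlacing — is precisely the route the paper gestures at (the paper supplies no further detail beyond the parenthetical ``with the Weyl interlacing inequality''). However, the key intermediate step in your argument contains a genuine error.

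The claim that the sorted eigenvalues of $\bm{L}_N$ and $\bm{L}_N'$ ``agree up to a reshuffling of at most $r$ positions'' is false, and consequently the bound $\frac{2r\|f\|_\infty}{N}$ does not hold for arbitrary bounded continuous $f$. Weyl interlacing for a rank-$r$ symmetric perturbation gives only $\lambda_{k+r}(\bm{L}_N') \le \lambda_k(\bm{L}_N) \le \lambda_{k-r}(\bm{L}_N')$; it does \emph{not} say the eigenvalue multisets coincide except in $r$ slots. Indeed, already a rank-one perturbation generically moves \emph{every} eigenvalue, each into the adjacent gap. If one then takes $f$ to oscillate at the scale of the mean eigenvalue spacing, the two averages $\frac{1}{N}\sum f(\lambda_k)$ and $\frac{1}{N}\sum f(\lambda_k')$ can differ by $O(1)$ rather than $O(r/N)$, so your displayed inequality is false as stated.

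What interlacing \emph{does} give you is control on the empirical distribution functions: if $\bm{E} = \bm{L}_N - \bm{L}_N'$ has rank $r$, then $\sup_{x\in\mathbb{R}} |F_N(x) - F_N'(x)| \le r/N$, where $F_N$ and $F_N'$ are the (normalized) eigenvalue counting functions. This is the standard ``rank inequality'' for empirical spectral distributions. Since $F_N'$ converges weakly (in expectation) to the measure with density $\varrho$ by \cite[Lemma 4.3]{DSME}, and the Kolmogorov distance between $F_N$ and $F_N'$ is $O(1/N)$, the same weak limit holds for $F_N$. To convert weak convergence into convergence of $\mathbb{E}[\frac{1}{N}\sum f(\lambda_k)]$ for every bounded continuous $f$ one still needs tightness in expectation (which here follows from the subgaussian tail bounds on $\eig\bm{L}$, cf.\ \Cref{l0eigenvalues}) or, more directly, first establish the estimate for compactly supported Lipschitz $f$ via integration by parts against $F_N - F_N'$ and then approximate a general bounded continuous $f$. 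Your final conclusion is correct, but the quantitative intermediate inequality you rely on needs to be replaced by the Kolmogorov-distance argument.
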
 
	
	\begin{rem} 
		
		\label{fcn} 
		
		By \cite[Corollary 3.2]{DSME}, \eqref{rhofconverge} also holds for any polynomial $f$. This, \Cref{lf}, and the dominated convergence theorem together imply that \eqref{rhofconverge} holds (and both of its sides are finite) if $f : \mathbb{R} \rightarrow \mathbb{R}$ is of polynomial growth, meaning that there is a constant $C > 1$ so that $|f(x)| \le C (x^2+1)^C$ for all $x \in \mathbb{R}$.
	\end{rem}

	\subsection{Comparison Estimates}

	\label{LCompare} 	
	
	In this section we state four comparison results between different Toda lattices or Lax matrices. The first compares two Toda lattices on different intervals that initially coincide on a subinterval.

	\begin{lem}[{\cite[Proposition 4.5]{LC}}] 
	
	\label{a2p2}
	
		Let $\tilde{N}_1 \le N_1 \le N_2 \le \tilde{N}_2$ be integers; set $\tilde{N} = \tilde{N}_2 - \tilde{N}_1 + 1$ and $N = N_2 - N_1 + 1$. For each $t \in \mathbb{R}_{\ge 0}$, fix $\tilde{N}$-tuples $\tilde{\bm{a}}(t), \tilde{\bm{b}} (t) \in \mathbb{R}^{\tilde{N}}$ and $N$-tuples $\bm{a}(t), \bm{b}(t) \in \mathbb{R}^N$, indexed as 
		\begin{flalign*} 
			& \tilde{\bm{a}} (t) = ( \tilde{a}_{\tilde{N}_1} (t), \tilde{a}_{\tilde{N}_1+1} (t), \ldots , \tilde{a}_{\tilde{N}_2} (t) ); \qquad \tilde{\bm{b}} (t) = ( \tilde{b}_{\tilde{N}_1} (t), \tilde{b}_{\tilde{N}_1+1} (t), \ldots , \tilde{b}_{\tilde{N}_2} (t) ); \\
			&\bm{a} (t) = ( a_{N_1} (t), a_{N_1+1} (t), \ldots , a_{N_2} (t) ); \qquad \bm{b}(t) = ( b_{N_1} (t), b_{N_1+1}(t), \ldots , b_{N_2} (t) ).
		\end{flalign*} 
		
		\noindent For each $s \in \mathbb{R}_{\ge 0}$, also set $\tilde{a}_i (s) = 0 = \tilde{b}_i (s)$ if $i \in \mathbb{Z} \setminus \llbracket \tilde{N}_1, \tilde{N}_2 - 1\rrbracket$, and set $a_i (s) = 0 = b_i (s)$ if $i \in \mathbb{Z} \setminus \llbracket N_1, N_2 - 1 \rrbracket$. Assume $( \tilde{\bm{a}} (t); \tilde{\bm{b}} (t) )$ satisfies \eqref{derivativepa} for each $(j, t) \in \llbracket \tilde{N}_1, \tilde{N}_2 \rrbracket \times \mathbb{R}_{\ge 0}$, and $( \bm{a}(t), \bm{b}(t) )$ satisfies \eqref{derivativepa} for each $(j, t) \in \llbracket N_1, N_2 \rrbracket \times \mathbb{R}_{\ge 0}$. Let $A \ge 1$ be a real number satisfying
		\begin{flalign*}
			A \ge \displaystyle\max_{i \in \llbracket N_1, N_2 \rrbracket} \big( | a_i (0) | + |\tilde{a}_i (0) | +  | b_i (0) | + | \tilde{b}_i (0) | \big).
		\end{flalign*}
		
		\noindent Now fix a real number $T \ge 1$ and integers $N_1' \le N_2'$ and $K \ge 0$ with
		\begin{flalign*} 
			K \ge 200AT,
		\end{flalign*} 
		
		\noindent and $N_1 \le N_1'  \le N_2' \le N_2$ and $N_1' + K \le N_2' - K$. If $(a_j (0), b_j (0)) = (\tilde{a}_j (0), \tilde{b}_j (0))$ for each $j \in \llbracket N_1', N_2'  \rrbracket$, then 
		\begin{flalign*}
			 \displaystyle\sup_{s \in [0, T]} \bigg( \displaystyle\max_{i \in \llbracket N_1' + K, N_2' - K \rrbracket} | a_i (s) - \tilde{a}_i (s) | + \displaystyle\max_{i \in \llbracket N_1'+K, N_2'-K \rrbracket} | b_i (s) - \tilde{b}_i (s) | \bigg) \le e^{-K/4}.
		\end{flalign*}

	\end{lem}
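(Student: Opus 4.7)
The plan is to treat this as a discrete finite-speed-of-propagation estimate for the Toda flow, organized around an exponentially weighted $\ell^2$ comparison functional centered at a prescribed bulk index.

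\emph{A priori bounds.} The starting point is to ensure $\max_i(|a_i(s)| + |b_i(s)|) \leq CA$ and similarly for $(\tilde{\bm a}, \tilde{\bm b})$, uniformly in $s \in [0,T]$ and uniformly for indices within the cone of influence of the target index. For the untilded system this is immediate, since the Lax matrix operator norm is preserved by the Toda flow: $\|\bm L(s)\|_{\mathrm{op}} = \|\bm L(0)\|_{\mathrm{op}} \leq 3A$ by the Gershgorin circle theorem, so each entry satisfies $|L_{ij}(s)| \leq 3A$. For the tilded system the same bound is more delicate, since $\tilde a_i(0), \tilde b_i(0)$ may be arbitrarily large outside $\llbracket N_1, N_2\rrbracket$; here one argues that the bulk values $\tilde a_j(s), \tilde b_j(s)$ up to time $T$ are close to those of a Toda system truncated to $\llbracket N_1, N_2\rrbracket$ (to which Gershgorin applies directly), which is itself a finite-speed-of-propagation statement and can be handled by first proving a weaker version of the present lemma.

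\emph{Weighted energy functional and its initial size.} Fix an index $i_0 \in \llbracket N_1'+K, N_2'-K\rrbracket$, abbreviate $\Delta a_i = a_i - \tilde a_i$ and $\Delta b_i = b_i - \tilde b_i$, and define
\[
E(s) = \sum_{i \in \llbracket \tilde N_1, \tilde N_2\rrbracket} e^{-|i-i_0|/2}\bigl( \Delta a_i(s)^2 + \Delta b_i(s)^2 \bigr).
\]
By hypothesis $\Delta a_i(0) = 0 = \Delta b_i(0)$ for $i \in \llbracket N_1', N_2'\rrbracket$, and for $i \in \llbracket N_1, N_2\rrbracket \setminus \llbracket N_1', N_2'\rrbracket$ each difference is $\leq 2A$; the remaining indices $i \notin \llbracket N_1, N_2\rrbracket$ lie at distance $\geq K/2$ beyond the bulk and are tamed by Step~1 together with the exponential weight. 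This gives $E(0) \leq CA^2 e^{-K/2}$.

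\emph{Gronwall step and conclusion.} Subtracting the Flaschka equations \eqref{derivativepa} for the two systems gives
\[
\partial_s \Delta a_i = \tfrac{1}{2}\bigl(\Delta a_i(b_i - b_{i+1}) + \tilde a_i(\Delta b_i - \Delta b_{i+1})\bigr), \quad \partial_s \Delta b_i = (a_{i-1}+\tilde a_{i-1})\Delta a_{i-1} - (a_i+\tilde a_i)\Delta a_i.
\]
Differentiating $E$ and substituting these, using the a priori bound from Step~1 inside the cone of influence, together with $2|xy| \leq x^2+y^2$ and $e^{-|i\pm 1 - i_0|/2} \leq e^{1/2} \cdot e^{-|i - i_0|/2}$, produces $|\partial_s E(s)| \leq C_1 A \cdot E(s)$ for an absolute constant $C_1$. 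Gronwall then gives $E(s) \leq E(0) e^{C_1 A T} \leq CA^2 e^{-K/2 + C_1 AT}$, and the hypothesis $K \geq 200 AT$ is more than enough slack to absorb both $C_1$ and the $2 \log A$ prefactor, yielding $E(s) \leq e^{-K/2}$. Since $\Delta a_{i_0}(s)^2 + \Delta b_{i_0}(s)^2 \leq E(s)$, taking square roots and the supremum over $i_0$ in the bulk yields the claimed bound $e^{-K/4}$.

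\emph{Main obstacle.} The principal difficulty is the a priori bound of Step~1 for the tilded system, whose initial data is unconstrained outside $\llbracket N_1, N_2\rrbracket$: Lax norm conservation cannot be applied directly. The fix is to bootstrap the finite-speed estimate itself, comparing $\tilde{\bm L}$ to a truncation of its initial data that does satisfy the Gershgorin bound globally; this is what forces the hypothesis $K \geq 200 A T$ to use a generously large constant, rather than something tied to the sharp propagation speed.
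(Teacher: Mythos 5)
This lemma is cited from \cite[Proposition 4.5]{LC} and not proved in the present paper, so there is no internal argument to compare against. Taking your proposal on its own terms, the weighted-energy/Gronwall framework is the natural route, but Step~1 has a genuine gap which the ``bootstrap via truncation'' you sketch does not close.

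The hypothesis bounds $|\tilde a_i(0)| + |\tilde b_i(0)|$ only for $i \in \llbracket N_1, N_2 \rrbracket$, so $\|\tilde{\bm L}(0)\|_{\mathrm{op}}$ --- and hence the Lax-conserved a priori bound on $\max_{i,s}(|\tilde a_i(s)| + |\tilde b_i(s)|)$ --- is completely unconstrained. Two concrete failures result. First, $E(0) \le CA^2 e^{-K/2}$ does not hold: for $i \notin \llbracket N_1, N_2 \rrbracket$ one has $\Delta a_i(0) = -\tilde a_i(0)$, which can outgrow the fixed weight $e^{-|i-i_0|/2}$. Second, in the Gronwall step the cross term $e^{-|i-i_0|/2}\,\Delta a_i\, \tilde a_i\,(\Delta b_i - \Delta b_{i+1})$ carries the coefficient $|\tilde a_i(s)|$ rather than $CA$, so absorbing it requires a uniform bound on $|\tilde a_i(s)|$ over the entire support of the weight, not merely inside a heuristic cone of influence. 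The proposed fix is circular: comparing $\tilde{\bm L}$ to the truncation $\hat{\bm L}$ (which agrees with $\tilde{\bm L}(0)$ on $\llbracket N_1, N_2 \rrbracket$ and vanishes outside) is exactly the $(N_1', N_2') = (N_1, N_2)$ instance of the lemma itself, and the a priori bound on $\tilde a_i(s)$ needed to run that comparison is precisely what the comparison is meant to produce. Your argument does work essentially verbatim under the strengthened hypothesis $A \ge \max_{i \in \llbracket \tilde N_1, \tilde N_2 \rrbracket}(\cdots)$, and in fact that stronger bound holds in the paper's only application (the event $\mathsf{E}_1$ in the proof of \Cref{vestimate} bounds the entries of both Lax matrices globally); but to handle the stated purely local hypothesis you need a genuinely new ingredient --- for instance a time-dependent or compactly supported weight paired with a continuity argument that simultaneously propagates the bulk amplitude bound and the comparison estimate outward in time --- and that is not supplied.
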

	
	The second indicates that thermal equilibrium is ``approximately invariant'' for the Toda lattice, by comparing it to the Toda lattice run for some time $t \ge 0$ initialized under thermal equilibrium (at sites sufficiently far from the endpoints of its domain).

	\begin{lem}[{\cite[Proposition 4.4]{LC}}] 
		
		\label{ltl0}

		Adopt \Cref{lbetaeta}, and fix $t \in [0, T]$. There exists a random matrix $\bm{M} = [M_{ij}] \in \SymMat_{\llbracket N_1, N_2 \rrbracket}$, whose law coincides with that of $\bm{L}(0)$, such that the following holds with overwhelming probability. For any real number $K \ge T \log N $, we have that 
		\begin{flalign}
			\label{estimatelm} 
			\displaystyle\max_{i,j \in \llbracket N_1+K, N_2-K \rrbracket} | L_{ij} (t) - M_{ij} | \le e^{-K/5}.
		\end{flalign}
				
	\end{lem}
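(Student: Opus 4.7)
The plan is to couple the finite-volume Toda dynamics on $\llbracket N_1, N_2 \rrbracket$ with an auxiliary Toda dynamics on $\mathbb{Z}$, for which the infinite-volume thermal-equilibrium product measure is exactly invariant, and then apply the finite-speed-of-propagation comparison \Cref{a2p2}. Specifically, extend $(\bm{a}(0); \bm{b}(0))$ to a configuration $(\hat{\bm{a}}(0); \hat{\bm{b}}(0))$ indexed by $\mathbb{Z}$ by declaring $\hat{a}_i(0) = a_i(0)$ and $\hat{b}_i(0) = b_i(0)$ for each $i \in \llbracket N_1, N_2 - 1 \rrbracket$ (and $\hat{b}_{N_2}(0) = b_{N_2}(0)$), while sampling all the remaining $\hat{a}_i(0)$ and $\hat{b}_i(0)$ independently from the corresponding thermal-equilibrium marginals (in particular $\hat{a}_{N_2}(0)$ is a nontrivial Gamma-type random variable, whereas $a_{N_2}(0) = 0$). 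Run both dynamics up to time $t$, and define $\bm{M} \in \SymMat_{\llbracket N_1, N_2 \rrbracket}$ to be the tridiagonal matrix with diagonal $M_{ii} = \hat{b}_i(t)$ for $i \in \llbracket N_1, N_2 \rrbracket$ and off-diagonal $M_{i,i+1} = M_{i+1,i} = \hat{a}_i(t)$ for $i \in \llbracket N_1, N_2 - 1 \rrbracket$.

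Since the iid thermal-equilibrium product measure on $\mathbb{Z}$ is preserved by the Toda flow, $(\hat{\bm{a}}(t); \hat{\bm{b}}(t))$ has the same distribution as $(\hat{\bm{a}}(0); \hat{\bm{b}}(0))$. Consequently, the tuple $\bigl((\hat{b}_i(t))_{i \in \llbracket N_1, N_2 \rrbracket}, (\hat{a}_i(t))_{i \in \llbracket N_1, N_2 - 1 \rrbracket}\bigr)$ consists of independent Gaussians and independent Gamma-type variables, with exactly the marginals of the corresponding entries of $\bm{L}(0)$; so $\bm{M}$ and $\bm{L}(0)$ have the same law. To compare $\bm{L}(t)$ with $\bm{M}$, note that $(\bm{a}(0); \bm{b}(0))$ and $(\hat{\bm{a}}(0); \hat{\bm{b}}(0))$ agree on $\llbracket N_1, N_2 - 1 \rrbracket$, so that a $\mathbb{Z}$-valued analogue of \Cref{a2p2} --- obtained by taking $\tilde{N}_1 \to -\infty$ and $\tilde{N}_2 \to \infty$, valid once the Toda flow on $\mathbb{Z}$ is known to be well-posed with finite speed of propagation as in \cite[Proposition 2.5]{LC} --- applied with $N_1' = N_1$ and $N_2' = N_2 - 1$ yields
\begin{flalign*}
\sup_{s \in [0, t]} \max_{i \in \llbracket N_1 + K, N_2 - 1 - K \rrbracket} \bigl( |a_i(s) - \hat{a}_i(s)| + |b_i(s) - \hat{b}_i(s)| \bigr) \le e^{-K/4},
\end{flalign*}
whenever $K \ge 200 A T$, on the overwhelmingly probable event that all relevant Flaschka variables (for both systems) are bounded by some $A \le C \log N$, as provided by \Cref{l0eigenvalues}(1). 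Absorbing the multiplicative constant from $A$ into the threshold $K \ge T \log N$, together with the one-site loss at the boundary and the weakening $e^{-K/4} \to e^{-K/5}$, produces the desired estimate \eqref{estimatelm}.

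The central obstacle is the \emph{exact} invariance of the iid product measure under the Toda flow on $\mathbb{Z}$: on a finite interval there is no nontrivial invariant measure, while on a torus $\mathbb{Z}/M\mathbb{Z}$ the natural invariant Gibbs measure carries a constraint $\prod_i a_i = 1$ and thus fails to be a product. The standard route is to first verify invariance on a large torus via Liouville's theorem, together with the factorization of the Gibbs density in Flaschka variables, and then pass to $M \to \infty$; in the bulk of a large torus, the constrained and unconstrained product measures differ only by exponentially small corrections in $M$, which allows one to recover exact invariance on $\mathbb{Z}$ in the limit. The remaining ingredient --- the extension of \Cref{a2p2} to accommodate a $\mathbb{Z}$-valued second system --- is comparatively routine, following from the Gronwall-type argument underlying \Cref{a2p2} together with the well-posedness and finite-speed-of-propagation properties of the Toda flow on $\mathbb{Z}$ already developed in \cite{LC}.
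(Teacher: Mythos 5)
The paper does not prove this lemma; it is imported verbatim from \cite[Proposition 4.4]{LC}, so there is no in-paper argument for me to compare against. Judged on its own, your high-level strategy --- extend the thermal-equilibrium data to an iid configuration on $\mathbb{Z}$, exploit exact invariance of the full-line product measure, set $\bm{M}$ to be the restriction of the time-$t$ full-line Lax matrix, and control $\bm{L}(t)-\bm{M}$ by finite speed of propagation --- is the natural one and is presumably close in spirit to the cited proof.

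Two points deserve tightening. First, the threshold: \Cref{a2p2} requires $K \ge 200AT$, and your choice $A \le C\log N$ gives $200CT\log N$, not the claimed $T\log N$; ``absorbing the multiplicative constant into the threshold'' is not a valid move. The correct fix is to take $A = (\log N)/200$, for which $\mathsf{BND}_{\bm{L}}((\log N)/200)$ is still overwhelmingly probable by \Cref{l0eigenvalues} (with a smaller $c$ in the failure probability), so $200AT = T\log N$ matches exactly. Second, and more seriously, the two steps you describe as the crux --- exact invariance of the iid product measure under the $\mathbb{Z}$-Toda flow, and the $\mathbb{Z}$-valued analogue of \Cref{a2p2} together with full-line well-posedness --- are precisely the content of the cited proposition and are not established by gesturing at Liouville's theorem; your sketch therefore leans on the result it is purporting to prove. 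Also, the remark that on a torus the invariant measure ``carries a constraint $\prod_i a_i = 1$'' is not quite right for the Flaschka dynamics: the system \eqref{derivativepa} on $\mathbb{Z}/M\mathbb{Z}$ makes sense for arbitrary positive $(a_j)$, is divergence-free, and the iid Gamma--Gaussian density is a function of conserved quantities ($\sum_j (2\theta-1)\log a_j$ and the Hamiltonian), so the unconstrained product measure is exactly invariant on the ring; the constraint appears only if one insists on reconstructing periodic $(q_i)$, which the Lax-matrix dynamics does not require. This makes the torus route rather cleaner than you suggest and arguably preferable to passing directly through $\mathbb{Z}$.
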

	
	The third and fourth estimate the effect of perturbing a random Lax matrix, under thermal equilibrium, on its eigenvalues. To state them, we require some notation.  
	
	\begin{assumption} 
		
		\label{lmatrixl} 
		
		Sample $(\bm{a}; \bm{b})$ under the thermal equilibrium $\mu_{\beta, \theta; N-1,N}$ from \Cref{mubeta2}, where $\bm{a} = (a_{N_1}, a_{N_1+1}, \ldots , a_{N_2-1})$ and $\bm{b} = (b_{N_1}, b_{N_1+1}, \ldots , b_{N_2})$. Let $\bm{L} = [L_{ij}] \in \SymMat_{\llbracket N_1, N_2 \rrbracket}$ denote the associated Lax matrix (as in \Cref{matrixl}), and let $\tilde{\bm{L}} = [\tilde{L}_{ij}] \in \SymMat_{\llbracket N_1, N_2 \rrbracket}$ be another (random) tridiagonal matrix. Assume that there is an index set $\mathcal{D} \subseteq \llbracket N_1, N_2 \rrbracket$ and a real number $\delta \in (0, 1)$ satisfying 
		\begin{flalign}
			\label{ll} 		
			\displaystyle\max_{i,j \notin (\llbracket N_1, N_2 \rrbracket \setminus \mathcal{D})^2}  |\tilde{L}_{ij}| \le 2 \log N; \qquad \displaystyle\max_{i,j \in \llbracket N_1, N_2 \rrbracket \setminus \mathcal{D}} |L_{ij} - \tilde{L}_{ij}| \le \delta.
		\end{flalign}

	\end{assumption}
	
	We then have the following two lemmas, indicating that the eigenvalues of $\bm{L}$ (or $\tilde{\bm{L}}$) with localization centers sufficiently distant from $\mathcal{D}$ are also nearly eigenvalues of $\tilde{\bm{L}}$ (or $\bm{L}$, respectively). In this way, eigenvalues of $\bm{L}$ are ``approximately local,'' in that up to small error they likely only depend on the entries of $\bm{L}$ close to their localization centers.

	\begin{lem}[{\cite[Corollary 5.5]{LC}}]
		
		\label{lleigenvalues2}

		There exists a constant $c>0$ such that the following holds with overwhelming probability. Adopt \Cref{lmatrixl}; assume $\delta \le e^{-10(\log N)^2}$, and let $\zeta \ge e^{-150(\log N)^{3/2}}$ be a real number. Fix $\lambda \in \eig \bm{L}$, and let $\varphi \in \llbracket N_1, N_2 \rrbracket$ denote a $\zeta$-localization center of $\lambda$ with respect to $\bm{L}$. Suppose that $\dist (\varphi, \mathcal{D}) \ge (\log N)^3$. Then there exists an eigenvalue $\tilde{\lambda} \in \eig \tilde{\bm{L}}$ such that
		\begin{flalign*}
			|\lambda - \tilde{\lambda}| \le e^{(\log N)^2} (\delta^{1/8} + e^{-c \dist (\varphi, \mathcal{D})}), 
		\end{flalign*}
		
		\noindent and $\varphi$ is an $N^{-1} \zeta$-localization center for $\tilde{\lambda}$ with respect to $\tilde{\bm{L}}$. 
		
	\end{lem}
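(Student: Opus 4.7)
The plan is to exploit the exponential localization of the eigenvector $\bm{u}$ associated to $\lambda$ and use it to manufacture an approximate eigenvector for $\tilde{\bm{L}}$. Concretely, I would first invoke the random-tridiagonal-matrix localization results of Kunz--Souillard and Schenker alluded to in the introduction, which under thermal equilibrium give an overwhelmingly probable estimate of the form $|u(i)| \le e^{C(\log N)^{3/2}} e^{-c|i - \varphi|}$ for some constants $c, C > 0$ and all $i \in \llbracket N_1, N_2 \rrbracket$. The hypothesis $\zeta \ge e^{-150(\log N)^{3/2}}$ and $\dist(\varphi, \mathcal{D}) \ge (\log N)^3$ would then imply that the $\ell^2$ mass of $\bm{u}$ on a fattening of $\mathcal{D}$ is at most $e^{(\log N)^2/2} e^{-c\dist(\varphi,\mathcal{D})/2}$.

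Next, I would test $\bm{u}$ against $\tilde{\bm{L}} - \lambda$ to produce an approximate eigenvector. Writing $\bm{r} = (\tilde{\bm{L}} - \lambda) \bm{u} = (\tilde{\bm{L}} - \bm{L}) \bm{u}$, I would split $\bm{r}$ into two contributions: one from indices in $\llbracket N_1, N_2 \rrbracket \setminus \mathcal{D}$, where by the second bound of \eqref{ll} the entries of $\tilde{\bm{L}} - \bm{L}$ are bounded by $\delta$ and $\|\bm{u}\|_2 = 1$, yielding a contribution of order $\delta$; and another from indices in (or adjacent to) $\mathcal{D}$, where $|\tilde{L}_{ij}| \le 2\log N$ and the entries of $\bm{u}$ are exponentially small by the first step, yielding a contribution of order $(\log N) \cdot e^{(\log N)^2/2} e^{-c\dist(\varphi, \mathcal{D})/2}$. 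Combining gives $\|\bm{r}\|_2 \le e^{(\log N)^2} (\delta^{1/2} + e^{-c\dist(\varphi,\mathcal{D})/2})$; the min-max principle applied to $\tilde{\bm{L}}$ then produces an eigenvalue $\tilde{\lambda} \in \eig \tilde{\bm{L}}$ within this distance of $\lambda$. The weaker exponent $\delta^{1/8}$ in the statement suggests a buffer is needed to subsequently control the localization center, so one would take $\delta^{1/8}$ rather than $\delta^{1/2}$ to leave room.

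The main obstacle is the second conclusion, namely that $\varphi$ is an $N^{-1} \zeta$-localization center for some such $\tilde{\lambda}$ with respect to $\tilde{\bm{L}}$. Several eigenvalues of $\tilde{\bm{L}}$ could cluster near $\lambda$, so one cannot simply read off an eigenvector from the previous step. I would instead argue via the spectral projector: let $\bm{\Pi}$ denote the orthogonal projector onto the span of eigenvectors of $\tilde{\bm{L}}$ whose eigenvalues lie in a window of radius $e^{(\log N)^2}(\delta^{1/8} + e^{-c\dist(\varphi,\mathcal{D})})$ around $\lambda$, and show using a resolvent/functional-calculus bound together with $\|\bm{r}\|_2$ small that $\|(1-\bm{\Pi}) \bm{u}\|_2 = o(\zeta)$. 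Then $\bm{\Pi} \bm{u}$ has $\varphi$-coordinate at least $\zeta - o(\zeta)$, so at least one of the (at most $N$) contributing eigenvectors $\tilde{\bm{u}}$ has $|\tilde{u}(\varphi)| \ge N^{-1}\zeta$ after Cauchy--Schwarz, giving the desired $N^{-1}\zeta$-localization center.

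The delicate point in the spectral projector argument is controlling the inverse spectral gap between the window and the complementary spectrum of $\tilde{\bm{L}}$, which is ultimately why one needs the generous constants $e^{(\log N)^2}$ and the weakened exponent $1/8$: one picks the window slightly larger than the naive approximate-eigenvector bound, and uses eigenvalue separation for $\tilde{\bm{L}}$ (inherited from the analogue of \Cref{l0eigenvalues}(2) applied to a small perturbation of $\bm{L}$) to ensure the resolvent bound outside the window is not worse than polynomial in $N$. The $N^{-1}$ loss in the localization-center constant is the cost of absorbing this polynomial factor via Cauchy--Schwarz over the $N$ eigenvectors in the cluster.
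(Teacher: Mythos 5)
The paper does not actually prove this statement: Lemma~\ref{lleigenvalues2} is quoted verbatim from the external reference \cite[Corollary~5.5]{LC} and there is no internal proof to compare your sketch against. What the paper does record from the same cluster of results in \cite{LC} is the resolvent-difference estimate Lemma~\ref{lgdifference} (which is \cite[Lemma~5.4]{LC}), and the fact that this is the immediately preceding result in \cite{LC} strongly suggests that \cite{LC} derives Corollary~5.5 via a resolvent comparison on the strip $\Omega$ (bound $|\mathfrak{G}_{\varphi\varphi}(z)-G_{\varphi\varphi}(z)|$, note that $\Imaginary G_{\varphi\varphi}$ has a spike of height $\sim \zeta^2/\eta$ near $\lambda$ because $\varphi$ is a $\zeta$-localization center, and conclude that $\tilde{\bm{L}}$ must have a nearby eigenvalue whose eigenvector carries mass at $\varphi$). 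Your route --- exponential localization of $\bm{u}$, the approximate-eigenvector identity $(\tilde{\bm{L}}-\lambda)\bm{u}=(\tilde{\bm{L}}-\bm{L})\bm{u}$, then a spectral projector onto a window and Cauchy--Schwarz to extract a single $\tilde{\lambda}$ --- is a genuinely different but standard alternative, slightly more elementary in that it avoids analytic continuation into the upper half plane; both approaches ultimately rest on the same localization input, which is the nontrivial technical ingredient imported from \cite{LC} and is not reproduced in this paper.

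Two small corrections to your sketch. First, the final paragraph is overcomplicated: once the window $W$ of radius $R$ around $\lambda$ is fixed and $\bm{\Pi}$ is the projector onto eigenvectors of $\tilde{\bm{L}}$ with eigenvalues in $W$, the inequality $\|(1-\bm{\Pi})\bm{u}\|\le \|\bm{r}\|/R$ is automatic from the spectral theorem applied to the self-adjoint $\tilde{\bm{L}}$ (any eigenvalue outside the window is at distance $\ge R$ from $\lambda$, full stop). No eigenvalue separation for $\tilde{\bm{L}}$ is needed, and indeed it would be awkward to obtain, since $\tilde{\bm{L}}$ under Assumption~\ref{lmatrixl} is an arbitrary tridiagonal perturbation and does not inherit the $\mathsf{SEP}$ event of $\bm{L}$; fortunately you can simply drop that clause. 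Second, the existence of a nearby $\tilde{\lambda}$ given $\|(\tilde{\bm{L}}-\lambda)\bm{u}\|\le\varepsilon$, $\|\bm{u}\|=1$ is a direct consequence of the spectral theorem (if all eigenvalues were at distance $>\varepsilon$, the resolvent norm would be $<\varepsilon^{-1}$, a contradiction), not the min--max principle, but this is cosmetic. Modulo these points and the unproved localization input, your outline is sound and produces the claimed $N^{-1}\zeta$ localization center with room to spare (Cauchy--Schwarz actually gives $N^{-1/2}\zeta$).
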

	
	\begin{lem}[{\cite[Corollary 5.6]{LC}}]
		\label{lleigenvalues} 
		
		There exists a constant $c \in (0, 1)$ such that the following holds with overwhelming probability. Adopt \Cref{lmatrixl}; assume that $\delta \le e^{-10(\log N)^2}$; and let $\zeta \ge e^{-150(\log N)^{3/2}}$ be a real number. Fix $\tilde{\lambda} \in \eig \tilde{\bm{L}}$, and let $\tilde{\varphi} \in \llbracket N_1, N_2 \rrbracket$ denote a $\zeta$-localization center of $\tilde{\lambda}$ with respect to $\tilde{\bm{L}}$. Suppose that $\dist (\tilde{\varphi}, \mathcal{D}) \ge (\log N)^3$.
		
		\begin{enumerate} 
			\item There exists a unique eigenvalue $\lambda \in \eig \bm{L}$ such that $|\lambda - \tilde{\lambda}| \le e^{(\log N)^2} (\delta^{1/8} + e^{-c \dist (\tilde{\varphi}, \mathcal{D})})$.
			\item We have that $\tilde{\varphi}$ is an $N^{-1} \zeta$-localization center of $\lambda$ with respect to $\bm{L}$, and any $N^{-1} \zeta$-localization center $\varphi \in \llbracket N_1, N_2 \rrbracket$ satisfies $|\varphi - \tilde{\varphi}| \le (\log N)^2 / 2$. 
		\end{enumerate} 
		
	\end{lem}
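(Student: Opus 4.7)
\medskip

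The plan is to leverage exponential localization of eigenvectors of random tridiagonal matrices (Kunz--Souillard, Schenker), together with standard rank-constrained perturbation arguments, to transfer the eigenvalue $\tilde{\lambda}$ from $\tilde{\bm L}$ to $\bm L$. The key point is that $\bm L - \tilde{\bm L}$ is a (tridiagonal) perturbation supported on indices in $\mathcal D$, so the effective perturbation felt by a vector localized at $\tilde\varphi$ is of order $\delta + e^{-c\cdot \dist(\tilde\varphi,\mathcal D)}$, not $\delta$ itself. Throughout, I would condition on the overwhelmingly probable events $\mathsf{BND}_{\bm L}(\log N)$, $\mathsf{BND}_{\tilde{\bm L}}(\log N)$, and $\mathsf{SEP}_{\bm L}(e^{-(\log N)^{3/2}})$, from \Cref{l0eigenvalues} and \eqref{ll}.

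First, I would use the localization results for random tridiagonal matrices (applied to $\tilde{\bm L}$, whose law is sufficiently close to that of $\bm L$ by the second inequality of \eqref{ll}) to deduce that the unit eigenvector $\tilde{\bm u}$ associated with $\tilde{\lambda}$ satisfies a pointwise exponential decay bound of the form $|\tilde u(i)| \le e^{(\log N)^2} e^{-c|i-\tilde\varphi|}$ with overwhelming probability. Because $\dist(\tilde\varphi,\mathcal D) \ge (\log N)^3$, restricting $\tilde{\bm u}$ to the complement of a small neighborhood of $\mathcal D$ and renormalizing produces a vector $\bm v$ with $\|\bm v - \tilde{\bm u}\|_2 \le e^{-c\dist(\tilde\varphi,\mathcal D)/2}$ that is supported away from $\mathcal D$. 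The second bound in \eqref{ll} then gives $\|(\bm L - \tilde{\bm L})\bm v\|_2 \lesssim \delta$ (only $O(1)$ entries at the boundary of the support of $\bm v$ contribute, since $\tilde{\bm L}$ and $\bm L$ are tridiagonal), while by construction $\|\tilde{\bm L}\bm v - \tilde\lambda\bm v\|_2 \le e^{-c\dist(\tilde\varphi,\mathcal D)}$. Adding these, $\bm v$ is an approximate eigenvector of $\bm L$ with eigenvalue $\tilde\lambda$ up to total error $\delta + e^{-c\dist(\tilde\varphi,\mathcal D)}$.

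Next, by standard spectral theory, $\dist(\tilde\lambda,\eig\bm L) \le \|\bm L\bm v - \tilde\lambda\bm v\|_2$, yielding some $\lambda\in\eig\bm L$ with $|\lambda-\tilde\lambda| \le e^{(\log N)^2}(\delta^{1/8} + e^{-c\dist(\tilde\varphi,\mathcal D)})$ (the $1/8$ accommodates the loss from intermediate constants). Uniqueness follows from $\mathsf{SEP}_{\bm L}$ together with the upper bound $\delta \le e^{-10(\log N)^2}$: any two eigenvalues of $\bm L$ within $2e^{(\log N)^2}\delta^{1/8}$ of $\tilde\lambda$ would collide. For the localization-center part, I would apply the resolvent identity $(\bm L-\lambda)^{-1} = (\tilde{\bm L}-\lambda)^{-1} + (\tilde{\bm L}-\lambda)^{-1}(\tilde{\bm L}-\bm L)(\bm L-\lambda)^{-1}$ to an approximate eigenvector near $\tilde\varphi$, using the spectral gap and a Combes--Thomas / off-diagonal Green's function estimate (which is where the localization of $\tilde{\bm L}$ is used a second time) to show that the eigenvector $\bm u$ of $\bm L$ with eigenvalue $\lambda$ satisfies $|u(\tilde\varphi)| \ge N^{-1}\zeta$, and conversely that any site $\varphi$ with $|u(\varphi)|\ge N^{-1}\zeta$ is within $(\log N)^2/2$ of $\tilde\varphi$ (otherwise one contradicts the exponential decay of $\bm u$ about $\tilde\varphi$).

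The principal technical obstacle is the last step: obtaining a quantitatively good pointwise Green's function decay $|(\tilde{\bm L}-\lambda)^{-1}_{ij}|$ despite the fact that $\lambda$ may be very close to an eigenvalue of $\tilde{\bm L}$. This requires separating off the rank-one contribution of the nearby eigenvector $\tilde{\bm u}$ (Schur complement) and then using fractional-moment / Combes--Thomas estimates on the remaining resolvent, with the spectral gap from $\mathsf{SEP}$ controlling the denominator. Once this estimate is in place, everything else is bookkeeping with the exponential tails and the $e^{(\log N)^2}$ safety factor.
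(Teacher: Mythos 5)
The paper does not prove this lemma. It is imported verbatim from~\cite[Corollary 5.6]{LC}, alongside the companion results \Cref{lleigenvalues2} ({\cite[Corollary 5.5]{LC}}) and \Cref{lgdifference} ({\cite[Lemma 5.4]{LC}}), so there is no internal proof to compare against; I can only assess your sketch on its own terms.

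There is a genuine gap at your first step. You assert that the unit eigenvector $\tilde{\bm u}$ of $\tilde{\bm L}$ associated with $\tilde\lambda$ decays exponentially away from $\tilde\varphi$ by invoking Kunz--Souillard/Schenker for $\tilde{\bm L}$, ``whose law is sufficiently close to that of $\bm L$.'' But in \Cref{lmatrixl} the matrix $\tilde{\bm L}$ has no law: it is an \emph{arbitrary} tridiagonal matrix subject only to the deterministic constraints \eqref{ll} (entries bounded by $2\log N$, entries off $\mathcal D$ within $\delta$ of those of $\bm L$, entries on $\mathcal D$ unconstrained beyond boundedness). Localization theorems for random ensembles do not apply to such an adversarial perturbation, and being a $\zeta$-localization center only certifies $|\tilde u(\tilde\varphi)|\ge\zeta$ --- it carries no a priori decay information at all. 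Establishing exponential decay of $\tilde{\bm u}$ around $\tilde\varphi$ is therefore itself the substantive part of the argument; it must be derived deterministically by comparison with the random $\bm L$, for instance via the fractional-moment Green's function decay for $\bm L$ (\Cref{gijexponential}) together with a resolvent comparison in the spirit of \Cref{lgdifference}. Your final paragraph gestures at exactly this Combes--Thomas/resolvent machinery, so you clearly sense where the difficulty lies, but it must be front-loaded: without the decay of $\tilde{\bm u}$ in hand, the truncation step producing $\bm v$ and everything downstream collapses. Relatedly, the exponent $\delta^{1/8}$ in the conclusion is much weaker than the $\delta^1$ your approximate-eigenvector estimate would give; that loss signals several compounded comparison and fractional-moment steps that your sketch does not yet account for, rather than mere bookkeeping of constants. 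Once the localization of $\tilde{\bm u}$ is properly justified, the rest of your outline (approximate eigenvector, $\mathsf{SEP}_{\bm L}$ for uniqueness, resolvent identity with a rank-one split for the localization-center conclusions) is plausible and broadly in keeping with the framework of \cite{LC}.
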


 	\subsection{Localization Centers of Random Lax Matrices}
 	
 	\label{Center}
	
	In this section we discuss properties of localization centers (recall \Cref{ucenter}) of Lax matrices under thermal equilibrium. The following lemma bounds the speed at which any localization center can move under the Toda lattice.

	\begin{lem}[{\cite[Lemma 5.2]{LC}}]
		
	\label{centert} 
	
	The following holds with overwhelming probability. Adopt \Cref{lbetaeta}, but assume more generally that $\zeta \ge e^{-200(\log N)^{3/2}}$. Fix any eigenvalue $\lambda \in \eig \bm{L}(0)$ and any $\zeta$-localization center $\varphi \in \llbracket N_1, N_2 \rrbracket$ of $\lambda$ with respect to $\bm{L}(0)$. Then, for each real number $s \in [0, T]$, there does not exist an index $m \in \llbracket N_1, N_2 \rrbracket$ satisfying $|m - \varphi| \ge T(\log N)^2$ that is a localization center for $\lambda$ with respect to $\bm{L}(s)$.

	\end{lem}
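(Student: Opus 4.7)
The proof combines finite speed of propagation for the Toda dynamics (\Cref{a2p2}) with spectral locality of Lax matrices (\Cref{lleigenvalues2} and \Cref{lleigenvalues}). Throughout, condition on the overwhelmingly probable event that $\mathsf{BND}_{\bm{L}(t)}(A)$ holds for all $t \in [0, T]$ with $A = C \log N$ (by \Cref{l0eigenvalues}) and that $\mathsf{SEP}_{\bm{L}}\bigl( e^{-10(\log N)^2} \bigr)$ holds; by \Cref{ltt}, the latter event is invariant under Toda evolution. Set $R = \lfloor T(\log N)^2/4 \rfloor$ and let $\bm{L}^{(\varphi)}(t)$ denote the Lax matrix of the Toda lattice whose initial data is the restriction of $(\bm{a}(0), \bm{b}(0))$ to $W = \llbracket \varphi - R, \varphi + R \rrbracket$. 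Since $R/2 \ge 200AT$ for large $N$, \Cref{a2p2} yields $|L_{ij}(s) - L^{(\varphi)}_{ij}(s)| \le e^{-R/8}$ for all $s \in [0, T]$ and $i, j \in W' = \llbracket \varphi - R/2, \varphi + R/2 \rrbracket$.

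Now apply \Cref{lleigenvalues2} with $\bm{L} = \bm{L}(0)$ and $\tilde{\bm{L}}$ the zero-padded extension of $\bm{L}^{(\varphi)}(0)$ to $\llbracket N_1, N_2 \rrbracket$, taking $\mathcal{D} = \llbracket N_1, N_2 \rrbracket \setminus W$; this produces $\lambda_\varphi \in \eig \bm{L}^{(\varphi)}(0)$ with $|\lambda - \lambda_\varphi| \le e^{-cR}$ for which $\varphi$ is an $N^{-1}\zeta$-localization center. By \Cref{ltt} applied to the restricted dynamics, $\lambda_\varphi$ remains an eigenvalue of $\bm{L}^{(\varphi)}(s)$ for all $s \in [0, T]$. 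Next, apply \Cref{lleigenvalues} in the reverse direction with $\tilde{\bm{L}}$ the zero-padding of $\bm{L}^{(\varphi)}(s)$, $\bm{L} = \bm{L}(s)$, $\mathcal{D}' = \llbracket N_1, N_2 \rrbracket \setminus W'$, at a localization center $\tilde{\varphi} \in W'$ of $\lambda_\varphi$ with $\dist(\tilde{\varphi}, \mathcal{D}') \ge (\log N)^3$. This produces a unique $\lambda^* \in \eig \bm{L}(s)$ with $|\lambda^* - \lambda_\varphi| \le e^{-cR}$; combined with $|\lambda - \lambda_\varphi| \le e^{-cR}$ and the spectral gap bound $e^{-10(\log N)^2}$, this forces $\lambda^* = \lambda$. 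Part (2) of the lemma then constrains any $N^{-1}\zeta$-localization center $m$ of $\lambda$ with respect to $\bm{L}(s)$ to satisfy $|m - \tilde{\varphi}| \le (\log N)^2/2$; in particular, any $\zeta$-localization center $m$ (being \emph{a fortiori} an $N^{-1}\zeta$-localization center) satisfies $|m - \varphi| \le R + (\log N)^2/2 < T(\log N)^2$.

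The main obstacle is producing the localization center $\tilde{\varphi}$ of $\lambda_\varphi$ with respect to $\bm{L}^{(\varphi)}(s)$ lying deep inside $W'$ (needed for the second application above); this is essentially a scale-$R$ version of the very lemma being proved. I would handle it by invoking \Cref{ltl0} on the restricted Toda lattice on $W$, which produces a matrix $\bm{M}^{(\varphi)}$ of thermal-equilibrium law on $W$ that approximates $\bm{L}^{(\varphi)}(s)$ in its bulk. A localization-center bijection for $\bm{M}^{(\varphi)}$ (which exists by \Cref{ucenter}) then transfers back to $\bm{L}^{(\varphi)}(s)$ via a further application of \Cref{lleigenvalues2}, with errors controlled by the distance to $\partial W$, of order $R \gg (\log N)^3$. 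A related technicality, namely that the matrix playing the role of $\bm{L}$ in \Cref{lleigenvalues} is formally required to be a Lax matrix of thermal equilibrium at time $0$, is handled by first passing from $\bm{L}(s)$ to the matrix $\bm{M}$ provided by \Cref{ltl0} at time $s$.
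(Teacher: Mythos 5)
The paper cites this lemma from \cite[Lemma 5.2]{LC} rather than proving it, but the proof of the closely related \Cref{centerdistance} in \Cref{Proofst} makes clear what the intended mechanism is, and your proposal takes a genuinely different route that contains a gap you correctly anticipate but do not close.

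Your strategy uses finite speed of propagation for the Lax matrix \emph{entries} (\Cref{a2p2}) and then tries to convert this into control of eigenvector localization by cascading through the spectral perturbation lemmas \Cref{lleigenvalues2} and \Cref{lleigenvalues}. As you note, the second application requires a localization center $\tilde\varphi$ of $\lambda_\varphi$ with respect to the restricted matrix $\bm L^{(\varphi)}(s)$ lying at distance at least $(\log N)^3$ from $\partial W'$, which is a scale-$R$ instance of the very statement being proved. The proposed patch does not close the gap: the mere \emph{existence} of a $(2N)^{-1}$-localization-center bijection for $\bm M^{(\varphi)}$ (\Cref{ucenter}) is a global fact about the whole spectrum and carries no information about \emph{where} the particular eigenvalue $\lambda_\varphi$ localizes. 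It could localize at the boundary of $W'$, in which case the second application of \Cref{lleigenvalues} has no valid $\tilde\varphi$. Nothing in \Cref{ltl0} or \Cref{ucenter} ties the localization center of $\lambda_\varphi$ with respect to $\bm M^{(\varphi)}$ back to $\varphi$, so the circularity remains.

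The argument that avoids this circularity works at the level of the \emph{eigenvector transport operator} rather than the Lax matrix entries; this is exactly what \Cref{Proofst} does for \Cref{centerdistance}, and it is the natural route for \Cref{centert}. The Lax pair provides an orthogonal matrix $\bm V(r;s)$ with $\partial_s\bm V(r;s) = \bm P(s)\bm V(r;s)$, $\bm V(r;r)=\Id$, and $\bm V(r;s)^{-1}\bm L(s)\bm V(r;s)=\bm L(r)$, so eigenvectors satisfy $u_j(i;s) = \sum_k V_{ik}(0;s)u_j(k;0)$. Since $\bm P$ is tridiagonal with entries of size $\mathcal O(\log N)$ on the overwhelmingly probable event $\bigcap_{r\ge 0}\mathsf{BND}_{\bm L(r)}(\log N)$, the Dyson expansion \eqref{vijs} yields $|V_{ij}(0;s)| \le 2e^{-|i-j|}$ once $|i-j|\ge 20 s\log N$. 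If $|m-\varphi|\ge T(\log N)^2 \gg 20T\log N$, one splits $u_j(m;s) = \sum_k V_{mk}(0;s)u_j(k;0)$ into $k$ far from $m$ (where $V_{mk}$ is exponentially small) and $k$ near $m$, hence far from $\varphi$ (where $u_j(k;0)$ is exponentially small by approximate uniqueness of localization centers, \Cref{centerdistance2}); both pieces fall below $\zeta$. This bound on the propagator directly controls how far eigenvector mass can spread in time $s$, with no need to know where localization centers end up at intermediate times, which is what your construction could not supply.
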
 

	The next lemma is an approximate continuity bound in $t$ for localization centers of $\bm{L}(t)$, that reside in the bulk of $\llbracket N_1, N_2 \rrbracket$; we show it in \Cref{Proofst} below.
	
	\begin{lem} 
		
		\label{centerdistance}

		The following holds with overwhelming probability. Adopt \Cref{lbetaeta}, but assume more generally that $\zeta \ge e^{-150 (\log N)^{3/2}}$. Fix real numbers $t, t' \in [0, T]$; an eigenvalue $\lambda \in \eig \bm{L}$; and $\zeta$-localization centers $\varphi \in \llbracket N_1, N_2 \rrbracket$ and $\varphi' \in \llbracket N_1, N_2 \rrbracket$ of $\lambda$ with respect to $\bm{L}(t)$ and $\bm{L}(t')$, respectively. Assume that $N_1 + T (\log N)^4 \le \varphi \le N_2 - T (\log N)^4$. 
		
		\begin{enumerate}
			\item We have $|\varphi - \varphi'| \le (|t-t'| + 2) (\log N)^3$. 
			\item We have $|q_{\varphi} (t) - q_{\varphi'} (t')| \le (|t-t'| + 1) (\log N)^4$. 
		\end{enumerate}
		
	\end{lem}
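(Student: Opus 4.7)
The plan for Part (1) is to reduce the statement---which bounds the displacement of localization centers over the (possibly short) interval $[t,t']$---to a careful application of \Cref{centert}. That lemma bounds displacement by $T(\log N)^2$ over the entire horizon $[0,T]$ starting from thermal equilibrium, whereas we need a bound of $(|t-t'|+2)(\log N)^3$ over a sub-interval. The key idea is to exploit the approximate time-translation invariance of thermal equilibrium under the Toda flow: by \Cref{ltl0}, $\bm{L}(t)$ is approximated on its bulk, up to exponentially small error, by a matrix $\bm{M}$ whose law coincides with that of $\bm{L}(0)$. We then apply \Cref{centert} to the Toda evolution started from $\bm{M}$ over the horizon $\tau = \max(|t'-t|,1)$, and transfer the conclusion back to $\bm{L}$ via the localization-center perturbation lemmas.

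In more detail, assume $t\le t'$ and set $\tau = \max(t'-t,1)$. Invoke \Cref{ltl0} at time $t$ with $K = T\log N$ to obtain $\bm{M}$, distributed as $\bm{L}(0)$, satisfying $|L_{ij}(t) - M_{ij}| \le e^{-K/5}$ on $\llbracket N_1+K, N_2-K \rrbracket$. Let $\bm{M}(s)$ denote the Lax matrix of the Toda dynamics initialized at $\bm{M}$; by \Cref{a2p2} (with logarithmic matrix-entry bound furnished by \Cref{l0eigenvalues}), $\bm{L}(t+s)$ and $\bm{M}(s)$ remain within $e^{-K/5}$ of each other on a slightly reduced bulk, uniformly for $s \in [0,\tau]$. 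Applying \Cref{lleigenvalues2} at $s=0$, the center $\varphi$ corresponds to an eigenvalue $\lambda_M \in \eig \bm{M}$ for which $\varphi$ is an $N^{-1}\zeta$-localization center; by \Cref{ltt}, $\lambda_M \in \eig \bm{M}(\tau)$ as well. Since $\bm{M}(s)$ itself is a Toda evolution from a thermal equilibrium initial condition, \Cref{centert} applied to it over the horizon $\tau$ yields that every localization center of $\lambda_M$ with respect to $\bm{M}(\tau)$ lies within $\tau(\log N)^2$ of $\varphi$. Transferring such a center back to $\bm{L}(t')$ via \Cref{lleigenvalues}, and invoking the uniqueness clause in \Cref{lleigenvalues}(2) which forces every $N^{-2}\zeta$-localization center of $\lambda$ with respect to $\bm{L}(t')$ (in particular $\varphi'$) to lie within $(\log N)^2/2$ of it, we conclude $|\varphi - \varphi'| \le \tau(\log N)^2 + (\log N)^2 \le (|t-t'|+2)(\log N)^3$.

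Part (2) follows from Part (1) and the particle-spacing estimate \Cref{qijsalpha}. Decompose
\begin{align*}
|q_\varphi(t) - q_{\varphi'}(t')| \le |q_\varphi(t) - q_\varphi(t')| + |q_\varphi(t') - q_{\varphi'}(t')|.
\end{align*}
The second term is controlled by \eqref{qiqjs4} applied at time $t'$ with $i=\varphi$, $j=\varphi'$: it is at most $|\alpha||\varphi-\varphi'| + |\varphi-\varphi'|^{1/2}(\log N)^2$, which by Part (1) is $O((|t-t'|+1)(\log N)^3)$ (after verifying that both $\varphi,\varphi'$ remain in the range where \eqref{qiqjs4} applies, which follows from the bulk assumption on $\varphi$ together with Part (1)). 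For the first term, $\partial_s q_\varphi(s) = p_\varphi(s) = L_{\varphi\varphi}(s)$, uniformly bounded by $2\log N$ on $[0,T]$ with overwhelming probability by \Cref{l0eigenvalues}; hence $|q_\varphi(t) - q_\varphi(t')| \le 2|t-t'|\log N$. Combining gives the claim, comfortably within $(|t-t'|+1)(\log N)^4$.

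The main technical obstacle is justifying the time-shift: \Cref{centert} is stated for thermal equilibrium initial data at time $0$, while we wish to apply it effectively at time $t$. The trio (\Cref{ltl0}, \Cref{a2p2}, \Cref{lleigenvalues2}--\ref{lleigenvalues}) bridges this gap, but care is required to track the degradation of the localization parameter $\zeta$ through two successive applications of the perturbation lemmas (each loses a factor of $N^{-1}$) so as to remain within the admissible range $\zeta \ge e^{-150(\log N)^{3/2}}$; the gap between this and the more generous $e^{-200(\log N)^{3/2}}$ allowed by \Cref{centert} provides the necessary slack. A subsidiary point is ensuring that the localization centers appearing throughout (in $\bm{M}$, $\bm{M}(\tau)$, and $\bm{L}(t')$) all remain in the bulk of $\llbracket N_1, N_2 \rrbracket$ so that the perturbation lemmas apply; this is guaranteed by the hypothesis $N_1 + T(\log N)^4 \le \varphi \le N_2 - T(\log N)^4$ together with the intermediate displacement bounds.
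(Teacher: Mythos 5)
Your proposed argument takes a genuinely different route from the paper's. The paper's proof works directly with the orthogonal Toda propagator $\bm{V}(r;s)$ satisfying $\partial_s \bm{V}(r;s) = \bm{P}(s)\bm{V}(r;s)$ (where $\bm{P}$ is the skew-symmetric part of the Lax pair), derives via a Dyson-series estimate that $|V_{ij}(r;s)| \le 2e^{-|i-j|}$ whenever $|i-j| \ge 20(s-r)\log N$, and then uses the eigenvector relation $u_j(i;s) = \sum_k V_{ik}(r;s)\,u_j(k;r)$ together with the approximate uniqueness of localization centers (\Cref{centerdistance2}) to force a contradiction if $|\varphi - \varphi'|$ were too large. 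Your strategy instead tries to time-shift \Cref{centert} to base point $t$ by passing through a freshly sampled thermal-equilibrium matrix $\bm{M}$ via \Cref{ltl0} and the perturbation lemmas.

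The time-shift idea is reasonable, but as written Part (1) has a genuine gap at the step invoking \Cref{a2p2}. That lemma compares the Toda evolutions of two initial conditions that coincide \emph{exactly} on a subinterval $\llbracket N_1', N_2' \rrbracket$ and differ only outside; it gives no conclusion when the initial conditions merely agree up to a small error $e^{-K/5}$ on the bulk, which is all that \Cref{ltl0} provides about $\bm{L}(t)$ and $\bm{M}$. Propagating an approximate coincidence forward under the Toda flow is a stability (Gronwall-type) estimate that \Cref{a2p2} does not furnish, and a naive Gronwall bound over time $\tau \le T$ with matrix entries of size $\log N$ produces constants of order $e^{CT\log N}$, which is not obviously dominated by the initial closeness $e^{-T(\log N)/5}$. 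A subsidiary issue is that the closing transfer ``back to $\bm{L}(t')$ via \Cref{lleigenvalues}'' cannot be applied directly either, since neither $\bm{M}(\tau)$ nor $\bm{L}(t')$ is at thermal equilibrium; one would need a third application of \Cref{ltl0} at time $t'$ and a further chain of comparisons. (You also invoke \Cref{lleigenvalues2} where you mean \Cref{lleigenvalues}: you are transferring a localization center \emph{from} the non-equilibrium matrix $\bm{L}(t)$ \emph{to} the equilibrium matrix $\bm{M}$, which is the direction of \Cref{lleigenvalues}.) The paper sidesteps all of these difficulties by never leaving the original process $\bm{L}(s)$ and instead quantifying directly how little the propagator $\bm{V}(t;t')$ spreads eigenvector mass over the interval $[t,t']$. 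Your Part (2), by contrast, is correct and coincides with the paper's argument: the same triangle inequality decomposition, with the first term controlled by $|t-t'|\sup_s |b_\varphi(s)|$ and the second by \eqref{qiqjs4} together with Part (1).
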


	The following proposition provides the asymptotic scattering relation for the Toda lattice at thermal equilibrium. It will serve as the starting point for our proof of \Cref{vestimate}. 
	
	\begin{prop}[{\cite[Theorem 2.11]{LC}}]
		\label{ztlambda2} 
		
		Adopt \Cref{lbetaeta}. The following holds with overwhelming probability. Let $k \in \llbracket 1, N \rrbracket$ satisfy
		\begin{flalign}
			\label{n1n2k02} 
			N_1 + T (\log N)^6 \le \varphi_0 (k) \le N_2 - T (\log N)^6.
		\end{flalign}
		
		\noindent Then, for each $t \in [0, T]$, we have
		\begin{flalign}
			\label{lambdak2}
			\begin{aligned} 
			\Bigg| \lambda_k t - Q_k(t) + Q_k (0& ) - 2 \sgn (\alpha) \displaystyle\sum_{i: Q_t (i) < Q_t (k)}  \log |\lambda_{k} - \lambda_{i}| \\
			& \qquad \quad + 2 \sgn (\alpha) \displaystyle\sum_{i: Q_0 (i) < Q_0 (k)} \log |\lambda_k - \lambda_i| \Bigg| \le (\log N)^{15}.
			\end{aligned}
		\end{flalign} 
		
	\end{prop}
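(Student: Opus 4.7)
My plan for \Cref{ztlambda2} is to derive the scattering relation by integrating the identity $\lambda_k = \langle \bm{u}_k(s), \bm{L}(s) \bm{u}_k(s) \rangle$ over $s \in [0,t]$ and matching the resulting pieces with $Q_k(t) - Q_k(0)$ and the logarithmic scattering sums. Writing out the quadratic form and using $L_{ii}(s) = p_i(s)$ and $L_{i,i+1}(s) = a_i(s)$, we obtain
\begin{flalign*}
\lambda_k t = \int_0^t \sum_{i} p_i(s)\, u_k(i;s)^2\, ds + 2 \int_0^t \sum_{i} a_i(s)\, u_k(i;s)\, u_k(i+1;s)\, ds.
\end{flalign*}
The first step is to identify the diagonal integral with $Q_k(t) - Q_k(0)$, up to corrections coming from jumps of the localization center $\varphi_s(k)$; the second step is to show that the off-diagonal integral, together with these jump corrections, equals $2\sgn(\alpha)$ times the difference of the two logarithmic sums in \eqref{lambdak2}.

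For the diagonal integral, the exponential localization of $\bm{u}_k(s)$ (together with \Cref{centert} and \Cref{centerdistance}) reduces $\sum_i p_i(s) u_k(i;s)^2$ to $p_{\varphi_s(k)}(s) = \partial_s q_{\varphi_s(k)}(s)$ up to exponentially small error in $\log N$. Since $s \mapsto \varphi_s(k)$ is piecewise constant with jumps at isolated times $0 < s_1 < \cdots < s_r < t$, integrating this smooth piece produces $q_{\varphi_t(k)}(t) - q_{\varphi_0(k)}(0) = Q_k(t) - Q_k(0)$ plus the telescoping jump correction $\sum_\ell \bigl( q_{m_\ell'}(s_\ell) - q_{m_\ell}(s_\ell) \bigr)$, where $\varphi_s(k)$ jumps from $m_\ell$ to $m_\ell'$ at time $s_\ell$. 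By \Cref{centerdistance}, each jump corresponds to the $k$-th quasiparticle swapping with exactly one neighboring quasiparticle, say the $j(\ell)$-th, so $m_\ell' = \varphi_{s_\ell}(j(\ell))$; by the particle spacing estimate \eqref{qiqjs4}, $q_{m_\ell'}(s_\ell) - q_{m_\ell}(s_\ell) = \alpha \cdot (m_\ell' - m_\ell) + O((\log N)^2)$, which is small per jump but must be summed carefully.

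The decisive input is then two-body scattering. Around each crossing time $s_\ell$, \Cref{a2p2} together with \Cref{ltl0} lets me replace the full Toda lattice by its restriction to a window of size $O((\log N)^2)$ centered at $m_\ell$, in which only the two eigenvalues $\lambda_k$ and $\lambda_{j(\ell)}$ are actively ``interacting'' while all other quasiparticles appear as a slowly-varying background (by \Cref{lleigenvalues}, \Cref{lleigenvalues2}, and the eigenvalue separation in \Cref{l0eigenvalues}). In this local window the off-diagonal integral over a small time interval around $s_\ell$ is governed by Moser's two-soliton formula, producing a contribution of exactly $2\sgn(\alpha) \log |\lambda_k - \lambda_{j(\ell)}|$, from which the diagonal jump correction is subtracted to yield the net phase shift. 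Summing over the collisions $s_\ell \in [0,t]$ builds up precisely the telescoping difference of the two logarithmic sums in \eqref{lambdak2}, with each $\log|\lambda_k-\lambda_i|$ term appearing once per index $i$ whose quasiparticle crosses that of $k$ during $[0,t]$.

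The main obstacle will be controlling the accumulated error. A naive count gives up to $O(T(\log N)^3)$ crossings within the accessible window $[\varphi_0(k) - T(\log N)^6, \varphi_0(k) + T(\log N)^6]$, and each contributes both a localization error and a two-body approximation error; these must be summed to $O((\log N)^{15})$. The delicate part is the handling of near-simultaneous crossings (several quasiparticles passing $k$ in a short time), where isolating a clean two-body window is harder, and the control of eigenvalues that enter or leave the window at the boundary during $[0,t]$; for this one needs the quantitative comparisons in \Cref{lleigenvalues} and \Cref{lleigenvalues2} together with the fact, from \Cref{qijsalpha}, that the spatial background $(q_i(s))$ is approximately rigid on the relevant time scale.
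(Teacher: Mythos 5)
This proposition is not proven in the present paper; it is imported verbatim from \cite[Theorem 2.11]{LC}, where an entirely separate, book-length argument is devoted to it. So there is no ``paper's own proof'' here to match, and you are effectively attempting to reconstruct the main theorem of a companion paper from scratch. Evaluated on its own terms, your scheme has a fatal gap at the very first step. Writing $\lambda_k = \sum_i p_i(s)\,u_k(i;s)^2 + 2\sum_i a_i(s)\,u_k(i;s)u_k(i+1;s)$ is correct, but the claim that exponential localization reduces $\sum_i p_i(s)\,u_k(i;s)^2$ to $p_{\varphi_s(k)}(s)$ ``up to exponentially small error in $\log N$'' is false. Localization gives $|u_k(i;s)|\le Ce^{-c|i-\varphi_s(k)|}$, so the weight $u_k(i;s)^2$ is $\Theta(1)$ (not exponentially small) for the $O(1)$ sites nearest the center, and the $p_i$ at those sites differ from $p_{\varphi_s(k)}$ by $\Theta(1)$ since they are (essentially) independent Gaussians. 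The pointwise discrepancy is therefore $\Theta(1)$, and integrated over $[0,t]$ it produces an $O(t)$ error, not something captured by telescoping jumps of $\varphi_s(k)$. For $t$ of order $T\le N(\log N)^{-6}$ this is polynomial in $N$, wildly beyond the $(\log N)^{15}$ budget in \eqref{lambdak2}. Nothing in your sketch identifies a cancellation mechanism that would tame this.

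The same accounting problem recurs in the off-diagonal/two-body part. Under \Cref{lbetaeta} the average spacing $\alpha$ is $O(1)$, so at every instant the $k$-th quasiparticle is within $O(1)$ distance of \emph{many} others; collisions are not isolated pairwise events in clean local windows but an essentially continuous, overlapping process. Even granting the cleanest conceivable window analysis, the number of crossings over $[0,t]$ grows linearly in $t$, so any per-crossing error that is merely polylogarithmic (let alone $O(1)$, as residual three-body and window-boundary effects would naturally produce) again accumulates to something far larger than $(\log N)^{15}$. The content of the asymptotic scattering relation is precisely that the dense, strongly-overlapping dynamics nevertheless behaves as if governed by sequential two-body phase shifts with nearly no cumulative error over time scales up to $T\sim N$; this is an emergent, global statement that your local Rayleigh-quotient-plus-windowing argument does not reach. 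In short: the starting identity is fine, the diagonal reduction is wrong as stated, and the error calculus cannot close under any version of the plan you describe.
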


	\section{Concentration Estimates for Random Lax Matrices} 
	
	\label{Localization2}
	
	In this section we establish concentration bounds for functionals of Lax matrices, given by \Cref{concentrationh} and \Cref{concentrationh3}. We state these results and prove the latter given the former in \Cref{MatrixEstimate}. We then reduce the former to two estimates (given by \Cref{concentrationh2} and \Cref{expectationh}, to be established in \Cref{Proof12}) in \Cref{ProofHs0} below. 
	
	\subsection{Concentration Bounds}
	
	\label{MatrixEstimate}
	
	In this section we provide concentration estimates for the random Lax matrices from \Cref{lbetaeta}, which will involve both the Lax matrix eigenvalues $(\lambda_i)$ and their locations $(Q_i)$. We begin by imposing the following assumption on functions involved in these concentration bounds. 
	
	\begin{assumption}

	\label{fgab} 
	
	Let $A, B \ge 0$ and $S \in [1, T]$ be real numbers. Further let $G : \mathbb{R} \rightarrow \mathbb{R}$ be a function and $F: \mathbb{R} \rightarrow \mathbb{R}$ be a continuous function satisfying the following properties.
	
	\begin{enumerate} 
		
		\item For each $x \in \mathbb{R}$, we have  $| F(x) | \le A e^{|x|^{1/2}}$.
		\item For each $x \in [-\log N, \log N]$, we have $| F(x) | \le A$. 
		\item For any $x, y \in [-\log N, \log N]$ with $|x-y| \le e^{-(\log N)^{5/2}}$, we have $| F(x) - F(y) | \le A e^{-(\log N)^2}$.
		\item We have $\supp G \subseteq [-S, S]$, and $| G (x) - G(y) |\le BS^{-1} (x-y) + B \cdot \mathbbm{1}_{x \ge 0 \ge y}$ for any $x \ge y$.
		
	\end{enumerate} 
	
	\end{assumption} 

	\begin{example} 
		
		\label{examplef} 
		
		Let us provide two functions $F$ satisfying \Cref{fgab}. One is the constant function $F(x) = 1$, which satisfies \Cref{fgab} at $A=1$. Another is $F(x) = v_{\eff} (x)$ (recall \Cref{v}), which satisfies \Cref{fgab} at $A = \mathcal{O}(\log N)$, due to \Cref{derivativev}. As indicated after \eqref{v2}, we will apply the below concentration bound \Cref{concentrationh3} in both of these cases.
	
	\end{example} 
	
	We can now state the following concentration estimate for functions of $\eig \bm{L}$ and the $Q_j (s)$; its proof will appear in \Cref{ProofHs0} below. In what follows, we recall the density $\varrho$ from \Cref{frho}.

	\begin{prop}
		
		\label{concentrationh} 
		
		Adopt \Cref{lbetaeta} and \Cref{fgab}, and fix $s \in [0, T]$. The following holds with overwhelming probability. For any index $j \in \llbracket 1, N \rrbracket$ such that 
		\begin{flalign} 
			\label{js} 
			N_1 + T (\log N)^5 \le \varphi_s (j) \le N_2 - T (\log N)^5,
		\end{flalign} 
		
		\noindent we have 
		\begin{flalign}
			\label{lambdaqjs}
			\Bigg| \displaystyle\sum_{i=1}^N F(\lambda_i) \cdot G( Q_i (s) - Q_j (s))  - \displaystyle\int_{-\infty}^{\infty} F(\lambda) \varrho(\lambda) d\lambda  \displaystyle\int_{-\infty}^{\infty} G(\alpha q) dq \Bigg| \le AB S^{1/2} (\log N)^{12}.
		\end{flalign}
		
	\end{prop}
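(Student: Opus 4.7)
My plan is to prove \Cref{concentrationh} in three stages: decouple the $\lambda_i$-dependence from the $Q_i$-dependence using approximate locality of eigenvalues together with the particle-spacing estimates of \Cref{qijsalpha}; compute the expectation via the density-of-states result \Cref{lf}; and obtain concentration from the independence of the Flaschka variables at thermal equilibrium. I first condition on the overwhelmingly probable event on which $\bm{L}(s)$ has all entries and eigenvalues in $[-\log N, \log N]$ by \Cref{l0eigenvalues} (with $A = (\log N)^{1/2}$), the bulk spacing bounds \eqref{qiqjs4}--\eqref{qiqjs5} hold, and the localization-center continuity of \Cref{centerdistance} holds; by \Cref{ltl0} applied with $K = T \log N$, there exists a matrix $\bm{M}$ distributed as $\bm{L}(0)$ whose entries agree with those of $\bm{L}(s)$ on the bulk up to $e^{-T \log N / 5}$. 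Since $\supp G \subseteq [-S,S]$ and \eqref{qiqjs5} yields $|Q_i(s) - Q_j(s)| \ge \tfrac{|\alpha|}{2} |\varphi_s(i) - \varphi_s(j)|$ once $|\varphi_s(i) - \varphi_s(j)| \ge T(\log N)^5$, only indices $i$ with $|\varphi_s(i) - \varphi_s(j)| \le C S$ can contribute. Reindexing by $m = \varphi_s(i)$, the sum becomes $\sum_m F(\lambda_{\varphi_s^{-1}(m)}) \cdot G\bigl(q_m(s) - q_{\varphi_s(j)}(s)\bigr)$ with at most $O(S)$ active summands.

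Next, I replace $q_m(s) - q_{\varphi_s(j)}(s)$ by $\alpha(m - \varphi_s(j))$ via \eqref{qiqjs4}; by \Cref{fgab}(4) the per-term error is at most $B S^{-1} |m - \varphi_s(j)|^{1/2} (\log N)^2$, which sums to $O(B S^{1/2} (\log N)^2)$. Using \Cref{lleigenvalues} applied to $\bm{L}(s)$ and a truncation $\widetilde{\bm{L}}^{(m)}$ that matches $\bm{M}$ on a window of radius $(\log N)^3$ around $m$ and is set to a fixed tridiagonal profile outside, I replace $\lambda_{\varphi_s^{-1}(m)}$ by a local eigenvalue $\lambda^{(m)}$ of $\widetilde{\bm{L}}^{(m)}$ with error at most $e^{-c(\log N)^3}$; \Cref{fgab}(3) translates this to an error at most $A e^{-(\log N)^2}$ in $F(\lambda^{(m)})$, negligible after summation. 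The proxy sum
\begin{flalign*}
\Sigma = \sum_m F\bigl(\lambda^{(m)}\bigr) \cdot G\bigl(\alpha(m - \varphi_s(j))\bigr)
\end{flalign*}
is then a function of the independent entries of $\bm{M}$ alone, and each entry of $\bm{M}$ affects at most $O((\log N)^3)$ summands.

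For the expectation, I group summands into disjoint blocks of length $(\log N)^3$; within each block the $\{\lambda^{(m)}\}_m$ essentially enumerate the spectrum of a local submatrix of $\bm{M}$, so by \Cref{lf} (extended via \Cref{fcn} to handle the polynomial--exponential bound in \Cref{fgab}(1)) the expected contribution of a block near $m_0$ is approximately $(\log N)^3 \cdot G(\alpha(m_0 - \varphi_s(j))) \cdot \int F\varrho$; summing over blocks and using the Riemann-sum approximation $\sum_m G(\alpha(m - \varphi_s(j))) = \alpha^{-1}\int G(\alpha q)\, d(\alpha q) + O(B) = \int G(\alpha q)\, dq + O(B)$ yields $\mathbb{E}[\Sigma] = \int F\varrho \cdot \int G(\alpha q)\, dq + O(AB (\log N)^C)$. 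For concentration, each entry of $\bm{M}$ affects $\Sigma$ by at most $CAB(\log N)^3$ (it influences $O((\log N)^3)$ summands each bounded by $2AB$), and only $O(S)$ entries are relevant, so McDiarmid's bounded-differences inequality (applied after truncation to the bounded-entries event, or a log-Sobolev inequality for the Gaussian/Gamma product law) gives $|\Sigma - \mathbb{E}\Sigma| \le AB S^{1/2}(\log N)^{C'}$ with overwhelming probability, for an explicit $C' \le 12$.

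The main obstacle I expect is balancing the powers of $\log N$: the approximate-locality error in \Cref{lleigenvalues} forces the window $W \ge (\log N)^3$ so that $e^{-cW} \le e^{-(\log N)^{5/2}}$ and the near-continuity condition \Cref{fgab}(3) gives a controllable error, but the same $W$ enters the per-entry bounded-difference constant of McDiarmid, and ensuring the final exponent does not exceed $12$ requires careful accounting of all contributions. A secondary subtlety is that $\varphi_s$ is itself a random function of $\bm{L}(s)$: I handle this by using \Cref{centerdistance} to fix $\varphi_s$ (up to $O(1)$ ambiguity) in terms of a bulk bijection determined by $\bm{L}(0)$, after which the residual randomness of $\Sigma$ is a function of the independent entries of $\bm{M}$ to which the concentration inequality can be applied cleanly.
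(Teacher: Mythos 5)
Your overall architecture — reduce to time $0$ via \Cref{ltl0} and \Cref{lleigenvalues}, replace each eigenvalue $\Lambda_m$ by a local proxy depending on a $(\log N)^3$--window of entries, compute the expectation via the density of states, and conclude by a bounded--differences inequality on $O(S)$ essentially independent coordinates — is the same strategy as the paper (which decomposes \Cref{concentrationh} into \Cref{concentrationhs0}, then \Cref{concentrationh2} plus \Cref{expectationh}). Your treatment of the concentration half is sound in spirit; the paper uses a probabilistic-influence variant of McDiarmid (\Cref{xfsum}, with influences bounded by \Cref{estimateik}) rather than your deterministic local proxies, but both routes rest on the same approximate locality input (\Cref{lleigenvalues}) and the same $O(S)$ effective degrees of freedom, and either should yield a power of $\log N$ comfortably below $12$.

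There is, however, a genuine gap in the expectation step. You compute $\mathbb{E}[\Sigma]$ by grouping indices $m$ into blocks of length $(\log N)^3$, asserting that within each block the proxies $\lambda^{(m)}$ "essentially enumerate the spectrum of a local submatrix," and then invoking \Cref{lf} (extended by \Cref{fcn}) to equate the block average with $\int F\varrho$. This fails for two reasons. First, the $\lambda^{(m)}$ for different $m$ in a block are eigenvalues of \emph{different} (overlapping) truncations $\widetilde{\bm{L}}^{(m)}$, not of a single common submatrix, so they do not enumerate anyone's spectrum; what they approximate is the subset of $\eig\bm{M}$ indexed by localization centers in the block, and relating that subset to $\varrho$ is precisely the nontrivial point. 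Second, and more critically, \Cref{lf} is a $N\to\infty$ limit statement with no rate of convergence, so it yields no error bound at all when applied to a block of $O((\log N)^3)$ eigenvalues; yet you need a per-term error of order $S^{-1/2}(\log N)^{C}$ (to survive the sum over $O(S)$ terms), which can be as small as $T^{-1/2+o(1)}$. The paper fills this gap with \Cref{sumlambdaiexpectation}, proven in \Cref{ProofLambdaH} by embedding $\bm{L}$ into a much larger random Lax matrix $\bm{\mathfrak{L}}$ on which \Cref{lf} \emph{is} accurate, and then transferring local spectral information from $\bm{\mathfrak{L}}$ down to $\bm{L}$ via an iterated resolvent comparison (\Cref{glambda2}, \Cref{lgdifference}). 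Some such effective local law for the density of states is needed, and your proposal does not identify it.
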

	
	The following proposition is a modification of \Cref{concentrationh}, in which $F(\lambda)$ is replaced by $F(\lambda) \cdot f (\lambda - \lambda_j)$, for a function $f$ satisfying certain properties.

	\begin{prop}
		
		\label{concentrationh3} 
		
		Adopt \Cref{lbetaeta} and \Cref{fgab}, and fix $s \in [0, T]$. Let $D \ge 1$ be a real number; $j \in \llbracket 1, N \rrbracket$ be an index satisfying \eqref{js}; and $f:\mathbb{R} \rightarrow \mathbb{R}$ be a continuous function satisfying 
		\begin{flalign}
			\label{f1d} 
			\displaystyle\sup_{x \in \mathbb{R}} \displaystyle\sup_{|y| \le \log N}& e^{-|x|^{1/2}} \cdot | F(x) \cdot f(x-y) | \le AD; \qquad \displaystyle\sup_{|x| \le 2 \log N} | f(x) | \le D,
		\end{flalign} 
		
		\noindent and 
		\begin{flalign}
			\label{f2d}
			\begin{aligned}
			 & | f(x) - f(y) | \le D|x-y| \cdot \min \{  e^{100(\log N)^2}, |x|^{-1} + |y|^{-1} \}, \quad \text{if $x, y \in [-2 \log N, 2 \log N]$}. 
			 \end{aligned}
		\end{flalign}
		
		\noindent With overwhelming probability, we have
		\begin{flalign}
			\label{lambdaqjs3}
			\begin{aligned} 
				\Bigg| \displaystyle\sum_{i = 1}^N F(\lambda_i) \cdot f( \lambda_i - \lambda_j) \cdot G ( Q_i (s) - Q_j (s)) - \displaystyle\int_{-\infty}^{\infty} F(\lambda) f( & \lambda - \lambda_j) \varrho (\lambda) d \lambda  \displaystyle\int_{-\infty}^{\infty} G(\alpha q) dq \Bigg| \\
				& \qquad \le ABD S^{1/2} (\log N)^{13}.
			\end{aligned}
		\end{flalign}
		
	\end{prop}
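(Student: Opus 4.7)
The plan is to reduce \Cref{concentrationh3} to \Cref{concentrationh} by discretizing the random parameter $\lambda_j$ through an $\eta$-net argument. By \Cref{l0eigenvalues}, I would first restrict to the overwhelmingly probable event on which all eigenvalues of $\bm{L}$ lie in $[-A_0, A_0]$ with $A_0 := (\log N)^{1/2}$; in particular, $|\lambda_i - \lambda_j| \le 2A_0 < \log N$ for every $i$. Fix $\eta := e^{-200(\log N)^2}$ and let $\mathcal{G} \subset [-A_0, A_0]$ be an $\eta$-net, whose cardinality is $|\mathcal{G}| \le 2 A_0 \eta^{-1} \le e^{201(\log N)^2}$.

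For each grid point $\Lambda \in \mathcal{G}$, I would introduce the shifted function $F_\Lambda(\lambda) := F(\lambda) f(\lambda - \Lambda)$, multiplied if necessary by a smooth cutoff supported in $[\Lambda - \log N, \Lambda + \log N]$ to ensure integrability at infinity (the cutoff has negligible effect on either side of the desired inequality, since all $\lambda_i$ lie in $[-A_0, A_0]$ and $\varrho$ has the Gaussian decay from \Cref{rhoexponential}). The main task is to verify that $F_\Lambda$ satisfies \Cref{fgab} with $A$ replaced by a constant multiple of $AD$: condition (1) follows from the first part of \eqref{f1d}, and condition (2) from the second part of \eqref{f1d} together with condition (2) for $F$. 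The delicate point is condition (3): writing $F_\Lambda(x) - F_\Lambda(y) = (F(x) - F(y)) f(x-\Lambda) + F(y)(f(x-\Lambda) - f(y-\Lambda))$, the first piece is $\le AD e^{-(\log N)^2}$ by condition (3) for $F$ and the second part of \eqref{f1d}, while the second piece is $\le AD \cdot e^{100(\log N)^2 - (\log N)^{5/2}} \le AD e^{-(\log N)^2}$ by \eqref{f2d}, since the scale $e^{-(\log N)^{5/2}}$ in condition (3) dominates the gross Lipschitz constant $De^{100(\log N)^2}$ from \eqref{f2d} for large $N$. Applying \Cref{concentrationh} to each $F_\Lambda$ and union-bounding over $\mathcal{G}$ (whose cardinality $e^{O((\log N)^2)}$ is absorbed into the constant in the $(\log N)^2$ exponent of the failure probability) yields, simultaneously for all $\Lambda \in \mathcal{G}$, the bound \eqref{lambdaqjs3} with $\Lambda$ in place of $\lambda_j$.

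To conclude, I would select $\Lambda_* \in \mathcal{G}$ minimizing $|\Lambda_* - \lambda_j| \le \eta$ and control the error in replacing $\Lambda_*$ by $\lambda_j$. For each $i$, \eqref{f2d} yields $|f(\lambda_i - \lambda_j) - f(\lambda_i - \Lambda_*)| \le D\eta e^{100(\log N)^2}$; combined with $|F(\lambda_i)| \le A$, $|G| \le 2B$, and at most $N$ nonzero summands, the total sum-replacement error is at most $2ABDN \eta e^{100(\log N)^2} \le ABD e^{-98(\log N)^2}$, which is much smaller than the target $ABDS^{1/2}(\log N)^{13}$. The integral-replacement error is controlled analogously, splitting $|\lambda| \le \log N$ (where \eqref{f2d} applies directly) from $|\lambda| > \log N$ (where $\varrho$ contributes at most $e^{-c(\log N)^2}$ by \Cref{rhoexponential}). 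The main technical obstacle throughout is balancing the net cardinality $|\mathcal{G}| = e^{O((\log N)^2)}$ against the rough Lipschitz bound $De^{100(\log N)^2}$ in \eqref{f2d}, but the choice $\eta = e^{-200(\log N)^2}$ navigates both constraints cleanly.
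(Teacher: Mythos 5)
Your overall strategy—mesh over the possible values of $\lambda_j$, apply \Cref{concentrationh} to $F(\lambda)f(\lambda - \Lambda)$ at each grid point, union bound, then transfer back to $\lambda_j$—is exactly the paper's strategy. However, the execution has a genuine gap in the union bound, and a minor error in the truncation radius.

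The central problem is the incompatibility between your mesh spacing and the constant in the overwhelming-probability estimate. You set $\eta = e^{-200(\log N)^2}$, which is forced upon you because you control the sum-replacement error using only the \emph{crude} Lipschitz bound $|f(x)-f(y)| \le D|x-y|\,e^{100(\log N)^2}$ from \eqref{f2d}; with $N$ terms each contributing $D\eta\, e^{100(\log N)^2}$, you need $\eta \ll e^{-100(\log N)^2}$. But \Cref{concentrationh} only guarantees a failure probability of the form $c_0^{-1}e^{-c_0(\log N)^2}$ for some \emph{unspecified} $c_0 > 0$. Your net has $|\mathcal{G}| \sim e^{200(\log N)^2}$ points, so the union bound produces $c_0^{-1}e^{(200-c_0)(\log N)^2}$, which is only small if $c_0 > 200$. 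You assert the cardinality ``is absorbed into the constant in the $(\log N)^2$ exponent,'' but nothing in the paper gives you any quantitative lower bound on $c_0$, and such constants are typically small. The requirements $\eta \lesssim e^{-100(\log N)^2}$ (for the error) and $\eta \gtrsim e^{-c_0(\log N)^2}$ (for the union bound) are mutually contradictory for generic $c_0$.

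What you are missing is the \emph{eigenvalue separation event}. The paper adapts the mesh spacing to the constant: setting $\mathfrak{c} = \min\{c_0/5, 1\}$ and using spacing $e^{-3\mathfrak{c}(\log N)^2}$, the union bound closes automatically. To then control the replacement error $|f(\lambda_i - \nu) - f(\lambda_i - \lambda_j)|$ with this much coarser mesh, the paper restricts to $\mathsf{SEP}_{\bm{L}(0)}(e^{-\mathfrak{c}(\log N)^2})$ (overwhelmingly probable by \Cref{l0eigenvalues}), so that for $i \ne j$ one has $|\lambda_i - \lambda_j| \ge e^{-\mathfrak{c}(\log N)^2}$ and consequently $|\lambda_i - \nu| \ge e^{-2\mathfrak{c}(\log N)^2}$. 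Then the \emph{milder} part of \eqref{f2d}, namely $|f(x)-f(y)| \le D|x-y|(|x|^{-1}+|y|^{-1})$, gives a bound $\le 4De^{-\mathfrak{c}(\log N)^2}$ per term—small with a coarse mesh. The exceptional $i=j$ term is handled trivially using $\sup_{|x|\le 2\log N}|f(x)|\le D$ from \eqref{f1d}. The crude Lipschitz constant $e^{100(\log N)^2}$ in \eqref{f2d} is there only to guarantee that $f$ is a genuine function (continuous), not to be used quantitatively in the replacement error; using it directly, as you do, is what kills the union bound.

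A separate, easily repaired mistake: taking $A_0 = (\log N)^{1/2}$ in $\mathsf{BND}_{\bm{L}}(A_0)$ gives failure probability $\sim N e^{-c\log N} = N^{1-c}$ by \Cref{l0eigenvalues}, which is only polynomially small, not overwhelming. You need $A_0 \gtrsim \log N$, as in the paper's choice $\mathsf{BND}_{\bm{L}(0)}((\log N)/2)$. Your verification that $F_\Lambda$ satisfies condition (3) of \Cref{fgab} is correct and essentially matches the paper.
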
 
	
	\begin{proof} 
		
		This result would follow from \Cref{concentrationh}, upon formally replacing $F(\lambda)$ there by the function $F(\lambda) \cdot f(\lambda-\lambda_j)$, except that the latter depends on $\bm{L}$ (through $\lambda_j$). To circumvent this, we will instead apply \Cref{concentrationh} upon replacing the function $F(\lambda)$ there by $F(\lambda) \cdot f(\lambda - \nu)$, where $\nu$ ranges over some fine mesh. 
		
		To implement this, let $c_0$ be such that \Cref{concentrationh} holds with probability at least $1 - c_0^{-1} e^{-c_0 (\log N)^2}$, and set $\mathfrak{c} = \min \{ c_0 / 5, 1 \}$. Further let $(\nu_1, \nu_2, \ldots , \nu_K)$ denote an $e^{-3\mathfrak{c}(\log N)^2}$-mesh of $[ -(\log N) / 2, (\log N) / 2 ]$, so that $K \le  e^{3\mathfrak{c} (\log N)^2} \cdot \log N$. For each $k \in \llbracket 1, K \rrbracket$, define the function $H_k : \mathbb{R} \rightarrow \mathbb{R}$ by setting $H_k (\lambda) = F(\lambda) \cdot f(\lambda - \nu_k)$, for any $\lambda \in \mathbb{R}$. Then, it is quickly verified using \eqref{f1d} and \eqref{f2d} that $H_k$ satisfies the first three conditions for $F$ in \Cref{fgab}, with the $A$ there replaced by $2AD$ here. Thus, for each $k \in \llbracket 1, K \rrbracket$, letting $\mathsf{E}_{1,k}$ denote the event on which 
		\begin{flalign}
			\label{hkintegral} 
			\Bigg| \displaystyle\sum_{i=1}^N H_k (\lambda_i) \cdot G ( Q_i (s) - Q_j (s) ) - \displaystyle\int_{-\infty}^{\infty} H_k (\lambda) \varrho (\lambda) d \lambda \displaystyle\int_{-\infty}^{\infty} G(\alpha q) dq \Bigg| \le 2ABD S^{1/2} (\log N)^{12},
		\end{flalign}
	
		\noindent \Cref{concentrationh} (with the $(F, G)$ there equal to $(H_k, G)$ here) yields $\mathbb{P} [\mathsf{E}_{1,k}^{\complement}] \le (5\mathfrak{c})^{-1} e^{-5 \mathfrak{c} (\log N)^2}$. Set $\mathsf{E}_1 = \bigcap_{k=1}^K \mathsf{E}_{1,k}$, so that \eqref{hkintegral} holds for each $k \in \llbracket 1, K \rrbracket$ on the event $\mathsf{E}_1$. Using a union bound to bound $\mathbb{P}[\mathsf{E}_1^{\complement}]$ by the sum of the $\mathbb{P}[\mathsf{E}_{1,k}^{\complement}]$, we deduce that $\mathbb{P}[ \mathsf{E}_1^{\complement} ] \le K \cdot (5\mathfrak{c})^{-1} e^{-5 \mathfrak{c} (\log N)^2} \le e^{-\mathfrak{c} (\log N)^2}$.
		
		 Recalling \Cref{adelta}, also define the event $\mathsf{E}_2 = \mathsf{BND}_{\bm{L}(0)} ( (\log N) / 2 ) \cap \mathsf{SEP}_{\bm{L}(0)} (e^{-\mathfrak{c} (\log N)^2})$, which is overwhelmingly probable, by \Cref{l0eigenvalues}. In what follows, we restrict to the event $\mathsf{E} = \mathsf{E}_1 \cap \mathsf{E}_2$ and verify \eqref{lambdaqjs3} on it.
		
		To that end, let $k_0 \in \llbracket 1, K \rrbracket$ denote the index such that $|\lambda_j - \nu_{k_0}| \le e^{-3\mathfrak{c} (\log N)^2}$; we abbreviate $\nu = \nu_{k_0}$. Therefore, \eqref{hkintegral} implies
		\begin{flalign}
			\label{flambdai2}
			\begin{aligned} 
				\Bigg| \displaystyle\sum_{i = 1}^N F(\lambda_i) \cdot f(\lambda_i - \nu) \cdot G ( Q_i (s) - Q_j (s) ) - \displaystyle\int_{-\infty}^{\infty} H_{k_0} (\lambda) \varrho (\lambda) d \lambda & \displaystyle\int_{-\infty}^{\infty} G(\alpha q) dq \Bigg| \\
				& \le 2 ABDS^{1/2} (\log N)^{12}. 
			\end{aligned}
		\end{flalign}
		
		\noindent Next, observe for any $i \in \llbracket 1, N \rrbracket$ that 
		\begin{flalign}
			\label{falambdai} 
			\begin{aligned} 
				\big| F(\lambda) \cdot f(\lambda_i - \nu) - F(\lambda) \cdot f(\lambda_i - \lambda_j) \big| & \le 2A D |\nu - \lambda_j| \cdot ( |\lambda_i - \nu|^{-1} + |\lambda_i - \lambda_j|^{-1} ) \\
				& \le 4AD e^{-\mathfrak{c} (\log N)^2},
			\end{aligned} 
		\end{flalign}
		
		\noindent where in the first inequality we used the second statement in \Cref{fgab} (with our restriction to $\mathsf{E}_2$) with \eqref{f2d}, and in the second we used the facts (from the definition of $\nu_{k_0}$ and our restriction to $\mathsf{E}_2$) that $|\lambda_j - \nu| \le e^{-3\mathfrak{c} (\log N)^2}$; that $|\lambda_i - \lambda_j| \ge e^{-\mathfrak{c} (\log N)^2}$; and that $|\lambda_i - \nu| \ge e^{-\mathfrak{c} (\log N)^2} - e^{-2 \mathfrak{c} (\log N)^2} \ge e^{-2\mathfrak{c} (\log N)^2}$. Moreover, 
		\begin{flalign}
			\label{hk0lambdaf} 
			\begin{aligned} 
				\displaystyle\int_{-\infty}^{\infty} \big| H_{k_0} (\lambda) - F(\lambda) \cdot f(\lambda - \lambda_j) \big| \varrho (\lambda) d \lambda & \le A \displaystyle\int_{-\log N}^{\log N}\big| f(\lambda - \nu) - f(\lambda - \lambda_j) \big| \cdot \varrho (\lambda) d\lambda \\
				& \qquad + AD \displaystyle\int_{|\lambda| > \log N} e^{2 |\lambda|^{1/2}} \varrho (\lambda) d\lambda \le 2 AD,
			\end{aligned} 
		\end{flalign}
		
		\noindent where in the first inequality we used \Cref{fgab} and \eqref{f1d}, and in the second we used \eqref{f2d} and the fact from \Cref{frho} that there exists a constant $c_2 > 0$ such that $\varrho (\lambda) \le c_2^{-1} e^{-c_2 \lambda^2}$. Inserting \eqref{falambdai} and \eqref{hk0lambdaf} into \eqref{flambdai2}, and using our restriction to $\mathsf{E}_2$ with the facts that $| G(q) | \le 2B$ for all $q \in \mathbb{R}$ and that $\supp G \subseteq [-S, S] \subseteq [-N, N]$ (by the fourth statement in \Cref{fgab}), yields \eqref{lambdaqjs3}.
	\end{proof}

	\subsection{Proof of \Cref{concentrationh}} 
	
	\label{ProofHs0} 
	
	In this section we establish \Cref{concentrationh}, to which end, we will reduce (under a change of variables) to the case when $s=0$.
	
	\begin{prop}
		
		\label{concentrationhs0} 
		
		Adopt \Cref{lbetaeta} and \Cref{fgab}. Set $\Lambda_{i} = \lambda_{\varphi_0^{-1} (i)}$ and $q_i = q_i (0)$ for each $i \in \llbracket N_1, N_2 \rrbracket$. With overwhelming probability, we have for any index $j \in \llbracket N_1 + T (\log N)^5,  N_2 - T (\log N)^5 \rrbracket$ that  
		\begin{flalign*}
			\Bigg| \displaystyle\sum_{i=N_1}^{N_2} F(\Lambda_i) \cdot G( q_i - q_j)  - \displaystyle\int_{-\infty}^{\infty} F(\lambda) \varrho(\lambda) d\lambda \displaystyle\int_{-\infty}^{\infty} G(\alpha q) dq \Bigg| \le 7 AB S^{1/2} (\log N)^{11}.
		\end{flalign*}

	\end{prop}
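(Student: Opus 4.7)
The plan is to prove \Cref{concentrationhs0} by (a) localizing each $\Lambda_i$ to depend only on Flaschka variables in a window around $i$, (b) using a bounded-differences concentration bound for the resulting sum, and (c) identifying the expectation as the desired product of integrals using \Cref{lf} together with a Riemann-sum approximation. Write $\Xi = \sum_{i} F(\Lambda_i) G(q_i - q_j)$. On the overwhelmingly probable event of \Cref{qijsalpha} we have $|q_i-q_j-\alpha(i-j)| \le |i-j|^{1/2}(\log N)^2$, so that $G(q_i-q_j)$ vanishes unless $i \in \mathcal{I}_j := \{i : |i-j| \le C(S+(\log N)^4)\}$ for some constant $C = C(\beta,\theta)$; hence $\Xi$ involves at most $O(S)$ nonzero summands.

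Fix $R = (\log N)^3$, and for each $i \in \mathcal{I}_j$ let $\tilde{\bm{L}}_i$ be the tridiagonal matrix obtained from $\bm{L}$ by zeroing entries outside $\llbracket i-R,i+R\rrbracket$. Applying \Cref{lleigenvalues2} with $\mathcal{D} = \llbracket N_1,N_2\rrbracket \setminus \llbracket i-R,i+R\rrbracket$ (the localization center of $\Lambda_i$ is $i$, hence has distance $\ge R$ from $\mathcal{D}$) yields an eigenvalue $\tilde{\Lambda}_i$ of $\tilde{\bm{L}}_i$ with $|\Lambda_i - \tilde{\Lambda}_i| \le e^{(\log N)^2} \cdot e^{-cR} \le e^{-(\log N)^{5/2}}$. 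On the event $\mathsf{BND}_{\bm{L}(0)}(\log N)$ of \Cref{l0eigenvalues}, all $\Lambda_i, \tilde{\Lambda}_i$ lie in $[-\log N, \log N]$, so property (3) of \Cref{fgab} gives $|F(\Lambda_i) - F(\tilde{\Lambda}_i)| \le A e^{-(\log N)^2}$, and the total replacement error is negligible compared with the target bound.

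For the concentration, condition on $\mathsf{BND}_{\bm{L}(0)}(\log N)$ and apply McDiarmid's inequality to $\tilde{\Xi} := \sum_{i \in \mathcal{I}_j} F(\tilde{\Lambda}_i) G(q_i - q_j)$ viewed as a function of the truncated Flaschka variables $(a_k, b_k)$. Changing a single $a_k$ (with $|\Delta a_k| = O(\log N)$) affects (i) $\tilde{\Lambda}_i$ for $i \in \llbracket k-R,k+R\rrbracket$, each contributing at most $2A \cdot |G| = O(AB)$ to $\Delta \tilde{\Xi}$ and yielding an $O(ABR)$ bound from this source; and (ii) $q_i - q_j$ for all $i \in \mathcal{I}_j$ on the far side of $k$, which via the Lipschitz property of $G$ contributes $\sum_{i \in \mathcal{I}_j} A \cdot BS^{-1}|\Delta r_k| + O(AB) = O(AB(\log N))$ (the last $O(AB)$ from a single possible straddle across $0$). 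Hence $c_k = O(AB(\log N)^4)$, and only $O(S + R) = O(S)$ variables $k$ affect $\tilde{\Xi}$, giving $|\tilde{\Xi} - \mathbb{E}\tilde{\Xi}| \le O(AB S^{1/2}(\log N)^{C})$ with overwhelming probability. For the expectation, decoupling $\mathbb{E}[F(\tilde{\Lambda}_i) G(q_i-q_j)]$: since $\tilde{\Lambda}_i$ depends on $O(R)$ increments $r_k$ while $G(q_i - q_j)$ depends on $|i-j|$ of them with Lipschitz constant $B/S$, replacing the overlapping $r_k$ with independent copies incurs error $O(ABR/S)$ per term. Summing, $\mathbb{E}\tilde{\Xi} = \sum_{i \in \mathcal{I}_j} \mathbb{E}[F(\tilde{\Lambda}_i)] \cdot \mathbb{E}[G(q_i - q_j)] + O(ABR)$. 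Translation invariance of $\mu_{\beta,\theta}$ shows $\mathbb{E}[F(\tilde{\Lambda}_i)]$ is constant in $i$ for bulk $i$, and averaging with \Cref{lf} and \Cref{fcn} identifies it as $\int F \varrho + o(1)$. For the $G$-factor, \Cref{qij} gives $\mathbb{E}[G(q_i-q_j)] = G(\alpha(i-j)) + O(B|i-j|^{1/2}(\log N)^2 / S)$, after which $\sum_i G(\alpha(i-j))$ is a Riemann sum approximating $\int G(\alpha q) dq$ with error controlled by the Lipschitz norm of $G$ times the support size.

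The main obstacle is the concentration step, since a single Flaschka variable $a_k$ simultaneously perturbs eigenvalues in an $R$-window (where $F$ is only continuous, not Lipschitz, forcing an $O(AR)$ bound from $F$) and shifts the entire path $q_i - q_j$ for $i$ on one side of $k$ (the Lipschitz bound on $G$ offers only an $O(B)$ total reduction across the $O(S)$ affected summands). Matching the target $O(ABS^{1/2}(\log N)^{11})$ requires that both contributions assemble to a per-variable bounded difference of $O(AB \cdot \mathrm{polylog}(N))$ over exactly $O(S)$ variables, rather than $O(S+R^2)$ or $O(SR)$; the choice $R = (\log N)^3$ and careful bookkeeping of which $k$'s actually affect $\tilde{\Xi}$ (via both the eigenvalue window and the path) is essential. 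Unbounded tails of the Gamma/Gaussian entries are handled by truncating to the event $\mathsf{BND}_{\bm{L}(0)}(\log N)$ before applying McDiarmid (alternatively, a log-Sobolev inequality would give the same estimate without truncation).
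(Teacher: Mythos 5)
Your high-level split into a concentration step (around the mean) and an expectation computation is exactly the paper's architecture (the proposition is deduced from \Cref{concentrationh2} and \Cref{expectationh}), and the mechanics of your concentration step are close in spirit to the paper's. Where you truncate the Lax matrix entries and then apply McDiarmid's inequality to a localized surrogate $\tilde{\Xi}$, the paper instead uses an ``influence'' variant of McDiarmid (\Cref{xfsum}) together with \Cref{estimateik}, which packages the same localization idea but avoids the awkwardness of conditioning the product measure on a bounded event; the paper also groups all variables far from $j$ into one block $\mathcal{S}$ and bounds its joint influence using \Cref{lleigenvalues}. Your per-variable bookkeeping is morally correct, though a changed $a_k$ moves $r_k = -2\log a_k$ by $O((\log N)^2)$ rather than $O(\log N)$ on the bounded event — a polylog detail.

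The genuine gap is in the expectation step. Translation invariance of $\mu_{\beta,\theta}$ does make $\mathbb{E}[F(\tilde{\Lambda}_i)]$ a constant $c_N$ across bulk $i$, but your claim that ``averaging with \Cref{lf} and \Cref{fcn} identifies it as $\int F\varrho + o(1)$'' is only a soft statement: \Cref{lf} is a qualitative limit with no rate, so it yields $c_N \to \int F\varrho$ without any bound on $|c_N - \int F\varrho|$. In the target estimate this deficit is not negligible, because you then sum $c_N - \int F\varrho$ against $G(\alpha(i-j))$ over roughly $S$ indices, which requires $|c_N - \int F\varrho| \lesssim A S^{-1/2}(\log N)^{11}$; with $S$ as large as $T \sim N^{1-\delta}$, this is a demand of order $A N^{-(1-\delta)/2}$. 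The paper's fix (\Cref{sumlambdaiexpectation}, proved in \Cref{ProofLambdaH}) is to embed the given $N\times N$ Lax matrix $\bm{L}$ into a much larger $\mathfrak{N}\times\mathfrak{N}$ thermal-equilibrium matrix $\bm{\mathfrak{L}}$, apply the soft limit \Cref{lf} to $\bm{\mathfrak{L}}$ with $\mathfrak{N}$ chosen so huge relative to $N$ that its error is smaller than $A/N$ (\eqref{hsumnu}), and then transfer back to $\bm{L}$ by a quantitative resolvent comparison across a chain of scales (\Cref{glambda2}, built on \Cref{lgdifference}). That embedding-plus-resolvent-comparison device is the key idea missing from your proposal; without it the ``$+o(1)$'' cannot be converted into the stated $7ABS^{1/2}(\log N)^{11}$ error.
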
 
	
	Proposition \ref{concentrationhs0} follows quickly from the following two lemmas. The first is established in \Cref{ProofH2} and the second in \Cref{ProofH}.
	
	\begin{lem}
		
		\label{concentrationh2} 
		
		Adopt the notation and assumptions of \Cref{concentrationhs0}. With overwhelming probability, we have 
		\begin{flalign*}
			\Bigg| \displaystyle\sum_{i=N_1}^{N_2} F(\Lambda_{i}) \cdot G ( q_i - q_j  ) - \mathbb{E} \bigg[ \displaystyle\sum_{i=N_1}^{N_2} F(\Lambda_{i}) \cdot G ( q_i  - q_j ) \bigg] \Bigg| \le AB S^{1/2} (\log N)^8.
		\end{flalign*}
		
	\end{lem}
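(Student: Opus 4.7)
The plan is to prove \Cref{concentrationh2} by a Doob martingale concentration argument, exploiting that the thermal equilibrium of \Cref{mubeta2} makes the Flaschka variables $(a_k, b_k)$ mutually independent. Write $\Phi = \Phi(\bm{a}, \bm{b}) = \sum_i F(\Lambda_i) G(q_i - q_j)$. First I restrict to the overwhelmingly probable event on which $(\log N)^{-2} \le a_k \le (\log N)^2$ and $|b_k| \le (\log N)^2$ for all $k$, and on which the estimates of \Cref{qijsalpha}, \Cref{l0eigenvalues}, and \Cref{lleigenvalues} hold. Outside this event, the deterministic bound $|F| \le A e^{|\lambda|^{1/2}}$ together with Gaussian tails controls the contribution to both $\mathbb{E} \Phi$ and the deviation.

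Order the coordinates $(a_k, b_k)$ by increasing distance $|k - \varphi_0^{-1}(j)|$, let $\mathcal{F}_m$ be the $\sigma$-algebra generated by the first $m$ of them, and consider the Doob martingale $M_m = \mathbb{E}[\Phi \mid \mathcal{F}_m]$, so that $\Phi - \mathbb{E} \Phi = \sum_m (M_m - M_{m-1})$. To bound the $m$-th martingale difference, I estimate the oscillation of $\Phi$ under a perturbation of the $k$-th Flaschka variable. Two effects arise: (i) the eigenvalues $\Lambda_i$ shift for $i$ whose localization center lies within distance $(\log N)^3$ of $k$, constituting an $O((\log N)^3)$-sized window by \Cref{lleigenvalues}; (ii) every difference $q_i - q_j$ with $k$ strictly between $i$ and $j$ undergoes a rigid shift of size $O(\log \log N)$ on the truncated event (since $\Delta \log a_k = O(\log \log N)$). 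Using $|F| \le A$, $|G| \le 2B$, and the Lipschitz-plus-one-jump structure of $G$, effect (i) contributes $O(AB (\log N)^3)$ to the oscillation, while effect (ii) contributes $O(B \log \log N)$ from the Lipschitz part (summed as $B S^{-1} \cdot O(S \log \log N)$), plus $O(B \log \log N)$ from zero-crossings of $G$: by the rigidity estimate \eqref{qiqjs4}, consecutive values of $q_i - q_j$ differ by approximately $|\alpha|$, so at most $O(\log \log N)$ of them can lie within $O(\log \log N)$ of the origin.

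Because $\supp G \subseteq [-S, S]$, the lower bound \eqref{qiqjs5} forces $G(q_i - q_j) = 0$ once $|i - j| \ge C (S + T (\log N)^5)$, so only $O(S)$ coordinates $k$ produce non-trivial martingale differences; the remaining tail, controlled by the exponentially small localization errors in \Cref{lleigenvalues}, is absorbable. Thus the total quadratic variation is at most $\sum_m (M_m - M_{m-1})^2 = O(A^2 B^2 S (\log N)^6)$, and Azuma's inequality yields
\[
\mathbb{P}\big[ |\Phi - \mathbb{E} \Phi| \ge A B S^{1/2} (\log N)^8 \big] \le 2 \exp\!\big(-c (\log N)^{10}\big),
\]
which is overwhelming. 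A union bound over the $O(N)$ admissible indices $j$ then concludes the proof. The principal obstacle is step (ii): one must argue carefully that the jump discontinuity of $G$ at the origin produces only polylogarithmically many crossings under a single-coordinate perturbation. This relies crucially on \eqref{qiqjs4}, which makes the spacings $q_i - q_j \approx \alpha (i-j)$ rigid on the scale $|i-j|^{1/2} (\log N)^2$ and hence well-separated from zero away from $i \approx j$.
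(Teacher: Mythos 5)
Your overall strategy---a McDiarmid/Azuma-type concentration bound driven by per-coordinate oscillation estimates using localization (\Cref{lleigenvalues}) and rigidity (\Cref{qijsalpha})---is essentially the paper's approach. But two steps as written are genuinely flawed.

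First, the good event you truncate to, $(\log N)^{-2} \le a_k$, is \emph{not} overwhelmingly probable. Under $\mu_{\beta,\theta}$ the off-diagonal entry $a_k$ has density proportional to $a^{2\theta-1}e^{-\beta a^2}$, so $\mathbb{P}[a_k < \varepsilon] \asymp \varepsilon^{2\theta}$; taking $\varepsilon = (\log N)^{-2}$ gives only a polynomially small (in $\log N$) failure probability, which is nowhere near $e^{-c(\log N)^2}$. One must take $\varepsilon = e^{-c(\log N)^2}$ (as the paper does in the event $\mathsf{E}_2$ of its Lemma \ref{estimateik}). This makes $|\Delta \log a_k|$ of order $(\log N)^2$, not $\log\log N$; the number of sign changes of $q_i - q_j$ under a single-coordinate perturbation then scales like a power of $\log N$, and the per-coordinate oscillation is $O(AB(\log N)^{9/2})$ rather than your $O(AB(\log N)^3)$. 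Your final exponent $8$ happens to still have slack, but the intermediate estimates as you wrote them are wrong and need to be redone with the corrected truncation.

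Second, and more structurally, you invoke Azuma's inequality after ``restricting to'' a high-probability event, but vanilla Azuma requires the martingale differences $|M_m - M_{m-1}|$ to be bounded \emph{almost surely}, not merely on an event of probability $1 - e^{-c(\log N)^2}$. Conditioning the Doob martingale on an event that depends on \emph{all} coordinates destroys the martingale structure, so ``restrict and apply Azuma'' is not a valid step. This is precisely the difficulty that the paper's \Cref{xfsum} (a variant of McDiarmid in terms of high-probability \emph{influences}, related to the works of Combes and Warnke cited there) is designed to resolve: it accepts oscillation bounds that hold only with probability $1 - p$ and pays an explicit error term $2m^{1/2}p^{1/4}U$ involving $U = \mathbb{E}[F(\bm{x})^2]^{1/2}$. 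In the paper's proof of \Cref{concentrationh2} this residual term is controlled by \Cref{expectationflambda}. Your proposal gestures at this (``the deterministic bound $|F| \le Ae^{|\lambda|^{1/2}}$ \ldots controls the deviation'') but does not actually supply the argument; without a statement like \Cref{xfsum}, the proof does not close.

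Finally, a smaller point: the number of coordinates $k$ producing a non-trivial influence is $O(S(\log N)^{9/2})$ (not $O(S)$): the rigidity estimate \eqref{qiqjs4} only localizes $q_i - q_j \approx \alpha(i-j)$ up to fluctuations of order $|i-j|^{1/2}(\log N)^2$, so the cutoff at $|i - j|$ must absorb this. This is why the paper defines $\mathcal{S}_k$ for $|k - j| \le S(\log N)^{9/2}$ and puts the rest into a single residual group $\mathcal{S}$ whose influence is shown to be negligible. This correction again only costs logarithmic factors, but it affects the bookkeeping.
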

	
	\begin{lem}
		
		\label{expectationh} 
		
		Adopt the notation and assumptions of \Cref{concentrationhs0}. For sufficiently large $N$, we have 
		\begin{flalign*}
			\Bigg| \mathbb{E} \bigg[ \displaystyle\sum_{i=N_1}^{N_2} F(\Lambda_{i}) \cdot G ( q_i - q_j ) \bigg] - \displaystyle\int_{\infty}^{\infty} F(\lambda) \varrho(\lambda) d \lambda \displaystyle\int_{-\infty}^{\infty} G(\alpha q) dq \Bigg| \le 6AB S^{1/2} (\log N)^{11}.
		\end{flalign*}
	\end{lem}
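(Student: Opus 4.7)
The plan is to decouple the two factors $F(\Lambda_i)$ and $G(q_i - q_j)$ using the approximate locality of the Lax matrix eigenvalues (Lemma \ref{lleigenvalues2}) together with the product structure of thermal equilibrium, and then compute each factor separately. I assume $\alpha > 0$ and $i > j$ throughout; the remaining cases are symmetric. Set $K = (\log N)^3$. For each bulk $i$, let $\tilde{\Lambda}_i$ denote the eigenvalue (closest to $\Lambda_i$) of the sub-Lax matrix supported on $\{k : |k - i| \le K\}$; Lemma \ref{lleigenvalues2}, applied with $\mathcal{D}$ equal to the complement of this window, gives $|\Lambda_i - \tilde{\Lambda}_i| \le e^{-c(\log N)^2}$ with overwhelming probability, and $\tilde{\Lambda}_i$ is a measurable function of $\{a_k, b_k : |k-i| \le K\}$ only. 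Using the uniform continuity condition on $F$ from Assumption \ref{fgab}(3) together with boundedness of $\bm{L}$ from Lemma \ref{l0eigenvalues}, the replacement $F(\Lambda_i) \to F(\tilde{\Lambda}_i)$ contributes only $O(AB e^{-(\log N)^2} \cdot S)$ total error.

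Next I would perform a near/far decomposition of the positions. Write $q_i - q_j = -2\sum_{k=j}^{i-1} \log a_k = R_i + S_i$, where $R_i := -2\sum_{k = i-2K}^{i-1}\log a_k - 2K\alpha$ is the centered ``near'' part and $S_i := -2\sum_{k=j}^{i-2K-1} \log a_k + 2K\alpha$. The crucial point is that $S_i$ depends only on $\{a_k : k \le i - 2K - 1\}$ and is therefore independent of $\tilde{\Lambda}_i$ under thermal equilibrium. Using the Lipschitz-type hypothesis in Assumption \ref{fgab}(4) together with the Gaussian concentration $|R_i| \le K^{1/2}(\log N)^2$ from Lemma \ref{qij}, replace $G(q_i - q_j)$ by $G(S_i)$: the Lipschitz term contributes at most $BS^{-1} K^{1/2}(\log N)^2$ per index, and the sign-change indicator in Assumption \ref{fgab}(4) is nonzero only on the $O(K^{1/2}(\log N)^2)$ indices where $|q_i - q_j| \lesssim K^{1/2}(\log N)^2$. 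After these replacements, independence gives $\mathbb{E}\big[F(\tilde{\Lambda}_i) G(S_i)\big] = \mathbb{E}[F(\tilde{\Lambda}_i)] \cdot \mathbb{E}[G(S_i)]$. By the product structure of thermal equilibrium, $\mathbb{E}[F(\tilde{\Lambda}_i)]$ takes a common value $\bar{F}$ for all $i$ sufficiently far from $N_1, N_2$, and Lemma \ref{lf} (extended via Remark \ref{fcn}) combined with a concentration bound on the linear spectral statistic $N^{-1}\sum_{\lambda \in \eig \bm{L}} F(\lambda)$ around its mean yields $|\bar{F} - \int F\varrho| = O(A(\log N)^{-C})$ for large $C$.

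The remaining, and principal, step is to show $\sum_i \mathbb{E}[G(S_i)] = \int G(\alpha q)\,dq + O(BS^{1/2}(\log N)^C)$. Shifting the summation index by $2K$, this reduces to estimating $\sum_{i'} \mathbb{E}[G(q_{i'} - q_j + 2K\alpha)] = \int G(y + 2K\alpha) u_j(y)\,dy$, where $u_j(y) := \sum_{i'} p_{i',j}(y)$ is the expected particle density and $p_{i',j}$ is the density of $q_{i'} - q_j$. Since the increments $-2\log a_k$ are i.i.d.\ with mean $\alpha$, finite variance, and a smooth sub-exponentially-decaying density (the $a_k$ being Gamma-distributed), a quantitative local CLT yields $\big|u_j(y) - |\alpha|^{-1}\big| \le (\log N)^C (1+|y|)^{-1/2}$ uniformly for $|y| \lesssim S$. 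Integrating against $G$ (which satisfies $|G| \le 2B$ and $\supp G \subseteq [-S,S]$) produces the claimed bound. The main obstacle is precisely this uniform renewal-type estimate: one must carefully handle small $|y|$ (where the local CLT is weakest but the integrand remains controlled in $L^1$) and must truncate boundary indices near $N_1, N_2$, absorbing their contribution into the error via the Gaussian tail bounds of Lemma \ref{qij}. Collecting all the error contributions then yields the overall bound $6AB S^{1/2}(\log N)^{11}$.
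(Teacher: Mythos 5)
Your route is genuinely different from the paper's: you decouple $F(\Lambda_i)$ from $G(q_i-q_j)$ via the approximate locality of eigenvalues and the product structure of thermal equilibrium, whereas the paper instead groups indices into blocks of size $n\sim S^{1/2}(\log N)^5$, freezes $G$ at the value $G(\alpha n_1-\alpha j)$ on each block (Lemma~\ref{n1n2h}), and recognizes the resulting expression as a Riemann sum. The block method never needs a local CLT for the renewal measure $u_j$ — it only uses the LLN-scale concentration $|q_i-q_j-\alpha(i-j)|\lesssim|i-j|^{1/2}(\log N)^2$ from Lemma~\ref{qij} — which is why your acknowledged ``main obstacle'' (the quantitative $(1+|y|)^{-1/2}$ renewal estimate) simply does not arise in the paper.

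The genuine gap, however, is elsewhere and is more serious: your claimed bound $|\bar F-\int F\varrho|=O(A(\log N)^{-C})$ is not strong enough, and the route you sketch for it (``Lemma~\ref{lf} plus concentration of the linear spectral statistic around its mean'') cannot deliver what is needed. After the factorization, the error term $\sum_i\big(\mathbb{E}[F(\tilde\Lambda_i)]-\int F\varrho\big)\,\mathbb{E}[G(S_i)]=(\bar F-\int F\varrho)\sum_i\mathbb{E}[G(S_i)]$ must be controlled, and $\sum_i\mathbb{E}[G(S_i)]\approx\int G(\alpha q)\,dq$ can genuinely be of order $BS$ (take, e.g., $G(x)=B(1-|x|/S)_+$, which satisfies Assumption~\ref{fgab}(4)). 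So you actually need $|\bar F-\int F\varrho|\lesssim AS^{-1/2}(\log N)^{11}$, which, since $S$ can be as large as $T\sim N^{1-\delta}$, is a \emph{polynomial-in-$N$} rate of convergence for the mean density of states, not a polylogarithmic one. Concentration of $N^{-1}\sum_\lambda F(\lambda)$ around its mean is also the wrong tool: it says nothing about how fast the mean converges to $\int F\varrho$, which is the quantity at stake. Obtaining this effective rate is exactly the content of the paper's Lemma~\ref{sumlambdaiexpectation} (error $A(\log N)^6$ on a block of size $n$, i.e.\ rate $A(\log N)^6/n$ per index), and its proof is a substantial piece of work carried out in \Cref{ProofLambdaH} via resolvent comparison and embedding of $\bm L$ into a much larger Lax matrix $\bm{\mathfrak L}$. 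Without this ingredient (or a substitute of comparable strength), your decoupling strategy does not close.
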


	\begin{proof}[Proof of \Cref{concentrationhs0}]
		
		This follows from \Cref{concentrationh2} and \Cref{expectationh}. 
	\end{proof}

	\begin{proof}[Proof of \Cref{concentrationh}]
		
		Throughout this proof, for each $i \in \llbracket N_1, N_2 \rrbracket$, we abbreviate $a_i = a_i (s)$ and $q_i = q_i (s)$, and denote $\Lambda_i = \lambda_{\varphi_s^{-1} (i)}$. It then suffices to show that, for any fixed index $j \in \llbracket N_1 + T (\log N)^5, N_2 - T (\log N)^5 \rrbracket$, we have with overwhelming probability that  
		\begin{flalign}
			\label{lambdai2} 
			\Bigg| \displaystyle\sum_{i=N_1}^{N_2} F(\Lambda_i) \cdot G (q_i - q_j) - \displaystyle\int_{-\infty}^{\infty} F(\lambda) \varrho (\lambda) d \lambda \displaystyle\int_{-\infty}^{\infty} G(\alpha q) dq \Bigg| \le ABS^{1/2} (\log N)^{12}.
		\end{flalign}
		
		\noindent To do so, we apply \Cref{ltl0} and \Cref{lleigenvalues} to compare the Toda lattice $\bm{q}(s)$ at time $s$ to a Toda lattice at thermal equilibrium, and then use \Cref{concentrationhs0}. 
		
		So, set $K = T (\log N)^{9/2}$, and observe that $j \in \llbracket N_1 + 5K, N_2 - 5K \rrbracket$. Then \Cref{ltl0} yields a random matrix $\tilde{\bm{L}} = [\tilde{L}_{ij}] \in \SymMat_{\llbracket N_1, N_2 \rrbracket}$ with the same law as $\bm{L}(0)$, and an overwhelmingly probable event $\mathsf{E}_1$, on which 
		\begin{flalign}
			\label{l2ij} 
			\displaystyle\max_{i, i' \in \llbracket N_1 + K, N_2 - K \rrbracket} | L_{ii'} (s) - \tilde{L}_{ii'} | \le e^{-c_1 (\log N)^4}.
		\end{flalign}
		
		 We restrict to the event $\mathsf{E}_1$ in what follows. Analogously to in \Cref{matrixl}, define the Flaschka variables $\tilde{\bm{a}} = (\tilde{a}_{N_1}, \tilde{a}_{N_1+1}, \ldots , \tilde{a}_{N_2}) \in \mathbb{R}^{N-1}$ and $\tilde{\bm{b}} = (\tilde{b}_{N_1}, \tilde{b}_{N_1+1}, \ldots , \tilde{b}_{N_2}) \in \mathbb{R}^N$ associated with $\tilde{\bm{L}}$ by setting $\tilde{a}_i = \tilde{L}_{i,i+1}$ for each $i \in \llbracket N_1, N_2-1 \rrbracket$; setting $\tilde{a}_{N_2} = 0$; and setting $\tilde{b}_i = \tilde{L}_{i,i}$ for each $i \in \llbracket N_1, N_2 \rrbracket$. Let $(\tilde{\bm{p}}; \tilde{\bm{q}}) \in \mathbb{R}^N \times \mathbb{R}^N$ denote the Toda state space initial data associated with $(\tilde{\bm{a}}; \tilde{\bm{b}})$ (as described in \Cref{Open}), and denote $\tilde{\bm{p}} = (\tilde{p}_{N_1}, \tilde{p}_{N_1+1}, \ldots , \tilde{p}_{N_2})$ and $\tilde{\bm{q}} = (\tilde{q}_{N_1}, \tilde{q}_{N_1+1}, \ldots , \tilde{q}_{N_2})$. Set $\eig \tilde{\bm{L}} = (\tilde{\lambda}_1, \tilde{\lambda}_2, \ldots , \tilde{\lambda}_N)$; let $\tilde{\varphi} : \llbracket 1, N \rrbracket \rightarrow \llbracket N_1, N_2 \rrbracket$ denote a $\zeta$-localization center bijection for $\tilde{\bm{L}}$; and denote $\tilde{\Lambda}_i = \tilde{\lambda}_{\tilde{\varphi}^{-1} (i)}$ for each $i \in \llbracket N_1, N_2 \rrbracket$. 
		
		By \Cref{concentrationhs0} there exists an overwhelmingly probable event $\mathsf{E}_2$, on which 
		\begin{flalign*}
			\Bigg| \displaystyle\sum_{i=N_1}^{N_2} F(\tilde{\Lambda}_i) \cdot G(\tilde{q}_i - \tilde{q}_j) - \displaystyle\int_{-\infty}^{\infty} F(\lambda) \varrho (\lambda) d \lambda \displaystyle\int_{-\infty}^{\infty} G(\alpha q) dq \Bigg| \le 7 ABS^{1/2} (\log N)^{11}.
		\end{flalign*}
		
		\noindent We further restrict to $\mathsf{E}_2$ in what follows. To verify \eqref{lambdai2}, it therefore suffices to show with overwhelming probability that 
		\begin{flalign}
			\label{flambda2} 
			 \displaystyle\sum_{i=N_1}^{N_2} \big| F(\Lambda_i) \cdot G(q_i - q_j) - F(\tilde{\Lambda}_i) \cdot G(\tilde{q}_i - \tilde{q}_j) \big| < ABS^{1/2} (\log N)^{11}.
		\end{flalign}
		
		To that end, we define three additional events on which we will be able to compare $\bm{L}(s)$ and $\tilde{\bm{L}}$. The first is that on which $\bm{L}(s)$ and $\tilde{\bm{L}}$ are bounded in a particular way, namely, 
		\begin{flalign}
			\label{mie} 
				\mathsf{E}_3 = \mathsf{BND}_{\tilde{\bm{L}}} (\log N) \cap \bigcap_{r \ge 0} \mathsf{BND}_{\bm{L}(r)} (\log N) \cap \bigcap_{i = N_1}^{N_2-1} \{ \tilde{L}_{i,i+1} \ge e^{-(\log N)^2} \}.
		\end{flalign} 
		
		\noindent Observe that $\mathsf{E}_3$ is overwhelmingly probable, where the probability estimate on the first two events in \eqref{mie} follows from \Cref{l0eigenvalues}, and that on the third event in \eqref{mie} follows from the explicit form (\Cref{mubeta2}) for the thermal equilibrium $\mu_{\beta,\theta;N-1,N}$.
		
		To define the second event, observe by \eqref{l2ij} and \Cref{lleigenvalues} that there exists an overwhelmingly probable event $\mathsf{E}_4$, on which the following holds. There exists a bijection $\psi : \llbracket N_1, N_2 \rrbracket \rightarrow \llbracket N_1, N_2 \rrbracket$ such that, for each $i \in \llbracket N_1 + 2K, N_2 - 2 K  \rrbracket$, we have 
		\begin{flalign}
			\label{lambda2psi} 
			|\Lambda_i - \tilde{\Lambda}_{\psi(i)} | < c_4 e^{-c_4 (\log N)^3}, \qquad \text{and} \qquad | \psi(i) - i | \le (\log N)^2. 
		\end{flalign} 
		
		\noindent The third event is that on which the $\tilde{q}_i$ are separated as indicated by \Cref{qij}; specifically, let 
		\begin{flalign*}
			\mathsf{E}_5 = \bigcap_{i = N_1}^{N_2} & \{ | q_i (0) - q_j (0) - \alpha(i-j) | < |i-j|^{1/2} (\log N)^2 \} \\ 
			& \quad \cap \{ |\tilde{q}_i - \tilde{q}_j - \alpha (i-j) | < |i-j|^{1/2} (\log N)^2 \}.
		\end{flalign*} 
		
		\noindent  By \Cref{qij} and a union bound, $\mathsf{E}_5$ is overwhelmingly probable. Thus, setting $\mathsf{E} = \bigcap_{i=1}^5 \mathsf{E}_i$, it suffices to verify \eqref{flambda2} upon restricting to $\mathsf{E}$. So, we restrict to $\mathsf{E}$ in what follows.
		
		 First observe by our restriction to $\mathsf{E}_5$ that for $|i-j| \ge K \ge S(\log N)^{9/2}$ we have $|\tilde{q}_i - \tilde{q}_j| > 3S$ and hence $G(\tilde{q}_i - \tilde{q}_j) = 0$ by the fourth part of \Cref{fgab}. Moreover, by \eqref{l2ij}; our restriction to $\mathsf{E}_1$, with the third event in \eqref{mie}; and \eqref{q00}, we have for each $i \in \llbracket N_1 + K, N_2 + K \rrbracket$ that 
		\begin{flalign}
			\label{q4} 
			| (\tilde{q}_i - \tilde{q}_j) - (q_i - q_j) | \le 2 \displaystyle\sum_{k=i}^{j-1} | \log a_k - \log \tilde{a}_k| = 2 \displaystyle\sum_{k=i}^{j-1} | \log L_{k,k+1} (s) - \log \tilde{L}_{k,k+1} | \le e^{-c_6 (\log N)^3},
		\end{flalign}
		
		\noindent for some constant $c_6 > 0$. In particular, it follows since $j \in \llbracket N_1 +5K, N_2 - 5K \rrbracket$ that for $i \in \llbracket N_1 + K, N_2 - K \rrbracket$ with $|i-j| \ge K$ we have $|q_i - q_j| \ge |\tilde{q}_i - \tilde{q}_j| - e^{-c_6 (\log N)^3} > 2S$. Moreover, for $i \in \llbracket N_1, N_2 \rrbracket \setminus \llbracket N_1 + K, N_2 - K \rrbracket$, we have 
		\begin{flalign*}
			|q_i - q_j| & \ge | q_i (0) - q_j(0) | - | q_i (s) - q_i (0) | - | q_j (s) - q_j (0) | \\
			& \ge | q_i (0) - q_j(0) | - 2T \cdot \displaystyle\sup_{s' \in \mathbb{R}} \big( | L_{ii} (s) | + | L_{jj} (s) | \big) \\ 
			& \ge | q_i (0) - q_j (0) | - 2T \log N \ge \displaystyle\frac{|\alpha|}{2} \cdot |i-j| - \displaystyle\frac{K}{\log N} > 2S,
		\end{flalign*} 
		
		\noindent where in the first and second statements we used \eqref{qtpt} and the fact that each $p_k (s) = b_k (s) = L_{kk} (s)$ by \eqref{abr} and \Cref{matrixl}; in the third we used our restriction to $\mathsf{E}_3$; and in the fourth and fifth we used our restriction to $\mathsf{E}_5$ and the definition of $K$.
				
		Hence, for any $i \in \llbracket N_1, N_2 \rrbracket$ with $|i-j| \ge K$, we have $|q_i - q_j| > 2S$ and thus $G(q_i - q_j) = 0$ (again by the fourth part of \Cref{fgab}). Therefore, 
		\begin{flalign}
			\label{flambdagq0} 
			\begin{aligned} 
			\displaystyle\sum_{i=N_1}^{N_2} & \Big| F(\Lambda_i) \cdot G(q_i - q_j) - F(\tilde{\Lambda}_i) \cdot G(\tilde{q}_i - \tilde{q}_j) \Big| \\
			& \le \displaystyle\sum_{i=j-K}^{j+K} \big( | F(\Lambda_i) - F(\tilde{\Lambda}_i) | \cdot | G(\tilde{q}_i - \tilde{q}_j) | + | F(\Lambda_i) | \cdot | G(q_i - q_j) - G(\tilde{q}_i - \tilde{q}_j) | \big) \\
			& \le 2BN \cdot A e^{-(\log N)^2} + A \displaystyle\sum_{i=j-K}^{j+K} \big| G(q_i - q_j) - G (\tilde{q}_i - \tilde{q}_j) \big|.
			\end{aligned} 
		\end{flalign}
		
		\noindent Here, in the first bound, we used the above fact that $G(q_i - q_j) = G(\tilde{q}_i - \tilde{q}_j) = 0$ whenever $|i-j| \ge K$. In the second, we used the facts that $| G(q) | \le 2B$ for all $q \in \mathbb{R}$ (by the fourth part of \Cref{fgab}); that $| F(\Lambda_i) - F(\tilde{\Lambda}_i) | \le e^{-(\log N)^2}$ by the third part of \Cref{fgab}, the fact that $|\Lambda_i|, |\tilde{\Lambda}_i| \le \log N$ by our restriction to $\mathsf{E}_3$, and \eqref{lambda2psi}; and the fact that $| F(\Lambda_i) | \le A$ by the second part of \Cref{fgab} (and again our restriction to $\mathsf{E}_3$). 
		
		Next observe from \eqref{q4} and the fourth part of \Cref{fgab} that 
		\begin{flalign}
			\label{gq2} 
			\begin{aligned}  
			\displaystyle\sum_{i=j-K}^{j+K} \big| G( & q_i - q_j) - G(\tilde{q}_i - \tilde{q}_j) \big| \\ 
			& \le  BS^{-1} N \cdot e^{-c_6 (\log N)^3} + B \displaystyle\sum_{i=j-K}^{j+K} (\mathbbm{1}_{q_i - q_j \ge 0 \ge \tilde{q}_i - \tilde{q}_j} + \mathbbm{1}_{\tilde{q}_i - \tilde{q}_j \ge 0 \ge q_i - q_j}).
			\end{aligned}
		\end{flalign}
		
		\noindent Due to \eqref{q4} and our restriction to $\mathsf{E}_5$, if $|i-j| \ge (\log N)^5$, both $q_i - q_j$ and $\tilde{q}_i - \tilde{q}_j$ are nonzero and have the same sign as $\alpha (i-j)$ (since they are both within $2 |i-j|^{1/2} (\log N)^2 < |\alpha i - \alpha j|$ of $\alpha(i-j)$). Together with \eqref{gq2}, this gives 
		\begin{flalign*}
			\displaystyle\sum_{i=j-K}^{j+K} \big| G(q_i - q_j) - G(\tilde{q}_i - \tilde{q}_j) \big| \le BN^{-1} + B \cdot 4(\log N)^5 \le 5B(\log N)^5.
		\end{flalign*}	
		
		\noindent Upon insertion into \eqref{flambdagq0}, this implies \eqref{flambda2} and thus the proposition.		
	\end{proof} 
	
	\section{Proofs of \Cref{concentrationh2} and \Cref{expectationh}} 
	
	\label{Proof12} 
	
	In this section we establish \Cref{concentrationh2} and \Cref{expectationh}. We begin in \Cref{EstimatesProbability} by stating a general concentration bound (\Cref{xfsum}), similar to the McDiarmid inequality, that bounds the fluctuations of multivariate function in terms of how much each of its arguments ``influences'' it. We then prove \Cref{concentrationh2} assuming certain influence bounds given by \Cref{estimateik}, which are shown in \Cref{ProofI}. We next show \Cref{expectationh} in \Cref{ProofH}.
	
	\subsection{Concentration Estimates} 
	
	\label{EstimatesProbability}
	
	In this section we state a concentration bound, which is similar to the McDiarmid inequality, that we will use to show \Cref{concentrationh2}. To that end, the following definition provides the notion of how a random variable ``influences'' a multivariate function.
	
	\begin{definition}
		
	\label{xf} 
	
	Let $\mathcal{I}$ be an index set, let $\bm{x} = (x_i)_{i \in \mathcal{I}} \in \mathbb{R}^{\mathcal{I}}$ be a sequence of mutually independent real random variables, and let $F : \mathbb{R}^{\mathcal{I}} \rightarrow \mathbb{R}$ be a function. For nonempty subset $\mathcal{J} \subseteq \mathcal{I}$, define the set of variables $\bm{x}(\mathcal{J}) = (x_j)_{j \in \mathcal{J}}$. Then, for any real number $p \ge 0$, define the \emph{influence} $\Infl_{\bm{x}(\mathcal{J})} (F; p) = \Infl_{\bm{x}(\mathcal{J})} (F; p; \bm{x})$ of $\bm{x}(\mathcal{J})$ on $F$ by 
	\begin{flalign*}
		\Infl_{\bm{x}(\mathcal{J})} (F; p) = \displaystyle\inf \{ A \ge 0 : \mathbb{P} [ | F(\bm{y}) - F(\bm{x}) | \ge A ] \le p \}.
	\end{flalign*}
	
	\noindent Here, the sequence $\bm{y} = (y_i)_{i \in \mathcal{I}}$ is a family of mutually independent random variables, obtained by setting $y_j = x_j$ if $j \ne \mathcal{J}$, and setting $y_j$ to be a random variable with the same law as $x_j$ that is independent from $\bm{x}$ if $j \in \mathcal{J}$.
	
	\end{definition} 
	
	The next lemma is a variant of the McDiarmid inequality providing a concentration result for functions of random variables, in terms of their influences; it is in a similar direction as, but slightly different from, \cite[Proposition 2]{EI} and \cite[Theorem 1.2]{MTB}. Its proof is given in \Cref{Proofxf} below.

	\begin{lem} 
		
	\label{xfsum} 
	
	Adopt the notation of \Cref{xf}. Let $m \ge 1$ be an integer and $\mathcal{J}_1 \cup \mathcal{J}_2 \cup \cdots \cup \mathcal{J}_m = \mathcal{I}$ be a partition of $\mathcal{I}$ into $m$ disjoint, nonempty subsets. Denote 
	\begin{flalign}
		\label{sdelta} 
		\qquad S = \sum_{k=1}^m \Infl_{\bm{x}(\mathcal{J}_k)} (F; p)^2; \qquad U = \mathbb{E} [ F(\bm{x})^2 ]^{1/2}. 
	\end{flalign}
	
	\noindent Then, for any real number $R \ge 0$, we have
	\begin{flalign*}
		\mathbb{P} \big[ \big| F(\bm{x}) - \mathbb{E} [ F(\bm{x}) ] \big| \ge RS^{1/2} + 2m^{1/2} p^{1/4} U \big] < 2mp^{1/2} + 2 e^{-R^2/4}.
	\end{flalign*}
	
	\end{lem}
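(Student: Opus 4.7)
The plan is to prove this via a truncated Doob-martingale decomposition: Azuma--Hoeffding handles a bounded-difference piece, and the residual is controlled using the second moment $U^2$. First I set $\mathcal{F}_k = \sigma(\bm{x}(\mathcal{J}_1), \ldots, \bm{x}(\mathcal{J}_k))$ (with $\mathcal{F}_0$ trivial) and consider the Doob differences $D_k = \mathbb{E}[F(\bm{x}) \mid \mathcal{F}_k] - \mathbb{E}[F(\bm{x}) \mid \mathcal{F}_{k-1}]$, so that $F(\bm{x}) - \mathbb{E}[F(\bm{x})] = \sum_{k=1}^m D_k$; via a standard resampling identity, $D_k = \mathbb{E}[G_k \mid \mathcal{F}_k]$, where $G_k = F(\bm{x}) - F(\bm{y}^{(k)})$ and $\bm{y}^{(k)}$ agrees with $\bm{x}$ outside $\mathcal{J}_k$ but replaces $\bm{x}(\mathcal{J}_k)$ by an independent copy (the conditional expectation integrates over this copy).

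Writing $c_k = \Infl_{\bm{x}(\mathcal{J}_k)}(F;p)$, the definition of influence gives $\mathbb{P}[|G_k| \ge c_k] \le p$. I split $G_k$ at level $c_k$ by setting $A_k = \mathbb{E}[G_k \mathbbm{1}_{|G_k| < c_k} \mid \mathcal{F}_k]$, $B_k = D_k - A_k$, and $\tilde{A}_k = A_k - \mathbb{E}[A_k \mid \mathcal{F}_{k-1}]$. Then $|A_k| \le c_k$ pointwise, so $(\tilde{A}_k)$ is a martingale-difference sequence with $|\tilde{A}_k| \le 2 c_k$, and
\begin{equation*}
F(\bm{x}) - \mathbb{E}[F(\bm{x})] = \sum_{k=1}^m \tilde{A}_k + E, \qquad E = \sum_{k=1}^m \bigl(B_k - \mathbb{E}[B_k \mid \mathcal{F}_{k-1}]\bigr).
\end{equation*}
Azuma--Hoeffding applied to $\sum_k \tilde{A}_k$ then gives the sub-Gaussian tail $\mathbb{P}\bigl[\bigl|\sum_k \tilde{A}_k\bigr| \ge R S^{1/2}\bigr] \le 2 e^{-R^2/4}$ after absorbing the factor $2$ (coming from $|\tilde{A}_k| \le 2 c_k$) into $R$; this accounts for the first term in the target bound.

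The main task, and the principal obstacle, is to show $|E| \le 2 m^{1/2} p^{1/4} U$ with failure probability at most $2 m p^{1/2}$. The basic inputs are $\mathbb{P}[|G_k| \ge c_k] \le p$ and $\mathbb{E}[G_k^2] \le 4 U^2$ (the latter since $|G_k| \le |F(\bm{x})| + |F(\bm{y}^{(k)})|$), which together yield $\mathbb{E}[|B_k|] \le \mathbb{E}[|G_k|\mathbbm{1}_{|G_k| \ge c_k}] \le 2 U p^{1/2}$ via Cauchy--Schwarz. A pure Markov estimate on $\mathbb{E}[|E|] \le 4 m U p^{1/2}$ is too lossy, and a naive Chebyshev on $\mathbb{E}[E^2] \le \sum_k \mathbb{E}[B_k^2]$ stalls at the unsharp bound $\mathbb{E}[B_k^2] \le \mathbb{E}[G_k^2 \mathbbm{1}_{|G_k| \ge c_k}]$, which cannot be improved using only $\mathbb{E}[G_k^2]$ and $\mathbb{P}[|G_k| \ge c_k]$ in isolation. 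The way I would push through is to work on the augmented probability space containing all the resamplings: on the bad event $\mathsf{B} = \bigcup_k \{|G_k| \ge c_k\}$ (of total probability at most $m p$) pay the full error contribution via a union bound, while off $\mathsf{B}$ use a layered Cauchy--Schwarz pairing $\mathbb{E}[G_k^2] \le 4U^2$ with the per-group probability $p$, giving a bound of order $m^2 p U^2$ on the restricted second moment, and then apply Chebyshev to the resulting martingale. Combining this tail estimate on $E$ with the Azuma bound on $\sum_k \tilde{A}_k$ yields the lemma.
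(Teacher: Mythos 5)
Your martingale decomposition is a genuinely different route from the paper's. The paper truncates $F$ itself: \Cref{yprobability} builds a set $\mathcal{Y} \subseteq \mathbb{R}^n$ with $\mathbb{P}[\bm{x}\in\mathcal{Y}]\ge 1-2np^{1/2}$, constructed from the influence hypothesis by Markov and Fubini, on which $F$ has genuine pointwise bounded differences (one coordinate at a time); it then applies McDiarmid to $G=F\cdot\mathbbm{1}_{\bm{x}\in\mathcal{Y}}$ and transfers back via $\mathbb{P}[F\ne G]\le 2np^{1/2}$ and the Cauchy--Schwarz bound $|\mathbb{E}[F-G]|\le U\cdot\mathbb{P}[\bm{x}\notin\mathcal{Y}]^{1/2}$. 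Your Doob-martingale step and the Azuma bound for the truncated piece $\sum_k\tilde{A}_k$ are sound in structure (though note that ``absorbing the factor $2$ into $R$'' is not legitimate when the statement fixes both $RS^{1/2}$ and $e^{-R^2/4}$ with the same $R$; Azuma with $|\tilde{A}_k|\le 2c_k$ gives $2e^{-R^2/8}$, so a constant would have to move). The real issue is that the residual control is a genuine gap, and the sketched fix at the end does not close it.

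Two concrete problems. First, $\mathsf{B}=\bigcup_k\{|G_k|\ge c_k\}$ is an event on the enlarged space carrying the resamplings, whereas $B_k=\mathbb{E}[G_k\mathbbm{1}_{|G_k|\ge c_k}\mid\mathcal{F}_k]$ has already integrated those resamplings out and is a function of $\bm{x}(\mathcal{J}_1),\dots,\bm{x}(\mathcal{J}_k)$ alone. Restricting to $\mathsf{B}^{\complement}$ therefore controls the observed $G_k$'s but says nothing about the $B_k$'s: a realized $\bm{x}$ whose observed $G_k$ is small may still have $B_k(\bm{x})$ large if $G_k$ is occasionally huge over the resampling average. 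So ``on $\mathsf{B}^{\complement}$ use the restricted second moment'' is not a meaningful split for $E$. Second, the claimed restricted second moment $\sim m^2pU^2$ is not derivable from the stated inputs: by Jensen $\mathbb{E}[B_k^2]\le\mathbb{E}[G_k^2\mathbbm{1}_{|G_k|\ge c_k}]$, and with only $\mathbb{E}[G_k^2]\le 4U^2$ and $\mathbb{P}[|G_k|\ge c_k]\le p$ this cannot be improved past $4U^2$ (take $G_k=0$ with probability $1-p$ and $G_k=2U/\sqrt{p}$ otherwise). Without a higher-moment or tail hypothesis on $G_k$, the Chebyshev route stalls exactly where you flag. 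What is missing is precisely what \Cref{yprobability} supplies: a deterministic high-probability set on the original sample space on which $F$ itself has bounded coordinate differences, which converts the with-high-probability influence bound into the pointwise form McDiarmid (or Azuma) needs, with the passage to $F$ costing only $\mathbb{P}[\bm{x}\notin\mathcal{Y}]$ and a single Cauchy--Schwarz estimate.
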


	\subsection{Proof of \Cref{concentrationh2}} 
	
	\label{ProofH2} 
	
	In this section we establish \Cref{concentrationh2}, adopting the notation of that proposition throughout. In the below, we abbreviate the Flaschka variables $\bm{a}(0) = \bm{a} = (a_{N_1}, a_{N_1+1}, \ldots , a_{N_2-1})$ and $\bm{b}(0) = \bm{b} = (b_{N_1}, b_{N_1+1}, \ldots , b_{N_2})$. Define the function $\mathfrak{F} = \mathfrak{F}_j: \mathbb{R}^{N-1} \times \mathbb{R}^N \rightarrow \mathbb{R}$ by  
		\begin{flalign}
			\label{f} 
			 \mathfrak{F} (\bm{a}; \bm{b}) = \displaystyle\sum_{i=N_1}^{N_2} F(\Lambda_i) \cdot G (q_i - q_j),
		\end{flalign} 
		
		\noindent where we observe that $\mathfrak{F}$ can indeed be viewed as a function of the random variables $(\bm{a}; \bm{b})$, since $\bm{L}$ and the $(q_j)$ can be. Next, define the variable sets (recalling $j \in \llbracket N_1 + T(\log N)^5, N_2 - T(\log N)^5 \rrbracket$ is fixed)
		\begin{flalign*}
			\mathcal{S} = \{ a_i, b_i : |i-j| > S (\log N)^{9/2} \}, \qquad \text{and} \qquad \mathcal{S}_k = (a_k, b_k),
		\end{flalign*} 
		
		\noindent  for any integer $k \in \llbracket N_1, N_2 \rrbracket$ with $|k-j| \le S (\log N)^{9/2}$. Recalling the notation from \Cref{xf}, abbreviate for any such $k$ the influences 
		\begin{flalign*} 
			I = \Infl_{\mathcal{S}} (\mathfrak{F}; \mathfrak{p}; \bm{a} \cup \bm{b}), \qquad \text{and} \qquad I_k = \Infl_{\mathcal{S}_k} (\mathfrak{F}; \mathfrak{p}; \bm{a} \cup \bm{b}), \qquad \text{where} \quad \mathfrak{p} = e^{-\mathfrak{c} (\log N)^2},
		\end{flalign*} 
	
		\noindent for some sufficiently small constant $\mathfrak{c} > 0$ that we will fix later. We will deduce \Cref{concentrationh2} from \Cref{xfsum} and the following lemma bounding these influences, which we establish in \Cref{ProofI} below. 
		
		\begin{lem} 
			
		\label{estimateik} 
		
		The following hold if $N \ge 1$ is sufficiently large and $\mathfrak{c} > 0$ is sufficiently small. 
		
		\begin{enumerate} 
			\item For each $k \in \llbracket N_1, N_2 \rrbracket$ with $|k-j| \le S (\log N)^{9/2}$, we have $I_k \le AB(\log N)^5$.
			\item We have $I \le AB(\log N)^5$.
		\end{enumerate} 
		
		\end{lem}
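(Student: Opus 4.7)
The plan is to control each influence by computing the change in $\mathfrak{F}$ on a high-probability ``good'' event $\mathsf{G}$ on which all Lax matrix entries, the independent resampled entries appearing in \Cref{xf}, and all eigenvalues of $\bm{L}$ and its resampled version are bounded by $\log N$, and the eigenvalues of $\bm{L}$ are $e^{-\mathfrak{c}(\log N)^2}$-separated; such an event exists with probability $\ge 1-\mathfrak{p}/2$ by \Cref{l0eigenvalues}, the explicit density in \Cref{mubeta2}, and a union bound over the (at most $N$) independent resampled copies, so its complement is absorbed by the $\mathfrak{p}$ tolerance in \Cref{xf}. The two ingredients beyond $\mathsf{G}$ are the approximate locality of eigenvalues in \Cref{lleigenvalues} and the linear lower bound at $s=0$ from part~1 of \Cref{qijsalpha}, which forces every $i$ with $G(q_i-q_j)\ne 0$ to satisfy $|i-j|\le 2S/|\alpha|+O((\log N)^4)$.

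For part~1, I would fix $k$ with $|k-j|\le S(\log N)^{9/2}$ and let $\tilde{\mathfrak{F}}$ denote the value of $\mathfrak{F}$ after resampling $(a_k,b_k)\mapsto(\tilde a_k,\tilde b_k)$; with the obvious $\tilde\Lambda_i, \tilde q_i, \tilde{\bm{L}}$, write
\[
|\mathfrak{F}-\tilde{\mathfrak{F}}|\le\sum_i |F(\Lambda_i)|\cdot|G(q_i-q_j)-G(\tilde q_i-\tilde q_j)|+\sum_i|F(\Lambda_i)-F(\tilde\Lambda_i)|\cdot|G(\tilde q_i-\tilde q_j)|.
\]
The identity $q_i-q_j=-2\sum_{m=j}^{i-1}\log a_m$ (for $i>j$) shows that $(q_i-q_j)-(\tilde q_i-\tilde q_j)$ equals a single constant $\Delta=\pm 2(\log a_k-\log\tilde a_k)$ for every $i$ on the far side of $k$ from $j$ and vanishes otherwise; on $\mathsf{G}$ one has $|\Delta|=O(\log N)$. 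Feeding this into part~4 of \Cref{fgab} and using that at most $O(S)$ indices contribute, the Lipschitz piece yields $BS^{-1}|\Delta|\cdot O(S)=O(B\log N)$, while the jump indicator fires for at most $O(|\Delta|(\log N)^2)$ indices (those with $|q_i-q_j|\le|\Delta|$, bounded using part~1 of \Cref{qijsalpha}), contributing $O(B(\log N)^3)$; bounding $|F(\Lambda_i)|$ by $A$ via part~2 of \Cref{fgab} makes the first sum $O(AB(\log N)^3)$. For the second sum, I would apply \Cref{lleigenvalues} with $\mathcal{D}=\{k,k+1\}$ and $\delta=0$: for $|i-k|\ge(\log N)^3$ the resulting eigenvalue shift is super-polynomially small, and part~3 of \Cref{fgab} kills the corresponding terms; the $O((\log N)^3)$ remaining indices with $|i-k|<(\log N)^3$ contribute $O(AB(\log N)^3)$ via the trivial bound $|F(\Lambda_i)-F(\tilde\Lambda_i)|\le 2A$. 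Adding these yields $I_k\le AB(\log N)^5$ with room to spare.

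For part~2, $\tilde{\mathfrak{F}}$ will denote the value after resampling every $(a_m,b_m)$ with $|m-j|>S(\log N)^{9/2}$. The crucial observation is that every contributing $i$ satisfies $|i-j|\le 2S/|\alpha|+O((\log N)^4)$, which is far smaller than $S(\log N)^{9/2}$; hence no resampled index $m$ lies between $i$ and $j$, so $q_i-q_j$ is unchanged and the first sum in the splitting vanishes identically. Furthermore, every such $i$ obeys $\dist(i,\mathcal{D})\ge S(\log N)^{9/2}/2$ where $\mathcal{D}=\{m:|m-j|>S(\log N)^{9/2}\}$, so \Cref{lleigenvalues} yields $|\Lambda_i-\tilde\Lambda_i|\le e^{(\log N)^2-cS(\log N)^{9/2}/2}$; the second sum is then trivially $o(AB)$, and $I\le AB(\log N)^5$ holds with enormous slack.

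The main technical nuisance I expect is ensuring that the eigenvalue matching supplied by \Cref{lleigenvalues} really does identify $\Lambda_i$ with $\tilde\Lambda_i$ for the \emph{same} index $i\in\llbracket N_1,N_2\rrbracket$, rather than with some permuted eigenvalue of $\tilde{\bm{L}}$. The second conclusion of \Cref{lleigenvalues} controls this: it guarantees that every new localization center sits within $(\log N)^2/2$ of the old one, so the two $\zeta$-localization center bijections agree on the bulk indices relevant to each sum, and the $\zeta\ge e^{-100(\log N)^{3/2}}$ hypothesis of \Cref{lbetaeta} is comfortably inside the regime where \Cref{lleigenvalues} applies (after the $N^{-1}$-deflation of $\zeta$ that the lemma introduces). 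The only other bookkeeping point is that the ``$|\tilde L_{ij}|\le 2\log N$ on $\mathcal{D}$'' hypothesis of \Cref{lmatrixl} needs to hold simultaneously for \emph{all} resampled copies; this is built into the definition of $\mathsf{G}$ via the aforementioned union bound, which is why choosing $\mathfrak{c}$ sufficiently small is important.
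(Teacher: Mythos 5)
Your overall strategy mirrors the paper's---restrict to a good event, telescope $\mathfrak{F}-\tilde{\mathfrak{F}}$ into an $F$-difference piece and a $G$-difference piece, use \Cref{lleigenvalues} to control the eigenvalue perturbation and \Cref{qijsalpha} to count contributing indices---but there is a genuine gap in how you match the eigenvalues of $\bm{L}$ with those of $\tilde{\bm{L}}$, and it affects the leading-order contribution.

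You write the telescoping sum as $\sum_i |F(\Lambda_i)|\,|G(q_i-q_j)-G(\tilde q_i-\tilde q_j)|+\sum_i|F(\Lambda_i)-F(\tilde\Lambda_i)|\,|G(\tilde q_i-\tilde q_j)|$, and your final paragraph defends this by claiming that the two $\zeta$-localization center bijections ``agree on the bulk indices.'' That claim does not follow from \Cref{lleigenvalues}, which only guarantees that the matched localization centers lie within $(\log N)^2/2$ of each other --- not that they coincide. The paper therefore works with a bijection $\psi$ satisfying $|\Lambda_i-\tilde\Lambda_{\psi(i)}|\le e^{-c(\log N)^3}$ and $|\psi(i)-i|\le(\log N)^2$, and telescopes against $F(\tilde\Lambda_{\psi(i)})\,G(\tilde q_{\psi(i)}-\tilde q_j)$ rather than against the same-index quantities. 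This is not a cosmetic difference: the $G$-difference now splits as $|G(q_i-q_j)-G(\tilde q_i-\tilde q_j)|+|G(\tilde q_i-\tilde q_j)-G(\tilde q_{\psi(i)}-\tilde q_j)|$, and the second term carries a new displacement $|\tilde q_i-\tilde q_{\psi(i)}|$, which the spacing estimate bounds by $(\log N)^{5/2}$. Fed through the Lipschitz part of condition~(4) of \Cref{fgab} over the $\sim S(\log N)^2$ contributing indices, this yields a contribution of order $AB(\log N)^{9/2}$ --- which is in fact \emph{larger} than the $AB(\log N)^3$--$AB(\log N)^4$ you estimate for the $\Delta$-shift piece (note also that on the good event one only has $a_k\ge e^{-(\log N)^2}$, so $|\Delta|=2|\log a_k-\log\tilde a_k|=O((\log N)^2)$, not $O(\log N)$). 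Your second sum has the same defect: \Cref{lleigenvalues} controls $|F(\Lambda_i)-F(\tilde\Lambda_{\psi(i)})|$, not $|F(\Lambda_i)-F(\tilde\Lambda_i)|$. The same issue undoes your claim for part~2 that the first sum ``vanishes identically'': $q_i-q_j=\tilde q_i-\tilde q_j$ exactly when no resampled index lies between $i$ and $j$, but $G(q_i-q_j)-G(\tilde q_{\psi(i)}-\tilde q_j)$ still picks up the $\psi$-shift term of size $(\log N)^{5/2}$. The fix is to carry $\psi$ explicitly through the decomposition (precisely as in the paper's equations~(\ref{fba2})--(\ref{dsumg})) and add the new displacement term to the Lipschitz budget; the numerology then closes inside $AB(\log N)^5$.
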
 
		
		Before proving \Cref{concentrationh2}, we require the following quick lemma bounding the expectation of the maximal value of $| F(\lambda)|$ over $\lambda \in \eig \bm{L}$.

		\begin{lem} 
			
			\label{expectationflambda} 
			
			Fix $v > 0$. For $N$ sufficiently large, we have that 
				\begin{flalign*}
				\mathbb{E} \bigg[ \displaystyle\max_{\lambda \in \eig \bm{L}} | F(\lambda) |^v \bigg] \le 2A^v.
			\end{flalign*}
			
		\end{lem}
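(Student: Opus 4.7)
The plan is to split the event $\{ \max_{\lambda \in \eig \bm{L}} |\lambda| \le \log N \}$ from its complement. On the good event, the second property in \Cref{fgab} gives $|F(\lambda)|^v \le A^v$ uniformly over $\lambda \in \eig \bm{L}$, so the contribution to the expectation is at most $A^v$. On the complementary event (where some eigenvalue has magnitude exceeding $\log N$), I fall back on the first property $|F(x)| \le A e^{|x|^{1/2}}$ of \Cref{fgab} and kill it using the Gaussian-type tail bound on $\max|\lambda|$ from \Cref{l0eigenvalues}.

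Concretely, writing $M = \max_{\lambda \in \eig \bm{L}} |\lambda|$ and noting $\max_{\lambda} |F(\lambda)|^v \le A^v \mathbbm{1}_{M \le \log N} + A^v e^{v M^{1/2}} \mathbbm{1}_{M > \log N}$, I would bound
\begin{flalign*}
\mathbb{E} \bigg[ \displaystyle\max_{\lambda \in \eig \bm{L}} |F(\lambda)|^v \bigg] \le A^v + A^v \cdot \mathbb{E} \big[ e^{v M^{1/2}} \mathbbm{1}_{M > \log N} \big].
\end{flalign*}
The second expectation I would evaluate via a layer-cake/integration-by-parts identity, giving
\begin{flalign*}
\mathbb{E} \big[ e^{v M^{1/2}} \mathbbm{1}_{M > \log N} \big] \le e^{v (\log N)^{1/2}} \mathbb{P}[M > \log N] + \displaystyle\int_{\log N}^{\infty} \displaystyle\frac{v}{2 x^{1/2}} e^{v x^{1/2}} \mathbb{P}[M > x] \, dx.
\end{flalign*}
Applying the first statement of \Cref{l0eigenvalues} (which provides $\mathbb{P}[M > x] \le c^{-1} N e^{-c x^2}$ for $x \ge 1$), the boundary term is bounded by $c^{-1} N e^{v (\log N)^{1/2} - c (\log N)^2}$, and in the integrand the factor $e^{-c x^2 / 2}$ absorbs both the $N$ prefactor and the subexponential $e^{v x^{1/2}}$ factor once $x \ge \log N$ and $N$ is large. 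Hence each term decays faster than any negative power of $N$, so their sum is less than $1$ for $N \ge N_0$, yielding the claimed bound $\mathbb{E}[\max_{\lambda} |F(\lambda)|^v] \le 2 A^v$.

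I expect no real obstacle here: this is a routine tail estimate. The only non-trivial input is the sub-Gaussian eigenvalue tail in \Cref{l0eigenvalues}, which effortlessly dominates both the subexponential growth of $F$ permitted by \Cref{fgab} and the trivial factor of $N$ coming from the union over eigenvalues.
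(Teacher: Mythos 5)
Your proof is correct and takes essentially the same route as the paper's: both split on the event $\{\max_{\lambda \in \eig \bm{L}} |\lambda| \le \log N\}$, use the second property of \Cref{fgab} on the good event and the first on the bad event, and kill the bad-event contribution via the sub-Gaussian tail in \Cref{l0eigenvalues}. The only difference is cosmetic — you spell out the integration-by-parts/layer-cake step a bit more carefully than the paper does.
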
 
		
		\begin{proof}

			Recalling \Cref{adelta}, define the event $\mathsf{E}_0 = \mathsf{BND}_{\bm{L}} (\log N)$. Then, there exists a constant $c > 0$ such that
			\begin{flalign*}
				\mathbb{E} \bigg[ \displaystyle\max_{\lambda \in \eig \bm{L}} | F(\lambda) |^v \bigg] & \le A^v + \mathbb{E} \bigg[ \displaystyle\max_{\lambda \in \eig \bm{L}} A^v e^{v|\lambda|^{1/2}} \cdot \mathbbm{1}_{\mathsf{E}_0^{\complement}} \bigg] \\
				& \le A^v + A^v \displaystyle\int_{\log N}^{\infty} e^{v |x|^{1/2}} \cdot \mathbb{P} \bigg[ \displaystyle\max_{\lambda \in \eig \bm{L}} |\lambda| \ge x \bigg] dx \\ 
				& \le A^v + c^{-1} A^v N \displaystyle\int_{\log N}^{\infty} e^{v |x|^{1/2} - cx^2} dx \le 2A^v,
			\end{flalign*}
			
			\noindent where in the first bound we used the first and second statements in \Cref{fgab}; in the second we used the definition of $\mathsf{E}_0$; in the third we used \Cref{l0eigenvalues}; and in the fourth we used that $N$ is sufficiently large. 
		\end{proof}

		\begin{proof}[Proof of \Cref{concentrationh2}]

			This will follow from \Cref{xfsum} and \Cref{estimateik}. Set $U = \mathbb{E} [ \mathfrak{F}(\bm{a};\bm{b})^2 ]^{1/2}$. Then applying \Cref{xfsum}, with the $(R, p)$ there equal to $(\log N, \mathfrak{p})$ here and the $\bigcup_{j=1}^m \bm{x}(\mathcal{J}_k)$ here equal to $\mathcal{S} \cup \bigcup_{k: |k - j| \le S(\log N)^{9/2}} \mathcal{S}_k$ here, we obtain 
			\begin{flalign}
				\label{fsumab}
				\mathbb{P} \big[ \big| \mathfrak{F} (\bm{a};\bm{b}) - \mathbb{E} [ \mathfrak{F}(\bm{a};\bm{b}) ] \big| \ge ABS^{1/2} (\log N)^{15/2} + 2N^{1/2} \mathfrak{p}^{1/4} U \big] \le 2N\mathfrak{p}^{1/4} + 2e^{-(\log N)^2},
			\end{flalign}
			
			\noindent where we used the fact from \Cref{estimateik} that 
			\begin{flalign*}
				\displaystyle\sum_{k = -\lfloor S(\log N)^{9/2} \rfloor}^{\lfloor S(\log N)^{9/2} \rfloor} \Infl_{\mathcal{S}_k} (\mathfrak{F}; \mathfrak{p})^2  + \Infl_{\mathcal{S}} (\mathfrak{F};\mathfrak{p})^2 \le 3S (\log N)^{9/2} \cdot (AB)^2 (\log N)^{10} \le A^2 B^2 S (\log N)^{15}. 
			\end{flalign*}	
			
			\noindent Thus, it suffices to show that $U \le 4ABN$ for sufficiently large $N$, as then insertion into \eqref{fsumab} (and using the definition of $\mathfrak{p} = e^{-\mathfrak{c}(\log N)^2}$) would yield the lemma. Since $| G(q) | \le 2B$ for all $q \in \mathbb{R}$ (by the fourth statement in \eqref{fgab}), this follows from (the $v=1$ case of) \Cref{expectationflambda}. 		
			\end{proof}

	\subsection{Proof of \Cref{estimateik}}
	
	\label{ProofI} 
	
	In this section we establish \Cref{estimateik}; we adopt the notation of \Cref{ProofH2} throughout.
	
	To address both parts of \Cref{estimateik} simultaneously, we fix an index $k \in \llbracket N_1, N_2-1 \rrbracket$ with $|k-j| \le S(\log N)^{9/2}$, and define the subset $\mathcal{D} \subseteq \llbracket N_1, N_2-1 \rrbracket$ by either setting $\mathcal{D} = \{ k \}$ or setting $\mathcal{D} = \{ i \in \llbracket N_1, N_2 \rrbracket : |i-j| > S(\log N)^{9/2} \}$; in the first case we set $\mathfrak{I} = I_k$, and in the second we set $\mathfrak{I} = I$. We must estimate $\mathfrak{I}$, to which end we set notation for replacing the random variable $a_i \in \bm{a}$ and $b_i \in \bm{b}$ with independent copies of them, whenever $i \in \mathcal{D}$. 
	
	For each such $i \in \mathcal{D}$, let $a_i'$ and $b_i'$ be mutually independent random variables with the same laws as $a_i$ and $b_i$, respectively, that are independent from $\bm{a} \cup \bm{b}$. Define $\tilde{\bm{a}} = (\tilde{a}_{N_1}, \tilde{a}_{N_1+1}, \ldots , \tilde{a}_{N_2-1})$ and $\tilde{\bm{b}} = (\tilde{b}_{N_1}, \tilde{b}_{N_1+1}, \ldots , \tilde{b}_{N_2})$ by setting $\tilde{a}_i = a_i$ and $\tilde{b}_i = b_i$ if $i \notin \mathcal{D}$, and by setting $\tilde{a}_i = a_i'$ and $\tilde{b}_i = b_i'$ if $i \in \mathcal{D}$. Then, by \Cref{xf},
	\begin{flalign}
		\label{ak} 
		\mathfrak{I} = \inf \big\{ A \ge 0 : \mathbb{P} \big[ | \mathfrak{F} (\bm{a}; \bm{b}) - \mathfrak{F} (\tilde{\bm{a}};\tilde{\bm{b}})| \ge A \big] \le \mathfrak{p} \big\}.
	\end{flalign}
	
	Let us set some additional notation parallel to that in \Cref{lbetaeta} for $(\tilde{\bm{a}}; \tilde{\bm{b}})$. Let $\tilde{\bm{L}} = [\tilde{L}_{ii'}] \in \SymMat_{\llbracket N_1, N_2 \rrbracket}$ denote the tridiagonal matrix associated with $(\tilde{\bm{a}};\tilde{\bm{b}})$, as in \Cref{matrixl} (so $\tilde{L}_{i,i+1} = \tilde{L}_{i+1,i} = \tilde{a}_i$ for $i \in \llbracket N_1, N_2-1\rrbracket$ and $\tilde{L}_{i,i} = \tilde{b}_i$ for $i \in \llbracket N_1, N_2 \rrbracket$); in this way, we have 
	\begin{flalign}
		\label{lij2} 
		\tilde{L}_{ii'} = L_{ii'}, \qquad \text{if either $\dist (i, \mathcal{D}) \ge 2$ or $\dist (i', \mathcal{D}) \ge 2$}.
	\end{flalign} 
	
	\noindent Set $\eig \tilde{\bm{L}}  = (\tilde{\lambda}_1, \tilde{\lambda}_2, \ldots , \tilde{\lambda}_N)$; let $\tilde{\varphi} : \llbracket 1, N \rrbracket \rightarrow \llbracket N_1, N_2 \rrbracket$ denote an arbitrary $\zeta$-localization center bijection for $\tilde{\bm{L}}$; and set $\tilde{\Lambda}_i = \tilde{\lambda}_{\tilde{\varphi}^{-1} (i)}$ for each $i \in \llbracket N_1, N_2 \rrbracket$. Let $(\tilde{\bm{p}}; \tilde{\bm{q}})$ denote the Toda state space initial data associated with the Flaschka variables $(\tilde{\bm{a}}; \tilde{\bm{b}})$, as in \Cref{Open}. Set $\tilde{\bm{q}} = (\tilde{q}_{N_1}, \tilde{q}_{N_1+1}, \ldots , \tilde{q}_{N_2})$. 
	
	By \eqref{ak}, to estimate $\mathfrak{I}$ we must bound $| \mathfrak{F}(\bm{a};\bm{b}) - \mathfrak{F} (\tilde{\bm{a}};\tilde{\bm{b}}) |$ with high probability; we next introduce several events on which such a bound will hold. The first and second are those on which $\bm{a}$, $\eig \bm{L}$, $\tilde{\bm{a}}$, and $\eig \tilde{\bm{L}}$ are bounded. Recalling \Cref{adelta}, set
	\begin{flalign*} 
		\mathsf{E}_1 = \mathsf{BND}_{\bm{L}} (\log N) \cap \mathsf{BND}_{\bm{\tilde{L}}} (\log N); \quad \mathsf{E}_2 = \bigcap_{i=N_1}^{N_2}  \{ a_k \ge e^{-(\log N)^2} \} \cap \{ \tilde{a}_k \ge e^{-(\log N)^2} \}.
	\end{flalign*} 
	
	\noindent Then $\mathsf{E}_1$ is overwhelmingly probable by \Cref{l0eigenvalues}, and $\mathsf{E}_2$ is overwhelmingly probable by the explicit densities of $a_k$ and $\tilde{a}_k$ (from \Cref{mubeta2}). 
	
	The third event is that on which $\eig \bm{L}$ and $\eig \tilde{\bm{L}}$ are close to each other. Specifically, by \Cref{lleigenvalues} (using \eqref{lij2} and our restriction to $\mathsf{E}_1$ to verify its hypothesis \eqref{ll} at $\delta=0$), there is a constant $c_1>0$ and an overwhelmingly probable event $\mathsf{E}_3$, on which the following holds. There exists a bijection $\psi : \llbracket N_1, N_2 \rrbracket \rightarrow \llbracket N_1, N_2 \rrbracket$ such that, for each $i \in \llbracket N_1, N_2 \rrbracket$ with either $\dist ( i, \mathcal{D} ) \ge (\log N)^3 + 2$ or $\dist ( \psi(i), \mathcal{D} ) \ge (\log N)^3 + 2$, we have
	\begin{flalign}
		\label{lambdapsi} 
		|\Lambda_i - \tilde{\Lambda}_{\psi(i)}| \le e^{-c_1 (\log N)^3}, \qquad \text{and} \qquad |   \psi(i)  - i | \le (\log N)^2.
	\end{flalign}
	
	The fourth event is that on which consecutive $q_i$ and $\tilde{q}_i$ are not too close or far, namely, 
	\begin{flalign*}
		\mathsf{E}_4 =  \bigcap_{\substack{i,i' \in \llbracket N_1, N_2 \rrbracket \\ i-i' \ge (\log N)^2}} \bigg\{ \big| q_i - q_{i'} - \alpha(i-i') \big| + \big| \tilde{q}_i - \tilde{q}_{i'} - \alpha(i-i') \big| \le \displaystyle\frac{|\alpha|}{2} \cdot (i-i') \bigg\}.
	\end{flalign*} 
	
	\noindent By \Cref{qij} with a union bound, $\mathsf{E}_4$ is overwhelmingly probable.

	Set $\mathsf{E} = \mathsf{E}_1 \cap \mathsf{E}_2 \cap \mathsf{E}_3 \cap \mathsf{E}_4$, which by a union bound is overwhelmingly probable. In particular, $\mathbb{P} [ \mathsf{E}^{\complement} ] < \mathfrak{p}$ for sufficiently small $\mathfrak{c}>0$, so we will restrict to $\mathsf{E}$ in what follows. By \eqref{ak}, it then suffices to show that 
	\begin{flalign}
		\label{fba4} 
		\big| \mathfrak{F}(\bm{a}; \bm{b}) - \mathfrak{F}(\tilde{\bm{a}}; \tilde{\bm{b}}) \big| \le AB(\log N)^5.
	\end{flalign} 
	
	\noindent To that end, observe from the definition \eqref{f} of $\mathfrak{F}$ and the fact that $\psi$ is a bijection that  
	\begin{flalign}
		\label{fba2} 
		\begin{aligned}
			| \mathfrak{F} (\bm{a};\bm{b}) - \mathfrak{F} (\tilde{\bm{a}};\tilde{\bm{b}}) | &= \Bigg| \displaystyle\sum_{i=N_1}^{N_2} \big( F(\Lambda_i) \cdot G(q_i - q_j) - F(\tilde{\Lambda}_i) \cdot G(\tilde{q}_{i} - \tilde{q}_j) \big) \Bigg| \\
			& \le \displaystyle\sum_{i=N_1}^{N_2} \big| F (\Lambda_i) \cdot G(q_{i} - q_{j}) - F(\tilde{\Lambda}_{\psi(i)}) \cdot G (\tilde{q}_{\psi(i)} -  \tilde{q}_{j}) \big|.
		\end{aligned}
	\end{flalign} 
	
	\noindent The following lemma restricts the sum on the right side of \eqref{fba2} to $i$ satisfying $\dist (i, \mathcal{D}) \ge 2(\log N)^3$. 
	
	\begin{lem} 
		
		\label{dflambdaigestimate}
		
		On $\mathsf{E}$, we have for sufficiently large $N$ that
		\begin{flalign}
			\label{fba3}
			\begin{aligned}  
				\displaystyle\sum_{i=N_1}^{N_2} & \big| F (\Lambda_i) \cdot G(q_{i} - q_{j}) - F(\tilde{\Lambda}_{\psi(i)}) \cdot G (\tilde{q}_{\psi(i)} -  \tilde{q}_{j}) \big| \\
				& \le \displaystyle\sum_{i:\dist(i,\mathcal{D}) \ge 2 (\log N)^3}\big| F (\Lambda_i) \cdot G(q_{i} - q_{j}) - F(\tilde{\Lambda}_{\psi(i)}) \cdot G (\tilde{q}_{\psi(i)} -  \tilde{q}_{j}) \big| + 16 AB(\log N)^3.
			\end{aligned} 
		\end{flalign}
		
	\end{lem}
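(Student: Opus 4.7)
The plan is to show that the portion of the sum on the left side of \eqref{fba3} corresponding to indices $i \in \mathcal{K} := \{i \in \llbracket N_1, N_2 \rrbracket : \dist(i, \mathcal{D}) < 2(\log N)^3\}$ is bounded by $16 AB (\log N)^3$, since subtracting this from the full sum yields the claimed inequality. To prepare, I would first record two uniform pointwise bounds valid on $\mathsf{E}$: on $\mathsf{E}_1$ we have $|\Lambda_i|, |\tilde{\Lambda}_{\psi(i)}| \le \log N$, so the second item of \Cref{fgab} gives $|F(\Lambda_i)|, |F(\tilde{\Lambda}_{\psi(i)})| \le A$; and (as used in the proof of \Cref{concentrationh}) the fourth item of \Cref{fgab} implies $|G(q)| \le 2B$ pointwise. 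Hence each summand is at most $4AB$, which I will combine with either a counting estimate on $|\mathcal{K}|$ or a vanishing argument for $G$, depending on the form of $\mathcal{D}$.

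When $\mathcal{D} = \{k\}$, the set $\mathcal{K}$ consists of integers $i$ with $|i - k| < 2(\log N)^3$, containing at most $4(\log N)^3$ points, so the crude bound $4AB \cdot |\mathcal{K}| \le 16 AB(\log N)^3$ immediately suffices. When $\mathcal{D} = \{i \in \llbracket N_1, N_2 \rrbracket: |i-j| > S(\log N)^{9/2}\}$, I would instead argue that every summand with $i \in \mathcal{K}$ vanishes. For such $i$, we have $|i - j| > S(\log N)^{9/2} - 2(\log N)^3$, which vastly exceeds $2S/|\alpha|$ for $N$ large; event $\mathsf{E}_4$ then forces $|q_i - q_j| \ge (|\alpha|/2)\,|i-j| > S$, so $G(q_i - q_j) = 0$ by the support condition on $G$.

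The remaining and most delicate step is to show $G(\tilde{q}_{\psi(i)} - \tilde{q}_j) = 0$ as well, which reduces to proving $|\psi(i) - j| > 2S/|\alpha|$ and invoking $\mathsf{E}_4$. Since the hypothesis of \eqref{lambdapsi} need not hold uniformly on $\mathcal{K}$, I would split into two sub-cases. If either $\dist(i, \mathcal{D}) \ge (\log N)^3 + 2$ or $\dist(\psi(i), \mathcal{D}) \ge (\log N)^3 + 2$, then \eqref{lambdapsi} yields $|\psi(i) - i| \le (\log N)^2$, and the triangle inequality gives
\begin{flalign*}
|\psi(i) - j| \ge |i - j| - (\log N)^2 > S(\log N)^{9/2} - 2(\log N)^3 - (\log N)^2,
\end{flalign*}
which is $\gg 2S/|\alpha|$ for $N$ large. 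Otherwise both distances are less than $(\log N)^3 + 2$, and in particular $\dist(\psi(i), \mathcal{D}) < (\log N)^3 + 2$ combined with the definition of $\mathcal{D}$ directly yields $|\psi(i) - j| > S(\log N)^{9/2} - (\log N)^3 - 2 \gg 2S/|\alpha|$. Thus in both sub-cases $G(\tilde{q}_{\psi(i)} - \tilde{q}_j) = 0$, so the entire $\mathcal{K}$-sum vanishes in this case. This dichotomy is the only subtle point; everything else reduces to the a priori bounds already recorded on $\mathsf{E}$.
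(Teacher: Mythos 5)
Your proof is correct and follows essentially the same route as the paper: handle $\mathcal{D}=\{k\}$ by the crude $4AB$-per-term bound times the $\le 4(\log N)^3$ indices, and handle $\mathcal{D}=\{i:|i-j|>S(\log N)^{9/2}\}$ by showing each summand with $\dist(i,\mathcal{D})<2(\log N)^3$ vanishes, using $\mathsf{E}_4$ for $G(q_i-q_j)=0$ and a dichotomy plus \eqref{lambdapsi} for $G(\tilde q_{\psi(i)}-\tilde q_j)=0$. The only cosmetic difference is how you slice the dichotomy for the $\psi(i)$ term (you split on whether the hypothesis of \eqref{lambdapsi} holds, while the paper splits on $\dist(\psi(i),\mathcal{D})\le 2(\log N)^3$ versus $>2(\log N)^3$ and then invokes the analogue of \eqref{ij2s} in the first sub-case), but the two reductions are interchangeable and rest on the same ingredients.
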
 
	
	\begin{proof}

		First assume that $\mathcal{D} = \{ k \}$ for some integer $k \in \llbracket N_1, N_2 \rrbracket$. Observe for any $i \in \llbracket N_1, N_2 \rrbracket$ that 
		\begin{flalign*} 
			\big| F(\Lambda_i) \cdot G(q_i - q_j) - F(\tilde{\Lambda}_{\psi(i)}) \cdot G(\tilde{q}_{\psi(i)}-\tilde{q}_j) \big| \le 2 \displaystyle\sup_{|\lambda| \le \log N} | F(\lambda) | \cdot \displaystyle\sup_{q \in \mathbb{R}} | G(q) | \le 2 \cdot A \cdot 2B \le 4AB,
		\end{flalign*}
		
		\noindent where the former bound holds by our restriction to the event $\mathsf{E}_1$ and the latter by the second and fourth statements in \Cref{fgab}. This, together with the fact  that there are at most $4(\log N)^3$ indices $i \in \llbracket N_1, N_2 \rrbracket$ with $\dist (i, \mathcal{D}) < 2 (\log N)^3$, yields \eqref{fba3}.
		
		Thus, assume instead that $\mathcal{D} = \{ i \in \llbracket N_1, N_2 \rrbracket : |i-j| > S(\log N)^{9/2} \}$. To verify \eqref{fba3}, it suffices to show that $G(q_i - q_j) = 0 = G(\tilde{q}_{\psi(i)}-\tilde{q}_j)$ if $\dist (i, \mathcal{D}) < 2 (\log N)^3$. To do so, by the fourth statement in \Cref{fgab}, it suffices to verify that 
		\begin{flalign} 
			\label{qiqj2s} 
			|q_i-q_j| \ge 2S, \qquad \text{and} \qquad |\tilde{q}_{\psi(i)} - \tilde{q}_j| \ge 2S, \qquad \text{if $\dist (i, \mathcal{D}) \le 2(\log N)^3$}.
		\end{flalign}
		
		\noindent If $\dist (i, \mathcal{D}) \le 2(\log N)^3$, then since $|i'-j| \ge S(\log N)^{9/2}$ for any $i' \in \mathcal{D}$ we have 
		\begin{flalign}
			\label{ij2s}
			|i-j| \ge \displaystyle\frac{1}{2} \cdot S(\log N)^{9/2}.
		\end{flalign} 
		
		\noindent Hence, by our restriction to the event $\mathsf{E}_4$, we have $|q_i - q_j| \ge |\alpha| \cdot |i-j| / 2 \ge 2 S$; this confirms the first statement in \eqref{qiqj2s}. To verify the second, first observe that entirely analogous reasoning to that above confirms it if $\dist (\psi(i), \mathcal{D}) \le 2 (\log N)^3$. If instead $\dist ( \psi(i), \mathcal{D} ) > 2 (\log N)^3$, then \eqref{ij2s} and \eqref{lambdapsi} give $| \psi(i)-j | \ge |i-j| - (\log N)^2 \ge S(\log N)^{9/2} / 4$. So, by our restriction to $\mathsf{E}_4$, we have $|\tilde{q}_{\psi(i)} - \tilde{q}_j| \ge |\alpha| \cdot | \psi(i)-j| / 2 \ge 2S$. This shows \eqref{qiqj2s} and thus the lemma.
	\end{proof} 
	
	Now we can establish \Cref{estimateik}.
	
	\begin{proof}[Proof of \Cref{estimateik}]
		
		First observe that 
		\begin{flalign}
			\label{sumd} 
			\begin{aligned}
				& \displaystyle\sum_{i:\dist (i, \mathcal{D}) \ge 2 (\log N)^3} \big| F(\Lambda_i) \cdot G(q_i - q_j) - F(\tilde{\Lambda}_{\psi(i)}) \cdot G(\tilde{q}_{\psi(i)} - \tilde{q}_j) \big| \\
				& \qquad \qquad \le \displaystyle\sum_{i: \dist(i, \mathcal{D}) \ge 2 (\log N)^3} \big( A \cdot | G(q_i - q_j) - G(\tilde{q}_{\psi(i)}-\tilde{q}_j) | + 2B \cdot | F(\Lambda_i) - F(\tilde{\Lambda}_{\psi(i)}) | \big) \\
				& \qquad \qquad \le A \displaystyle\sum_{i:\dist(i,\mathcal{D}) \ge 2(\log N)^3} | G(q_i-q_j) - G(\tilde{q}_{\psi(i)} -\tilde{q}_j) | + 2AB N e^{-(\log N)^2}.
			\end{aligned} 
		\end{flalign}
		
		\noindent Here, in the first bound we used the fact that for each $i \in \llbracket N_1, N_2 \rrbracket$ we have $| F(\Lambda_i) | \le A$ (by the second statement in \eqref{fgab} with our restriction to $\mathsf{E}_1$) and that $\big| G(\tilde{q}_{\psi(i)} - \tilde{q}_j) \big| \le 2B$ (by the fourth statement in \eqref{fgab}), and in the second we used the third statement of \Cref{fgab} with \eqref{lambdapsi} and our restriction to $\mathsf{E}_1$. 
		
		To bound the right side of \eqref{sumd}, we claim when $\dist (i, \mathcal{D}) \ge 2(\log N)^3$ that 
		\begin{flalign}
			\label{q2}
			|\tilde{q}_i - \tilde{q}_{\psi(i)}| \le (\log N)^{5/2}, \qquad \text{and} \qquad |q_i - q_j - \tilde{q}_i + \tilde{q}_j| \le 12 (\log N)^2.
		\end{flalign}
		
		\noindent Since $| i - \psi(i) | \le (\log N)^2$ by \eqref{lambdapsi}, the first bound in \eqref{q2} follows from our restriction to the event $\mathsf{E}_4$ (by using it to bound $|\tilde{q}_i - \tilde{q}_k|$ and $|\tilde{q}_k - \tilde{q}_{\psi(i)}|$ for $k = \max \{ i + (\log N)^2, \psi (i) + (\log N)^2 \}$). To verify the second bound in \eqref{q2}, observe that 
		\begin{flalign}
			\label{qij2} 
			|q_i - q_j + \tilde{q}_i - \tilde{q}_j| = 2 \displaystyle\sum_{m=i}^{j-1} |\log a_m - \log \tilde{a}_m| \le 4 (\log N)^2 \cdot \# \{ m \in \llbracket i, j-1 \rrbracket: \dist (m, \mathcal{D}) \le 1 \},
		\end{flalign}
		
		\noindent where in the first statement we used \eqref{q00}, and in the second we used the fact that $a_m = \tilde{a}_m$ unless $\dist (m,  \mathcal{D}) \le 1$ and our restriction to the event $\mathsf{E}_2$ (on which each $|\log a_m - \log \tilde{a}_m| \le 2(\log N)^2$). If $\mathcal{D} = \{ k \}$ for some $k \in \llbracket N_1, N_2 \rrbracket$, then the number of $m \in \llbracket i,j-1 \rrbracket$ with $\dist (m, \mathcal{D}) \le 1$ is at most $3$. If instead $\mathcal{D} = \{ h \in \llbracket N_1, N_2 \rrbracket : |h-j| \ge S(\log N)^{9/2} \}$, then this number is equal to $0$, as $\mathcal{D}$ then does not intersect $[i-1,j]$ (since $\dist (i, \mathcal{D}) \ge 2(\log N)^3$). This with \eqref{qij2} confirms the second bound in \eqref{q2}. 
		
		Using \eqref{q2}, we estimate the right side of \eqref{sumd} through the fourth statement in \Cref{fgab}. Due to presence of the term $B \cdot \mathbbm{1} _{x \ge 0} \cdot \mathbbm{1} _{y \le 0}$ there, it will be useful to define the set $\mathcal{I}_1$ of indices $i \in \llbracket N_1, N_2 \rrbracket$ with $\dist (i, \mathcal{D}) \ge 2 (\log N)^3$, such that either $q_i - q_j \ge 0 \ge \tilde{q}_i - \tilde{q}_j$ or $q_i - q_j \le 0 \le \tilde{q}_i - \tilde{q}_j$. Similarly, we define the set $\mathcal{I}_2$ of indices $i \in \llbracket N_1, N_2 \rrbracket$ with $\dist (i, \mathcal{D}) \ge 2(\log N)^3$, such that either $\tilde{q}_i - \tilde{q}_j \ge 0 \ge \tilde{q}_{\psi(i)} - \tilde{q}_j$ or $\tilde{q}_i - \tilde{q}_j \le 0 \le \tilde{q}_{\psi(i)} - \tilde{q}_j$. Then, due to \eqref{q2} and our restriction to the event $\mathsf{E}_4$, it is quickly verified that $|\mathcal{I}_1| \le (\log N)^3$ and $|\mathcal{I}_2| \le (\log N)^3$. Hence, denoting $\mathcal{I} = \mathcal{I}_1 \cup \mathcal{I}_2$, we have 
		\begin{flalign}
			\label{dsumg}
			\begin{aligned} 
				& \displaystyle\sum_{i:\dist(i,\mathcal{D}) \ge 2(\log N)^3} | G(q_i - q_j) - G(\tilde{q}_{\psi(i)} - \tilde{q}_j) | \\ 
				& \qquad \le \displaystyle\sum_{i:\dist(i,\mathcal{D}) \ge 2(\log N)^3} \big( | G(q_i - q_j) - G(\tilde{q}_i - \tilde{q}_j) | + | G(\tilde{q}_i - \tilde{q}_j) - G(\tilde{q}_{\psi(i)} - \tilde{q}_j) | \big) \\
				& \qquad \le BS^{-1} \displaystyle\sum_{i:\dist(i,\mathcal{D}) \ge 2(\log N)^3} \big( | q_i - q_j -\tilde{q}_i + \tilde{q}_j| \cdot (\mathbbm{1}_{|q_i - q_j| \le 2S} + \mathbbm{1}_{|\tilde{q}_i - \tilde{q}_j| \le 2S}) \\ 
				& \qquad \qquad \qquad \qquad \qquad \qquad \qquad + | \tilde{q}_i - \tilde{q}_{\psi(i)} | \cdot (\mathbbm{1}_{|\tilde{q}_i - \tilde{q}_j| \le 2S} + \mathbbm{1}_{|\tilde{q}_{\psi(i)} - \tilde{q}_j| \le 2S}) \big) + 2B \cdot |\mathcal{I}| \\
				& \qquad \le 4BS^{-1} \cdot (\log N)^{5/2} \cdot \displaystyle\sum_{i = N_1}^{N_2} ( \mathbbm{1}_{|q_i-q_j| \le 2S} + \mathbbm{1}_{|\tilde{q}_i-\tilde{q}_j| \le 2S} + \mathbbm{1}_{|\tilde{q}_{\psi(i)} - \tilde{q}_j| \le 2S})  + 4B(\log N)^3,
			\end{aligned} 
		\end{flalign}
		
		\noindent where in the first bound we decomposed the sum; in the second we used the fourth statement in \eqref{fgab} (with the definitions of $\mathcal{I}_1$ and $\mathcal{I}_2$); and in the third we used \eqref{q2} with the fact that that $|\mathcal{I}| \le |\mathcal{I}_1| + |\mathcal{I}_2| \le 2(\log N)^3$ (and that $N$ is sufficiently large). Now, due to our restriction to the event $\mathsf{E}_4$, there exists a constant $C > 1$ such that there are at most $CS (\log N)^2$ indices $i \in \llbracket N_1, N_2 \rrbracket$ such that either $|q_i - q_j| \le 2S$, $|\tilde{q}_i - \tilde{q}_j| \le 2S$, or $|\tilde{q}_{\psi(i)} - \tilde{q}_j| \le 2S$. Inserting this into \eqref{dsumg} yields
		\begin{flalign*}
			\displaystyle\sum_{i:\dist(i,\mathcal{D}) \ge 2(\log N)^3} | G(q_i - q_j) - G(\tilde{q}_{\psi(i)} - \tilde{q}_j) | \le 8CB (\log N)^{9/2},
		\end{flalign*} 
		
		\noindent which with \eqref{sumd}, \Cref{dflambdaigestimate}, and \eqref{fba2} shows \eqref{fba4} and thus the lemma.			
	\end{proof}

	\subsection{Proof of \Cref{expectationh}} 
	
	\label{ProofH}
	
	In this section we establish \Cref{expectationh}. To that end, it will be useful to define a bounded variant of $F$; we therefore define the function $H: \mathbb{R} \rightarrow \mathbb{R}$ by for each $\lambda \in \mathbb{R}$ setting 
	\begin{flalign*}
		H(\lambda) = F(\lambda) \cdot \mathbbm{1}_{|F(\lambda)| \le A}  -A \cdot \mathbbm{1}_{F(\lambda) < -A} + A \cdot \mathbbm{1}_{F(\lambda) >A}.
	\end{flalign*}
	
	\noindent Observe that $| H(\lambda) | \le A$ for all $\lambda \in \mathbb{R}$, and (by the third property in \Cref{fgab}) that 
	\begin{flalign}
		\label{hxhy2} 
		| H(x) - H(y) | \le e^{-(\log N)^2}, \quad \text{for any $x, y \in [-\log N, \log N]$ with $|x-y| \le e^{-(\log N)^{5/2}}$}.
	\end{flalign} 
	
	We first compare the expectation of the sum of $F(\Lambda_i) \cdot G(q_i - q_j)$ to that of $H(\Lambda_i) \cdot G(q_i -q_j)$. 
	
	\begin{lem}
		
		\label{fhlambda}
		
		There exists a constant $c>0$ such that 
		\begin{flalign}
			\label{fhlambda2} 
			\begin{aligned} 
			\Bigg| \mathbb{E} \bigg[ \displaystyle\sum_{i=N_1}^{N_2} F(\Lambda_i) \cdot G(& q_i - q_j) \bigg] - \mathbb{E} \bigg[ \displaystyle\sum_{i=N_1}^{N_2} H(\Lambda_i) \cdot G(q_i - q_j) \bigg] \Bigg| \le c^{-1} AB e^{-c(\log N)^2}; \\
			& \displaystyle\int_{-\infty}^{\infty} |F(\lambda)-H(\lambda)| \varrho (\lambda) d \lambda \le c^{-1} A e^{-c(\log N)^2}.
			\end{aligned} 
		\end{flalign}
		
	\end{lem}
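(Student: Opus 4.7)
The core observation is that $F$ and $H$ agree wherever $|F(\lambda)| \le A$, and in particular on $[-\log N, \log N]$ by the second condition in \Cref{fgab}. Moreover, whenever $F \ne H$ we have the crude bound $|F(\lambda) - H(\lambda)| \le |F(\lambda)| + A \le 2A e^{|\lambda|^{1/2}}$ by the first condition in \Cref{fgab}, and this contributes only for $|\lambda| > \log N$. Both inequalities then follow from Gaussian-type tail control of the eigenvalue density $\varrho$ and of the empirical spectrum of $\bm{L}$.

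For the second bound in \eqref{fhlambda2}, I would estimate directly
\begin{flalign*}
\int_{-\infty}^{\infty} |F(\lambda) - H(\lambda)| \varrho(\lambda) \, d\lambda \le 2A \int_{|\lambda| > \log N} e^{|\lambda|^{1/2}} \varrho(\lambda) \, d\lambda,
\end{flalign*}
and then invoke the Gaussian-type upper bound $\varrho(\lambda) \le C(|\lambda|+1)^{2\theta+1} e^{-\beta \lambda^2/2}$ provided by \Cref{rhoexponential}. Splitting the tail integral at $|\lambda| = \log N$ and computing, the Gaussian factor dominates $e^{|\lambda|^{1/2}}$ and yields an overall bound of $c^{-1} A e^{-c(\log N)^2}$ for some $c > 0$.

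For the first bound in \eqref{fhlambda2}, since $|G(q)| \le 2B$ for all $q \in \mathbb{R}$ by the fourth condition in \Cref{fgab}, it suffices to control
\begin{flalign*}
\mathbb{E}\bigg[ \sum_{\lambda \in \eig \bm{L}} |F(\lambda) - H(\lambda)| \bigg] \le 2A \cdot \mathbb{E}\bigg[ \sum_{\lambda \in \eig \bm{L}} e^{|\lambda|^{1/2}} \mathbbm{1}_{|\lambda| > \log N} \bigg].
\end{flalign*}
Letting $M = \max_{\lambda \in \eig \bm{L}} |\lambda|$ and using that the sum above has at most $N$ terms, this expectation is bounded by $N \cdot \mathbb{E}[ e^{M^{1/2}} \mathbbm{1}_{M > \log N} ]$. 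I would then handle the latter by a layer-cake integration
\begin{flalign*}
\mathbb{E}[ e^{M^{1/2}} \mathbbm{1}_{M > \log N} ] = e^{(\log N)^{1/2}} \mathbb{P}[M > \log N] + \int_{\log N}^{\infty} \frac{e^{x^{1/2}}}{2 x^{1/2}} \mathbb{P}[M > x] \, dx,
\end{flalign*}
and apply the tail estimate $\mathbb{P}[M > x] \le c^{-1} N e^{-cx^2}$ from \Cref{l0eigenvalues}. The resulting Gaussian integral is clearly dominated by $e^{-c'(\log N)^2}$ for some $c' > 0$, and absorbing the polynomial factors $N^2 = e^{2 \log N}$ into a slightly smaller constant completes the estimate. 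The only point requiring any care is verifying that the tail of $M$ decays fast enough to beat the $e^{|\lambda|^{1/2}}$ growth coming from the first condition in \Cref{fgab}; but this is immediate from the quadratic exponent in \Cref{l0eigenvalues}.
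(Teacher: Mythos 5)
Your proof is correct and follows essentially the same route as the paper's: both proofs rest on the observation that $F = H$ on $[-\log N, \log N]$ (so the difference is supported on the complement of $\mathsf{BND}_{\bm{L}}(\log N)$), then use $|G| \le 2B$ and the quadratic tail bound for $\max_{\lambda\in\eig\bm{L}}|\lambda|$ from \Cref{l0eigenvalues} to conclude. The only cosmetic difference is that the paper finishes the first bound via Cauchy--Schwarz combined with the prepackaged \Cref{expectationflambda}, whereas you inline an equivalent layer-cake computation directly.
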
 
	
	\begin{proof} 
		
		Recalling \Cref{adelta}, let $\mathsf{E}_0 = \mathsf{BND}_{\bm{L}} (\log N)$. To show the first statement of the \eqref{fhlambda2}, observe for some $c_1>0$ that
		\begin{flalign*}
			\Bigg|  \mathbb{E} \bigg[ & \displaystyle\sum_{i=N_1}^{N_2} \big( F(\Lambda_i) - H(\Lambda_i) \big) \cdot G(q_i - q_j) \bigg] \Bigg| \\ 
			& = 2B \cdot \mathbb{E} \Bigg[ \displaystyle\sum_{i=1}^N | F(\lambda_i) | \cdot \mathbbm{1}_{\mathsf{E}_0^{\complement}} \Bigg] \le 2BN \cdot \mathbb{E} \bigg[ \displaystyle\max_{\lambda \in \eig \bm{L}} | F(\lambda) |^2 \Big]^{1/2} \cdot \mathbb{P} [ \mathsf{E}_0^{\complement} ]^{1/2} \le c_1^{-1} AB e^{-c_1 (\log N)^2},
		\end{flalign*}
		
		\noindent where the first statement follows from the facts that $| G(q) | \le 2B$ for all $q \in \mathbb{R}$ and \eqref{hxhy2} (which in particular implies that $F(\lambda) = H(\lambda)$ whenever $|\lambda| \le \log N$, by \Cref{fgab}), with the definition of $\mathsf{BND}$; the second from bounding each term in the sum over $\lambda$ by its maximum; and the third from \Cref{l0eigenvalues} and (the $v=2$ case of) \Cref{expectationflambda}. This shows the first bound in \eqref{fhlambda2}. 
		
		To confirm the second, observe for some $c_2, c_3 > 0$ that
		\begin{flalign*}
				\displaystyle\int_{-\infty}^{\infty} |F(\lambda) - H(\lambda)| \varrho (\lambda) d \lambda & \le \displaystyle\int_{\lambda: |F(\lambda)| > A} |F(\lambda)| \varrho (\lambda) d \lambda \\
				&  \le \displaystyle\int_{|\lambda| > \log N} |F(\lambda)| \varrho (\lambda) d \lambda \\
				& \le A \displaystyle\int_{|\lambda| > \log N} e^{|\lambda|^{1/2} - c_2 |\lambda|^2} d \lambda \le A e^{-c_3 (\log N)^2},
		\end{flalign*}
		
		\noindent where the first inequality follows from \eqref{hxhy2}; the second from the second statement in \Cref{fgab}; the third from the first statement in \Cref{fgab}, with \Cref{rhoexponential}; and the fourth from performing the integration.
	\end{proof}

	The following lemma approximates the expectation of the sum of $H(\Lambda_i)$ over some interval $i \in \llbracket n_1, n_2 \rrbracket$; if $\llbracket n_1, n_2 \rrbracket = \llbracket N_1, N_2 \rrbracket$, it may be thought of as a variant of \Cref{lf} with an effective error. We establish it in \Cref{ProofLambdaH} below.
	
	\begin{lem}
	
	\label{sumlambdaiexpectation} 
	
	Fix integers $n_1, n_2 \in \llbracket N_1, N_2 \rrbracket$ with $n_2 \ge n_1$; denote $n = n_2 - n_1 + 1$, and assume that $n \ge (\log N)^5$. For $N$ sufficiently large, we have 
	\begin{flalign*}
		\Bigg| \mathbb{E} \bigg[ \displaystyle\sum_{i=n_1}^{n_2} H (\Lambda_i) \bigg] - n \displaystyle\int_{-\infty}^{\infty} H (\lambda) \varrho(\lambda) d \lambda \Bigg| \le A (\log N)^6.
	\end{flalign*} 
	
	\end{lem}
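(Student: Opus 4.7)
The plan is to reduce the eigenvalue sum to one on a smaller Lax matrix distributed as thermal equilibrium on $\llbracket n_1, n_2\rrbracket$, and then to invoke the density-of-states result \Cref{lf}.

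The first step is decoupling. I would define $\bm{L}^* \in \SymMat_{\llbracket N_1, N_2\rrbracket}$ by setting $L^*_{n_1-1,n_1} = L^*_{n_1,n_1-1} = 0$ and $L^*_{n_2,n_2+1} = L^*_{n_2+1,n_2} = 0$, leaving all other entries equal to those of $\bm{L}$. Then $\bm{L}^*$ is block-diagonal with three blocks, the middle of which I call $\bm{L}^{(n)} \in \SymMat_{\llbracket n_1, n_2\rrbracket}$. By the product structure of \Cref{mubeta2}, $\bm{L}^{(n)}$ has the same law as the Lax matrix on $\llbracket n_1, n_2\rrbracket$ at thermal equilibrium $\mu_{\beta,\theta;n-1,n}$ (its off-diagonal entries are exactly $a_{n_1},\ldots,a_{n_2-1}$, and it has no further off-diagonal since the next entry of $\bm{L}^*$ was set to $0$).

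Next, I would apply \Cref{lleigenvalues2} and \Cref{lleigenvalues} with $\mathcal{D} = \{n_1-1, n_1, n_2, n_2+1\} \cap \llbracket N_1, N_2\rrbracket$ and $\delta = 0$; this is permitted on the overwhelmingly probable event $\mathsf{BND}_{\bm{L}}(\log N)$ of \Cref{l0eigenvalues}, since $\bm{L}$ and $\bm{L}^*$ agree off $\mathcal{D}$. These lemmas yield a near-bijection between eigenvalues of $\bm{L}$ whose localization centers $\varphi$ satisfy $\dist(\varphi,\mathcal{D}) \ge (\log N)^3$ and eigenvalues of $\bm{L}^{(n)}$ with localization centers close to $\varphi$; matched pairs differ by at most $e^{-c(\log N)^3}$. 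By \Cref{fgab}(3) this gives $|H(\lambda)-H(\tilde\lambda)| \le Ae^{-(\log N)^2}$ for such matched pairs, while the at most $O((\log N)^3)$ ``edge'' eigenvalues on either side, whose localization centers lie within $(\log N)^3$ of $n_1$ or $n_2$, contribute at most $CA(\log N)^3$ in total. Taking expectations and handling the low-probability complement via Cauchy--Schwarz and \Cref{expectationflambda}, I would conclude
\begin{flalign*}
\left|\mathbb{E}\left[\sum_{i=n_1}^{n_2} H(\Lambda_i)\right] - \mathbb{E}\left[\sum_{\lambda \in \eig \bm{L}^{(n)}} H(\lambda)\right]\right| \le CA(\log N)^3.
\end{flalign*}

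Finally, since $\bm{L}^{(n)}$ is a thermal-equilibrium Lax matrix on an interval of size $n \ge (\log N)^5$, which tends to infinity as $N\to\infty$, \Cref{lf} together with \Cref{fcn} gives $\mathbb{E}\bigl[n^{-1}\sum_{\lambda \in \eig \bm{L}^{(n)}} H(\lambda)\bigr] \to \int H\varrho\, d\lambda$. The main obstacle is quantifying this convergence: the target error $A(\log N)^6$ is trivial when $n \le (\log N)^6$ (since both sums are $O(An)$), but requires an effective rate of order $A(\log N)^6/n$ in the regime $n \gg (\log N)^6$. I would obtain this from a quantitative refinement of the moment-based argument underlying \Cref{lf}, exploiting the independence and Gaussian/Gamma concentration of the Flaschka variables entering $\bm{L}^{(n)}$, together with the regularity of $H$ encoded in \Cref{fgab}(3). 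Combined with the decoupling estimate above, this then yields the claimed bound.
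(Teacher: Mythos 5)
Your first two steps—decoupling $\bm{L}$ into a block matrix $\bm{L}^*$ by zeroing the entries at $n_1-1$ and $n_2$, and then matching eigenvalues of $\bm{L}$ and $\bm{L}^*$ via \Cref{lleigenvalues2} and \Cref{lleigenvalues}—are a legitimate alternative to the resolvent comparison that the paper uses (it works via \Cref{lgdifference} and Stieltjes-transform representations rather than eigenvalue matching). Those two steps accumulate error $O(A(\log N)^3)$, which is acceptable, and the identification of the law of $\bm{L}^{(n)}$ as $\mu_{\beta,\theta;n-1,n}$ via the product structure of \Cref{mubeta2} is correct.

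The genuine gap is in your final step, and you identify it yourself but do not close it. Lemma \Cref{lf} (even in the extended form of \Cref{fcn}) is a soft, non-quantitative statement: it says that the expected empirical spectral measure of a thermal-equilibrium Lax matrix converges to $\varrho$ as the matrix size tends to infinity, with no rate. To control $\bigl|\mathbb{E}[\sum_{\lambda\in\eig\bm{L}^{(n)}} H(\lambda)] - n\int H\varrho\bigr|$ by $A(\log N)^6$ at a \emph{fixed} size $n$ (which can be as large as $N$), you would need a convergence rate of order $A(\log N)^6/n$ for the expected spectral measure of the size-$n$ matrix. That is an effective, nontrivial new estimate, and "a quantitative refinement of the moment-based argument underlying \Cref{lf}" is not something the paper supplies or that follows formally from what is stated; it is essentially the whole content of the lemma. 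Your observation that the bound is trivial for $n\le(\log N)^6$ does not help in the complementary regime.

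The paper circumvents exactly this difficulty by \emph{not} applying \Cref{lf} at size $n$. Instead it embeds $\bm{L}$ into an auxiliary thermal-equilibrium Lax matrix $\bm{\mathfrak{L}}$ of size $\mathfrak{N}$, where $\mathfrak{N}$ may be chosen as large as one likes relative to $N$; this is where the soft limit \Cref{lf} is invoked (giving an error $A/N$ for the full eigenvalue sum of $\bm{\mathfrak{L}}$, with $\mathfrak{N}$ taken huge so that the limit has essentially been reached). The crucial additional ingredient is translation invariance of the thermal-equilibrium product measure: the expectation of the window statistic $\sum_{i\in\llbracket i_1,i_2\rrbracket}\Imaginary\mathfrak{G}_{ii}(E+\mathrm{i}\eta)$ is the same (up to $e^{-c(\log N)^2}$, by the resolvent comparison in \Cref{glambda2}) for every bulk window $\llbracket i_1,i_2\rrbracket\subset\llbracket\mathfrak{N}_1',\mathfrak{N}_2'\rrbracket$ of length $n'$, and it equals the corresponding window statistic of $\bm{L}$. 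Averaging over the $\mathfrak{K}=\mathfrak{N}'/n'$ disjoint windows recovers the full trace of $\bm{\mathfrak{L}}$, to which the soft limit applies. Dividing by $\mathfrak{K}$ transfers the bound back down to a single window of $\bm{L}$ at the cost of only polylogarithmic error. If you wish to keep your decoupling and eigenvalue-matching route, you would still need to import this large-auxiliary-matrix averaging idea to replace your unproved quantitative refinement of \Cref{lf}; as written, the argument does not go through.
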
 
	
	The following lemma approximates ``local averages'' of $H(\Lambda_i) \cdot G(q_i - q_j)$, that is, over intervals $i \in \llbracket n_1, n_2 \rrbracket$ of essentially arbitrary size. 
	
	\begin{lem}
	
	\label{n1n2h} 
	
	Fix integers $n_1, n_2 \in \llbracket N_1, N_2 \rrbracket$ with $n_2 \ge n_1$; set $n = n_2 - n_1 + 1$, and assume that $n \ge (\log N)^5$. For $N$ sufficiently large, we have 
	\begin{flalign*}
		\Bigg| \mathbb{E} \bigg[ \displaystyle\sum_{i=n_1}^{n_2} H (\Lambda_i) \cdot G(& q_i - q_j ) \bigg] - G(\alpha n_1 - \alpha j) \cdot  \mathbb{E} \bigg[ \displaystyle\sum_{i=n_1}^{n_2} H (\Lambda_i) \bigg] \Bigg| \\
		& \quad \le A B \big( S^{-1/2} n (\log N)^{9/2} + S^{-1} n^2 \log N + 2\cdot \mathbbm{1}_{|n_1-j| \le 2n} \big).
	\end{flalign*}
	
	\end{lem}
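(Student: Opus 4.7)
The plan is to bound $G(q_i - q_j) - G(\alpha n_1 - \alpha j)$ pointwise using the Lipschitz-plus-jump structure from \Cref{fgab}(4), then sum and take expectations, exploiting that $q_i - q_j$ concentrates near $\alpha(i - j)$ in thermal equilibrium.

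First I would restrict to the overwhelmingly probable event $\mathsf{E}$ on which $|q_i - q_k - \alpha(i - k)| \le |i - k|^{1/2} (\log N)^2$ for all $i, k \in \llbracket N_1, N_2 \rrbracket$ (from \Cref{qij} with a union bound) and $\max_{\lambda \in \eig \bm{L}} |\lambda| \le \log N$ (from \Cref{l0eigenvalues}); the contribution of $\mathsf{E}^{\complement}$ to each expectation is negligible by Cauchy--Schwarz and \Cref{expectationflambda}. On $\mathsf{E}$, \Cref{fgab}(4) yields
\begin{flalign*}
\big|G(q_i - q_j) - G(\alpha(n_1 - j))\big| \le BS^{-1}\big|q_i - q_j - \alpha(n_1 - j)\big| + B \cdot \mathbbm{1}_{J(i)},
\end{flalign*}
where $J(i)$ is the event that $0$ lies between $q_i - q_j$ and $\alpha(n_1 - j)$.

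Writing the displacement as $(q_i - q_j - \alpha(i - j)) + \alpha(i - n_1)$ and using the $\mathsf{E}$-bound, I would sum the Lipschitz piece over $i \in \llbracket n_1, n_2 \rrbracket$ restricted to the support of $G$ (indices $i$ within $O(n + S/|\alpha|)$ of $j$, by the support condition $|q_i - q_j| \le S$ or $|\alpha(n_1 - j)| \le S$). Using $|H(\Lambda_i)| \le A$, the fluctuation contribution $\sum_i BS^{-1} |i - j|^{1/2} (\log N)^2$ is bounded by $ABS^{-1/2} n (\log N)^{9/2}$, while the drift contribution $\sum_i BS^{-1} |\alpha|(i - n_1)$ is bounded by $ABS^{-1} n^2$ (absorbing the $\log N$); these account for the first two claimed error terms.

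For the jump term, on $\mathsf{E}$ and using the hypothesis $n \ge (\log N)^5$: if $|n_1 - j| > 2n$, every $i \in \llbracket n_1, n_2 \rrbracket$ satisfies $|i - j| \ge n \ge (\log N)^5$, so the fluctuation $|i - j|^{1/2} (\log N)^2$ is dominated by $|\alpha(i-j)|$; moreover $i$ and $n_1$ lie on the same side of $j$, so $q_i - q_j$ and $\alpha(n_1 - j)$ share a common sign and $J(i)$ fails identically. When $|n_1 - j| \le 2n$, I would control the jump via cancellation in the expectation: decomposing $B' \sum_i \mathbb{E}[H(\Lambda_i)(\mathbbm{1}_{q_i > q_j} - \mathbbm{1}_{n_1 > j})]$, using that $\mathbbm{1}_{q_i > q_j} = \mathbbm{1}_{i > j}$ on $\mathsf{E}$ whenever $|i - j| \ge (\log N)^4$, and exploiting approximate translation invariance of $\mathbb{E}[H(\Lambda_i)]$ in $i$ (via approximate locality of the Lax matrix, as in \Cref{concentrationh}), this reduces to an $O(AB)$ boundary contribution, yielding the $2AB \mathbbm{1}_{|n_1 - j| \le 2n}$ term.

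The main obstacle will be making this last cancellation rigorous, since the naive count of indices $i \in \llbracket n_1, n_2 \rrbracket$ where $\mathbbm{1}_{q_i > q_j} \ne \mathbbm{1}_{n_1 > j}$ is $O(n)$; reducing this to a constant requires carefully showing that $\sum_{i > j,\, i \in [n_1, n_2]} \mathbb{E}[H(\Lambda_i)]$ differs from $\mathbbm{1}_{n_1 > j} \sum_{i \in [n_1, n_2]} \mathbb{E}[H(\Lambda_i)]$ by only a bounded boundary term arising from $i$ near $j$ and near the endpoints $n_1, n_2$.
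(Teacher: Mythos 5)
Your pointwise Lipschitz-plus-jump estimate, restriction to the concentration event, and fluctuation/drift decomposition all match the paper's proof of \Cref{n1n2h}. One organizational difference: the paper interpolates through the intermediate $G(\alpha(i-j))$, first comparing $G(q_i-q_j)$ with $G(\alpha(i-j))$ (where the support condition together with the concentration event $\mathsf{F}$ forces the active indices to satisfy $|i-j| \le S(\log N)^5$, so each fluctuation term is at most $BS^{-1/2}(\log N)^{9/2}$ uniformly), and then comparing $G(\alpha(i-j))$ with $G(\alpha(n_1-j))$ (a deterministic step giving $|\alpha|BS^{-1}n$ plus a jump indicator). Your direct comparison from $q_i-q_j$ to $\alpha(n_1-j)$ can have the Lipschitz bound active for every $i\in\llbracket n_1,n_2\rrbracket$ (namely when $G(\alpha(n_1-j))\ne0$), so your fluctuation sum is not always dominated by $ABS^{-1/2}n(\log N)^{9/2}$ when $n\gg S$; however, since $|i-j|\le n+S/|\alpha|$ there, the excess $O(BS^{-1}n^{3/2}(\log N)^2)$ is absorbed by the $ABS^{-1}n^2\log N$ error using $n\ge(\log N)^5$, so your version still closes.

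The jump term is the crux, and your worry is legitimate: for $|n_1-j|\le 2n$ the naive count of indices at which a sign mismatch can occur is $O(n)$, not $O(1)$. But the ``cancellation in expectation'' route you propose is neither necessary nor what the paper does. The paper simply sums the pointwise indicators $B\mathbbm{1}_{|i-j|\le(\log N)^5}+B\mathbbm{1}_{i\ge j\ge n_1}$ over $i\in\llbracket n_1,n_2\rrbracket$; this genuinely yields a term of order $nB\cdot\mathbbm{1}_{|n_1-j|\le 2n}$, and the coefficient $2$ appearing in the final display of the paper's proof (and in the statement of \Cref{n1n2h}) appears to be an overstatement --- it should be of order $n$. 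That weaker, correct bound $CnB\cdot\mathbbm{1}_{|n_1-j|\le 2n}$ nonetheless suffices for the only use of the lemma, in the proof of \Cref{expectationh}: there $n\le 2S^{1/2}(\log N)^5$ and at most $5$ of the $r\le S^{1/2}$ subintervals have $|n_{1,m}-j|\le 2n$, so the extra contribution is $O(ABS^{1/2}(\log N)^5)$, well inside the $3ABS^{1/2}(\log N)^{11}$ budget. In short, your first three paragraphs reproduce the paper's argument, and your fourth is a commendable but unnecessary attempt to prove a stronger jump bound than the paper actually establishes; the elementary $O(n)$ count is all that is needed.
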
 
	
	\begin{proof} 
	
	First observe that 
	\begin{flalign}
		\label{qgq} 
		\Bigg| \displaystyle\sum_{i=n_1}^{n_2} H(\Lambda_i) \cdot G(q_i - q_j) - G(\alpha n_1 - \alpha j) \cdot \displaystyle\sum_{i=n_1}^{n_2} H(\Lambda_i) \Bigg| \le A  \displaystyle\sum_{i=n_1}^{n_2} | G(q_i - q_j) - G(\alpha n_1 - \alpha j) |,
	\end{flalign}
	
	\noindent where we used the fact that $| H(\lambda) | \le A$ for each $\lambda \in \mathbb{R}$. To estimate the right side of \eqref{qgq}, we define the event 
	\begin{flalign}
		\label{eventf0} 
		\mathsf{F} = \bigcap_{i, i' \in \llbracket N_1, N_2 \rrbracket} \{ |q_i - q_{i'} - \alpha (i-i')| \le |i-i'|^{1/2} (\log N)^2 \}.
	\end{flalign}
	
	\noindent By \Cref{qij} and a union bound, $\mathsf{F}$ holds with overwhelming probability. 
	
	Let us observe two inequalities on the event $\mathsf{F}$. The first is 
	\begin{flalign*} 
		\mathbbm{1}_{\mathsf{F}} \cdot (\mathbbm{1}_{|q_i - q_j| \le S} + \mathbbm{1}_{|\alpha i - \alpha j| \le S}) \cdot |i-j| \le S (\log N)^5,
	\end{flalign*} 
	
	\noindent which holds since if $|i-j| \ge S (\log N)^5$ and $\mathsf{F}$ holds then $|q_i - q_j| \ge |\alpha i - \alpha j| - |i-j|^{1/2} (\log N)^2 \ge |\alpha| S (\log N)^5 - S^{1/2} (\log N)^{9/2} > S$ and $|\alpha i - \alpha j| \ge |\alpha| S (\log N)^5 > S$. The second is that, on $\mathsf{F}$, the bounds $q_i \le q_j$ and $\alpha i \ge \alpha j$ can both hold only if $|i-j| \le (\log N)^5$. Indeed, suppose to the contrary that on $\mathsf{F}$ the former two inequalities held, in addition to $|i-j| > (\log N)^5$. Then, we would have $0 \ge q_i - q_j \ge \alpha (i-j) - |i-j|^{1/2} (\log N)^2 > 0$, as $|i-j| > (\log N)^5$, which is a contradiction. Similarly, on $\mathsf{F}$, the bounds $q_i \ge q_j$ and $\alpha i \le \alpha j$ can both hold only if $|i-j| \le (\log N)^5$. 
	
	Therefore, for any $i \in \llbracket n_1, n_2 \rrbracket$, we have 
	\begin{flalign*} 
		| G(q_i - q_j) - G(\alpha i - \alpha j) | \cdot \mathbbm{1}_{\mathsf{F}} & \le BS^{-1} \cdot |i-j|^{1/2} (\log N)^2 \cdot (\mathbbm{1}_{|q_i - q_j| \le S} + \mathbbm{1}_{|\alpha i - \alpha j| \le S}) \cdot \mathbbm{1}_{\mathsf{F}} \\ 
		& \qquad + B \cdot (\mathbbm{1}_{q_i \ge q_j} \cdot \mathbbm{1}_{\alpha i \le \alpha j} + \mathbbm{1}_{q_i \le q_j} \cdot \mathbbm{1}_{\alpha i \ge \alpha j}) \\
		& \le BS^{-1/2} (\log N)^{9/2} + B \cdot \mathbbm{1}_{|i - j| \le (\log N)^5},
	\end{flalign*} 
	
	\noindent where the first statement holds by the fourth part of \Cref{fgab} (with the definition of $\mathsf{F}$), and the second holds by the above two inequalities that hold on $\mathsf{F}$. Hence, 
	\begin{flalign*}
		 | G(q_i - q_j) - G & (\alpha n_1 - \alpha j) | \cdot \mathbbm{1}_{\mathsf{F}}   \\ 
		& \le BS^{-1/2} (\log N)^{9/2} + B \cdot \mathbbm{1}_{|i - j| \le (\log N)^5}  + | G(\alpha i - \alpha j) - G(\alpha n_1 - \alpha j) | \\ 
		& \le BS^{-1/2} (\log N)^{9/2} + B \cdot \mathbbm{1}_{|i - j| \le (\log N)^5} + |\alpha| BS^{-1} n + B \cdot \mathbbm{1}_{i \ge j \ge n_1},
	\end{flalign*} 
	
	\noindent where we again used the fourth statement of \Cref{fgab}, with the fact that $|i-n_1| \le n$. Summing over $i \in \llbracket n_1, n_2 \rrbracket$, this gives (since and $i \ge j \ge n_1$ implies $|n_1-j| \le n$, and $|i-j| \le (\log N)^5$ implies $|n_1-j| \le n + (\log N)^5 \le 2n$, as $n \ge (\log N)^5$) that
	\begin{flalign*}
		\displaystyle\sum_{i=n_1}^{n_2} | G(q_i - q_j) - G(\alpha n_1 - \alpha j) | \cdot \mathbbm{1}_{\mathsf{F}} \le BS^{-1/2} n (\log N)^{9/2} + |\alpha| BS^{-1} n^2  + 2B \cdot \mathbbm{1}_{|n_1-j| \le 2n}.
	\end{flalign*} 
	
	\noindent The lemma then follows from inserting this into \eqref{qgq}; taking expectations; using the fact that $| G(q_i - q_j) - G(\alpha n_1 - \alpha j) | \le 2B$ deterministically (by the fourth part of \Cref{fgab}); and using the fact that $\mathsf{F}$ holds with overwhelming probability. 
	\end{proof}
	
	We can now establish \Cref{expectationh}.

	\begin{proof}[Proof of \Cref{expectationh}]
		
		By \Cref{n1n2h} and \Cref{sumlambdaiexpectation} (with the fact that $|G(q)| \le 2B$ for all $q \in \mathbb{R}$), we find for any $n_1, n_2 \in \llbracket N_1, N_2 \rrbracket$ with $n = n_2 - n_1 + 1 \le \lceil S^{1/2} (\log N)^5 \rceil$ that
		\begin{flalign}
			\label{n1n2h2} 
			\begin{aligned}
			\Bigg| \mathbb{E} \bigg[ \displaystyle\sum_{i=n_1}^{n_2} H(\Lambda_i) \cdot G(q_i - q_j) \bigg] - n \cdot G(\alpha n_1 - \alpha j &) \cdot \displaystyle\int_{-\infty}^{\infty} H(\lambda) \varrho(\lambda) d \lambda \Bigg| \\
			& \le 2AB \big( (\log N)^{11} + \mathbbm{1}_{|n_1-j| \le 2n} \big).
			\end{aligned} 
		\end{flalign}
		
		\noindent We will first apply \eqref{n1n2h2} for a family of intervals $\llbracket n_1, n_2 \rrbracket$ covering a neighborhood of $j$ (of size about $2S (\log N)^{9/2}$). Thus, let $r \le S^{1/2}$ and $n_{1,1} < n_{1,2} < \cdots < n_{1,r}$ be integers with $n_{1,i+1} - n_{1,i} = n = \lceil S^{1/2} (\log N)^5 \rceil$ for each $i \in \llbracket 1, r-1 \rrbracket$ and $n_{1,1} \le j - S(\log N)^{9/2} < j + S(\log N)^{9/2} \le n_{1,r}$. Applying \eqref{n1n2h2}, with the $(n_1, n_2)$ there equal to $(n_{1,m}, n_{1,m+1})$ here, and summing over $m \in \llbracket 1, r-1 \rrbracket$, we obtain
		\begin{flalign}
			\label{hgn11n2r} 
			\Bigg| \mathbb{E} \bigg[ \displaystyle\sum_{i=n_{1,1}}^{n_{1,r}} H(\Lambda_i) \cdot G(q_i - q_j) \bigg] - n \displaystyle\sum_{m=1}^r G(\alpha n_{1,m} - \alpha j) \cdot \displaystyle\int_{-\infty}^{\infty} H(\lambda) \varrho (\lambda) d \lambda \Bigg| \le 3AB S^{1/2} (\log N)^{11},
		\end{flalign}	
		
		\noindent where we used the facts that $r \le S^{1/2}$ and that there are at most $5$ values of $m \in \llbracket 1, r \rrbracket$ for which $|n_{1,m} - j| \le 2n$.

		Next, we must estimate the expectation of $H(\Lambda_i) \cdot G(q_i - q_j)$ when $i \notin \llbracket n_{1,1}, n_{1,r} \rrbracket$; observe for such $i$ that we have $|i-j| \ge S(\log N)^{9/2}$. To that end, recall the event $\mathsf{F}$ from \eqref{eventf0}, which holds with overwhelming probability, by \Cref{qij} (and a union bound). On $\mathsf{F}$, we have that $|q_i - q_j| \ge 2S$ whenever $|i-j| \ge S (\log N)^{9/2}$; in particular, we have $G(q_i - q_j) = 0$ on $\mathsf{F}$ if $i \notin \llbracket n_{1,1}, n_{1,r} \rrbracket$. Together with the deterministic bounds $| H(\Lambda) |\le A$ for all $\Lambda \in \mathbb{R}$ and $| G(q) | \le 2B$ (applied off of $\mathsf{F}$), we find that 
		\begin{flalign*}
			\Bigg| \mathbb{E} \bigg[ \displaystyle\sum_{i \notin \llbracket n_{1,1}, n_{2,r} \rrbracket} H(\Lambda_i) \cdot G(q_i - q_j) \bigg] \Bigg| \le 2c_1^{-1} AB N e^{-c_1 (\log N)^2} \le ABN^{-1}.
		\end{flalign*}
		
		\noindent Together with \eqref{hgn11n2r} and \Cref{fhlambda}, this gives
		\begin{flalign*}
			\Bigg| \mathbb{E} \bigg[ \displaystyle\sum_{i=N_1}^{N_2} F(\Lambda_i) \cdot G(q_i - q_j) \bigg] - n \displaystyle\sum_{m=1}^r G(\alpha n_{1,m} - \alpha j) \cdot \displaystyle\int_{-\infty}^{\infty} F(\lambda) \varrho (\lambda) d \lambda \Bigg| \le 4 AB S^{1/2} (\log N)^{11}.
		\end{flalign*} 
		
		\noindent Since 
		\begin{flalign*}
			\displaystyle\int_{-\infty}^{\infty} |F(\lambda)| \varrho (\lambda) d \lambda \le \displaystyle\int_{-\infty}^{\infty} |H(\lambda)| \varrho (\lambda) d\lambda + A \le 2A,
		\end{flalign*} 
		
		\noindent where the first inequality holds from \Cref{fhlambda}, and the second holds from the facts that $| H(\lambda) | \le A$ and that $\varrho$ is a probability measure, we must show that 
		\begin{flalign*} 
			\Bigg| n \displaystyle\sum_{m=1}^r G(\alpha n_{1,m} - \alpha j) - \displaystyle\int_{-\infty}^{\infty} G(\alpha q) dq \Bigg| \le BS^{1/2} (\log N)^{11}.
		\end{flalign*}
		
		 Since $r \le S^{1/2}$; since $n \le 2 S^{1/2} (\log N)^5$; since $G(\alpha q - \alpha j) = 0$ for $q \notin [n_{1,1}, n_{1,r}]$ (by \Cref{fgab}); and since there are at most $2$ indices $m \in \llbracket 1, r-1 \rrbracket$ for which $n_{1,m+1} \ge j \ge n_{1,m}$, it suffices to show for any $m \in \llbracket 1, r-1 \rrbracket$ that
		\begin{flalign*}
			\Bigg| n \cdot G(\alpha n_{1,m} - \alpha j) - \displaystyle\int_{n_{1,m}}^{n_{1,m+1}} G(\alpha q - \alpha j) dq \Bigg| \le \displaystyle\frac{B}{10} \cdot (\log N)^{11} + Bn \cdot \mathbbm{1}_{n_{1,m+1} \ge j \ge n_{1,m}}.
		\end{flalign*}		
		
		\noindent This follows from the bounds
		\begin{flalign*}
			\Bigg| n \cdot G(\alpha n_{1,m} - \alpha j & ) - \displaystyle\int_{n_{1,m}}^{n_{1,m+1}} G(\alpha q - \alpha j) dq \Bigg| \\ 
			& \le \Bigg| (n_{1,m+1} - n_{1,m}) \cdot G(\alpha n_{1,m} - \alpha j) - \displaystyle\int_{n_{1,m}}^{n_{1,m+1}} G(\alpha q - \alpha j) dq \Bigg| + 2B \\
			& \le n \cdot \displaystyle\max_{q \in [n_{1,m}, n_{1,m+1}]} | G(\alpha n_{1,m} - \alpha j) - G(\alpha q - \alpha j) | + 2B \\ 
			& \le BS^{-1} n \cdot |\alpha n_{1,m+1} - \alpha n_{1,m}| + Bn \cdot \mathbbm{1}_{n_{1,m+1} \ge j \ge n_{1,m}} + 2B \\
			& \le 4 |\alpha| B (\log N)^{10} + Bn \cdot \mathbbm{1}_{n_{1,m+1} \ge j \ge n_{1,m}} + 2B, 
		\end{flalign*}
		
		\noindent where in the first inequality we used the facts that $| G(q) | \le 2B$ for all $q \in \mathbb{R}$ and that $n = n_{1,m+1} - n_{1,m} + 1$; in the second we bounded the integral by its maximum (and used the fact that $n \ge n_{1,m+1} - n_{1,m}$); in the third we used the fourth part of \Cref{fgab}; and in the fourth we used the fact that $n_{1,m+1} - n_{1,m} \le n \le 2S^{1/2} (\log N)^5$.
	\end{proof}

	\section{Regularization and Matrix Bounds}
	
	\label{InverseAsymptotic}
	
	In \Cref{YEstimate} we use the concentration bound \Cref{concentrationh3} to derive a ``regularized'' variant of the asymptotic scattering relation \Cref{ztlambda2} (along the lines of \eqref{qktqk02}), given by \Cref{ztlambda}. In \Cref{EstimateMatrix} we discuss properties of a matrix $\bm{S}$ (defined in \eqref{sij2}) that will eventually arise from formally differentiating this regularized relation.

	\subsection{Regularized Asymptotic Scattering Relation} 
	
	\label{YEstimate}
	
	In this section we show a variant of the asymptotic scattering relation \Cref{ztlambda2}, in which the restrictions on $i$ and logarithms in the sums in \eqref{lambdak2} are incorporated through more regular functions $\chi$ and $\mathfrak{l}$, respectively. To state this more precisely, we require some notation.

	\begin{assumption} 
	
	\label{mchi} 
	
	Adopt \Cref{lbetaeta}, and fix real numbers $B \ge 10$ and $\mathfrak{M} \in [1, T]$. Let $\chi = \chi_{\mathfrak{M}} : \mathbb{R} \rightarrow \mathbb{R}$ denote a smooth function with $\chi'$ even and nonnegative, satisfying the following two properties.
	
	\begin{enumerate} 
		\item If $|x| \ge \mathfrak{M}$ then $\chi (x) = \sgn (\alpha) \cdot \mathbbm{1}_{x > 0}$. 
		\item For each $k \in \{ 0, 1, 2 \}$, we have $| \partial_x^k \chi (x) | \le B \mathfrak{M}^{-k}$ for all $x \in \mathbb{R}$.
	\end{enumerate} 
	
	\noindent Further set $\mathfrak{d} = e^{-5 (\log N)^2}$, and define the function $\mathfrak{l} = \mathfrak{l}_{\mathfrak{d}} : \mathbb{R}$ by for any $x \in \mathbb{R}$ setting 
	\begin{flalign}
		\label{functionl} 
		\mathfrak{l}(x) = \displaystyle\frac{1}{2} \cdot \log |x^2 + \mathfrak{d}^2|.
	\end{flalign} 
	
	\end{assumption} 
	
	We then have the following regularized version of \Cref{ztlambda2}.

	\begin{prop}
		
	\label{ztlambda} 
	
	Adopt \Cref{mchi}, and fix $t \in [0, T]$. The following holds with overwhelming probability. For any index $k \in \llbracket 1, N \rrbracket$ satisfying \eqref{n1n2k02}, we have
		\begin{flalign}
			\label{lambdaq2} 
			\begin{aligned} 
		\Bigg| \lambda_k t - Q_k(t) + Q_k (0) - 2 \displaystyle\sum_{i = 1}^N \mathfrak{l} (\lambda_{k} - \lambda_{i}) \cdot \big( \chi ( Q_k (t) - Q_i (t) ) - \chi ( & Q_k (0) - Q_i (0) ) \big) \Bigg| \\
		& \le B \mathfrak{M}^{1/2} (\log N)^{16}.
		\end{aligned} 
	\end{flalign} 
	
	\end{prop}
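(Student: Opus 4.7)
\textbf{Proof plan for \Cref{ztlambda}.} The starting point is the (unregularized) asymptotic scattering relation \Cref{ztlambda2}, which with overwhelming probability estimates $\lambda_k t - Q_k(t) + Q_k(0)$ by sums of $\log|\lambda_k - \lambda_i|$ over $\{i : Q_s(i) < Q_s(k)\}$ (for $s \in \{0, t\}$) up to error $(\log N)^{15}$. I plan to pass from this to \Cref{ztlambda} in two independent regularization steps: first replacing $\log|\lambda_k - \lambda_i|$ by $\mathfrak{l}(\lambda_k - \lambda_i)$, and then replacing $\sgn(\alpha)\mathbbm{1}_{Q_i(s) < Q_k(s)}$ by $\chi(Q_k(s) - Q_i(s))$. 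Both deviations are then bounded, and the second is the substantive one, where \Cref{concentrationh3} is applied with the oddness of a cutoff difference killing the leading term.

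The $\log \to \mathfrak{l}$ replacement is immediate. Pointwise, $|\log|z| - \mathfrak{l}(z)| \le \mathfrak{d}^2/z^2$ whenever $|z| \ge \mathfrak{d}$; and \Cref{l0eigenvalues} guarantees $\mathsf{SEP}_{\bm{L}}(e^{-(\log N)^2})$ with overwhelming probability, so every $|\lambda_k - \lambda_i|$ (with $i \ne k$) is much larger than $\mathfrak{d} = e^{-5(\log N)^2}$, making the pointwise discrepancy $e^{-8(\log N)^2}$ and the summed discrepancy negligible. The $i=k$ term is only spurious on the regularized side, but it contributes $\mathfrak{l}(0) \cdot (\chi(0) - \chi(0)) = 0$ inside the $s=t$ versus $s=0$ difference, so it drops out.

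For the indicator-to-$\chi$ replacement, define $G(x) = \sgn(\alpha)\mathbbm{1}_{x>0} - \chi(x)$. Since $\chi'$ is even, $\chi(x) + \chi(-x) \equiv 2\chi(0) = \sgn(\alpha)$, so $G$ is odd; moreover $\supp G \subseteq [-\mathfrak{M}, \mathfrak{M}]$ and $|G(x) - G(y)| \le B\mathfrak{M}^{-1}(x - y) + B\mathbbm{1}_{x \ge 0 \ge y}$ for $x \ge y$, verifying the fourth clause of \Cref{fgab} with parameters $B, S = \mathfrak{M}$. The residual error (for each fixed $s \in \{0, t\}$) becomes
\[
\sum_{i=1}^N \mathfrak{l}(\lambda_k - \lambda_i) \cdot G(Q_k(s) - Q_i(s)),
\]
which, by evenness of $\mathfrak{l}$ and oddness of $G$, equals minus $\sum_i \mathfrak{l}(\lambda_i - \lambda_k) G(Q_i(s) - Q_k(s))$. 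I would apply \Cref{concentrationh3} to this sum with $F \equiv 1$ (so $A = 1$), $j = k$, $f = \mathfrak{l}$, and the above $G$. The key cancellation is $\int G(\alpha q)\,dq = 0$ by oddness of $G$, so the leading integral term on the right side of \eqref{lambdaqjs3} vanishes, leaving only the error $ABDS^{1/2}(\log N)^{13}$.

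It remains to verify the hypotheses \eqref{f1d}, \eqref{f2d} for $f = \mathfrak{l}$ with $D$ polynomial in $\log N$. Both conditions in \eqref{f1d} follow from $\sup |\mathfrak{l}| \le |\log \mathfrak{d}| = 5(\log N)^2$, so $D = 5(\log N)^2$ suffices there. For \eqref{f2d}, the alternative $D|x-y|e^{100(\log N)^2}$ holds trivially since $|\mathfrak{l}'(z)| \le 1/\mathfrak{d} = e^{5(\log N)^2}$; the alternative $D|x-y|(|x|^{-1} + |y|^{-1})$ is handled by splitting on whether the segment from $x$ to $y$ stays away from $0$ (in which case $|\mathfrak{l}'(\xi)| \le 1/|\xi| \le |x|^{-1} + |y|^{-1}$ and $D = 1$ works), or comes close to $0$ (in which case $|x-y|(|x|^{-1} + |y|^{-1}) \gtrsim 1$ while $|\mathfrak{l}(x) - \mathfrak{l}(y)| \lesssim (\log N)^2$, so $D \lesssim (\log N)^2$ works). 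The index hypothesis \eqref{js} at time $s \in [0,T]$ follows from \eqref{n1n2k02} combined with the center-continuity bound \Cref{centert}, which gives $|\varphi_s(k) - \varphi_0(k)| \le T(\log N)^2$. Putting everything together, the concentration error at each of $s = 0$ and $s = t$ is at most $B \cdot (\log N)^2 \cdot \mathfrak{M}^{1/2}(\log N)^{13} \lesssim B\mathfrak{M}^{1/2}(\log N)^{15}$, which together with the $(\log N)^{15}$ error from \Cref{ztlambda2} and the exponentially small $\log \to \mathfrak{l}$ discrepancy is absorbed into $B \mathfrak{M}^{1/2}(\log N)^{16}$ for $N$ large. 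The main obstacle is the delicate Lipschitz verification \eqref{f2d} for the regularized logarithm $\mathfrak{l}$ on its singular scale $\mathfrak{d}$, and keeping careful track of the parity arguments so that the leading integral on the right of \eqref{lambdaqjs3} really vanishes.
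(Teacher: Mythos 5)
Your proposal is correct and follows essentially the same route as the paper: both decompose the passage from \Cref{ztlambda2} to \Cref{ztlambda} into a harmless $\log \to \mathfrak{l}$ replacement (controlled on $\mathsf{SEP}_{\bm{L}}(e^{-(\log N)^2})$) and an application of \Cref{concentrationh3} with $F\equiv 1$, $f=\mathfrak{l}$, and the odd cutoff difference $G$ (yours, $\sgn(\alpha)\mathbbm{1}_{x>0}-\chi(x)$, is the reflected negative of the paper's $\chi(-q)-\mathbbm{1}_{q<0}$, and the evenness of $\mathfrak{l}$ makes the two equivalent), with oddness of $G$ killing the leading integral and \Cref{centert} verifying \eqref{js} with $j=k$. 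The bookkeeping of $A=1$, $D=\mathcal{O}((\log N)^2)$, $S=\mathfrak{M}$, and the per-time-slice error $\lesssim B\mathfrak{M}^{1/2}(\log N)^{15}$ matches; the paper organizes the same calculation as a separate lemma (\Cref{chiconcentration}) but the argument is identical.
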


	We establish \Cref{ztlambda} as a quick consequence of \Cref{ztlambda2} and the following lemma. 
	
	\begin{lem} 
		
		\label{chiconcentration} 
		
		Fix $t \in [0, T]$. The following holds with overwhelming probability. For any $k \in \llbracket 1, N \rrbracket$ satisfying \eqref{n1n2k02}, we have
		\begin{flalign*}
			\Bigg| \sgn (\alpha) & \displaystyle\sum_{i: Q_t (i) < Q_t (k)} \log | \lambda_k - \lambda_i| - \sgn (\alpha) \displaystyle\sum_{i: Q_0 (i) < Q_0 (k)} \log |\lambda_k - \lambda_i| \\
			& - \displaystyle\sum_{i = 1}^N \mathfrak{l} (\lambda_k - \lambda_i) \cdot \big( \chi ( Q_k (t) - Q_i (t) ) - \chi ( Q_k (0) - Q_i (0) ) \big) \Bigg| \le 11 B \mathfrak{M}^{1/2} (\log N)^{15}.
		\end{flalign*}
		
	\end{lem}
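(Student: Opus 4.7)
The plan is to reduce the claim to two applications of \Cref{concentrationh3}, at $s=0$ and $s=t$, after rewriting the smoothed indicator as an odd function and absorbing the singularity at $\lambda_i = \lambda_k$ into the regularization $\mathfrak{l}$. Define $G(x) = \chi(x) - \sgn(\alpha) \cdot \mathbbm{1}_{x > 0}$. Because $\chi'$ is even with total integral $\sgn(\alpha)$, the identity $\chi(x)+\chi(-x) = \sgn(\alpha)$ holds for all $x$, so $G(-x) = -G(x)$ for $x \ne 0$; moreover $\supp G \subseteq [-\mathfrak{M}, \mathfrak{M}]$ by the first clause of \Cref{mchi}, and from $|\chi'| \le B \mathfrak{M}^{-1}$ the function $G$ satisfies the fourth clause of \Cref{fgab} with $S = \mathfrak{M}$. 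Writing $\chi = G + \sgn(\alpha) \cdot \mathbbm{1}_{(\cdot) > 0}$ and subtracting the $s = 0$ identity from the $s = t$ one, the quantity to be bounded reduces to the difference, at $s = t$ and $s = 0$, of
\begin{flalign*}
\Sigma(s) = \sum_{i=1}^N \mathfrak{l}(\lambda_k - \lambda_i) \cdot G(Q_k(s) - Q_i(s)),
\end{flalign*}
plus two ``$\log - \mathfrak{l}$'' correction sums indexed by $\{i : Q_i(s) < Q_k(s)\}$.

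To bound $\Sigma(s)$, I would apply \Cref{concentrationh3} with $F \equiv 1$, $f = \mathfrak{l}$, $j = k$, and $G$ as above (the sign flip $Q_k - Q_i = -(Q_i - Q_k)$ is harmless since $G$ is odd). The hypothesis \eqref{js} on $\varphi_s(k)$ follows from \eqref{n1n2k02} together with the localization-center continuity estimate \Cref{centerdistance}. The conditions on $F$ in \Cref{fgab} hold trivially with $A = 1$, and the Lipschitz-with-jump condition on $G$ was verified above. For $f = \mathfrak{l}$ the bounds \eqref{f1d} and \eqref{f2d} hold with $D = 5(\log N)^2$: the pointwise bound comes from $|\mathfrak{l}(z)| \le \max(|\log|z||, |\log \mathfrak{d}|)$ for $|z| \le 2\log N$, and the Lipschitz bound from $|\mathfrak{l}'(z)| = |z|/(z^2+\mathfrak{d}^2) \le \min\{(2\mathfrak{d})^{-1}, |z|^{-1}\}$ together with the choice $\mathfrak{d} = e^{-5(\log N)^2}$ from \Cref{mchi}. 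Since $G$ is odd off the origin, the main term $\int G(\alpha q) dq$ appearing in \eqref{lambdaqjs3} vanishes, so with overwhelming probability $|\Sigma(s)| \le 5 B \mathfrak{M}^{1/2} (\log N)^{15}$ for each $s \in \{0, t\}$.

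The two $\log - \mathfrak{l}$ sums are deterministically harmless because $\mathfrak{l}(x) - \log|x| = \tfrac{1}{2}\log(1+\mathfrak{d}^2/x^2) \le \mathfrak{d}^2/(2x^2)$. Applying \Cref{l0eigenvalues} with $\delta = e^{-(\log N)^2}$ gives $\min_{i \ne k} |\lambda_k - \lambda_i| \ge e^{-(\log N)^2}$ with overwhelming probability, so each such summand is at most $e^{-8(\log N)^2}$, and the total correction --- together with the harmless $i = k$ contribution $\mathfrak{l}(0) G(0) = O((\log N)^2)$ inside $\Sigma(s)$ --- is far below the claimed error budget $11 B \mathfrak{M}^{1/2} (\log N)^{15}$. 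The main technical obstacle is the preceding step: verifying that the weight $\mathfrak{l}(\lambda_k - \lambda_i)$, which grows like $|\log \mathfrak{d}| = 5(\log N)^2$ when $\lambda_i$ lies within $\mathfrak{d}$ of $\lambda_k$, still satisfies the regularity assumptions \eqref{f1d}--\eqref{f2d} on $f$; this is precisely why $\mathfrak{d}$ is chosen much smaller than any typical eigenvalue gap, yet not so small that $D$ exceeds $(\log N)^{O(1)}$.
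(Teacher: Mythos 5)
Your proposal is correct and follows essentially the same route as the paper's proof: in both cases the argument applies \Cref{concentrationh3} with $F\equiv 1$, $f=\mathfrak{l}$, $D=5(\log N)^2$, and an odd cutoff $G$ supported in $[-\mathfrak{M},\mathfrak{M}]$ (you use $G(x)=\chi(x)-\sgn(\alpha)\mathbbm{1}_{x>0}$ while the paper uses its negative $\chi(-q)-\mathbbm{1}_{q<0}$, which are interchangeable by oddness), uses that the $\int G(\alpha q)\,dq$ term vanishes, and then controls the $\log$-versus-$\mathfrak{l}$ discrepancy via the eigenvalue separation event $\mathsf{SEP}_{\bm{L}(0)}(e^{-(\log N)^2})$. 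Your added remarks—verifying \eqref{js} from \eqref{n1n2k02} via a localization-center continuity estimate, and noting the $i=k$ term is $O((\log N)^2)$ and in fact cancels after the $s=t$ minus $s=0$ subtraction—are correct elaborations of steps the paper leaves implicit.
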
 
	
	\begin{proof}[Proof of \Cref{ztlambda}] 
		
		This follows from \Cref{ztlambda2} and \Cref{chiconcentration}.
	\end{proof}

	Now we prove \Cref{chiconcentration} using \Cref{concentrationh3}.

	\begin{proof}[Proof of \Cref{chiconcentration}]
		
		Throughout, we assume for notational convenience that $\alpha > 0$, as the proof when $\alpha < 0$ is entirely analogous. We apply \Cref{concentrationh3} with the $F, G : \mathbb{R} \rightarrow \mathbb{R}$ there defined by setting $F(\lambda) = 1$ for all $\lambda \in \mathbb{R}$, setting $f(x) = \mathfrak{l} (x)$, and setting $G(q) = \chi (-q) - \mathbbm{1} _{q < 0}$ (if $\alpha < 0$, then we instead set $G(q) = \chi (-q) + \mathbbm{1}_{q < 0}$). Observe in this way that $F$, $f$, and $G$ satisfy \Cref{fgab}, \eqref{f1d}, and \eqref{f2d} with the $(A, B, D, S)$ there equal to $(1, B, 5(\log N)^2, \mathfrak{M} )$ here, by \Cref{mchi}. Moreover, since $\chi'$ is even and $G(q) = \chi(-q) - \mathbbm{1} _{q < 0}$ is compactly supported, $G(q)$ is odd in $q$ (away from $q=0$), which means that $\int_{-\infty}^{\infty} G(q) dq = 0$. Therefore \Cref{concentrationh3} yields an overwhelmingly probable event $\mathsf{E}_1$, on which we have 
		\begin{flalign*}
		\Bigg|   \displaystyle\sum_{i: Q_i (s) < Q_k (s)} \mathfrak{l} (\lambda_i - \lambda_k) -  \displaystyle\sum_{i = 1}^N \mathfrak{l} (\lambda_i - \lambda_k) \cdot \chi ( Q_k (s) - Q_i (s) ) \Bigg| \le 5B \mathfrak{M}^{1/2} (\log N)^{15}.
		\end{flalign*} 	
		
		\noindent Subtracting this estimate at $s = t$ from that at $s = 0$ yields on $\mathsf{E}_1$ that
		\begin{flalign*}
		 	\Bigg| \displaystyle\sum_{i: Q_i (t) < Q_k (t)} & \mathfrak{l} (\lambda_i - \lambda_k) - \displaystyle\sum_{i: Q_i (0) < Q_k (0)} \mathfrak{l} (\lambda_i - \lambda_k) \\ 
			& - \displaystyle\sum_{i=1}^N \mathfrak{l} (\lambda_i - \lambda_k) \cdot \big( \chi ( Q_k (t) - Q_i (t) ) - \chi ( Q_k (0) - Q_i (0)) \big) \Bigg| \le 10B \mathfrak{M}^{1/2} (\log N)^{15}.
		\end{flalign*} 
		
		\noindent It therefore remains to show for each $s \in \{ 0, t \}$ that, with overwhelming probability, 
		\begin{flalign}
		\label{qisqks2} 
		\begin{aligned}
			 \Bigg| \displaystyle\sum_{i: Q_i (s) < Q_k (s)} \mathfrak{l} ( \lambda_k - \lambda_i) - \displaystyle\sum_{i: Q_i (s) < Q_k (s)} \log |\lambda_k - \lambda_i| \Bigg| \le 1.
		\end{aligned}
		\end{flalign} 
		
		To that end, recalling \Cref{adelta}, define the event $\mathsf{E}_2 = \mathsf{SEP}_{\bm{L}(0)} (e^{-(\log N)^2})$. By \Cref{l0eigenvalues}, $\mathsf{E}_2$ holds with overwhelming probability. We thus restrict to $\mathsf{E}_1 \cap \mathsf{E}_2$ in what follows. To verify \eqref{qisqks2}, it suffices to show that 
		\begin{flalign*}
			 \displaystyle\sum_{i = 1}^N \big| \mathfrak{l}(\lambda_k - \lambda_i) - \log |\lambda_k - \lambda_i| \big| \le 1.
		\end{flalign*} 
		
		\noindent This follows from the fact that $| \mathfrak{l} (\lambda_k - \lambda_i) - \log |\lambda_k - \lambda_i| | \le \mathfrak{d}^2 |\lambda_k - \lambda_i|^{-2} \le e^{-(\log N)^2}$ (as $|\lambda_k - \lambda_i| \ge e^{-(\log N)^2}$, by our restriction to $\mathsf{E}_2$).
	\end{proof}

	\subsection{Matrix Bounds} 
	
	\label{EstimateMatrix} 
	
	In this section we define and discuss properties of a certain family of matrices that will be useful in analyzing the asymptotic scattering relation \Cref{ztlambda}. These matrices are provided by the following definition. Observe that \eqref{testimateb} below imposes a more stringent constraint on $T$ than in \Cref{lbetaeta}; we will remove it when proving \Cref{vestimate} in \Cref{ProofV}. While we will not impose this in what follows, it will be useful to think of $B = \mathcal{O}(1)$ and $\mathfrak{M} \sim T$.
	
	\begin{assumption} 
		
	\label{chi2}
	
	Adopt \Cref{mchi}. Assume that 
	\begin{flalign}
		\label{testimateb} 
		B \in [10, N^{1/500}], \quad T \in [B^4 (\log N)^{60}, N^{1/10}], \quad \text{and} \quad \mathfrak{M} \in [B^4 (\log N)^{60}, T].   
	\end{flalign} 
	
	\noindent Further fix indices $k_1, k_2 \in \llbracket N_1, N_2 \rrbracket$ satisfying
	\begin{flalign}
		\label{nk1k2} 
		N_1 + T (\log N)^7 \le k_1 \le -N(\log N)^{-10}; \quad N (\log N)^{-10} \le k_2 \le N_2 - T(\log N)^7.
	\end{flalign} 
	
	\noindent For any integers $\ell, m \in \llbracket k_1, k_2 \rrbracket$ with 
	\begin{flalign*} 
		k_1 \le \ell \le -N (\log N)^{-9}; \qquad  N (\log N)^{-9} \le m \le k_2,
	\end{flalign*}
	
	\noindent and $N$-tuples $\bm{\Lambda} = (\Lambda_{N_1}, \Lambda_{N_1+1}, \ldots , \Lambda_{N_2}) \in \mathbb{R}^N$ and $\bm{\mathfrak{Q}} = (\mathfrak{Q}_{N_1}, \mathfrak{Q}_{N_1+1}, \ldots , \mathfrak{Q}_{N_2}) \in \mathbb{R}^N$, define the matrix $\bm{S} = \bm{S}_{\bm{\Lambda}; \bm{\mathfrak{Q}}}^{\llbracket \ell, m \rrbracket} = [S_{ij}] = [S_{ij;\bm{\Lambda};\bm{\mathfrak{Q}}}] \in \SymMat_{\llbracket \ell, m \rrbracket}$ by for any $i, j \in \llbracket \ell, m \rrbracket$ setting 
	\begin{flalign}
		\label{sij2} 
		\begin{aligned} 
		 S_{ij} & = \Bigg( 2 \sum_{k=\ell}^{m} \mathfrak{l} (\Lambda_j -\Lambda_k) \cdot \chi' (\mathfrak{Q}_j - \mathfrak{Q}_k) + 1 \Bigg) \cdot \mathbbm{1}_{i=j} - 2 \mathfrak{l} (\Lambda_j  - \Lambda_i) \cdot \chi' ( \mathfrak{Q}_j - \mathfrak{Q}_i) \\
		 & \qquad + T^3 \cdot (\mathbbm{1}_{i \le \ell+T^2} + \mathbbm{1}_{i \ge m + T^2}) \cdot  \mathbbm{1} _{i=j}.
		 \end{aligned} 
	\end{flalign} 
	
	\end{assumption}
	
	Let us briefly explain the origin of the matrix $\bm{S}$ given by \Cref{chi2}. Consider the asymptotic scattering relation \eqref{lambdaq2}; ignore the error; and formally differentiate it in $t$. This yields a system of linear equations for $Q_j' (t)$, whose coefficients are essentially the entries of $\bm{S}$ (except for the $(i,j)$ coordinates with $i = j \in \llbracket \ell, \ell+T^2 \rrbracket \cup \llbracket m-T^2,  m \rrbracket$, in which case $S_{ij}$ is $T^3$ larger; these additional boundary terms are for convenience in the proofs and will not affect the asymptotics). To solve for the $Q_j' (t)$, one must therefore invert the matrix $\bm{S}$, and estimate its inverse. While we do not know how to do this in general (see \Cref{smatrixchi} below), it can be done under the following assumption \eqref{sumchi} on its entries. 
	
	\begin{lem} 
		
		\label{sk1k2inverse} 
		
		Adopt \Cref{chi2}, and fix any $\varepsilon \in (0, 1)$. Suppose for each $j \in \llbracket \ell, m \rrbracket$ that
		\begin{flalign}
			\label{sumchi} 
			|S_{jj}| \ge (2 + \varepsilon) \displaystyle\sum_{i=\ell}^{m} | \mathfrak{l} (\Lambda_j - \Lambda_i) \cdot \chi' (\mathfrak{Q}_j - \mathfrak{Q}_i) | +  \varepsilon.
		\end{flalign}
		
		\begin{enumerate} 
			\item The matrix $\bm{S}_{\bm{\mathfrak{Q}};\bm{\Lambda}}^{\llbracket \ell, m \rrbracket}$ is invertible, with inverse $\bm{R} = \bm{R}_{\bm{\mathfrak{Q}}; \bm{\Lambda}}^{\llbracket \ell, m \rrbracket} = [R_{ij}] \in \SymMat_{\llbracket \ell, m \rrbracket}$.
			\item Let $\bm{v} = (v_{\ell}, v_{\ell+1}, \ldots , v_{m}) \in \mathbb{R}^{m-\ell+1}$, and denote $\bm{R} \bm{v} = \bm{w} = (w_{\ell}, w_{\ell+1}, \ldots , w_{m})$. Then, for any $i \in \llbracket \ell, m \rrbracket$ and $U \in \mathbb{R}_{\ge 1} \cup \{ \infty \}$, we have 
			\begin{flalign}
				\label{wivi} 
				|w_i| \le \varepsilon^{-1} \cdot \displaystyle\max_{k:|\mathfrak{Q}_k - \mathfrak{Q}_i| \le U \mathfrak{M}} |v_i| + \varepsilon^{-1}  e^{-\varepsilon U / 8} \cdot \displaystyle\max_{k \in \llbracket \ell, m \rrbracket} |v_k|.
			\end{flalign} 
			
		\end{enumerate} 
		
	\end{lem}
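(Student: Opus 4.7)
The plan is to handle (1) by verifying strict diagonal dominance directly from \eqref{sumchi}, and then to obtain (2) by combining the standard maximum-principle bound for diagonally dominant systems with a Combes--Thomas conjugation. First, observe that in \eqref{sij2} the $k = j$ summand exactly cancels the $i = j$ off-diagonal subtraction, so $S_{ij} = -2\mathfrak{l}(\Lambda_j - \Lambda_i)\chi'(\mathfrak{Q}_j - \mathfrak{Q}_i)$ for $i \ne j$, and $\sum_{i \ne j}|S_{ij}| = 2\sum_{i \ne j}|\mathfrak{l}(\Lambda_j - \Lambda_i)\chi'(\mathfrak{Q}_j - \mathfrak{Q}_i)|$. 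Separating the $i = j$ term in \eqref{sumchi} therefore gives
\[
|S_{jj}| \ge \bigl( 1 + \tfrac{\varepsilon}{2} \bigr) \sum_{i \ne j} |S_{ij}| + (2+\varepsilon)|\mathfrak{l}(0)\chi'(0)| + \varepsilon \ge \bigl( 1 + \tfrac{\varepsilon}{2} \bigr) \sum_{i \ne j} |S_{ij}| + \varepsilon,
\]
so $\bm{S}$ is strictly row-diagonally dominant with slack $\ge \varepsilon$; invertibility follows from the Levy--Desplanques theorem, establishing (1).

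For (2), I would first record the basic consequence of strict diagonal dominance: if $\bm{S}\bm{u} = \bm{f}$, choosing $j^*$ to maximize $|u_{j^*}|$ and expanding $S_{j^*j^*}u_{j^*} = f_{j^*} - \sum_{i \ne j^*} S_{j^*i}u_i$ shows $\varepsilon|u_{j^*}| \le \|\bm{f}\|_\infty$, i.e.\ $\|\bm{R}\bm{f}\|_\infty \le \varepsilon^{-1}\|\bm{f}\|_\infty$. Next, decompose $\bm{v} = \bm{v}^{\mathrm{n}} + \bm{v}^{\mathrm{f}}$, where $v^{\mathrm{n}}_k = v_k \mathbbm{1}_{|\mathfrak{Q}_k - \mathfrak{Q}_i| \le U\mathfrak{M}}$ and $\bm{v}^{\mathrm{f}} = \bm{v} - \bm{v}^{\mathrm{n}}$, and split $w_i = (\bm{R}\bm{v}^{\mathrm{n}})_i + (\bm{R}\bm{v}^{\mathrm{f}})_i$. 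The basic bound applied to $\bm{v}^{\mathrm{n}}$ immediately yields the first term of \eqref{wivi}.

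The second term requires a Combes--Thomas conjugation to extract exponential decay from the diagonal dominance. Set $\mu = \varepsilon/8$, let $D_k = |\mathfrak{Q}_k - \mathfrak{Q}_i|/\mathfrak{M}$, and conjugate $\bm{S}$ by $\bm{E} = \mathrm{diag}(e^{-\mu D_k})$ to obtain $\widetilde{\bm{S}} = \bm{E}\bm{S}\bm{E}^{-1}$, whose entries are $\widetilde{S}_{pq} = e^{\mu(D_q - D_p)}S_{pq}$. The key structural fact is that $\chi'$ is supported in $[-\mathfrak{M}, \mathfrak{M}]$ by \Cref{mchi}, so an off-diagonal $S_{pq}$ can be nonzero only when $|\mathfrak{Q}_p - \mathfrak{Q}_q| \le \mathfrak{M}$, which forces $|D_p - D_q| \le 1$ and hence $|\widetilde{S}_{pq}| \le e^\mu |S_{pq}|$; the diagonal is unchanged. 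Using $e^{\varepsilon/8} \le 1 + \varepsilon/4$ for $\varepsilon \in (0, 1)$, the inequality from paragraph one gives
\[
|\widetilde{S}_{jj}| - \sum_{i \ne j}|\widetilde{S}_{ij}| \ge |S_{jj}| - e^\mu \sum_{i \ne j}|S_{ij}| \ge \bigl( \tfrac{\varepsilon}{2} - \tfrac{\varepsilon}{4} \bigr)\sum_{i\ne j}|S_{ij}| + \varepsilon \ge \varepsilon,
\]
so $\widetilde{\bm{S}}$ is strictly diagonally dominant with the same slack. Applying the basic bound to $\widetilde{\bm{S}}(\bm{E}\bm{R}\bm{v}^{\mathrm{f}}) = \bm{E}\bm{v}^{\mathrm{f}}$, and noting $D_i = 0$ together with $v^{\mathrm{f}}_k = 0$ for $D_k \le U$, yields
\[
|(\bm{R}\bm{v}^{\mathrm{f}})_i| = |(\bm{E}\bm{R}\bm{v}^{\mathrm{f}})_i| \le \varepsilon^{-1}\max_k e^{-\mu D_k}|v^{\mathrm{f}}_k| \le \varepsilon^{-1}e^{-\varepsilon U/8}\max_k |v_k|,
\]
which is the second term of \eqref{wivi}; the case $U = \infty$ reduces to the near-part bound alone. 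The main obstacle, and the only place where the detailed structure of $\bm{S}$ plays a role, is verifying that the conjugation preserves strict diagonal dominance: this relies on the compact support of $\chi'$ providing an effective banded structure relative to $\mathfrak{Q}$-distances, and on the $(2+\varepsilon)$ coefficient in \eqref{sumchi} being large enough to absorb the multiplicative factor $e^\mu$.
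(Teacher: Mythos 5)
Your proof is correct, and part (2) takes a genuinely different route from the paper. Part (1) is handled the same way in both: you and the paper both observe that \eqref{sumchi} forces strict diagonal dominance with slack $\varepsilon$ (your explicit separation of the $i=j$ term and the resulting $|S_{jj}| \ge (1+\varepsilon/2)\sum_{i\ne j}|S_{ij}| + \varepsilon$ is exactly the bound the paper uses implicitly). For part (2), however, the paper runs a bootstrapping argument: it shows that any index $i_0$ that is ``locally almost-maximal'' (meaning $|w_k| \le (1+\delta)|w_{i_0}|$ for all $k$ with $|\mathfrak{Q}_k - \mathfrak{Q}_{i_0}| \le \mathfrak{M}$, with $\delta = \varepsilon/2$) satisfies $|v_{i_0}| \ge \varepsilon|w_{i_0}|$; assuming \eqref{wivi} fails, it then builds a chain $i_1=i, i_2, \dots, i_K$ with $|\mathfrak{Q}_{i_{k+1}} - \mathfrak{Q}_{i_k}| \le \mathfrak{M}$ and $|w_{i_{k+1}}| \ge (1+\delta)|w_{i_k}|$ until either the chain exits into the near zone (giving the first term of \eqref{wivi}) or runs for $\lfloor U\rfloor + 1$ steps (giving the second term via $(1+\delta)^{-\lfloor U\rfloor}$). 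Your Combes--Thomas conjugation by $\mathrm{diag}(e^{-\mu D_k})$ packages the same exponential decay in one shot, using the compact support of $\chi'$ to confine $|D_p - D_q|\le 1$ on any nonzero off-diagonal entry and using the $(2+\varepsilon)$ coefficient to absorb the factor $e^\mu \le 1+\varepsilon/4$. Your route is cleaner and more modular; the paper's route is more elementary (no conjugation, no Taylor estimate on $e^{\varepsilon/8}$) and perhaps more transparently connected to the geometry of the $\mathfrak{Q}$-lattice. Both recover the $e^{-\varepsilon U/8}$ rate. One small note: the statement of \eqref{wivi} in the paper has a typo, writing $|v_i|$ where $|v_k|$ is intended inside the first maximum (as the applications in the proofs of \Cref{wjt} and \Cref{qv} make clear); you correctly work with the intended version.
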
 
	
	\begin{proof}
		
		By \eqref{sij2}, the bound \eqref{sumchi} guarantees that the matrix $\bm{S}^{\llbracket \ell, m \rrbracket}$ is strictly diagonally dominant and is thus invertible. It remains to establish the second statement of the lemma.
		
		 To that end, set $\delta = \varepsilon / 2$, and let $i_0 \in \llbracket \ell, m \rrbracket$ denote an index such that $|w_k| \le (1+\delta) \cdot |w_{i_0}|$, whenever $k \in \llbracket \ell, m \rrbracket$ satisfies $|\mathfrak{Q}_k - \mathfrak{Q}_{i_0}| \le \mathfrak{M}$. Then, we have 
		\begin{flalign}
			\label{vi0wk} 
			\begin{aligned} 
			|v_{i_0}| & \ge  |w_{i_0}| \cdot |S_{i_0 i_0}|  - 2 \displaystyle\sum_{k = \ell}^{m} | \mathfrak{l} (\Lambda_k - \Lambda_i) \cdot \chi' (\mathfrak{Q}_{i_0} - \mathfrak{Q}_k) \cdot w_k | \\ 
			& \ge \varepsilon |w_{i_0}| + 2(1+\delta) \displaystyle\sum_{k=\ell}^{m} | \mathfrak{l} (\Lambda_{i_0} - \Lambda_k) \cdot \chi' (\mathfrak{Q}_{i_0} - \mathfrak{Q}_k) | \cdot ( |w_{i_0}| - (1+\delta)^{-1} \cdot |w_k| ) \ge \varepsilon |w_{i_0}|,
			\end{aligned} 
		\end{flalign} 
		
		\noindent where in the first statement we used \eqref{sij2}; in the second we used \eqref{sumchi}, with the fact that $2 \delta = \varepsilon$; and in the third we used the fact that $|w_k| \le (1+\delta) \cdot |w_{i_0}|$ whenever $|\mathfrak{Q}_{i_0} - \mathfrak{Q}_k| \le \mathfrak{M}$ (and that $\chi' (\mathfrak{Q}_{i_0} - \mathfrak{Q}_k) = 0$ whenever $|\mathfrak{Q}_{i_0} - \mathfrak{Q}_k| > \mathfrak{M}$, by \Cref{mchi}).
		
		Now assume to the contrary that \eqref{wivi} does not hold, and set $i_1 = i$. Then, by the $i_0 = i_1$ case of \eqref{vi0wk}, we either have that $|w_{i_1}| \le \varepsilon^{-1} \cdot |v_{i_1}|$ or that there exists some index $i_2 \in \llbracket \ell, m \rrbracket$ with $|\mathfrak{Q}_{i_2} - \mathfrak{Q}_{i_1}| \le \mathfrak{M}$ such that $|w_{i_2}| \ge (1+\delta) \cdot |w_{i_1}|$. The former would imply \eqref{wivi} and therefore cannot hold, so the latter must. Applying \eqref{vi0wk} again, now at $i_0 = i_2$, it follows that either $|w_{i_1}| \le |w_{i_2}| \le \varepsilon^{-1} \cdot |v_{i_2}|$ or that there exists some index $i_3 \in \llbracket \ell, m \rrbracket$ with $|\mathfrak{Q}_{i_3} - \mathfrak{Q}_{i_1}| \le \mathfrak{M} + |\mathfrak{Q}_{i_3} - \mathfrak{Q}_{i_2}| \le 2 \mathfrak{M}$ such that $|w_{i_3}| \ge (1+\delta) \cdot |w_{i_2}| \ge (1+\delta)^2 \cdot |w_{i_1}|$. The former again would imply \eqref{wivi} and therefore cannot hold, so the second does. Repeating in this way, and setting $K = \lfloor U \rfloor + 1$, it follows that there exist an index  $i_K \in \llbracket \ell, m \rrbracket$ satisfying $|w_{i_K}| \ge (1 + \delta)^{\lfloor U \rfloor} \cdot |w_{i_1}|$. Hence,
		\begin{flalign*}
			|w_i| \le (1+\delta)^{-\lfloor U \rfloor} \cdot \displaystyle\max_{k \in \llbracket \ell, m \rrbracket} |w_k| & \le \varepsilon^{-1} (1+\delta)^{-U / 2} \cdot \displaystyle\max_{k \in \llbracket \ell, m \rrbracket} |v_k|  \le \varepsilon^{-1} e^{-\varepsilon U / 8} \cdot \displaystyle\max_{k \in \llbracket \ell, m \rrbracket} |v_k|,
		\end{flalign*} 
		
		\noindent where the first bound follows from the fact that $|w_{i_K}| \ge (1+\delta)^{\lfloor U \rfloor} \cdot |w_{i_1}|$; the second from \eqref{vi0wk} (with the $i_0$ there equal to $k$ for which $|w_k|$ is maximal, which implies \eqref{vi0wk}); and the third from the fact that $1+\delta \ge e^{\delta/2} = e^{\varepsilon/4}$ for $\delta \in (0, 1)$. Thus, \eqref{wivi} holds, which is a contradiction; this confirms the lemma.
	\end{proof}
	
	To use \Cref{sk1k2inverse}, we will need to verify the bound \eqref{sumchi} when the $\mathfrak{Q}_j$ are close to the $Q_{\varphi_0^{-1}(j)}$. For general values of $\theta$ (such as if $\beta$ is very large in comparison to $\theta^{-1}$), it can be verified that this bound might in fact be false. However, \Cref{beta0theta} will be used to show that it does hold if $\theta$ is sufficiently small.

		\begin{lem} 
			
			\label{qjt} 
			
			There exists a constant $c>0$ such that the below holds. Adopt \Cref{chi2}, and fix a real number $s \in [0, T]$. The following holds with probability at least $1 - c^{-1} e^{-c(\log N)^2}$. Let $\theta_0 = \theta_0 (\beta) > 0$ be the real number from \Cref{beta0theta}, and suppose that $\theta \in (0, \theta_0)$. Assume for each $j \in \llbracket \ell, m \rrbracket$ that $\Lambda_j = \lambda_{\varphi_0^{-1} (j)}$, and that
			\begin{flalign}
				\label{lambdajqj}  
				\displaystyle\max_{j \in \llbracket \ell, m \rrbracket} | \mathfrak{Q}_j - Q_{\varphi_0^{-1} (j)} (s) | \le (\log N)^{-10} B^{-1} \mathfrak{M}.
			\end{flalign} 
			
			\noindent Then, \eqref{sumchi} holds with the $\varepsilon$ there equal to $(\log N)^{-1}$ here.
			
		\end{lem}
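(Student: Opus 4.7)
I would prove \Cref{qjt} by combining the concentration bound \Cref{concentrationh3} with the integral inequality of \Cref{beta0theta}, splitting according to whether the index $j$ lies near the boundary of $\llbracket \ell, m \rrbracket$. Throughout, I assume $\alpha > 0$ for notational simplicity (the case $\alpha < 0$ is analogous) and abbreviate $j_* = \varphi_0^{-1}(j)$, so that $\Lambda_j = \lambda_{j_*}$. Standard high-probability events from \Cref{l0eigenvalues}, \Cref{centert}, \Cref{centerdistance}, and \Cref{qijsalpha} are imposed implicitly.

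The \emph{boundary case}, in which $j \le \ell + T^2$ or $j \ge m - T^2$, is essentially trivial: the diagonal entry $S_{jj}$ in \eqref{sij2} then contains an additive $T^3$, whereas the remaining piece $|2 \sum_k \mathfrak{l}(\Lambda_j - \Lambda_k) \chi'(\mathfrak{Q}_j - \mathfrak{Q}_k) + 1|$ is controlled by $O(B (\log N)^2)$ using $\| \chi' \|_\infty \le B\mathfrak{M}^{-1}$, $\supp \chi' \subseteq [-\mathfrak{M},\mathfrak{M}]$, $|\mathfrak{l}| \le 5(\log N)^2$, and the fact that at most $O(\mathfrak{M}/|\alpha|)$ indices $k$ can satisfy $|\mathfrak{Q}_j - \mathfrak{Q}_k| \le \mathfrak{M}$ (by \eqref{lambdajqj} together with the spacing estimate \eqref{qiqjs4}). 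Since $T \ge B^4 (\log N)^{60}$, we have $T^3 \gg (2+\varepsilon) \cdot O(B(\log N)^2) + \varepsilon$, so \eqref{sumchi} holds with enormous room.

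For the \emph{bulk case} $\ell + T^2 < j < m - T^2$, I proceed in three substeps. First, using \eqref{lambdajqj} and $\|\chi''\|_\infty \le B\mathfrak{M}^{-2}$ from \Cref{mchi}, replace each $\chi'(\mathfrak{Q}_j - \mathfrak{Q}_k)$ in the sum by $\chi'(Q_{j_*}(s) - Q_{\varphi_0^{-1}(k)}(s))$; the induced error is at most $O(\mathfrak{M}) \cdot B\mathfrak{M}^{-2} \cdot (\log N)^{-10} B^{-1} \mathfrak{M} \cdot O((\log N)^2) = O((\log N)^{-8})$. Second, under the change of variable $i = \varphi_0^{-1}(k)$, the sum becomes $\sum_{i : \varphi_0(i) \in \llbracket \ell, m \rrbracket} \mathfrak{l}(\lambda_{j_*} - \lambda_i) \chi'(Q_{j_*}(s) - Q_i(s))$, which I then extend to all $i \in \llbracket 1, N \rrbracket$. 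For $i$ with $\varphi_0(i) < \ell$ (the case $\varphi_0(i) > m$ is symmetric), \Cref{centert} gives $|\varphi_s(i) - \varphi_0(i)| \le T(\log N)^2$, while \Cref{centerdistance} gives $|\varphi_s(j_*) - j| \le T(\log N)^3$; combined with $j > \ell + T^2$, this yields $|\varphi_s(j_*) - \varphi_s(i)| \ge T^2/2 \ge T(\log N)^5$, whence \eqref{qiqjs5} forces $|Q_{j_*}(s) - Q_i(s)| \ge |\alpha| T^2/4 > \mathfrak{M}$, so that $\chi'(Q_{j_*}(s) - Q_i(s)) = 0$. Third, apply \Cref{concentrationh3} with $F \equiv 1$, $f = \mathfrak{l}$, $G = \chi'$; the hypotheses of \Cref{fgab}, \eqref{f1d}, and \eqref{f2d} are satisfied with $A = 1$, $B_{G} = B\mathfrak{M}^{-1}$, $D = 5(\log N)^2$, $S = \mathfrak{M}$, giving an error of $O(B\mathfrak{M}^{-1/2}(\log N)^{15}) = O(B^{-1}(\log N)^{-15})$ under $\mathfrak{M} \ge B^4(\log N)^{60}$. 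Together with the identity $\int_\mathbb{R} \chi'(\alpha q)\,dq = 1/\alpha$, these substeps yield
\begin{flalign*}
2 \sum_{k=\ell}^{m} \mathfrak{l}(\Lambda_j - \Lambda_k) \chi'(\mathfrak{Q}_j - \mathfrak{Q}_k) = 2\alpha^{-1} \int_{-\infty}^{\infty} \mathfrak{l}(\lambda - \Lambda_j) \varrho(\lambda)\,d\lambda + o(1),
\end{flalign*}
and the analogous identity for the sum with $|\mathfrak{l}|$ in place of $\mathfrak{l}$ (using $\chi' \ge 0$).

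Combining these approximations with \Cref{beta0theta}, which under $\theta \in (0, \theta_0)$ provides a buffer of $1/2$ between $|2\alpha^{-1} \int \mathfrak{l}(\lambda-\Lambda_j)\varrho\,d\lambda + 1|$ and $2\alpha^{-1} \int |\mathfrak{l}(\lambda-\Lambda_j)|\varrho\,d\lambda$, the desired \eqref{sumchi} reduces to
\begin{flalign*}
\tfrac{1}{2} \ge \varepsilon \cdot \alpha^{-1} \int_{-\infty}^{\infty} |\mathfrak{l}(\lambda - \Lambda_j)| \varrho(\lambda)\,d\lambda + \varepsilon + o(1).
\end{flalign*}
Splitting this integral over the regions $|\lambda - \Lambda_j| \le \mathfrak{d}$, $\mathfrak{d} < |\lambda - \Lambda_j| \le 1$, and $|\lambda - \Lambda_j| > 1$, and using $|\Lambda_j| \le \log N$ with overwhelming probability along with the Gaussian-type decay from \Cref{rhoexponential}, yields the uniform bound $O(\log \log N)$ for the integral; hence the right side is $O(\log \log N / \log N) = o(1) \ll 1/2$. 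The main technical obstacle is the bookkeeping in the second substep of the bulk case, namely, simultaneously invoking \Cref{centert}, \Cref{centerdistance}, and \eqref{qiqjs5} to guarantee that eigenvalues whose $\varphi_0$-localization centers fall outside $\llbracket \ell, m \rrbracket$ contribute nothing to the sum, which is precisely why the boundary term $T^3 \cdot (\mathbbm{1}_{j \le \ell + T^2} + \mathbbm{1}_{j \ge m - T^2})$ is included in \eqref{sij2}.
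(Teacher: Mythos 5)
Your proposal is correct and follows essentially the same route as the paper's proof: the boundary/bulk split via the $T^3$ term, Taylor-expanding $\chi'$ around $Q_{\varphi_0^{-1}(\cdot)}(s)$ using \eqref{lambdajqj} and the $\chi''$ bound, extending the sum over $k$ to all indices via the localization-center and particle-spacing estimates, invoking \Cref{concentrationh3} with $F\equiv 1$, $G=\chi'$, $f\in\{\mathfrak{l},|\mathfrak{l}|\}$, and closing with \Cref{beta0theta}. The only noteworthy (if minor) difference is in the final step, where you explicitly compute $\int|\mathfrak{l}(\lambda-\Lambda_j)|\varrho(\lambda)\,d\lambda=O(\log\log N)$ and absorb it with $\varepsilon=(\log N)^{-1}$, whereas the paper's \eqref{integral0} simply cites a constant bound on the corresponding sum; your version is slightly more careful on this point.
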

		
		\begin{proof}

			We begin by introducing several events on whose intersection we will be able to verify \eqref{sumchi}. Recalling \Cref{adelta}, define the events $\mathsf{E}_1 = \mathsf{BND}_{\bm{L}} (\log N)$ and
			\begin{flalign}
				\label{e16} 
				\begin{aligned} 
				\mathsf{E}_2 & = \bigcap_{j = \ell}^{m} \Bigg\{ \bigg| 2 \displaystyle\sum_{k=N_1}^{N_2} \mathfrak{l} (\Lambda_j - \Lambda_k ) \cdot \chi' ( Q_{\varphi_0^{-1} (j)} (s) - Q_{\varphi_0^{-1} (k)} (s) ) \\
				& \qquad \qquad \qquad \qquad \qquad \qquad - 2 \alpha^{-1} \displaystyle\int_{-\infty}^{\infty} \mathfrak{l} (\Lambda_j - x) \varrho (x) d x \bigg| \le B \mathfrak{M}^{-1/2} (\log N)^{16} \Bigg\}; \\
				\mathsf{E}_3 & = \bigcap_{j=\ell}^{m} \Bigg\{ \bigg| 2 \displaystyle\sum_{k=N_1}^{N_2} | \mathfrak{l} (\Lambda_j-\Lambda_k) \cdot \chi' ( Q_{\varphi_0^{-1} (j)} (s) - Q_{\varphi_0^{-1} (k)} (s) ) | \\ 
				& \qquad \qquad \qquad \qquad \qquad \qquad - 2 \alpha^{-1} \displaystyle\int_{-\infty}^{\infty} \big| \mathfrak{l} (\Lambda_j-x) \big| \varrho (x) dx \bigg| \le B \mathfrak{M}^{-1/2} (\log N)^{16} \Bigg\}; \\
				\mathsf{E}_4 & = \bigcap_{j=1}^{N} \{ | \varphi_j (s) - \varphi_j (0) | \le T (\log N)^2 \}; \qquad \mathsf{E}_5 = \bigcap_{\substack{i,j \in \llbracket N_1, N_2 \rrbracket \\ |i-j| \ge T(\log N)^5}} \bigg\{ | q_i (s) - q_j (s) | \ge \displaystyle\frac{|\alpha|}{2} \cdot |i-j| \bigg\}; \\
				\mathsf{E}_6 & = \bigcap_{i, j \in \llbracket \ell, m \rrbracket} \big\{ \big| q_{\varphi_0^{-1} (j)} (s) - q_{\varphi_0^{-1} (i)} (s) - \alpha (\varphi_0^{-1} (j) -\varphi_0^{-1} (i) ) \big| \le | \varphi_0^{-1}(i) - \varphi_0^{-1}(j) |^{1/2} \cdot (\log N)^2 \big\}.
				\end{aligned} 
			\end{flalign}
			
			\noindent Observe by \Cref{l0eigenvalues} (and \Cref{centert}, with \eqref{nk1k2}, to verify \eqref{js}) that $\mathsf{E}_1$ is overwhelmingly probable. By the $(F,G; A,B,D,S) = (1, \chi'; 1, (\log N)^3, B\mathfrak{M}^{-1}, \mathfrak{M})$ and $f \in \{ \mathfrak{l}, |\mathfrak{l}|\}$ cases of \Cref{concentrationh3} (with \Cref{mchi} to verify its hypotheses), and the fact that 
			\begin{flalign*}
				\displaystyle\int_{-\infty}^{\infty} \chi' (\alpha q) dq = \alpha^{-1} \cdot (\chi (\mathfrak{M}) - \chi (-\mathfrak{M})) = \alpha^{-1},
			\end{flalign*} 
			
			\noindent $\mathsf{E}_2 \cap \mathsf{E}_3$ is overwhelmingly probable. By \Cref{centert}, $\mathsf{E}_4$ is overwhelmingly probable and, by \Cref{qijsalpha}, $\mathsf{E}_5$ is also overwhelmingly probable. By \Cref{qijsalpha}, with the fact that on $\mathsf{E}_4$ we have $\varphi_s (\varphi_0^{-1} (k)) \in \llbracket N_1 + T(\log N)^3, N_2 - T(\log N)^3 \rrbracket$ for $k \in \llbracket \ell, m \rrbracket$ (by \eqref{nk1k2}, as $|\varphi_s (\varphi_0^{-1} (k)) - k| \le 2T (\log N)^2$ for such $k$ on $\mathsf{E}_4$), $\mathsf{E}_6$ is overwhelmingly probable. Therefore, by a union bound, we restrict to $\mathsf{E} = \bigcap_{i=1}^6 \mathsf{E}_i$ below.
			
			First observe if $\ell + T^2 \le j \le m - T^2$ and $k \notin \llbracket \ell, m \rrbracket$ that 
			\begin{flalign}
				\label{qjqk} 
				| Q_{\varphi_0^{-1}(j)} (s) - Q_{\varphi_0^{-1} (k)} (s) | \ge \displaystyle\frac{\alpha}{4} \cdot T^2  > \mathfrak{M},
			\end{flalign}
			
			\noindent where in the first inequality we used the fact \eqref{qjs2} that $Q_{\varphi_0^{-1} (i)} (s) = q_{\varphi_s (\varphi_0^{-1} (i))} (s)$, our restriction to the event $\mathsf{E}_5$, and the bound 
			\begin{flalign*} 
				|\varphi_s(\varphi_0^{-1} (j)) - \varphi_s (\varphi_0^{-1} (k)) | \ge |j-k| - 4T (\log N)^2 \ge T^2 - 4T(\log N)^2 \ge \displaystyle\frac{1}{2} \cdot T^2,
			\end{flalign*} 
			
			\noindent which holds  by our restriction to the event $\mathsf{E}_4$, the fact that $|j-k| \ge T^2$ for $j \in \llbracket \ell + T^2, m - T^2 \rrbracket$ and $k \notin \llbracket \ell, m \rrbracket$, and \eqref{testimateb}; in the second inequality, we used the fact (recall \Cref{mchi}) that $\mathfrak{M} \le T$. It follows for $\ell + T^2 \le j \le m - T^2$ that 
			\begin{flalign}
				\label{sjjlower} 
				\begin{aligned} 
					|S_{jj}| & = \Bigg| 2 \displaystyle\sum_{k=\ell}^m \mathfrak{l} (\Lambda_j-\Lambda_k) \cdot \chi' (\mathfrak{Q}_j-\mathfrak{Q}_k) + 1 \Bigg| - 10B \mathfrak{M}^{-1} (\log N)^2  \\
					&  \ge \Bigg| 2 \displaystyle\sum_{k=\ell}^{m} \mathfrak{l} (\Lambda_j - \Lambda_k) \cdot \chi' \big( Q_{\varphi_0^{-1} (j)} (s) - Q_{\varphi_0^{-1}(k)} (s) \big) + 1 \Bigg|  \\
					& \qquad \qquad  - 10B \mathfrak{M}^{-2} (\log N)^2 \cdot 2B^{-1} \mathfrak{M} (\log N)^{-10} \cdot \mathfrak{M} (\log N)^5 - 10B \mathfrak{M}^{-1} (\log N)^2 \\
					&  \ge \Bigg| 2 \displaystyle\sum_{k=N_1}^{N_2} \mathfrak{l} (\Lambda_j - \Lambda_k) \cdot \chi' \big( Q_{\varphi_0^{-1} (j)} (s) - Q_{\varphi_0^{-1}(k)} (s) \big) + 1 \Bigg|  \\
					& \qquad \qquad  - 10B \mathfrak{M}^{-2} (\log N)^2 \cdot 2B^{-1} \mathfrak{M} (\log N)^{-10} \cdot \mathfrak{M} (\log N)^5 - 10B \mathfrak{M}^{-1} (\log N)^2 \\
					& \ge \Bigg| 2 \alpha^{-1} \displaystyle\int_{-\infty}^{\infty} \mathfrak{l} (\Lambda_j-x) \varrho (x) dx + 1 \Bigg| - (\log N)^{-2} \ge 2 \alpha^{-1} \displaystyle\int_{-\infty}^{\infty} |\mathfrak{l} (\Lambda_j - x)| \varrho(x) + \displaystyle\frac{1}{4},
				\end{aligned} 
			\end{flalign}
			
			\noindent where in the first statement we used \eqref{sij2}; in the second we used \eqref{lambdajqj}, the fact that $\mathfrak{l} (\Lambda_j-\Lambda_k) \le 5 (\log N)^2$ (by our restriction to $\mathsf{E}_1$), the fact that $|\chi'' (q)| \le B\mathfrak{M}^{-2}$ for all $q \in \mathbb{R}$ (by the second statement of \Cref{mchi}), and the fact that there are at most $\mathfrak{M} (\log N)^5$ indices $k \in \llbracket N_1, N_2 \rrbracket$ for which $|Q_{\varphi_0^{-1}(j)}(s) - Q_{\varphi_0^{-1}(k)}(s)| \le \mathfrak{M}$ (by our restriction to $\mathsf{E}_5 \cap \mathsf{E}_6$) and thus $\chi' (Q_{\varphi_0^{-1}(j)}(s)-Q_{\varphi_0^{-1}(k)}(s)) \ne 0$; in the third we used with \eqref{qjqk} and the fact that $\supp \chi' \subseteq [-\mathfrak{M},\mathfrak{M}]$ to sum over all $k \in \llbracket N_1, N_2 \rrbracket$ (and not only over $k \in \llbracket \ell, m \rrbracket$); and in the fourth we used our restriction to $\mathsf{E}_2$, with \eqref{testimateb}; and in the fourth we used \Cref{beta0theta}, with the fact that $\theta \in (0, \theta_0)$.
					
			Reversing the reasoning in \eqref{sjjlower} (and using our restriction to $\mathsf{E}_3$ in place of that to $\mathsf{E}_2$), we find for $\varepsilon = (\log N)^{-1}$ that  
			\begin{flalign}
				\label{integral0} 
				\begin{aligned} 
				2 \alpha^{-1} \displaystyle\int_{-\infty}^{\infty} |\mathfrak{l}(\Lambda_j-x)| \varrho (x) dx + \displaystyle\frac{1}{4} & \ge 2 \displaystyle\sum_{k=N_1}^{N_2} |\mathfrak{l}(\Lambda_j-\Lambda_k)| \cdot \chi' (Q_{\varphi_0^{-1} (j)} (s) - Q_{\varphi_0^{-1} (k)} (s) ) + \displaystyle\frac{1}{5} \\ 
				& \ge 2 \displaystyle\sum_{k=N_1}^{N_2} |\mathfrak{l}(\Lambda_j-\Lambda_k)| \cdot \chi' (\mathfrak{Q}_j-\mathfrak{Q}_j) + \displaystyle\frac{1}{6} \\
				& \ge (2 + \varepsilon) \displaystyle\sum_{k=N_1}^{N_2} |\mathfrak{l}(\Lambda_j-\Lambda_k)| \cdot \chi' (\mathfrak{Q}_j-\mathfrak{Q}_j) + \varepsilon.
				\end{aligned}
			\end{flalign}
			
			\noindent Here, in the last bound we used the fact that there exists a constant $C>1$ such that
			\begin{flalign*}
				2 \displaystyle\sum_{k=N_1}^{N_2} |\mathfrak{l}(\Lambda_j-\Lambda_k)| \cdot \chi'(\mathfrak{Q}_j-\mathfrak{Q}_k) \le 2 \alpha^{-1} \displaystyle\int_{-\infty}^{\infty} |\mathfrak{l}(\Lambda_j-x)| \varrho(x) dx + \displaystyle\frac{1}{12} < C,
			\end{flalign*}
			
			\noindent where the first inequality holds by the first two bounds of \eqref{integral0} and the second holds since $\varrho$ is bounded and has exponential decay (by \Cref{rhoexponential}). Thus, if $\ell + T^2 \le j \le m - T^2$, then \eqref{sjjlower} and \eqref{integral0} confirm \eqref{sumchi} at $\varepsilon = (\log N)^{-1}$.
			
			 Now suppose that $j \notin \llbracket \ell+T^2,m-T^2 \rrbracket$. In this case, the third statement in \eqref{sjjlower} is no longer guaranteed to hold, as we do not necessarily have $\chi' (Q_{\varphi_0^{-1} (j)} (s) - Q_{\varphi_0^{-1} (k)} (s)) = 0$ for all $k \notin \llbracket \ell, m \rrbracket$. So, we instead make use of the last term of order $T^3$ on the right side of the definition \eqref{sij2} of $S_{ij}$.  To do so, first observe that there are at most $3T (\log N)^2$ indices $k \in \llbracket \ell, m \rrbracket$ for which $|\mathfrak{Q}_j - \mathfrak{Q}_k| \le \mathfrak{M}$ (and thus for which $\chi' (\mathfrak{Q}_j - \mathfrak{Q}_k) \ne 0$), as for $|\varphi_0^{-1}(j)-\varphi_0^{-1}(i)| \ge T (\log N)^5$ we have by \eqref{lambdajqj} and our restriction to $\mathsf{E}_5$ that 
			 \begin{flalign*}
			 	|\mathfrak{Q}_j-\mathfrak{Q}_k| & \ge | Q_{\varphi_0^{-1}(j)} (s) - Q_{\varphi_0^{-1}(k)}(s) | - \mathfrak{M}  \ge \displaystyle\frac{|\alpha|}{2} \cdot| \varphi_0^{-1} (j)-\varphi_0^{-1}(i) |  - \mathfrak{M} \ge \displaystyle\frac{|\alpha|}{2} \cdot T (\log N)^4 > \mathfrak{M}.
			 \end{flalign*}
			 
			 \noindent Therefore,
			\begin{flalign*}
				S_{jj} & \ge T^3 -  2  \displaystyle\sum_{k=\ell}^{m} |\mathfrak{l} (\Lambda_j-\Lambda_k)| \cdot \chi' (\mathfrak{Q}_j-\mathfrak{Q}_k) \\
				& \ge T^3 - 5 (\log N)^2 \cdot 3T (\log N)^5   \\
				& \ge \displaystyle\frac{1}{2} \cdot T^3 \ge 3 \cdot 5 (\log N)^2 \cdot 3T(\log N)^5 \cdot T + 1  \ge 3 \displaystyle\sum_{k=\ell}^{m} |\mathfrak{l}(\Lambda_j-\Lambda_k)| \cdot \chi'(\mathfrak{Q}_j-\mathfrak{Q}_k) + 1,
			\end{flalign*}
			
			\noindent where the first bound holds by \eqref{sij2}; the second and fifth hold by the bounds $|\mathfrak{l}(\Lambda_j-\Lambda_k)| \le 5 (\log N)^2$ (by our restriction to $\mathsf{E}_1$) and $|\chi'(\mathfrak{Q}_j-\mathfrak{Q}_k)| \le B \mathfrak{M}^{-1} \le 1$ (in view of \eqref{testimateb}); and the third and fouth by \eqref{testimateb} and the fact that $N$ is sufficiently large. This again verifies \eqref{sumchi} at $\varepsilon = 1 > (\log N)^{-1}$, thereby establishing the lemma.
		\end{proof}

		\begin{rem} 
			
			\label{smatrixchi} 
			
			Our reason for imposing $\theta \le \theta_0$ below will be to apply \Cref{qjt}. By \Cref{sk1k2inverse}, this verifies that, when the $(\mathfrak{Q}_k)$ are close to the $(Q_{\varphi_0^{-1} (k)})$, the matrix $\bm{S}$ is likely invertible, with eigenvalues essentially bounded away from $0$. One might hope that this statement more generally holds, for some choice of $\chi$ satisfying \Cref{mchi}, without imposing the constraint \eqref{sumchi}. If this were true, we would not need to assume that $\theta$ is sufficiently small in the below analysis.  
			
		\end{rem}

	\section{Proxy Dynamics and Comparisons}
	
	\label{DynamicsQ}
	
	In this section we use the results of \Cref{InverseAsymptotic} to prescribe certain dynamics $(\mathfrak{Q}_k (t))$, which (as mentioned in \Cref{Estimate2}) will serve as a ``proxy'' for the eigenvalue location dynamics $(Q_k (t))$. We first define these proxy dynamics by \Cref{qjt2} in \Cref{Linear0}; we then prove that they are close to the $(Q_j(t))$ dynamics through \Cref{q2q}, in \Cref{NearQ} and \Cref{ProofW}. Throughout this section, we adopt \Cref{chi2}. 
	
	\subsection{Proxy Dynamics}
	
	\label{Linear0}

	As indicated following \Cref{chi2}, we would like to proceed by differentiating the asymptotic scattering relation \eqref{lambdaq2} in $t$. However, the error on the right side of that bound is not necessarily differentiable. So we will instead introduce a ``proxy'' dynamic $(\mathfrak{Q}_j(t))$, in which that error is not present, and show that they are close to $( Q_j (t) )$. We will not be able to verify this approximation for the extreme (first and last few) indices, so it will be convenient to define $( \mathfrak{Q}_j (t) )$ on time intervals of the form $t \in [i,i+1]$, reducing the number of indices $j$ each time $i$ increments. Recall we adopt \Cref{chi2} thoughout.

 	\begin{definition}
 		
 		\label{qjt2} 
 		
 		For each integer $i \in \llbracket 0, T \rrbracket$, denote $\ell_i = k_1 + i T^3$ and $m_i = k_2 - iT^3$. Further define $\bm{\Lambda} = (\Lambda_{k_1}, \Lambda_{k_1+1}, \ldots , \Lambda_{k_2})$ by $\Lambda_j = \lambda_{\varphi_0^{-1} (j)}$ for each $j \in \llbracket k_1, k_2 \rrbracket$. Then define the $N$-tuple $\bm{\mathfrak{Q}}(s) = ( \mathfrak{Q}_{N_1} (s), \mathfrak{Q}_{N_1+1} (s), \ldots , \mathfrak{Q}_{N_2} (s) ) \in \mathbb{R}^N$ as follows. At $s = 0$, set $\mathfrak{Q}_j (0) = Q_{\varphi_0^{-1} (j)} (0) = q_j (0)$ for each $j \in \llbracket N_1, N_2 \rrbracket$. Now suppose $\bm{\mathfrak{Q}}(s)$ has been defined for all $s \in [0, i]$, for some index $i \in \llbracket 0, T \rrbracket$. 
 		
 			\begin{enumerate} 
 				\item If $j \notin \llbracket \ell_i, m_i \rrbracket$, then set $\mathfrak{Q}_j (s) = \mathfrak{Q}_j (i)$ for each $s \ge i$.   
 				\item If instead $j \in \llbracket \ell_i, m_i \rrbracket$, then for $s \in [i, i+1]$ let $\mathfrak{Q}_j (s)$ be the unique solution to the system of ordinary differential equations 
 		\begin{flalign}
 			\label{derivativeqjs}
 			\mathfrak{Q}_j' (s) = \displaystyle\sum_{k=\ell_i}^{m_i} R_{jk;s}^{\llbracket \ell_i, m_i \rrbracket} \cdot \Lambda_k,
 		\end{flalign}
 		
 		\noindent where $\bm{R}_s^{\llbracket \ell_i, m_i \rrbracket} = [ R_{jk;s}^{\llbracket \ell_i, m_i \rrbracket} ] = ( \bm{S}_{\bm{\Lambda}; \bm{\mathfrak{Q}}(s)}^{\llbracket \ell_i, m_i \rrbracket} )^{-1}$, with initial data $\mathfrak{Q}_j (i)$ determined by the fact that $\mathfrak{Q}_j (s)$ has been defined at $s=i$. Here, we assume under the above definition that $\bm{S}_{\bm{\Lambda};\bm{\mathfrak{Q}}(r)}^{\llbracket \ell_i, m_i \rrbracket}$ is invertible if $r \in [i, i+1]$, so that the Picard--Lindel\"{o}f theorem guarantees that \eqref{derivativeqjs} has a unique solution. If $\bm{S}_{\bm{\Lambda};\bm{\mathfrak{Q}}(r)}^{\llbracket \ell_i, m_i \rrbracket}$ is not invertible for some $r \in [i, i+1]$, we set $\mathfrak{Q}_j (s) = \mathfrak{Q}_j (i)$ whenever $s \in [i, i+1]$. 
 		
 		\end{enumerate} 
 		
 	\end{definition} 
 	
 	\begin{rem} 
 		
 		\label{derivative2qjs} 
 		
 		Observe that \eqref{derivativeqjs} is equivalent, upon multplying by $\bm{S}_{\bm{\Lambda}; \bm{\mathfrak{Q}}(s)}^{\llbracket \ell_i, m_i \rrbracket}$ (using the facts that $\mathfrak{l}$ and $\chi'$ are even) to the equation 
 		\begin{flalign}
 			\label{derivativeqjs2} 
 			\begin{aligned} 
 			\Lambda_j & = \mathfrak{Q}_j' (s) \cdot \Bigg( 2 \displaystyle\sum_{k=\ell_i}^{m_i} \mathfrak{l}(\Lambda_j-\Lambda_k) \cdot \chi' (\mathfrak{Q}_j (s) -\mathfrak{Q}_k (s) ) + 1 + T^3 \cdot (\mathbbm{1}_{j \le \ell_i +T^2} + \mathbbm{1}_{j \ge m_i - T^2}) \Bigg) \\
 			& \qquad - 2 \displaystyle\sum_{k=\ell_i}^{m_i} \mathfrak{Q}_k' (s) \cdot \mathfrak{l}(\Lambda_j-\Lambda_k) \cdot \chi' ( \mathfrak{Q}_j (s) -\mathfrak{Q}_k (s) ). 
 			\end{aligned} 
 		\end{flalign} 
 	\end{rem}

 	\subsection{Comparison Between $\bm{Q}$ and $\bm{\mathfrak{Q}}$}
 	
 	\label{NearQ} 
 	
 	 In this section we establish the following proposition indicating that $\mathfrak{Q}_j (s)$ approximates $Q_{\varphi_0^{-1}(j)} (s)$ with high probability. Recall we adopt \Cref{chi2} (and use the notation from \Cref{qjt2}) throughout.
 	 
 	\begin{prop}
 		
 		\label{q2q} 
 		
 		The following holds with overwhelming probability. Let $i \in \llbracket 0, T \rrbracket$ be an integer, and let $\theta_0 = \theta_0 (\beta) > 0$ denote the real number from \Cref{beta0theta}; assume that $\theta \in (0, \theta_0)$. 
 		
 		\begin{enumerate} 
 			\item For any $r \in [i,i+1]$, the matrix $\bm{S}_{\bm{\Lambda};\bm{\mathfrak{Q}}(r)}^{\llbracket \ell_i, m_i \rrbracket}$ is invertible.
 			\item For any $j \in \llbracket \ell_i, m_i \rrbracket$ and $t \in [0, i+1]$, we have $| \mathfrak{Q}_j (t) - Q_{\varphi_0^{-1} (j)} (t) | \le B \mathfrak{M}^{1/2} (\log N)^{20}$ and $|\mathfrak{Q}_j'(t)| \le (\log N)^3$.
 		\end{enumerate} 
 	\end{prop}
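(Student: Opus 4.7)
I will prove both statements simultaneously by strong induction on $i \in \llbracket 0, T \rrbracket$, with inductive hypothesis that for all $s \in [0,i]$ and $j \in \llbracket \ell_i, m_i \rrbracket$ we have $|\mathfrak{Q}_j(s) - Q_{\varphi_0^{-1}(j)}(s)| \le B\mathfrak{M}^{1/2}(\log N)^{20}$ and that $\bm{S}^{\llbracket \ell_k, m_k \rrbracket}_{\bm{\Lambda}; \bm{\mathfrak{Q}}(r)}$ is invertible for $k \le i-1$ and $r \in [k,k+1]$. The base case $i=0$ is immediate from $\mathfrak{Q}_j(0) = Q_{\varphi_0^{-1}(j)}(0)$. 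For the inductive step, \eqref{testimateb} ensures $B\mathfrak{M}^{1/2}(\log N)^{20} \le (\log N)^{-10} B^{-1}\mathfrak{M}$, so a standard first-exit-time argument lets me assume the hypothesis \eqref{lambdajqj} of \Cref{qjt} persists throughout $[i, i+1]$ as long as the comparison bound has not been first violated. By \Cref{qjt} (using $\theta < \theta_0$), this yields the diagonal dominance \eqref{sumchi} with $\varepsilon = (\log N)^{-1}$, and \Cref{sk1k2inverse} then gives invertibility of $\bm{S}^{\llbracket \ell_i, m_i \rrbracket}_{\bm{\Lambda};\bm{\mathfrak{Q}}(r)}$ on all of $[i, i+1]$, establishing claim (1) and supplying the inverse estimate \eqref{wivi}. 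Consequently, the ODE \eqref{derivativeqjs} is genuinely in force on $[i, i+1]$, rather than being overridden by the freezing default in \Cref{qjt2}.

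To bound $W_j(t) := \mathfrak{Q}_j(t) - Q_{\varphi_0^{-1}(j)}(t)$ for $t \in [i,i+1]$, I would integrate the equivalent form \eqref{derivativeqjs2} piecewise across $s \in [0, t]$, using the telescoping identity $\int_a^b (\mathfrak{Q}_j'(s) - \mathfrak{Q}_h'(s)) \cdot \chi'(\mathfrak{Q}_j(s) - \mathfrak{Q}_h(s))\, ds = \chi(\mathfrak{Q}_j(b) - \mathfrak{Q}_h(b)) - \chi(\mathfrak{Q}_j(a) - \mathfrak{Q}_h(a))$. Contributions from indices $h$ that become inactive before step $i$ lie at $\mathfrak{Q}$-distance $\gg \mathfrak{M}$ from $\mathfrak{Q}_j$ by \Cref{qijsalpha} together with the inductive closeness bound, and hence contribute zero via $\chi'$. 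The resulting identity mirrors the regularized scattering relation \Cref{ztlambda} applied to $Q_{\varphi_0^{-1}(j)}(t)$, but with no error term. Subtracting \Cref{ztlambda} from it and Taylor expanding each difference $\chi(\mathfrak{Q}_j - \mathfrak{Q}_h) - \chi(Q_{\varphi_0^{-1}(j)} - Q_{\varphi_0^{-1}(h)})$ via the mean value theorem linearizes the system into $\bm{S}^\ast \bm{W}(t) = \bm{E}(t)$, where $\bm{S}^\ast$ has entries evaluated at mean-value intermediate $\mathfrak{Q}$-points (still inside the \Cref{qjt} neighborhood), and $\bm{E}(t)$ absorbs the $\mathcal{O}(B\mathfrak{M}^{1/2}(\log N)^{16})$ error from \Cref{ztlambda} plus a second-order Taylor remainder of size $\mathcal{O}(B\mathfrak{M}^{-2}\|\bm{W}\|_\infty^2)$, which is negligible by the inductive bound.

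Applying \eqref{wivi} with $v = \bm{E}(t)$ and large $U$ then yields $\|\bm{W}(t)\|_\infty \le (\log N) \cdot \mathcal{O}(B\mathfrak{M}^{1/2}(\log N)^{16}) \le B\mathfrak{M}^{1/2}(\log N)^{20}$, which closes the induction. The derivative bound $|\mathfrak{Q}_j'(t)| \le (\log N)^3$ follows directly from \eqref{derivativeqjs} by invoking \eqref{wivi} with $v_k = \Lambda_k$ and using $|\Lambda_k| \le \log N$ on $\mathsf{BND}_{\bm{L}}(\log N)$, which is overwhelmingly probable by \Cref{l0eigenvalues}.

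The main obstacle is the boundary indices $j \in \llbracket \ell_i, \ell_i + T^2 \rrbracket \cup \llbracket m_i - T^2, m_i \rrbracket$, where $S_{jj}$ carries the extra $T^3$-term in \eqref{sij2}. For these, the integrated identity does not precisely match \Cref{ztlambda}, since the integration generates an unmatched contribution of the form $T^3 \cdot (\mathfrak{Q}_j(t) - \mathfrak{Q}_j(i))$; this forces the boundary components of $\bm{W}$ to behave differently from the bulk, and the exponential decay portion of \eqref{wivi} (the term involving $e^{-\varepsilon U/8}$) must be used carefully to ensure these boundary anomalies do not propagate inward on subsequent iterations. Tracking the logarithmic exponents through up to $T$ iterations, while verifying that the $B\mathfrak{M}^{1/2}(\log N)^{20}$ budget is never exhausted, is the principal technical burden.
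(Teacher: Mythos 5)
Your proposal is correct and follows essentially the same route as the paper: a bootstrap/continuity argument (the paper formalizes your first-exit-time idea via the stopping time $\mathfrak{T}$), using \Cref{qjt} plus \Cref{sk1k2inverse} for invertibility, then integrating \eqref{derivativeqjs2} to obtain a scattering-type identity for $\mathfrak{Q}_j$, subtracting \Cref{ztlambda}, Taylor-expanding $\chi$, and inverting via the exponential decay in \eqref{wivi}. The boundary issue you flag is exactly what the paper addresses by working on the nested interval $\llbracket \ell_i', m_i' \rrbracket = \llbracket \ell_i - \lfloor T^3/2 \rfloor, m_i + \lfloor T^3/2 \rfloor \rrbracket$, where the $T^3$ term vanishes and where $\chi'$ contributions from the outer boundary of $\llbracket \ell_{i_0}, m_{i_0} \rrbracket$ are zero by distance considerations; for the remaining boundary indices the residual is bounded crudely by $T^4$ and killed by the $e^{-\varepsilon U/8}$ factor in \eqref{wivi}.
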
 
 	 		
 		We begin by, for each $j \in \llbracket N_1, N_2 \rrbracket$ and $s \in [0, T]$, setting
 		\begin{flalign}
 			\label{wjs} 
 			w_j (s) = \mathfrak{Q}_j (t) - Q_{\varphi_0^{-1} (j)} (t).
 		\end{flalign} 
 		
 		\noindent Further let $\mathfrak{T}$ be the supremum over all $S \in [0, T]$ such that we have both
 		\begin{flalign}
 			\label{qjderivativen10}
 			\begin{aligned}
 			& |\mathfrak{Q}_j' (t)| \le (\log N)^3, \qquad \qquad  \quad \text{for each $t \in (0, S)$ and $j \in \llbracket \ell_{\lfloor S  \rfloor}, m_{\lfloor S  \rfloor} \rrbracket$}; \\
 			& |w_j (t)| \le B \mathfrak{M}^{1/2} (\log N)^{20}, \qquad \text{for each $t \in (0, S)$ and $j \in \llbracket \ell_{\lfloor S  \rfloor}, m_{\lfloor S  \rfloor} \rrbracket$}.
 			\end{aligned} 
 		\end{flalign} 
 		
 		\noindent The following two results quickly imply \Cref{q2q}. In the below, for any $r \in [0, T]$, we write $\bm{S}_{\bm{\Lambda};\bm{\mathfrak{Q}}(r)}^{\llbracket \ell_i, m_i \rrbracket} = [S_{ij} (r)]$ if the index $i \in \llbracket 0, T \rrbracket$ is given.
 		
 		\begin{lem} 
 			
 		\label{sr}  
 		
 		The following holds with overwhelming probability. For any $i \in \llbracket 0, \mathfrak{T}  \rrbracket$, we have 
			\begin{flalign}
				\label{sjjs2} 
				S_{jj} (s) \ge (1+(2 \log N)^{-1} ) \displaystyle\sum_{\substack{k \in \llbracket \ell_i, m_i \rrbracket \\ k \ne j}} |S_{jk} (s)| + (2\log N)^{-1}, \quad \text{for all $(j,s) \in \llbracket \ell_i, m_i \rrbracket \times [0, \mathfrak{T}]$.}
			\end{flalign} 
			
			\noindent Furthermore, the matrix $\bm{S}_{\bm{\Lambda};\bm{\mathfrak{Q}}(s)}^{\llbracket \ell_i,m_i\rrbracket}$ is invertible for each $s \in [0, \mathfrak{T}]$.
		
 		\end{lem}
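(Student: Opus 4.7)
The plan is to reduce \eqref{sjjs2} to a pointwise-in-$s$ application of \Cref{qjt}, in which the defining bound \eqref{qjderivativen10} of $\mathfrak{T}$ supplies the hypothesis \eqref{lambdajqj}, and to then infer invertibility from strict diagonal dominance.

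First I would verify \eqref{lambdajqj}. For any $s \in [0, \mathfrak{T}]$ and $j \in \llbracket \ell_{\lfloor s\rfloor}, m_{\lfloor s\rfloor}\rrbracket$, the second bound in \eqref{qjderivativen10} gives $|\mathfrak{Q}_j(s) - Q_{\varphi_0^{-1}(j)}(s)| \le B\mathfrak{M}^{1/2}(\log N)^{20}$, which by the constraint $\mathfrak{M} \ge B^4(\log N)^{60}$ from \eqref{testimateb} is bounded by $(\log N)^{-10} B^{-1}\mathfrak{M}$. Since $\Lambda_j = \lambda_{\varphi_0^{-1}(j)}$ by construction in \Cref{qjt2}, \Cref{qjt} applies, yielding \eqref{sumchi} with $\varepsilon = (\log N)^{-1}$.

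Next I would translate \eqref{sumchi} to \eqref{sjjs2}. By \eqref{sij2} we have $|S_{jk}(s)| = 2|\mathfrak{l}(\Lambda_j-\Lambda_k)\chi'(\mathfrak{Q}_j(s)-\mathfrak{Q}_k(s))|$ for $k \ne j$, so the sum over $i \ne j$ on the right-hand side of \eqref{sumchi} is exactly $\tfrac{1}{2}\sum_{k \ne j}|S_{jk}(s)|$; the $i=j$ term $(2+\varepsilon)|\mathfrak{l}(0)\chi'(0)| \ge 0$ only helps, and rearrangement yields $|S_{jj}(s)| \ge (1+(2\log N)^{-1})\sum_{k\ne j}|S_{jk}(s)| + (\log N)^{-1}$, which implies \eqref{sjjs2}. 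That $S_{jj}(s)$ is in fact positive (and hence equals $|S_{jj}(s)|$) is transparent from \eqref{sij2}: for $j \notin \llbracket \ell_i+T^2, m_i-T^2\rrbracket$ the $T^3$ correction dominates, while for interior $j$ the main term $2\alpha^{-1}\int \mathfrak{l}(\Lambda_j - x)\varrho(x)\,dx + 1$, to which $S_{jj}(s)$ concentrates via the argument of \Cref{qjt}, is positive whenever $\theta$ is sufficiently small (since $\alpha \to +\infty$ as $\theta \to 0^+$, while the integral remains bounded).

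The one genuine obstacle is uniformity in $s$, since \Cref{qjt} is pointwise. I would resolve this by (i) taking a union bound over an $N^{-10}$-mesh of times in $[0, T]$, which costs only a polynomial factor absorbed into ``overwhelming probability,'' and (ii) using continuity: on $[0, \mathfrak{T}]$ the first bound in \eqref{qjderivativen10} forces each $\mathfrak{Q}_j(s)$ to be Lipschitz with constant $(\log N)^3$, and $\|\chi''\|_\infty \le B\mathfrak{M}^{-2}$ by \Cref{mchi}, so the entries $S_{ij}(s)$ have Lipschitz constant much smaller than the slack $(\log N)^{-1}$ in \eqref{sumchi}. Strict diagonal dominance then forces $\bm{S}_{\bm{\Lambda};\bm{\mathfrak{Q}}(s)}^{\llbracket \ell_i,m_i\rrbracket}$ to be invertible (for instance by Gershgorin's theorem, using symmetry of $\bm{S}$), completing the proof.
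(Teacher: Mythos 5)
Your proposal is correct and follows essentially the same route as the paper's proof: both verify the hypothesis \eqref{lambdajqj} of \Cref{qjt} directly from the defining bounds \eqref{qjderivativen10} of $\mathfrak{T}$ (via $\mathfrak{M}\ge B^4(\log N)^{60}$), both apply \Cref{qjt} with a union bound over a polynomial mesh of times, both extend to all $s\in[0,\mathfrak{T}]$ by the Lipschitz estimate on $S_{jk}(s)$ coming from $\|\chi''\|_\infty\le B\mathfrak{M}^{-2}$ and $|\mathfrak{Q}_j'|\le(\log N)^3$, and both conclude invertibility from strict diagonal dominance (the paper cites \Cref{sk1k2inverse} where you cite Gershgorin, which amounts to the same thing). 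The paper chooses a finer $N^{-20}$-mesh and records the mesh-point event with constant $1+(\log N)^{-1}$ before degrading to $1+(2\log N)^{-1}$; your bookkeeping, which derives $1+(2\log N)^{-1}$ directly from \eqref{sumchi} with $\varepsilon=(\log N)^{-1}$ and spends the extra $(2\log N)^{-1}$ of slack in the constant term on the mesh error, is equivalent. One small point you handle more explicitly than the paper does: the statement of \eqref{sjjs2} features $S_{jj}(s)$ rather than $|S_{jj}(s)|$, and you correctly observe that positivity of $S_{jj}$ follows either from the $T^3$ boundary term or, for interior $j$, from the concentration around $2\alpha^{-1}\int\mathfrak{l}\varrho+1\approx 1$ with $\alpha>0$ large when $\theta$ is small.
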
 
 		
 		\begin{prop} 
 			
 		\label{tt}

 		With overwhelming probability, we have $\mathfrak{T} = T$. 
 		
 		\end{prop}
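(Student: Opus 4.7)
The plan is a bootstrap/continuity argument. Suppose for contradiction that $\mathfrak{T} < T$; by the maximality in its definition, at least one of the two bounds in \eqref{qjderivativen10} is tight at $s = \mathfrak{T}$. I will show, on an overwhelmingly probable event (the intersection of those from \Cref{sr}, \Cref{ztlambda}, \Cref{qijsalpha}, \Cref{centerdistance}, \Cref{centert}, and \Cref{l0eigenvalues}), that on $[0, \mathfrak{T}]$ both bounds in fact hold with strict improvement. By continuity of $\mathfrak{Q}_j$ in $s$ this will force them to persist slightly past $\mathfrak{T}$, contradicting maximality. Throughout, \Cref{sr} provides the strict diagonal dominance \eqref{sumchi} with $\varepsilon = (2\log N)^{-1}$, enabling the use of \Cref{sk1k2inverse}.

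The central step is to derive a matrix equation $\bm{S}_{\bm{\Lambda}; \bm{\mathfrak{Q}}(t)}^{\llbracket \ell_i, m_i \rrbracket} \bm{w}(t) = \bm{e}(t)$ for the discrepancy $\bm{w}(t) = (w_j(t))_{j \in \llbracket \ell_i, m_i \rrbracket}$, with $t \in [i, (i+1) \wedge \mathfrak{T}]$. Integrate \eqref{derivativeqjs2} over $[0, t]$ via the fundamental theorem of calculus applied to $s \mapsto \chi(\mathfrak{Q}_j(s) - \mathfrak{Q}_k(s))$; for indices $k$ that are frozen outside the active window, \Cref{qijsalpha} together with the $T^3$-wide buffer in \eqref{nk1k2} forces $|\mathfrak{Q}_j(s) - \mathfrak{Q}_k(s)| > \mathfrak{M}$ uniformly in $s$, so $\chi(\mathfrak{Q}_j - \mathfrak{Q}_k)$ is constant in $s$ on each subinterval and telescopes cleanly. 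Thus $\mathfrak{Q}_j$ satisfies the integrated regularized scattering identity exactly, up to the $T^3$ boundary term. The same identity holds for $Q_{\varphi_0^{-1}(j)}(t)$ up to error $B\mathfrak{M}^{1/2}(\log N)^{16}$ by \Cref{ztlambda}, whose hypothesis \eqref{n1n2k02} follows from \eqref{nk1k2} and \Cref{centert}. Subtracting and Taylor expanding
\begin{flalign*}
\chi\bigl(\mathfrak{Q}_j(t) - \mathfrak{Q}_k(t)\bigr) - \chi\bigl(Q_{\varphi_0^{-1}(j)}(t) - Q_{\varphi_0^{-1}(k)}(t)\bigr) = \chi'\bigl(\mathfrak{Q}_j(t) - \mathfrak{Q}_k(t)\bigr) \cdot (w_j(t) - w_k(t)) + \mathcal{O}\bigl(B \mathfrak{M}^{-2}(w_j - w_k)^2\bigr)
\end{flalign*}
produces the claimed $\bm{S}\bm{w} = \bm{e}$ with $\|\bm{e}(t)\|_\infty \le 2 B \mathfrak{M}^{1/2}(\log N)^{16}$: the leading contribution is the scattering error, and the quadratic Taylor remainder, controlled using the bootstrap $|w_j| \le B \mathfrak{M}^{1/2}(\log N)^{20}$ over the $\mathcal{O}(\mathfrak{M}(\log N)^5)$ indices $k$ where $\chi'(\mathfrak{Q}_j - \mathfrak{Q}_k) \ne 0$, totals $\mathcal{O}(B^3(\log N)^{47})$, which is negligible under $\mathfrak{M} \ge B^4 (\log N)^{60}$.

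Applying \Cref{sk1k2inverse}(2) to $\bm{S}\bm{w}(t) = \bm{e}(t)$ with $U = \infty$ yields $|w_j(t)| \le \varepsilon^{-1}\|\bm{e}(t)\|_\infty \le 5 B \mathfrak{M}^{1/2}(\log N)^{17}$, strictly inside the bootstrap bound $B \mathfrak{M}^{1/2}(\log N)^{20}$. Similarly, the defining ODE \eqref{derivativeqjs} is equivalent (by \Cref{derivative2qjs}) to $\bm{S}_{\bm{\Lambda}; \bm{\mathfrak{Q}}(s)}^{\llbracket \ell_i, m_i \rrbracket} \bm{\mathfrak{Q}}'(s) = \bm{\Lambda}$; another application of \Cref{sk1k2inverse} combined with $|\Lambda_k| \le \log N$ from \Cref{l0eigenvalues} gives $|\mathfrak{Q}_j'(s)| \le 2(\log N)^2$, strictly inside $(\log N)^3$. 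The proxy trajectory $\mathfrak{Q}_j$ is continuous in $s$ (indices frozen at integer transitions are simply removed from the active set without jumping in position), so these strict inequalities propagate to a right neighborhood of $\mathfrak{T}$, giving the contradiction.

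The main obstacle I anticipate is the telescoping across the shrinking windows $\llbracket \ell_i, m_i \rrbracket$ as $i$ increments: the range of summation in the integrated scattering identity for $\mathfrak{Q}_j$ depends on $i$, while for $Q_{\varphi_0^{-1}(j)}$ it ranges over all of $\llbracket 1, N \rrbracket$. Reconciling these requires showing that contributions from indices $k$ that have been (or never were) active during a subinterval contribute vanishing or telescopically cancelling amounts of $\chi(\mathfrak{Q}_j - \mathfrak{Q}_k) - \chi(Q_{\varphi_0^{-1}(j)} - Q_{\varphi_0^{-1}(k)})$. The $T^3$ buffer in \eqref{sij2} and the geometric bounds of \Cref{qijsalpha} and \Cref{centert}, transferred from $Q$ to $\mathfrak{Q}$ via the bootstrap hypothesis on $w$, are designed precisely for this; but implementing the accounting cleanly is the most delicate piece of the argument.
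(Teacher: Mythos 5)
Your overall architecture (bootstrap/continuity argument, derive a linear system $\bm{S}\bm{w}(t) = \bm{e}(t)$ for the discrepancy by integrating the proxy ODE and comparing with the regularized scattering relation of \Cref{ztlambda}, then invert via \Cref{sk1k2inverse}) is the same as the paper's, which splits this into \Cref{estimateqderivative} and \Cref{wjt} before assembling \Cref{tt}. However, there is a genuine gap precisely in the piece you flag as ``the most delicate.''

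The problem is the application of \Cref{sk1k2inverse}(2) with $U = \infty$ and the claimed bound $\|\bm{e}(t)\|_\infty \le 2B\mathfrak{M}^{1/2}(\log N)^{16}$. For indices $j$ near the boundary of the active window --- precisely those for which the $T^3 \cdot (\mathbbm{1}_{j \le \ell_i + T^2} + \mathbbm{1}_{j \ge m_i - T^2})$ term in \eqref{sij2} is nonzero --- the error entry $e_j$ is \emph{not} of order $B\mathfrak{M}^{1/2}(\log N)^{16}$. Integrating \eqref{derivativeqjs2} over $[0,t]$ for such $j$ produces a term of order $T^3 \cdot |\mathfrak{Q}_j(t) - \mathfrak{Q}_j(i)|$, and in general the subtraction against the scattering relation does not cancel it; the best one can do crudely is $|e_j| \lesssim T^4$ (this is \eqref{wj3} in the paper). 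Taking $U = \infty$ in \eqref{wivi} gives $|w_j| \le \varepsilon^{-1}\max_k|e_k| = 2\log N \cdot T^4$, which blows the bootstrap. The paper avoids this by first enlarging to a window $\llbracket \ell_{i_0}, m_{i_0} \rrbracket$ that exceeds the active window $\llbracket \ell_i, m_i \rrbracket$ by roughly $T^3$ on each side (so that the clean scattering comparison \eqref{wj2} holds for $j$ in the smaller window), and then applying \Cref{sk1k2inverse}(2) with the \emph{finite} parameter $U = \mathfrak{M}^{-1}T^2$. The key point of the finite-$U$ bound \eqref{wivi} is the exponential decay $e^{-\varepsilon U/8}$: the $T^4$-sized boundary errors can only influence an interior $j$ through the second term, and since $U \ge T \ge B^4(\log N)^{60}$ this term is $\le 2\log N \cdot e^{-T/16\log N} \cdot T^4 = o(N^{-1})$. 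The first (local) term sees only indices $k$ with $|\mathfrak{Q}_k - \mathfrak{Q}_j| \le T^2$, hence $|j-k| \lesssim T^2$, which stays strictly inside the interior where the error really is $B\mathfrak{M}^{1/2}(\log N)^{18}$. Your $U = \infty$ shortcut discards exactly the decay mechanism that makes the argument close, so the step ``applying \Cref{sk1k2inverse}(2) ... yields $|w_j(t)| \le 5B\mathfrak{M}^{1/2}(\log N)^{17}$'' does not follow as written.

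A secondary issue worth noting: you integrate \eqref{derivativeqjs2} over all of $[0,t]$ while the active window $\llbracket \ell_{i'}, m_{i'} \rrbracket$ shrinks with $i'$. For a fixed $j$ that remains active, the sum over $k$ in \eqref{derivativeqjs2} genuinely has an $i'$-dependent range, and the reconciliation you describe (frozen $k$ outside the window contribute constant $\chi$-values) requires $|\mathfrak{Q}_j(s) - \mathfrak{Q}_k(s)| > \mathfrak{M}$ to hold for $k$ across the whole relevant $s$-range. This is what \Cref{qjqk2} is doing in the paper, but the proof there carefully pins this down by working on one fixed enlarged window $\llbracket \ell_{i_0}, m_{i_0} \rrbracket$ valid for all $s \in [0, \mathfrak{T}_{i_0}]$ at once, rather than window-by-window. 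Without this fixed-window reformulation you would still need to control the $\chi$-contributions at the exact moments $s=i'$ when indices switch from active to frozen.
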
 
 		
 		\begin{proof}[Proof of \Cref{q2q}]
 			
 			This follows from (the last statement of) \Cref{sr}, \Cref{tt}, and the definition of $\mathfrak{T}$ as the supremum over all $S \in [0, T]$ satisfying \eqref{qjderivativen10}.
 		\end{proof} 
 		
 		\begin{proof}[Proof of \Cref{sr}]
 			
 			Throughout this proof, recalling \Cref{adelta}, we restrict to the event $\mathsf{E} = \mathsf{BND}_{\bm{L}} (\log N)$, which we may by \Cref{l0eigenvalues}. It suffices to verify \eqref{sjjs2} since, by \Cref{sk1k2inverse}, it implies that $\bm{S}_{\bm{\Lambda};\bm{\mathfrak{Q}}(r)}^{\llbracket \ell_i, m_i \rrbracket}$ is invertible. To that end, let $\mathcal{T} \subseteq [0, \mathfrak{T}]$ denote a $N^{-20}$-mesh of $[0, \mathfrak{T}]$, and define the events 
 		\begin{flalign*} 
 			\mathsf{F}_i (s) = \bigcap_{j=\ell_i}^{m_i} \bigg\{ |S_{jj} (s)| \ge (1+(\log N)^{-1} ) \displaystyle\sum_{\substack{k \in \llbracket \ell_i, m_i \rrbracket \\ k \ne j}} |S_{jk} (s)| + (\log N)^{-1} \bigg\}; \qquad \mathsf{F} = \bigcap_{i=0}^{\lfloor \mathfrak{T} \rfloor} \bigcap_{s \in \mathcal{T}} \mathsf{F}_i (s),
 		\end{flalign*} 
 		
 		\noindent for any $i \in \llbracket 0, \mathfrak{T} \rrbracket$ and $s \in [0,\mathfrak{T}]$, where we have abbreviated the $(j,k)$ entry of $\bm{S}_{\bm{\Lambda};\bm{\mathfrak{Q}}(s)}^{\llbracket \ell_i,m_i \rrbracket}$ by $S_{jk}(s)$. By the definition of $\mathfrak{T}$, \eqref{lambdajqj} holds for any $j \in \llbracket \ell_i, m_i \rrbracket$ and $i \in \llbracket 0,  \mathfrak{T} \rrbracket$, so \Cref{qjt} with a union bound yields that $\mathsf{F}$ is overwhelmingly probable. Hence, we also restrict to $\mathsf{F}$.  
 		
 		For any $s \in [0,\mathfrak{T}]$, there exists an $s_0 \in \mathcal{T}$ with $|s-s_0| \le N^{-20}$. Observe for $j, k \in \llbracket \ell_i, m_i \rrbracket$ we have $|\mathfrak{l}(\Lambda_j - \Lambda_k)| \le 5 (\log N)^2$ (by our restriction to $\mathsf{E}$) and 
 		\begin{flalign*} 
 			\big| \chi' (\mathfrak{Q}_j(s)-\mathfrak{Q}_k (s) ) - \chi' (\mathfrak{Q}_j(s_0)-\mathfrak{Q}_i(s_0) ) \big| \le 2B \mathfrak{M}^{-2} \cdot (\log N)^3 \cdot |s-s_0| \le N^{-10},
 		\end{flalign*} 
 		
 		\noindent by the second statement in \Cref{mchi}, the first statement in \eqref{qjderivativen10}, and the facts that $B \le T \le N$ and $|s-s_0| \le N^{-20}$. Hence, we have for such $(j,k)$ that $|S_{jk} (s) - S_{jk} (s_0)| \le N^{-5}$, by \eqref{sij2}. Together with our restriction to $\mathsf{F} (s_0)$, this confirms \eqref{sjjs2}.
 		\end{proof} 
 		
 		We next show \Cref{tt} through a continuity argument, by verifying the following variant of \eqref{qjderivativen10}, in which the second inequality there is replaced by a stronger one; we will prove it in \Cref{ProofW} below. For every $i \in \llbracket 0, T \rrbracket$, we denote $\mathfrak{T}_i = \min \{ \mathfrak{T}, i+1 \}$.

 		\begin{prop} 
 			
 			\label{wjt} 
 			
 			The following holds with overwhelming probability. For any indices $i \in \llbracket 0, \mathfrak{T}  \rrbracket$ and $j \in \llbracket \ell_i, m_i \rrbracket$, and real number $t \in [0, \mathfrak{T}_i - 1]$, we have 
 			\begin{flalign}
 				\label{qjt3} 
 				 |w_j(t)| \le 3B \mathfrak{M}^{1/2} (\log N)^{19},
 			\end{flalign} 
 		\end{prop}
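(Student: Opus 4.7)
The plan is to derive, for each $s \in [0, t]$, an approximate linear equation $\bm{S}(s) \bm{w}(s) = \bm{\eta}(s)$ with $\bm{\eta}$ small (of order $B \mathfrak{M}^{1/2} (\log N)^{18}$) at bulk indices, and then to invert this using \Cref{sk1k2inverse}, which applies thanks to the diagonal dominance supplied by \Cref{sr}. A convenient feature is that, since $t \le \mathfrak{T}_i - 1 \le i$, the interval $[0, t]$ only overlaps subintervals $[i', i'+1]$ with $i' < i$; on each of these the active set $\llbracket \ell_{i'}, m_{i'} \rrbracket$ contains $\llbracket \ell_i, m_i \rrbracket$ with buffer at least $T^3$. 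Consequently any fixed $j \in \llbracket \ell_i, m_i \rrbracket$ sits in the bulk throughout, and the $T^3$ boundary correction in \eqref{derivativeqjs2} never acts on its row.

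First I would integrate \eqref{derivativeqjs2} over $[0, s]$ by concatenating the subintervals, which, since the $T^3$ term is absent for $j$, yields
\begin{flalign*}
	\Lambda_j s = \mathfrak{Q}_j(s) - \mathfrak{Q}_j(0) + 2 \sum_{k=1}^N \mathfrak{l}(\Lambda_j - \Lambda_k) \bigl( \chi(\mathfrak{Q}_j(s) - \mathfrak{Q}_k(s)) - \chi(\mathfrak{Q}_j(0) - \mathfrak{Q}_k(0)) \bigr).
\end{flalign*}
Here the summation has been enlarged from the (varying) active sets to all of $\llbracket 1, N \rrbracket$ without changing its value, using \Cref{qijsalpha} to force every $\chi$-difference for $k$ lying outside an active set to vanish (as $|\mathfrak{Q}_j - \mathfrak{Q}_k| > \mathfrak{M}$ for such $k$, thanks to the $T^3$-buffer). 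Subtracting this from the identity provided by \Cref{ztlambda} at time $s$ and using $w_j(0) = 0$, and then linearizing each resulting difference $\chi(\mathfrak{Q}_j - \mathfrak{Q}_k) - \chi(Q_{\varphi_0^{-1}(j)} - Q_{\varphi_0^{-1}(k)})$ via the mean value theorem as $\chi'(\mathfrak{Q}_j - \mathfrak{Q}_k)(w_j - w_k)$ plus a Taylor remainder of order $B \mathfrak{M}^{-2} (w_j - w_k)^2$, the leading terms rearrange exactly into $\bm{S}(s) \bm{w}(s)$ with $\bm{S}(s) = \bm{S}_{\bm{\Lambda}; \bm{\mathfrak{Q}}(s)}^{\llbracket \ell_i, m_i \rrbracket}$.

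The remainder $\bm{\eta}(s)$ collects the $B \mathfrak{M}^{1/2} (\log N)^{16}$ scattering-relation error; the Taylor remainder, which by the bootstrap $|w_j - w_k| \le 2 B \mathfrak{M}^{1/2} (\log N)^{20}$ and the $\mathcal{O}(\mathfrak{M} (\log N)^5)$ count of indices $k$ with $\chi' \ne 0$ (via \Cref{qijsalpha}) is $\mathcal{O}(B^3 (\log N)^{47})$; and a similarly-sized discrepancy from using $\chi'(\mathfrak{Q}_j - \mathfrak{Q}_k)$ rather than $\chi'$ at the intermediate point. Since $\mathfrak{M} \ge B^4 (\log N)^{60}$ by \eqref{testimateb}, each contribution is bounded by $B \mathfrak{M}^{1/2} (\log N)^{18}$ for bulk $j$. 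For boundary rows $j'$ (within $T^2$ of $\ell_{\lfloor s \rfloor}$ or $m_{\lfloor s \rfloor}$) the $T^3$ term produces a potentially large $|\eta_{j'}|$ of size up to $T^4 (\log N)^3$, but this is harmless: applying \Cref{sk1k2inverse} with $\varepsilon = (2 \log N)^{-1}$ (from \Cref{sr}) and $U = 16(\log N)^2$ gives
\begin{flalign*}
	|w_j(s)| \le 2 \log N \cdot B \mathfrak{M}^{1/2} (\log N)^{18} + 2 \log N \cdot e^{-(\log N)/16} \cdot T^4 (\log N)^3 \le 3 B \mathfrak{M}^{1/2} (\log N)^{19},
\end{flalign*}
since for bulk $j$ only bulk entries of $\bm{\eta}$ enter the local-max term, while the large boundary-row entries are suppressed by the exponential. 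Uniformity in $s$ then yields \eqref{qjt3}.

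The main obstacle I expect is the careful bookkeeping across the changing active sets: specifically, justifying the summation-range extension in the integrated proxy identity depends on combining the $T^3$-buffer for $j$ with \Cref{qijsalpha} to ensure $|\mathfrak{Q}_j - \mathfrak{Q}_k| > \mathfrak{M}$ for every faraway $k$. A secondary subtlety is that the large $T^3$ rows of $\bm{S}(s)$ could in principle contaminate the bound for bulk $j$; this is resolved by the local-max structure of \Cref{sk1k2inverse}, which ensures those rows contribute only through the globally-damped exponential term.
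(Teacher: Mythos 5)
Your proposal follows essentially the same route as the paper: integrate the proxy ODE \eqref{derivativeqjs2} over $[0, t]$ (noting the $T^3$ boundary indicator is inactive for $j$ in the shrunken set $\llbracket \ell_i, m_i \rrbracket$ throughout the relevant subintervals), subtract the asymptotic scattering relation from \Cref{ztlambda}, Taylor-expand $\chi$ to obtain a linear equation $\bm{S}\bm{w} = \bm{u}$ over the enlarged interval $\llbracket \ell_{i_0}, m_{i_0} \rrbracket$ with small bulk entries and $O(T^4)$ boundary entries, and invert via \Cref{sk1k2inverse} together with the diagonal dominance from \Cref{sr}. This matches the paper's proof step by step, including the observation that the local-maximum structure in \Cref{sk1k2inverse} keeps the large boundary rows from contaminating the bulk estimate.

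One arithmetic slip worth flagging: with $\varepsilon = (2\log N)^{-1}$ and $U = 16(\log N)^2$, one has $\varepsilon U/8 = \log N$, so $e^{-\varepsilon U/8} = N^{-1}$ rather than $e^{-(\log N)/16}$. The factor you wrote, $N^{-1/16}$, is not small enough to kill the boundary contribution $T^4(\log N)^3$ when $T$ is close to $N^{1/10}$ (one gets $N^{-1/16} T^4 (\log N)^4 \gtrsim N^{27/80}$), so the bound as written does not close; with the correct value $N^{-1}$ it does, and your choice of $U$ is fine. The paper instead takes $U = \mathfrak{M}^{-1} T^2 \ge T$, giving the yet-smaller factor $e^{-T/(16\log N)}$, but this is a matter of margin rather than necessity.
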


 		\begin{proof}[Proof of \Cref{tt}]
 			
 			We first claim that the following holds with overwhelming probability. For any indices $i \in \llbracket 0, \mathfrak{T}  \rrbracket$ and $j \in \llbracket \ell_i, m_i \rrbracket$, and real number $t \in [0, \mathfrak{T}_i]$, we have $|\mathfrak{Q}_j'(t)| \le 2 (\log N)^2$. To that end, fix $i \in \llbracket 0, \mathfrak{T}  \rrbracket$ and, recalling \Cref{adelta}, we restrict to the event $\mathsf{BND}_{\bm{L}} (\log N)$, which we may by \Cref{l0eigenvalues}. Then, we have 
 			\begin{flalign}
 				\label{qjt4} 
 				\displaystyle\max_{j \in \llbracket \ell_i, m_i \rrbracket} \displaystyle\sup_{t \in [0, i]} |\mathfrak{Q}_j'(t)| \le 2 \log N \cdot \displaystyle\max_{\Lambda \in \bm{\Lambda}} |\Lambda| \le 2 (\log N)^2,
 			\end{flalign}
 			
 			\noindent where the first bound follows from \eqref{derivativeqjs} and the $U = \infty$ case of \Cref{sk1k2inverse} (using \eqref{sjjs2} to verify its hypothesis \eqref{sumchi} at $\varepsilon = (2 \log N)^{-1}$); the second follows from our restriction to $\mathsf{BND}_{\bm{L}} (\log N)$.

 		Throughout the remainder of this proof, we restrict to the event $\mathsf{E}_1$ on which \Cref{centert} holds; the event $\mathsf{E}_2$ on which \Cref{centerdistance} holds; to the event $\mathsf{E}_3$ on which \Cref{sr} holds; to the event $\mathsf{E}_4$ on which \eqref{qjt4} holds; and to the event $\mathsf{E}_5$ on which \Cref{wjt} holds. 
 		
 		Now, fix $i \in \llbracket 0, \mathfrak{T}  \rrbracket$ and $j \in \llbracket \ell_i, m_i \rrbracket$. It suffices to show that there exists some $\mathfrak{T}_i' > \mathfrak{T}_i$ for which $|\mathfrak{Q}_j'(t)| \le (\log N)^3$ and $|w_j (t)| \le B \mathfrak{M}^{1/2} (\log N)^{20}$, as then this would contradict the maximality of $\mathfrak{T}$. For the former, in view of \eqref{qjt4} (and our restriction to $\mathsf{E}_4$), we have for each $t \in [0, \mathfrak{T}_i]$ that $|\mathfrak{Q}_j' (t)| \le 2 (\log N)^2$. Observe that $\mathfrak{Q}_j'$ is half-continuous at $t = \mathfrak{T}_i$, since by \Cref{sr} (and our restriction to $\mathsf{E}_3$) the matrix $\bm{S}_{\bm{\Lambda}; \bm{Q}(\mathfrak{T}_i)}^{\llbracket \ell_i, m_i \rrbracket}$ satisfies \eqref{sjjs2} and is thus invertible. It follows that there exists some $\mathfrak{T}_i' > \mathfrak{T}_i$ such that $|\mathfrak{Q}_j' (t)| \le (\log N)^3$ for each $t \in [0, \mathfrak{T}_i']$, for sufficiently large $N$.
 		
 		For the latter, we may assume that the above $\mathfrak{T}_i' \in (\mathfrak{T}_i, \mathfrak{T}_i + 1)$. It is quickly verified that $\varphi_t (\varphi_0^{-1} (j)) \in \llbracket N_1 + T(\log N)^4, N_2 - T(\log N)^4 \rrbracket$ for all $t \in [0, T]$, using the fact that $j \in \llbracket k_1, k_2 \rrbracket$, \eqref{nk1k2}, and \Cref{centert} (with our restriction to $\mathsf{E}_1$); thus, \Cref{centerdistance} (with our restriction to $\mathsf{E}_2$) applies with the $\varphi$ there equal to $\varphi_t (\varphi_0^{-1} (j))$ here. Hence, for any $t \in [\mathfrak{T}_i, \mathfrak{T}_i']$, we have 
 		\begin{flalign*}
 			|w_j (t)| & \le |w_j (\mathfrak{T}_i - 1)| + (t - \mathfrak{T}_i + 1) (\log N)^3 + \big| q_{\varphi_t (\varphi_0^{-1} (j))} (t) - q_{\varphi_{\mathfrak{T}_i-1} (\varphi_0^{-1} (j))} (\mathfrak{T}_i-1) \big| \\
 			& \le 3B \mathfrak{M}^{1/2} (\log N)^{19} + 2(\log N)^3 + 4(\log N)^4 \le B \mathfrak{M}^{1/2} (\log N)^{20},
 		\end{flalign*}
 		
 		\noindent which establishes the lemma. Here, in the first inequality we used the definitions \eqref{wjs} of $w_j$ and \eqref{qjs2} of $Q_j$, with the fact that $|\mathfrak{Q}_j' (s)| \le (\log N)^3$ for $s \in [0, \mathfrak{T}_i']$; in the second we used \eqref{qjt3} (with our restriction to $\mathsf{E}_5$), the fact that $t - \mathfrak{T}_i \le 1$, and the second statement of \Cref{centerdistance} (with our restriction to $\mathsf{E}_2$); and in the third we used the $N$ is sufficiently large.
 		 \end{proof}

 		\subsection{Proof of \Cref{wjt}}
 			
 		\label{ProofW}
 		
 		In this section we establish \Cref{wjt}. Throughout, we adopt the notation of \Cref{NearQ} and further fix an integer $i \in \llbracket 0, \mathfrak{T} \rrbracket$; an index $j \in \llbracket \ell_i, m_i \rrbracket$; and a real number $t \in [0, \mathfrak{T}_i-1]$. Additionally, recalling \Cref{adelta}, we restrict to the event $\mathsf{E}_1 = \mathsf{BND}_{\bm{L}(0)} (\log N)$ throughout this proof, which we may by \Cref{l0eigenvalues}. We further restrict to the event $\mathsf{E}_2$ on which \Cref{qijsalpha} holds; to the event $\mathsf{E}_3$ on which \Cref{centert} holds; and to the event $\mathsf{E}_4$ on which \Cref{sr} holds.
 		
 		The following lemma indicates that $|Q_{\varphi_0^{-1} (j)} - Q_{\varphi_0^{-1} (k)}|$ and $|\mathfrak{Q}_j - \mathfrak{Q}_k|$ are large if $|j-k|$ is large. We will frequently use the fact that for sufficiently large $N$ we have 
 		\begin{flalign}
 			\label{kestimate} 
 			\varphi_s (\varphi_{s'}^{-1} (k)) \in \llbracket N_1 + T(\log N)^6, N_2 - T (\log N)^6 \rrbracket, \quad \text{for any $k \in \llbracket k_1, k_2 \rrbracket$ and $s, s' \in [0, T]$}.
 		\end{flalign}
 		
 		\noindent as follows quickly from \eqref{nk1k2} and \Cref{centert} (with our restriction to $\mathsf{E}_3$). 
 		
 		\begin{lem}
 		
 		\label{qjqk2} 
 		
 		Fix a real number $s \in [0, \mathfrak{T}_i]$ and an index $k \in \llbracket k_1, k_2 \rrbracket$ satisfying $|\varphi_s (\varphi_0^{-1} (j)) - \varphi_s (\varphi_0^{-1} (k))| \ge \mathfrak{M} (\log N)^5$. For $N$ sufficiently large, we have
 		\begin{flalign}
 			\label{qjqk3} 
 			\big| Q_{\varphi_0^{-1} (j)} (s) - Q_{\varphi_0^{-1} (k)} (s) \big| > 3 \mathfrak{M}; \qquad |\mathfrak{Q}_j (s) - \mathfrak{Q}_k (s)| > 2 \mathfrak{M}.
 		\end{flalign}
 		\end{lem}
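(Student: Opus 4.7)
The plan is to deduce both bounds in \eqref{qjqk3} from \Cref{qijsalpha}, using that under \eqref{kestimate} both localization centers $\varphi_s(\varphi_0^{-1}(j))$ and $\varphi_s(\varphi_0^{-1}(k))$ lie deep in the bulk of $\llbracket N_1, N_2 \rrbracket$, and then transfer the estimate from $Q$ to $\mathfrak{Q}$ using the definition of $\mathfrak{T}$.

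First, I would establish the bound on $|Q_{\varphi_0^{-1}(j)}(s) - Q_{\varphi_0^{-1}(k)}(s)|$. By \eqref{qjs2}, this difference equals $|q_{\varphi_s(\varphi_0^{-1}(j))}(s) - q_{\varphi_s(\varphi_0^{-1}(k))}(s)|$. By \eqref{kestimate}, both indices lie in $\llbracket N_1 + T(\log N)^6, N_2 - T(\log N)^6 \rrbracket$, which in particular satisfies the bulk hypothesis of \eqref{qiqjs4}. Setting $D = |\varphi_s(\varphi_0^{-1}(j)) - \varphi_s(\varphi_0^{-1}(k))| \ge \mathfrak{M}(\log N)^5$, our restriction to $\mathsf{E}_2$ together with \eqref{qiqjs4} yields
\begin{flalign*}
\big| q_{\varphi_s(\varphi_0^{-1}(j))}(s) - q_{\varphi_s(\varphi_0^{-1}(k))}(s) \big| \ge |\alpha| D - D^{1/2} (\log N)^2 \ge \tfrac{1}{2} |\alpha| D \ge \tfrac{1}{2} |\alpha| \mathfrak{M} (\log N)^5,
\end{flalign*}
where the second inequality holds for $N$ large because $D \ge \mathfrak{M}(\log N)^5$ (so $D^{1/2}(\log N)^2 \ll |\alpha| D$), and the third uses the lower bound on $D$. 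Since $\mathfrak{M} \ge B^4 (\log N)^{60}$ by \eqref{testimateb}, the rightmost quantity exceeds $3\mathfrak{M}$, proving the first bound in \eqref{qjqk3}.

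For the second bound, I would transfer the estimate from $Q$ to $\mathfrak{Q}$ using the pointwise proximity $w = \mathfrak{Q} - Q$ bounded by the definition of $\mathfrak{T}$. Since $s \in [0, \mathfrak{T}_i] \subseteq [0, \mathfrak{T}]$ and both $j, k \in \llbracket \ell_i, m_i \rrbracket$, the second line of \eqref{qjderivativen10} gives
\begin{flalign*}
|w_j(s)| + |w_k(s)| \le 2B \mathfrak{M}^{1/2} (\log N)^{20}.
\end{flalign*}
Combining this with the first bound in \eqref{qjqk3} via the triangle inequality,
\begin{flalign*}
|\mathfrak{Q}_j(s) - \mathfrak{Q}_k(s)| \ge \big| Q_{\varphi_0^{-1}(j)}(s) - Q_{\varphi_0^{-1}(k)}(s) \big| - |w_j(s)| - |w_k(s)| \ge 3\mathfrak{M} - 2B \mathfrak{M}^{1/2} (\log N)^{20}.
\end{flalign*}
The constraint $\mathfrak{M} \ge B^4 (\log N)^{60}$ in \eqref{testimateb} gives $\mathfrak{M}^{1/2} \ge B^2 (\log N)^{30}$, and hence $2B\mathfrak{M}^{1/2}(\log N)^{20} = 2\mathfrak{M} \cdot B \mathfrak{M}^{-1/2} (\log N)^{20} \le 2\mathfrak{M} \cdot B^{-1}(\log N)^{-10} \le \mathfrak{M}$ for $N$ sufficiently large, yielding $|\mathfrak{Q}_j(s) - \mathfrak{Q}_k(s)| > 2\mathfrak{M}$.

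There is no significant obstacle here; the only subtlety is checking that the parameter constraints in \eqref{testimateb} make the $w$-error $B\mathfrak{M}^{1/2}(\log N)^{20}$ genuinely small compared to $\mathfrak{M}$, which I verified above. The main input is the bulk--localization propagation from \Cref{centert} (packaged via \eqref{kestimate}) that legitimizes invoking \Cref{qijsalpha}.
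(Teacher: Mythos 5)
Your proof is correct and takes essentially the same route as the paper's: the first bound follows from \eqref{qjs2}, \eqref{qiqjs4} (with \eqref{kestimate} supplying the bulk hypothesis), and the lower bound on the localization-center gap; the second bound is then the triangle inequality against the $w$-error from \eqref{qjderivativen10} together with \eqref{testimateb}. The only cosmetic difference is that you spell out the intermediate inequality $\frac{1}{2}|\alpha|D$ and track the $\mathfrak{M}^{1/2}$ bookkeeping a bit more explicitly, but the argument is the same.
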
 
 		
 		\begin{proof} 
 			
 			To establish the first bound in \eqref{qjqk3}, observe that
 			\begin{flalign}
 				\label{qjqk4} 
 				\begin{aligned} 
 				\big| Q_{\varphi_0^{-1}(j)} (& s) - Q_{\varphi_0^{-1}(k)} (s) \big| \\
 				& = \big| q_{\varphi_s (\varphi_0^{-1} (j))} (s) - q_{\varphi_s(\varphi_0^{-1} (k))} (s) \big| \\
 				& \ge |\alpha| \cdot \big| \varphi_s (\varphi_0^{-1} (j)) - \varphi_s (\varphi_0^{-1} (k)) \big| - \big| \varphi_s (\varphi_0^{-1} (j)) - \varphi_s (\varphi_0^{-1} (k)) \big|^{1/2} \cdot (\log N)^2 > 3 \mathfrak{M},
 				\end{aligned} 
 			\end{flalign} 
 			
 			\noindent where in the first statement we used \eqref{qjs2}; in the second we used \eqref{qiqjs4} (and our restriction to $\mathsf{E}_2$, with \eqref{kestimate} to verify its assumption); and in the third we used the bound $| \varphi_s (\varphi_0^{-1} (j)) - \varphi_s (\varphi_0^{-1} (k)) | \ge \mathfrak{M} (\log N)^5$. To establish the second bound in \eqref{qjqk3}, observe that 
 			\begin{flalign*}
 				|\mathfrak{Q}_j (s) - \mathfrak{Q}_k (s)| \ge \big| Q_{\varphi_0^{-1} (j)} (s) - Q_{\varphi_0^{-1} (k)} (s) \big| - |w_j (s)| - |w_k (s)| \ge 3 \mathfrak{M} - 2 B \mathfrak{M}^{1/2} (\log N)^{20} > 2 \mathfrak{M},
 			\end{flalign*}	
 			
 			\noindent where in the first statement we used the definition \eqref{wjs} of $w_j$; in the second we used the first inequality \eqref{qjqk3} with the second inequality in \eqref{qjderivativen10}; and in the third we used \eqref{testimateb}.
 		\end{proof} 
 		
 		\begin{proof}[Proof of \Cref{wjt}]
 			
 			By \Cref{sr} (with our restriction to $\mathsf{E}_4$), for each $i' \in \llbracket 0, \mathfrak{T} \rrbracket$ and $s \in [i', \mathfrak{T}_{i'}]$, the matrix $\bm{S}_{\bm{\Lambda};\bm{\mathfrak{Q}}(s)}^{\llbracket \ell_{i'}, m_{i'} \rrbracket}$ is invertible. Therefore, by \Cref{qjt2}, \eqref{derivativeqjs2} holds for the $i$ there equal to any $i' \in \llbracket 0, \mathfrak{T} \rrbracket$, and with the $(j, s)$ there equal to any element of $\llbracket \ell_{i'}, m_{i'} \rrbracket \times [i', \mathfrak{T}_{i'}]$. Thus, for such $(j, s)$, we have
 		 \begin{flalign}
 		 	\label{lambdajsum2}
 		 	\begin{aligned}
 		 \Lambda_j & = \mathfrak{Q}_j'(s) \cdot (1 + T^3 \cdot (\mathbbm{1}_{j \le \ell_{i'} + T^2} + \mathbbm{1}_{j \ge m_{i'} - T^2}) ) \\
 		 & \qquad +  2 \displaystyle\sum_{k=\ell_{i'}}^{m_{i'}} ( \mathfrak{Q}_j'(s) - \mathfrak{Q}_k'(s) ) \cdot \mathfrak{l} (\Lambda_j - \Lambda_k) \cdot \chi' ( \mathfrak{Q}_j (s) - \mathfrak{Q}_k (s) ).
 		 \end{aligned} 
 		 \end{flalign}
 		 
 		 \noindent Now set $i_0 = i-1$, and denote $\ell_i' = \ell_i - \lfloor T^3/2 \rfloor$ and $m_i' = m_i + \lfloor T^3/2 \rfloor$; in this way, we have $\llbracket \ell_i, m_i \rrbracket \subseteq \llbracket \ell_i', m_i' \rrbracket \subseteq \llbracket \ell_{i_0}, m_{i_0} \rrbracket$. Let us assume that $i' \in \llbracket 0, i_0 \rrbracket$ and that $(j, s) \in \llbracket \ell_i', m_i' \rrbracket \times [i', \mathfrak{T}_{i'}]$. Since $j \in \llbracket \ell_i', m_i' \rrbracket$, and $\ell_i' \ge \ell_{i'} + T^3/2$ and $m_i' \le m_{i'} + T^3/2$, we have $\mathbbm{1}_{j \le \ell_{i'}+ T^2} + \mathbbm{1}_{j \ge m_{i'} - T^2} = 0$. Also, since for any $k \notin \llbracket \ell_{i_0}, m_{i_0} \rrbracket$ we have by \Cref{centert} (with our restriction to $\mathsf{E}_3$) and \eqref{testimateb} that 
 		 \begin{flalign*} 
 		 	\big| \varphi_s (\varphi_0^{-1} (j)) - \varphi_s (\varphi_0^{-1} (k)) \big| \ge |k-j| - 4T (\log N)^2 > T^2 - 4T(\log N)^2 > \mathfrak{M} (\log N)^5,
 		 \end{flalign*} 
 		 
 		 \noindent it follows from \Cref{qjqk2}, and the inclusion $\supp \chi \subseteq [-\mathfrak{M}, \mathfrak{M}]$ (by \Cref{mchi}), that $\chi' (\mathfrak{Q}_j (s) - \mathfrak{Q}_k (s)) = 0$ whenever $k \notin \llbracket \ell_{i_0}, m_{i_0} \rrbracket$. Inserting these into \eqref{lambdajsum2} yields for any $(j, s) \in \llbracket \ell_i', m_i' \rrbracket \times [0, \mathfrak{T}_{i_0}]$ that
 		 	 \begin{flalign}
 		 	 	\label{lambdajsum3} 
 		 	\Lambda_j & = \mathfrak{Q}_j'(s) +  2 \displaystyle\sum_{k=\ell_{i_0}}^{m_{i_0}} ( \mathfrak{Q}_j'(s) - \mathfrak{Q}_k'(s) ) \cdot \mathfrak{l} (\Lambda_j - \Lambda_k) \cdot \chi' ( \mathfrak{Q}_j (s) - \mathfrak{Q}_k (s) ).
 		 \end{flalign}
 		 
 		  Integrating \eqref{lambdajsum3} over $s \in [0, t]$ for any fixed $t \in [0, \mathfrak{T}_{i_0}]$ yields for any $j \in \llbracket \ell_i', m_i' \rrbracket$ that
 		 \begin{flalign*}
 		 	\Lambda_j t & = \mathfrak{Q}_j (t) - \mathfrak{Q}_j (0)  + 2 \displaystyle\sum_{k=\ell_{i_0}}^{m_{i_0}} \mathfrak{l}(\Lambda_j-\Lambda_k) \cdot \big( \chi ( \mathfrak{Q}_j (t) - \mathfrak{Q}_k (t) ) - \chi ( \mathfrak{Q}_j (0) - \mathfrak{Q}_k(0) ) \big).
 		 \end{flalign*}

 		\noindent Subtracting this from \eqref{lambdaq2} (with the $k$ there equal to $\varphi_0^{-1} (j)$ here); using \Cref{qjqk2} with the facts that $\supp \chi \subseteq [-\mathfrak{M}, \mathfrak{M}]$ and that $(\ell_{i_0}, m_{i_0}) = (\ell_i' - \lceil T^3 / 2 \rceil, m_i' + \lceil T^3/ 2 \rceil)$ to restrict the sum there to over $i = \varphi_0^{-1} (k)$ with $k \in \llbracket \ell_{i_0}, m_{i_0} \rrbracket$ (after also using \eqref{qiqjs5}, with our restriction to $\mathsf{E}_2$, to restrict to $k \in \llbracket k_1, k_2 \rrbracket$); and using the fact that $\mathfrak{Q}_k (0) = Q_{\varphi_0^{-1} (k)} (0)$ then yields
 		\begin{flalign*}
 			\Bigg| w_j (t&) \cdot ( 1 + T^3 \cdot (\mathbbm{1}_{j \le \ell_{i_0} +T^2} + \mathbbm{1}_{j \ge m_{i_0} - T^2}) ) \\
 			 & + 2 \displaystyle\sum_{k=\ell_{i_0}}^{m_{i_0}} \mathfrak{l} (\Lambda_j-\Lambda_k) \cdot \big( \chi ( \mathfrak{Q}_j (t) - \mathfrak{Q}_k (t) ) - \chi \big( Q_{\varphi_0^{-1}(j)} (t) - Q_{\varphi_0^{-1} (k)} (t) \big) \big) \Bigg|  \le B\mathfrak{M}^{1/2} (\log N)^{16},
 		\end{flalign*}
		
		\noindent where we also used the fact that $\mathbbm{1}_{j \le \ell_{i_0} + T^2} + \mathbbm{1}_{j \ge m_{i_0} - T^2} = 0$, since $j \in \llbracket \ell_i', m_i' \rrbracket$. Taylor expanding $\chi'$, it follows that
	\begin{flalign}
		\label{wj2} 
		\begin{aligned} 
		\Bigg| w_j & (t) \cdot ( 1 + T^3 \cdot (\mathbbm{1}_{j \le \ell_{i_0}+T^2} + \mathbbm{1}_{j \ge m_{i_0} -T^2}) ) \\
		&  + 2 \displaystyle\sum_{k=\ell_{i_0}}^{m_{i_0}}  \mathfrak{l}(\Lambda_j - \Lambda_k) \cdot ( w_j (t) - w_k (t) ) \cdot \chi' ( \mathfrak{Q}_j (t) - \mathfrak{Q}_k (t) ) \Bigg| \\ 
		&  \le 10 B \mathfrak{M}^{-2} (\log N)^2 \displaystyle\sum_{k=\ell_{i_0}}^{m_{i_0}} w_k (t)^2 \cdot \big( \mathbbm{1}_{|\mathfrak{Q}_j(t)-\mathfrak{Q}_k(t)|\le\mathfrak{M}} + \mathbbm{1}_{|Q_{\varphi_0^{-1}(j)}(t)-Q_{\varphi_0^{-1}(k)} (t)| \le \mathfrak{M}} \big) \\
		& \qquad + B \mathfrak{M}^{1/2} (\log N)^{16} \\
		&  \le B \mathfrak{M}^{1/2} (\log N)^{16} + 10 B \mathfrak{M}^{-2} (\log N)^2 \cdot (B \mathfrak{M}^{1/2} (\log N)^{20})^2 \cdot 6 \mathfrak{M} (\log N)^5  \le B \mathfrak{M}^{1/2} (\log N)^{18}.
		\end{aligned} 
	\end{flalign}
	
	\noindent Here, in the first bound we used \Cref{mchi} and the fact that $\mathfrak{l} (\Lambda_j-\Lambda_k) \le 5 (\log N)^2$ (by our restriction to $\mathsf{E}_1$); in the second we used \eqref{qjderivativen10} and the fact that there are at most $6 \mathfrak{M} (\log N)^5$ indices $k \in \llbracket \ell_{i_0}, m_{i_0} \rrbracket$ for which the summand in \eqref{wj2} is nonzero (by \Cref{qjqk2} and the fact that $\supp \chi' \subseteq [-\mathfrak{M}, \mathfrak{M}]$); and in the third we used \eqref{testimateb}.
	
	The bound \eqref{wj2} applies if $j \in \llbracket \ell_i', m_i' \rrbracket$. If $j \in \llbracket \ell_{i_0}, m_{i_0} \rrbracket \setminus \llbracket \ell_i', m_i' \rrbracket$, then we have 
	\begin{flalign}
		\label{wj3}
		\begin{aligned} 
		\Bigg| w_j & (t) \cdot ( 1 + T^3 \cdot (\mathbbm{1}_{j \le \ell_{i_0}+T^2} + \mathbbm{1}_{j \ge m_{i_0} -T^2}) ) \\
		&  + 2 \displaystyle\sum_{k=\ell_{i_0}}^{m_{i_0}}  \mathfrak{l}(\Lambda_j - \Lambda_k) \cdot ( w_j (t) - w_k (t) ) \cdot \chi' ( \mathfrak{Q}_j (t) - \mathfrak{Q}_k (t)) \Bigg| \\
		& \qquad \le \big(T^3 + 1 + 4 \cdot 5 (\log N)^5 \cdot B \mathfrak{M}^{-1} \cdot 3 \mathfrak{M} (\log N)^5 \big) \cdot \displaystyle\max_{j \in \llbracket \ell_{i_0}, m_{i_0} \rrbracket} |w_j (t)| \le T^4.
		\end{aligned}
	\end{flalign}
	
	\noindent Here, in the first bound we used the fact that $|\mathfrak{l} (\Lambda_j - \Lambda_k)| \le 5 (\log N)^2$ (by our restriction to $\mathsf{E}_1$), that $|\chi'(q)| \le B \mathfrak{M}^{-1}$; and that there are at most $3 \mathfrak{M} (\log N)^5$ indices $k$ for which $\chi' (\mathfrak{Q}_j (t) - \mathfrak{Q}_k (t)) \ne 0$ by \Cref{qjqk2} and the fact that $\supp \chi' \subseteq [-\mathfrak{M}, \mathfrak{M}]$. In the second, we used \eqref{qjderivativen10} with \eqref{testimateb}. 
	
	 Denoting $\bm{w} = (w_{\ell_{i_0}}(t), w_{\ell_{i_0}+1}(t), \ldots , w_{m_{i_0}}(t)) \in \mathbb{R}^{m_{i_0}-\ell_{i_0}+1}$, we find from \eqref{wj2}, \eqref{wj3}, and \eqref{sij2} that 
	\begin{flalign*} 
		\bm{S}_{\bm{\Lambda};\bm{\mathfrak{Q}}(t)}^{\llbracket \ell_{i_0}, m_{i_0} \rrbracket} \bm{w} = \bm{u}, \qquad \text{where} \qquad |u_k| \le B \mathfrak{M}^{1/2} (\log N)^{18} + T^4 \cdot (\mathbbm{1}_{j \le \ell_i - \lfloor T^3/2 \rfloor} + \mathbbm{1}_{j \ge m_i + \lfloor T^3/2 \rfloor}),
	\end{flalign*} 	
		
	\noindent and we have denoted $\bm{u} = (u_{\ell_{i_0}}, u_{\ell_{i_0}+1}, \ldots , u_{m_{i_0}})$. Recalling from \eqref{sjjs2} that $\bm{S}_{\bm{\Lambda};\bm{\mathfrak{Q}}(t)}^{\llbracket \ell_i, m_i \rrbracket}$ satisfies \eqref{sumchi} with the $\varepsilon$ there equal to $(2 \log N)^{-1}$ here, it follows from \eqref{wivi}  (with the $U$ there equal to $\mathfrak{M}^{-1} T^2  \ge T$ here) that 
	\begin{flalign}
		\label{wkt} 
		\begin{aligned} 
		\displaystyle\max_{k \in \llbracket \ell_i, m_i \rrbracket} |w_k(t)| & \le 2 \log N \cdot \displaystyle\max_{k : |\mathfrak{Q}_j (t) - \mathfrak{Q}_k (t)| \le \lfloor T^3 / 4 \rfloor} |u_k| + 2 \log N \cdot e^{-T / 16 \log N} \cdot T^4 \\
		& \le 2B \mathfrak{M}^{1/2} (\log N)^{19} + N^{-1} \le 3B \mathfrak{M}^{1/2} (\log N)^{19},
		\end{aligned} 
	\end{flalign}
		
	\noindent where we also used the fact that $|j-k| \le \lfloor T^3 / 4 \rfloor$ whenever $|\mathfrak{Q}_j (t) - \mathfrak{Q}_k (t)| \le T^2$ (by reasoning entirely analogous to that used in the proof of \Cref{qjqk2}). This confirms \eqref{qjt3} for all $t \in [0, \mathfrak{T}_{i_0}]$ and thus (as $i_0 = i-1$) for all $t \in [0, \mathfrak{T}_i - 1]$, thereby establishing the proposition.
	\end{proof}

	\section{Analysis of the Proxy Dynamics} 
	
	\label{EquationAnalyze}
	
	In this section we prove \Cref{vestimate}, by analyzing the proxy dynamics $(\mathfrak{Q}_j)$ from \Cref{qjt2}. We begin in \Cref{EquationQ0} by showing as \Cref{lambdajq} that the effective velocities $v_{\eff} (\Lambda_j)$ approximately satisfy the relation \eqref{derivativeqjs2} defining the $(\mathfrak{Q}_j')$. We then use this to approximate the derivatives of $\mathfrak{Q}_j$ by the $v_{\eff} (\Lambda_j)$ in \Cref{EquationQ}, through \Cref{qv}. This quickly yields \Cref{vestimate} under more restrictive hypotheses (\Cref{q}), which we remove in \Cref{ProofV}.

	\subsection{Approximate Relation for Effective Velocities} 
	
	\label{EquationQ0} 
	
	In this section we establish the following lemma, which indicates that setting $\mathfrak{Q}_j'(s) \approx v_{\eff} (\Lambda_j)$ approximately yields a solution of \eqref{derivativeqjs2} (though with the $\mathfrak{Q}_j$ there replaced by $Q_{\varphi_0^{-1} (j)}$ here). Throughout this section, we adopt \Cref{chi2}, recall \Cref{qjt2}, let $\theta_0 = \theta_0 (\beta) > 0$ denote the constant from \Cref{beta0theta}, and assume that $\theta \in (0, \theta_0)$.
	
	\begin{lem} 
		
	\label{lambdajq} 
	
	The following holds with overwhelming probability. Let $s \in [0, T]$ be a real number, and let $j, \ell, m \in \llbracket N_1, N_2 \rrbracket$ be indices satisfying 
	\begin{flalign}
		\label{jinterval} 
		j, \ell, m \in \llbracket N_1 + T (\log N)^6, N_2 - T(\log N)^6 \rrbracket; \qquad j \in \llbracket \ell + T (\log N)^5, m - T (\log N)^5 \rrbracket.
	\end{flalign} 
	
	\noindent Then, we have that 
	\begin{flalign}
		\label{vlambda0} 
		\begin{aligned} 
		\Bigg| & \Lambda_j -  v_{\eff} (\Lambda_j) \cdot \bigg( 2 \displaystyle\sum_{k=\ell}^{m} \mathfrak{l} (\Lambda_j - \Lambda_k) \cdot \chi' \big( Q_{\varphi_0^{-1} (j)} (s) - Q_{\varphi_0^{-1}(k)} (s) \big) + 1 \bigg) \\
		& \qquad \qquad   + 2 \displaystyle\sum_{k=\ell}^m v_{\eff} (\Lambda_k) \cdot \mathfrak{l} (\Lambda_j - \Lambda_k) \cdot \chi' \big( Q_{\varphi_0^{-1} (j)} (s) - Q_{\varphi_0^{-1}(k)} (s) \big) \Bigg|  \le B \mathfrak{M}^{-1/2} (\log N)^{21}.
		\end{aligned} 
	\end{flalign} 
	
	\end{lem}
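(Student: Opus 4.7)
The strategy is to apply the concentration estimate \Cref{concentrationh3} to approximate the two sums in \eqref{vlambda0} by integrals against $\varrho$, and then combine these integrals using the algebraic identity \Cref{vt} together with the formula $\theta^{-1} \varsigma_0^{\dr}(x) = \alpha^{-1} \bm{\mathrm{T}} \varrho(x) + 1$ from \Cref{rho0}.

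Concretely, write $j_0 = \varphi_0^{-1}(j)$, so that $\Lambda_j = \lambda_{j_0}$ and $Q_{\varphi_0^{-1}(k)}(s) = Q_{\varphi_0^{-1}(k)}(s)$. I would apply \Cref{concentrationh3} twice, in both cases with $f = \mathfrak{l}$, $G = \chi'$, and the index $j$ there equal to $j_0$: first with $F \equiv 1$, second with $F = v_{\eff}$. For the hypotheses, $\chi'$ satisfies the fourth item in \Cref{fgab} with $S = \mathfrak{M}$ and with the parameter $B$ there replaced by $B \mathfrak{M}^{-1}$ (from the derivative bound $|\chi''| \le B \mathfrak{M}^{-2}$ in \Cref{mchi}). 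The function $v_{\eff}$ satisfies the first three items in \Cref{fgab} with $A = C \log N$, using the polynomial growth and regularity estimates from \Cref{derivativev}. The function $f = \mathfrak{l}$ satisfies \eqref{f1d}, \eqref{f2d} with $D = C (\log N)^2$ since $|\mathfrak{l}(z)| \le 5 (\log N)^2$ for $|z| \le 2\log N$, and $|\mathfrak{l}'(z)| \le |z|^{-1}$ when $|z| \ge \mathfrak{d}$ while $|\mathfrak{l}'(z)| \le \mathfrak{d}^{-1} = e^{5(\log N)^2}$ otherwise. Substituting these parameters yields an error of order $B \mathfrak{M}^{-1/2} (\log N)^{16}$, well within the target. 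Next, I would argue that the sums in \eqref{vlambda0}, which range over $k \in \llbracket \ell, m \rrbracket$, agree with the full sums over $i \in \llbracket 1, N \rrbracket$ that appear in \Cref{concentrationh3}: for $k \notin \llbracket \ell, m \rrbracket$ the hypothesis \eqref{jinterval} gives $|j - k| \ge T (\log N)^5$, and combining \Cref{centert} (to control localization-center displacement) with \Cref{qijsalpha} yields $|Q_{\varphi_0^{-1}(j)}(s) - Q_{\varphi_0^{-1}(k)}(s)| \gg \mathfrak{M}$, whence $\chi'$ vanishes. Using the identity $\int_{-\infty}^{\infty} \chi'(\alpha q) \, dq = \alpha^{-1}$ (noted in the outline following \eqref{v2}) and the approximation $\mathfrak{l}(x) = \log|x| + \mathcal{O}(\mathfrak{d}^2 |x|^{-2})$, both concentration bounds then convert the two sums into $\alpha^{-1} \bm{\mathrm{T}} \varrho(\Lambda_j)/2$ and $\alpha^{-1} \bm{\mathrm{T}}(\bm{\varrho} v_{\eff})(\Lambda_j)/2$, respectively (with the $\mathfrak{d}$-regularization absorbed into the error, using $\mathsf{SEP}$-type bounds to control the singularity near $\lambda = \Lambda_j$).

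After these substitutions, the left side of \eqref{vlambda0} reduces, up to the stated error, to
\begin{flalign*}
\Lambda_j - v_{\eff}(\Lambda_j) \cdot \bigl( \alpha^{-1} \bm{\mathrm{T}} \varrho(\Lambda_j) + 1 \bigr) + \alpha^{-1} \bm{\mathrm{T}}(\bm{\varrho} v_{\eff})(\Lambda_j).
\end{flalign*}
Applying the identity $\theta^{-1} \bm{\varsigma_0^{\dr}} v_{\eff} - \alpha^{-1} \bm{\mathrm{T}} \bm{\varrho} v_{\eff} = \varsigma_1$ from \Cref{vt} (evaluated at $x = \Lambda_j$) together with the second formula in \eqref{rho2} from \Cref{rho0}, namely $\theta^{-1} \varsigma_0^{\dr}(x) = \alpha^{-1} \bm{\mathrm{T}} \varrho(x) + 1$, shows that this expression vanishes identically, leaving only the accumulated concentration and regularization errors of size at most $B \mathfrak{M}^{-1/2} (\log N)^{21}$.

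The main obstacle is the careful bookkeeping of hypotheses and errors through \Cref{concentrationh3}: verifying that $F = v_{\eff}$ satisfies \Cref{fgab}(1)--(3) with $A = C \log N$ (using \Cref{xf2} and \Cref{derivativev}), and that $f = \mathfrak{l}$ satisfies \eqref{f1d}--\eqref{f2d} with $D = C(\log N)^2$; and then showing that replacing $\mathfrak{l}$ by $\log|\cdot|$ inside the integrals against $\varrho$ produces only negligible error, despite the singularity at $\lambda = \Lambda_j$. The remainder of the argument is bookkeeping: verifying the truncation of the sum to $k \in \llbracket \ell, m \rrbracket$ using \Cref{centert} and \Cref{qijsalpha}, and composing the identities from \Cref{vt} and \Cref{rho0} to obtain the cancellation.
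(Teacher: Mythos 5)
Your argument correctly reproduces the core of the paper's fixed-time calculation: you apply \Cref{concentrationh3} twice (with $F \in \{1, v_{\eff}\}$, $f = \mathfrak{l}$, $G = \chi'$), truncate the sums via \Cref{centert} and \Cref{qijsalpha}, swap $\mathfrak{l}$ for $\log|\cdot|$, and then cancel via \Cref{vt} together with the second identity in \Cref{rho0}. The parameter bookkeeping ($A \sim \log N$, $B_{\text{fgab}} \sim B\mathfrak{M}^{-1}$, $D \sim (\log N)^2$, $S = \mathfrak{M}$) is correct and in fact slightly sharper than what the paper writes down. One small misattribution: replacing $\mathfrak{l}$ by $\log|\cdot|$ inside the $\varrho$-integral is not controlled by $\mathsf{SEP}$-type separation bounds (separation is irrelevant for a continuous integral); it follows from the integrability of $\log$ near its singularity combined with the boundedness of $\varrho$ (\Cref{rhoexponential}), as in the paper's estimates \eqref{integral1}--\eqref{integral2}.

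There is, however, a genuine gap. \Cref{concentrationh3} fixes $s \in [0,T]$ in advance, so the overwhelmingly probable event it produces depends on $s$. The statement of \Cref{lambdajq}, by contrast, places the quantifier over $s$ \emph{inside} the overwhelmingly probable event: one needs a single event of overwhelming probability on which \eqref{vlambda0} holds simultaneously for every $s \in [0,T]$. Your proof as written only establishes the fixed-$s$ version (which is the paper's \Cref{lambdajq2}). Closing the gap requires an additional step: take an $N^{-1}$-mesh $\mathcal{T}$ of $[0,T]$; apply the fixed-$s$ bound with a union bound over $\mathcal{T}$; then for general $s$ find a nearby mesh point $s_0$ with $|s - s_0| \le N^{-1}$ and control
\begin{flalign*}
\displaystyle\sum_{k=\ell}^m \big| \chi'\big(Q_{\varphi_0^{-1}(j)}(s) - Q_{\varphi_0^{-1}(k)}(s)\big) - \chi'\big(Q_{\varphi_0^{-1}(j)}(s_0) - Q_{\varphi_0^{-1}(k)}(s_0)\big) \big|
\end{flalign*}
by Taylor expanding $\chi'$ (using $|\chi''| \le B\mathfrak{M}^{-2}$ from \Cref{mchi}) and invoking \Cref{centerdistance} to bound $|Q_{\varphi_0^{-1}(k)}(s) - Q_{\varphi_0^{-1}(k)}(s_0)|$, together with the observation (via \Cref{qijsalpha} and \Cref{centert}) that only $\mathcal{O}(\mathfrak{M}(\log N)^5)$ indices $k$ contribute. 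Without this mesh-and-continuity argument the claimed uniformity over $s$ is not established.
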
 
	
	We first show the below variant of \Cref{lambdajq} addressing a fixed time $s \in [0,T]$.

	\begin{lem} 
		
		\label{lambdajq2} 
		
		There exists a constant $c>0$ such that the below holds. Fix a real number $s \in [0, T]$. The following holds with probability at least $1 - c^{-1} e^{-c(\log N)^2}$. For any indices $j, \ell, m \in \llbracket N_1, N_2 \rrbracket$ satisfying \eqref{jinterval}, we have 
		\begin{flalign}
			\label{vlambda02} 
			\begin{aligned} 
				\Bigg| & \Lambda_j -  v_{\eff} (\Lambda_j) \cdot \bigg( 2 \displaystyle\sum_{k=\ell}^{m} \mathfrak{l} (\Lambda_j - \Lambda_k) \cdot \chi' \big( Q_{\varphi_0^{-1} (j)} (s) - Q_{\varphi_0^{-1}(k)} (s) \big) + 1 \bigg) \\
				& \qquad \quad   + 2 \displaystyle\sum_{k=\ell}^m v_{\eff} (\Lambda_k) \cdot \mathfrak{l} (\Lambda_j - \Lambda_k) \cdot \chi' \big( Q_{\varphi_0^{-1} (j)} (s) - Q_{\varphi_0^{-1}(k)} (s) \big) \Bigg| \le \displaystyle\frac{B}{2} \cdot \mathfrak{M}^{-1/2} (\log N)^{21}.
			\end{aligned} 
		\end{flalign} 
		
	\end{lem}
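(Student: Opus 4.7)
The plan is to deduce \Cref{lambdajq2} by applying the concentration bound \Cref{concentrationh3} twice to replace the two empirical sums in \eqref{vlambda02} by their asymptotic integrals, and then to invoke the identity \eqref{vlambda01} (which is equivalent to combining \Cref{vt} and \Cref{rho0}) to obtain exact cancellation of the leading-order terms. After the concentration step the expression inside the absolute value in \eqref{vlambda02} should become
\begin{flalign*}
\Lambda_j - v_{\eff}(\Lambda_j) \big( \alpha^{-1} \bm{\mathrm{T}}\varrho(\Lambda_j) + 1 \big) + \alpha^{-1} \bm{\mathrm{T}}\bm{\varrho} v_{\eff}(\Lambda_j) + \text{error},
\end{flalign*}
whose leading part vanishes since \Cref{vt} gives $\theta^{-1} \varsigma_0^{\dr}(\Lambda_j) v_{\eff}(\Lambda_j) - \alpha^{-1} \bm{\mathrm{T}}\bm{\varrho} v_{\eff}(\Lambda_j) = \Lambda_j$, and substituting $\varsigma_0^{\dr} = \theta \alpha^{-1}(\bm{\mathrm{T}}\varrho + \alpha)$ from \Cref{rho0} yields precisely the needed identity.

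To carry this out, I would first reindex both sums via $i = \varphi_0^{-1}(k)$, writing $\Lambda_k = \lambda_i$ and $Q_{\varphi_0^{-1}(k)}(s) = Q_i(s)$, and then enlarge the range from $k \in \llbracket \ell, m \rrbracket$ to all $i \in \llbracket 1, N \rrbracket$. The hypothesis $j \in \llbracket \ell + T(\log N)^5, m - T(\log N)^5 \rrbracket$ together with \Cref{centert} and the separation bound \eqref{qiqjs5} from \Cref{qijsalpha} imply, with overwhelming probability, that $|Q_i(s) - Q_{\varphi_0^{-1}(j)}(s)| > \mathfrak{M}$ whenever $\varphi_0(i) \notin [\ell, m]$, so that $\chi'$ of that difference vanishes by \Cref{mchi} and the enlargement is exact. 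I would then invoke \Cref{concentrationh3} with $f = \mathfrak{l}$, $G = \chi'$, and $S = \mathfrak{M}$, taking first $F = v_{\eff}$ and then $F \equiv 1$. The admissible parameters are $A = \mathcal{O}(\log N)$ (using \Cref{derivativev} to control both $v_{\eff}$ and $\partial_x v_{\eff}$ on $[-\log N, \log N]$), the constant $B$ of \Cref{fgab} taken of size $\mathcal{O}(B \mathfrak{M}^{-1})$ (from $|\chi''| \le B \mathfrak{M}^{-2}$ in \Cref{mchi}), and $D = \mathcal{O}((\log N)^2)$ (since $|\mathfrak{l}| \le 5 (\log N)^2$ and $|\mathfrak{l}'(x)| \le \min\{\mathfrak{d}^{-1}, |x|^{-1}\}$ with $\mathfrak{d} = e^{-5(\log N)^2}$). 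The resulting error per application is $\mathcal{O}(B \mathfrak{M}^{-1/2} (\log N)^{16})$, comfortably smaller than $\tfrac{1}{4} B \mathfrak{M}^{-1/2} (\log N)^{21}$.

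To assemble, I would use $\int_{-\infty}^{\infty} \chi'(\alpha q) dq = \alpha^{-1}$ (since $\chi'$ is even and $\chi(\infty) - \chi(-\infty) = \sgn(\alpha)$, and per the convention in \Cref{Analysis0} that $\alpha > 0$), and replace $\mathfrak{l}(\Lambda_j - \lambda)$ by $\log|\Lambda_j - \lambda|$ inside the asymptotic integrals at a cost of $\mathcal{O}(\mathfrak{d} \log \mathfrak{d}^{-1})$, which is negligible. The two concentration outputs then match $\alpha^{-1} \bm{\mathrm{T}}\varrho(\Lambda_j)$ and $\alpha^{-1} \bm{\mathrm{T}}\bm{\varrho} v_{\eff}(\Lambda_j)$ respectively, producing the display above and hence \eqref{vlambda02}. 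The main technical obstacle I expect is verifying the hypotheses of \Cref{concentrationh3} for $F = v_{\eff}$ with the advertised parameters---in particular, the third (Lipschitz on scale $e^{-(\log N)^{5/2}}$) condition of \Cref{fgab}, which follows from \Cref{derivativev} but requires some care in how $A$ scales with $\log N$. A secondary point is that the uniformity in $(j, \ell, m)$ demands a union bound; since \Cref{concentrationh3} is stated for fixed $j$ (and does not see $\ell, m$ after the extension-to-full-sum step), the cost is only polynomial in $N$ and is absorbed by the overwhelming-probability tail.
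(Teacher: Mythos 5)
Your proposal is correct and follows essentially the same route as the paper: two applications of \Cref{concentrationh3} with $(f,G)=(\mathfrak{l},\chi')$ and $F\in\{v_{\eff},1\}$, followed by the dressing-operator identity obtained by combining \Cref{vt} and \Cref{rho0}, with the restriction to $\llbracket\ell,m\rrbracket$ justified by \Cref{centert}, \Cref{qijsalpha}, and $\supp\chi'\subseteq[-\mathfrak{M},\mathfrak{M}]$. Your bookkeeping, which takes the Lipschitz parameter $B$ of \Cref{fgab} to be $\mathcal{O}(B\mathfrak{M}^{-1})$ rather than $B$, is the tighter choice reflecting $\chi'=\mathcal{O}(\mathfrak{M}^{-1})$, and is what produces the advertised $\mathfrak{M}^{-1/2}$ error as anticipated in the heuristic around \eqref{v2}.
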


	\begin{proof}
		
		Throughout this proof, recalling \Cref{adelta}, we restrict to the event $\mathsf{E}_1  = \mathsf{BND}_{\bm{L}(0)} (\log N)$, as we may by \Cref{l0eigenvalues}. We further restrict to the event $\mathsf{E}_2$ on which \Cref{qijsalpha} holds, and to the event $\mathsf{E}_3$ on which \Cref{centert} holds.
		
		We will further restrict to a fourth event $\mathsf{E}_4$, which will be obtained by applying \Cref{concentrationh3} with the $(f, G)$ there equal to $(\mathfrak{l}, \chi')$ here, and the $F$ there equal to either $v_{\eff}$ or $1$ here. Then, we may take the parameters $(S, B)$ in \Cref{fgab} to be $(\mathfrak{M}, B)$ here (by \Cref{mchi}), and the parameter $A$ there to be $(\log N)^2$ here (by \Cref{derivativev}). Moreover, we may take the parameter $D$ in \eqref{f1d} to be $(\log N)^3$, by the definition \eqref{functionl} of $\mathfrak{l}$. Therefore, since \Cref{mchi} implies that $\int_{-\infty}^{\infty} \chi'(\alpha q) dq = \alpha^{-1}$, \Cref{concentrationh3} yields an overwhelmingly probable event $\mathsf{E}_4$, on which the following holds. Whenever $k \in \llbracket 1, N \rrbracket$ satisfies $N_1 + T(\log N)^5 \le \varphi_s (k) \le N_2 - T (\log N)^5$, we have
		\begin{flalign}
			\label{2q0} 
			\Bigg| 2 \displaystyle\sum_{i=1}^N \mathfrak{l} (\lambda_k - \lambda_i) \cdot \chi' \big( Q_k (s) - Q_i (s) \big) - 2 \alpha^{-1} \displaystyle\int_{-\infty}^{\infty} \mathfrak{l} (\lambda_k - \lambda) \varrho (\lambda) d \lambda \Bigg| \le 	B \mathfrak{M}^{1/2} (\log N)^{18},
		\end{flalign} 
		
		\noindent and 
		\begin{flalign}
			\label{3q0} 
			\begin{aligned} 
			\Bigg| 2 \displaystyle\sum_{i=1}^N v_{\eff} (\lambda_i) \cdot \mathfrak{l} (\lambda_k - \lambda_i) \cdot \chi' \big( Q_k (s) - Q_i (s) \big) - 2 \alpha^{-1} \displaystyle\int_{-\infty}^{\infty} v_{\eff} (\lambda) \mathfrak{l} (& \lambda_k - \lambda) \varrho (\lambda) d \lambda \Bigg| \\
			& \le 	B \mathfrak{M}^{1/2} (\log N)^{18}.
			\end{aligned} 
		\end{flalign}
		
		\noindent We additionally further restrict $\mathsf{E}_4$ in the below.
		
		Let us first approximate the integrals appearing in \eqref{2q0} and \eqref{3q0}. To that end, observe that there exist constants $c_2>0$ and $C_1>1$ such that, for sufficiently large $N$,
		\begin{flalign}
			\label{integral1} 
			\begin{aligned} 
			\Bigg| \displaystyle\int_{-\infty}^{\infty} \mathfrak{l}(x-y) v_{\eff} (y & ) \varrho(y) dy - \displaystyle\int_{-\infty}^{\infty} \log |x-y| v_{\eff} (y) \varrho (y) dy \Bigg| \\
			& \le C_1 \displaystyle\int_{-\infty}^{\infty} \big|\mathfrak{l}(x-y) - \log |x-y| \big| \cdot  (|y|+1) e^{-c_2 y^2} dy \le e^{-(\log N)^2}.
			\end{aligned} 
		\end{flalign}
		
		\noindent Here, the first inequality follows from the fact that $|v_{\eff} (x)| \le C_2 (|x|+1)$ for some $C_2 > 1$ (by \Cref{derivativev}), with the exponential decay of $\varrho$ (from \Cref{rhoexponential}), and the second follows from the fact that 
		\begin{flalign*} 
			\big| \mathfrak{l}(x) - \log |x| \big| \le 2 |\log x|  \cdot \mathbbm{1}_{|x| \le e^{-2(\log N)^2}} + e^{-2(\log N)^2},
		\end{flalign*} 
		
		\noindent which is a quick consequence of the definition \eqref{functionl} of $\mathfrak{l}$. By entirely analogous reasoning, we also find for sufficiently large $N$ that  
		\begin{flalign}
			\label{integral2} 
			\Bigg| \displaystyle\int_{-\infty}^{\infty} \mathfrak{l} (x-y) \varrho (y) dy - \displaystyle\int_{-\infty}^{\infty} \log |x-y| \varrho(y) dy \Bigg| \le e^{-(\log N)^2}.
		\end{flalign} 
		
		We next	insert \eqref{integral1} and \eqref{integral2} into \eqref{2q0} and \eqref{3q0}, respectively, and additionally change variables from $(i, k)$ to $(\varphi_0^{-1} (k), \varphi_0^{-1} (j))$. This yields  
			\begin{flalign}
			\label{4q0} 
			\begin{aligned} 
			\Bigg| 2 \displaystyle\sum_{k=N_1}^{N_2} \mathfrak{l} (\Lambda_j - \Lambda_k) \cdot \chi' \big( Q_{\varphi_0^{-1}(j)} (s) - Q_{\varphi_0^{-1}(k)} (s) \big) - 2\alpha^{-1}  \displaystyle\int_{-\infty}^{\infty} &  \log |\Lambda_j - \lambda| \varrho (\lambda) d \lambda \Bigg| \\
			& \le 2 B \mathfrak{M}^{1/2} (\log N)^{18},
			\end{aligned} 
		\end{flalign} 
		
		\noindent and 
		\begin{flalign}
			\label{5q0} 
			\begin{aligned} 
				\Bigg| 2 \displaystyle\sum_{k=N_1}^{N_2} v_{\eff} (\Lambda_k) \cdot \mathfrak{l} (& \Lambda_j - \Lambda_k) \cdot \chi' \big( Q_{\varphi_0^{-1} (j)} (s) - Q_{\varphi_0^{-1} (k)} (s) \big) \\
				& \qquad - 2\alpha^{-1} \displaystyle\int_{-\infty}^{\infty} \log | \Lambda_j - \lambda| v_{\eff} (\lambda)  \varrho (\lambda ) d \lambda \Bigg|  \le 2B \mathfrak{M}^{1/2} (\log N)^{18},
			\end{aligned} 
		\end{flalign}
		
		\noindent where we used the fact that $N_1 + T(\log N)^5 \le \varphi_s (\varphi_0^{-1} (j)) \le N_2 - T(\log N)^5$ (by \eqref{jinterval}, since \Cref{centert} with our restriction to $\mathsf{E}_3$ implies that have $|\varphi_s (\varphi_0^{-1} (j)) - j| \le 2T (\log N)^2$). Combining these bounds, and using the fact from \Cref{derivativev} (with our restriction to $\mathsf{E}_1$) that $|v_{\eff} (\lambda)| \le (\log N)^2$ for sufficiently large $N$ and any $\lambda \in \eig \bm{L}$, yields  
		\begin{flalign}
			\label{lambdajv} 
			\begin{aligned} 
			\Bigg| & \Lambda_j -  v_{\eff} (\Lambda_j) \cdot \bigg( 2 \displaystyle\sum_{k=N_1}^{N_2} \mathfrak{l} (\Lambda_j - \Lambda_k) \cdot \chi' \big( Q_{\varphi_0^{-1} (j)} (s) - Q_{\varphi_0^{-1}(k)} (s) \big) + 1 \bigg) \\
			& \qquad \qquad \qquad \qquad + 2 \displaystyle\sum_{k=N_1}^{N_2} v_{\eff} (\Lambda_k) \cdot \mathfrak{l} (\Lambda_j - \Lambda_k) \cdot \chi' \big( Q_{\varphi_0^{-1} (j)} (s) - Q_{\varphi_0^{-1}(k)} (s) \big) \Bigg| \\
			& \le \Bigg| \Lambda_j -  v_{\eff} (\Lambda_j) \cdot \bigg( 2\alpha^{-1}  \displaystyle\int_{-\infty}^{\infty} \log |\Lambda_j - \lambda| \varrho (\lambda) d \lambda + 1 \bigg) + 2 \alpha^{-1} \displaystyle\int_{-\infty}^{\infty} \log |\Lambda_j - \lambda| v_{\eff} (\lambda) \varrho (\lambda) d \lambda \Bigg| \\
			&  \qquad + B \mathfrak{M}^{-1/2} (\log N)^{21} = B \mathfrak{M}^{-1/2} (\log N)^{21}.
			\end{aligned} 
		\end{flalign}
		
		\noindent Here, in the last equality we used the fact that 
		\begin{flalign*}
			v_{\eff} & (\Lambda_j) \cdot \bigg( 2\alpha^{-1}  \displaystyle\int_{-\infty}^{\infty} \log |\Lambda_j - \lambda| \varrho (\lambda) d \lambda + 1 \bigg) - 2 \alpha^{-1} \displaystyle\int_{-\infty}^{\infty} \log |\Lambda_j - \lambda| v_{\eff} (\lambda) \varrho (\lambda) d \lambda \\
			&  = v_{\eff} (\Lambda_j) \cdot \big(\alpha^{-1} \cdot \bm{\mathrm{T}} \varrho (\Lambda_j) + 1 \big) - \alpha^{-1} \cdot \bm{\mathrm{T} \varrho} v_{\eff} (\Lambda_j) = (\theta^{-1} \cdot \bm{\varsigma_0^{\dr}} - \alpha^{-1} \cdot \bm{\mathrm{T} \varrho}) v_{\eff} (\Lambda_j) = \Lambda_j,
		\end{flalign*}
		
		\noindent where the first statement holds by the definition \eqref{operatort} of $\bm{\mathrm{T}}$; the second holds by the second statement in \eqref{rho2}; and the third holds by \Cref{vt}.
		
		It therefore remains to show that the sums over $k$ on the left side of \eqref{lambdajv} can be restricted to the interval $\llbracket \ell, m \rrbracket$. To that end, it suffices to verify that $\chi' ( Q_{\varphi_0^{-1} (j)} (s) - Q_{\varphi_0^{-1} (k)} (s) ) = 0$ if $j \in \llbracket \ell + T(\log N)^5, m - T(\log N)^5 \rrbracket$ and $k \notin \llbracket \ell, m \rrbracket$, and hence (as $\supp \chi' \subseteq [-\mathfrak{M}, \mathfrak{M}]$) to confirm for $|j-k| \ge T(\log N)^5$ that $|Q_{\varphi_0^{-1} (j)} (s) - Q_{\varphi_0^{-1} (k)} (s)| > \mathfrak{M}$. This follows very similarly to as in the proof of \eqref{qjqk4}, as a quick consequence of \Cref{qijsalpha} and \Cref{centert} (with our restriction to $\mathsf{E}_2 \cap \mathsf{E}_3$); further details are therefore omitted.
	\end{proof}
	
	Now we can establish \Cref{lambdajq}. 
	
	\begin{proof}[Proof of \Cref{lambdajq}]
		
		Throughout this proof, recalling \Cref{adelta}, we restrict to the event $\mathsf{E}_1 = \mathsf{BND}_{\bm{L}} (\log N)$, which we may by \Cref{l0eigenvalues}. We further restrict to the event $\mathsf{E}_2$ on which \Cref{centert} holds and to the event $\mathsf{E}_3$ on which \Cref{centerdistance} holds. Additionally, we restrict to the event $\mathsf{E}_4$ on which \eqref{qjqk3} holds for any $k \in \llbracket \ell, m \rrbracket$, as we may by \Cref{qjqk2}.\footnote{While \Cref{qjqk2} was claimed for indices $j, k \in \llbracket k_1, k_2 \rrbracket$, it is quickly verified from the deriviation in \eqref{qjqk4} that the first bound in \eqref{qjqk3} holds for $j$ satisfying \eqref{jinterval} and $k \in \llbracket \ell, m \rrbracket$.} For any $r \in [0, T]$, also let $\mathsf{E}_5 (r)$ denote the event on which \eqref{vlambda02} holds. Let $\mathcal{T} \subset [0, T]$ be an $N^{-1}$-mesh of $[0, T]$, and set $\mathsf{E}_5 = \bigcap_{r \in \mathcal{T}} \mathsf{E}_5 (r)$; we may also restrict to $\mathsf{E}_5$ by \Cref{lambdajq2}, with a union bound. 
		
		Now fix $s \in [0, T]$, and let $s_0 \in \mathcal{S}$ be such that $|s-s_0| \le N^{-1}$. Since (by our restriction to $\mathsf{E}_5$) \eqref{vlambda02} holds with the $s$ there equal to $s_0$ here, it suffices to show that 
		\begin{flalign*}
			& \displaystyle\max_{k \in \llbracket \ell, m \rrbracket} |v_{\eff} (\Lambda_k)| \cdot \displaystyle\max_{k \in \llbracket \ell, m \rrbracket} |\mathfrak{l} (\Lambda_j - \Lambda_k)| \\ 
			& \qquad  \times \displaystyle\sum_{k=\ell}^m \big| \chi' (Q_{\varphi_0^{-1} (j)} (s) - Q_{\varphi_0^{-1} (k)} (s) \big) - \chi' \big( Q_{\varphi_0^{-1} (j)} (s_0) - Q_{\varphi_0^{-1} (k)} (s_0) \big) \big|  \le B \mathfrak{M}^{-1/2} (\log N)^{20}.
		\end{flalign*}
		
		\noindent Due to our restriction to $\mathsf{E}_1$, we have for any $k \in \llbracket N_1, N_2 \rrbracket$ that $|\mathfrak{l} (\Lambda_j - \Lambda_k)| \le 5 (\log N)^2$ and $|v_{\eff} (\Lambda_k)| \le (\log N)^2$ (the latter by \Cref{derivativev}). Hence, it remains to show that 
		\begin{flalign}
			\label{sumq5} 
			\displaystyle\sum_{k=\ell}^m \big| \chi' (Q_{\varphi_0^{-1} (j)} (s) - Q_{\varphi_0^{-1} (k)} (s) \big) - \chi' \big( & Q_{\varphi_0^{-1} (j)} (s_0) - Q_{\varphi_0^{-1} (k)} (s_0) \big) \big|  \le B \mathfrak{M}^{-1/2} (\log N)^{15}.
		\end{flalign}
		
		Next, by \Cref{centert} (with our restriction to $\mathsf{E}_2$), we have for any $k \in \llbracket \ell, m \rrbracket$ that $|\varphi_s (\varphi_0^{-1} (k)) - k| \le 2T (\log N)^2$, and so $\varphi_s (\varphi_0^{-1} (k)) \in \llbracket \ell + 2T(\log N)^2, m - 2T (\log N)^2 \rrbracket \subseteq \llbracket N_1 + T(\log N)^5, N_2 - T(\log N)^5 \rrbracket$, by \eqref{jinterval}. Therefore, by \Cref{centerdistance} (with our restriction to $\mathsf{E}_2$) with the fact that $|s-s_0| \le N^{-1} \le 1/2$, we have for any $k \in \llbracket \ell, m \rrbracket$ that
		\begin{flalign*}
			\big| \big( Q_{\varphi_0^{-1} (j)} (s) - Q_{\varphi_0^{-1} (k)} (s) \big) - \big( Q_{\varphi_0^{-1} (j)} (s_0) - Q_{\varphi_0^{-1} (k)} (s_0) \big) \big| \le 3 (\log N)^4.
		\end{flalign*}
		
		\noindent Thus, by Taylor expanding $\chi'$ (and using \Cref{mchi}), we deduce
		\begin{flalign}
			\label{sumq4} 
			\begin{aligned} 
				\displaystyle\sum_{k=\ell}^m \big| \chi' \big( & Q_{\varphi_0^{-1} (j)} (s) - Q_{\varphi_0^{-1} (k)} (s) \big) - \chi' \big(  Q_{\varphi_0^{-1} (j)} (s_0) - Q_{\varphi_0^{-1} (k)} (s_0) \big) \big| \\ 
				& \le  3 B \mathfrak{M}^{-2} (\log N)^4 \displaystyle\sum_{k=\ell}^m \big( \mathbbm{1}_{|Q_{\varphi_0^{-1} (j)} (s) - Q_{\varphi_0^{-1} (k)} (s)| \le \mathfrak{M}} + \mathbbm{1}_{|Q_{\varphi_0^{-1} (j)} (s_0) - Q_{\varphi_0^{-1} (k)} (s_0)| \le \mathfrak{M}} \big). 
			\end{aligned} 
		\end{flalign}
		
		\noindent By the first inequality in \eqref{qjqk3} (with our restriction to $\mathsf{E}_4$), there are at most $3\mathfrak{M} (\log N)^5$ indices $k$ for which the summand on the right side of \eqref{sumq4} is nonzero, in which case it is at most equal to $2$. It follows that 
		\begin{flalign*} 
			\displaystyle\sum_{k=\ell}^m \big| \chi' \big( Q_{\varphi_0^{-1} (j)} (s) - Q_{\varphi_0^{-1} (k)} (s) \big) - \chi' \big(  Q_{\varphi_0^{-1} (j)} (s_0) - Q_{\varphi_0^{-1} (k)} (s_0) \big) \big| & \le 18B \mathfrak{M}^{-1} (\log N)^9 \\
			& \le B \mathfrak{M}^{-1/2} (\log N)^{15}
		\end{flalign*}
		
		\noindent where in the last inequality we used \eqref{testimateb}. This confirms \eqref{sumq5} and thus the lemma.	
	\end{proof} 
	
	\subsection{Derivative Approximation for $\mathfrak{Q}_j$}
	
	\label{EquationQ}
	
	In this section we use \Cref{lambdajq} to prove the following proposition indicating that the derivative $\mathfrak{Q}_j'$ of the proxy dynamics is close to $v_{\eff} (\Lambda_j)$, with high probability. 
	
	\begin{prop} 
		\label{qv} 
		
		Adopt \Cref{chi2}, recall \Cref{qjt2}, let $\theta_0 = \theta_0 (\beta) > 0$ denote the constant from \Cref{beta0theta}, and assume that $\theta \in (0, \theta_0)$. The following holds with overwhelming probability. For each $j \in \llbracket k_1 + T^5 , k_2 - T^5  \rrbracket$ and $s \in [0, T]$, we have 
		\begin{flalign*} 
			| \mathfrak{Q}_j' (s) - v_{\eff} (\Lambda_j) | \le B^2 \mathfrak{M}^{1/2} (\log N)^{32}.
		\end{flalign*} 
		
	\end{prop}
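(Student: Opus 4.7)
The plan is to compare, for each fixed $s \in [0,T]$, the defining relation \eqref{derivativeqjs2} for $\mathfrak{Q}_j'(s)$ with the approximate relation for $v_{\eff}(\Lambda_j)$ provided by \Cref{lambdajq}. Setting $i = \lfloor s \rfloor$ so $s \in [i,i+1]$, and $\mathfrak{w}_k = \mathfrak{Q}_k'(s) - v_{\eff}(\Lambda_k)$ for $k \in \llbracket \ell_i, m_i \rrbracket$, the goal is to realize $\bm{\mathfrak{w}} = (\mathfrak{w}_{\ell_i}, \ldots, \mathfrak{w}_{m_i})$ as the solution to a linear system $\bm{S}_{\bm{\Lambda};\bm{\mathfrak{Q}}(s)}^{\llbracket \ell_i, m_i \rrbracket} \bm{\mathfrak{w}} = \bm{u}$ with a quantitatively small right-hand side $\bm{u}$, and then invert the matrix using \Cref{sk1k2inverse} and \Cref{qjt}. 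First I would verify that the hypothesis \eqref{jinterval} of \Cref{lambdajq} is satisfied for our $(j, \ell_i, m_i)$: the containments in $\llbracket N_1+T(\log N)^6, N_2-T(\log N)^6 \rrbracket$ follow from \eqref{nk1k2}, and the bound $j \in \llbracket \ell_i + T(\log N)^5, m_i - T(\log N)^5 \rrbracket$ follows from $j \in \llbracket k_1+T^5, k_2-T^5 \rrbracket$, $i \le T$, and \eqref{testimateb}.

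Subtracting the equation from \Cref{lambdajq} (applied with $\ell=\ell_i$, $m=m_i$) from \eqref{derivativeqjs2} produces, after regrouping,
\begin{flalign*}
\bm{S}_{\bm{\Lambda};\bm{\mathfrak{Q}}(s)}^{\llbracket \ell_i, m_i \rrbracket} \bm{\mathfrak{w}} = \bm{u}, \qquad \text{with} \qquad u_j = \mathcal{E}_j^{(1)} + \mathcal{E}_j^{(2)} + T^3 v_{\eff}(\Lambda_j) \cdot (\mathbbm{1}_{j \le \ell_i + T^2} + \mathbbm{1}_{j \ge m_i - T^2}),
\end{flalign*}
where $\mathcal{E}_j^{(1)}$ is the error from \Cref{lambdajq}, bounded by $B\mathfrak{M}^{-1/2}(\log N)^{21}$, and $\mathcal{E}_j^{(2)}$ accounts for the fact that the $\chi'$ arguments differ between the two equations: one uses $\mathfrak{Q}_j(s) - \mathfrak{Q}_k(s)$, the other uses $Q_{\varphi_0^{-1}(j)}(s) - Q_{\varphi_0^{-1}(k)}(s)$. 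Taylor expanding $\chi'$, bounding $|\chi''| \le B\mathfrak{M}^{-2}$ via \Cref{mchi}, invoking the bound $|\mathfrak{Q}_k(s) - Q_{\varphi_0^{-1}(k)}(s)| \le B\mathfrak{M}^{1/2}(\log N)^{20}$ from \Cref{q2q}, and using the support property of $\chi'$ together with \Cref{qijsalpha} to limit the sum to $O(\mathfrak{M}(\log N)^5)$ indices (with $|v_{\eff}|$ and $|\mathfrak{l}|$ of polylogarithmic size on $\eig \bm{L}$, by \Cref{derivativev}), I would obtain $|\mathcal{E}_j^{(2)}| \le B^2 \mathfrak{M}^{-1/2} (\log N)^{30}$.

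For the inversion step, since $\theta \in (0,\theta_0)$ and $|\mathfrak{Q}_j(s)-Q_{\varphi_0^{-1}(j)}(s)| \le B\mathfrak{M}^{1/2}(\log N)^{20} \ll B^{-1}\mathfrak{M}(\log N)^{-10}$ by \Cref{q2q} and \eqref{testimateb}, \Cref{qjt} applies and verifies the diagonal dominance \eqref{sumchi} with $\varepsilon = (\log N)^{-1}$. Then \Cref{sk1k2inverse}(2) gives $|\mathfrak{w}_j| \le \log N \cdot \max_{k : |\mathfrak{Q}_k-\mathfrak{Q}_j| \le U\mathfrak{M}} |u_k| + \log N \cdot e^{-U/(8\log N)} \cdot \max_k |u_k|$. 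The boundary contribution $T^3 v_{\eff}(\Lambda_j)$ is only nonzero for $k$ in $\llbracket \ell_i, \ell_i+T^2\rrbracket \cup \llbracket m_i-T^2, m_i \rrbracket$; for our $j \in \llbracket k_1+T^5, k_2-T^5\rrbracket$, such $k$ satisfy $|k-j| \ge T^5/2$, and \Cref{qijsalpha} together with \Cref{q2q} ensures $|\mathfrak{Q}_j-\mathfrak{Q}_k| \gtrsim T^5$. Taking, say, $U = \mathfrak{M}^{-1}T^2$ so that the local max in \eqref{wivi} excludes the boundary indices while the exponential factor $e^{-U/(8\log N)}$ overwhelmingly dominates the crude bound $|u_k| \le T^4$, one gets $|\mathfrak{w}_j| \le B^2 \mathfrak{M}^{-1/2} (\log N)^{32}$, up to negligible error.

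The main technical obstacle I anticipate is bookkeeping the Taylor-expansion error $\mathcal{E}_j^{(2)}$ cleanly: the gain in $\mathfrak{M}^{-1/2}$ must beat the loss from the $O(\mathfrak{M})$ nonzero terms in the sum, which is only possible because $|\chi''|$ improves to $B\mathfrak{M}^{-2}$ and the displacements $|w_k|$ only grow like $\mathfrak{M}^{1/2}$. All other steps are comparatively routine, but care is needed to ensure that each invocation of \Cref{centert}, \Cref{qijsalpha}, \Cref{centerdistance}, and \Cref{lambdajq} is compatible with the range $j \in \llbracket k_1+T^5, k_2-T^5\rrbracket$ uniformly in $s \in [0,T]$; this will use a standard mesh-and-union-bound argument, discretizing $[0,T]$ at scale $N^{-O(1)}$ and using the Lipschitz properties of both sides in $s$ on the event $\{|\mathfrak{Q}_j'|\le (\log N)^3\}$ guaranteed by \Cref{q2q}.
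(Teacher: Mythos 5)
Your proposal is correct and follows essentially the same route as the paper: subtract \Cref{lambdajq} from \eqref{derivativeqjs2}, absorb the $\mathfrak{Q}$-versus-$Q$ discrepancy in the $\chi'$ arguments via a Taylor expansion controlled by \Cref{q2q}, handle the boundary indices with a crude polynomial bound that the exponential factor in \eqref{wivi} overwhelms, and invert using \Cref{sk1k2inverse} with diagonal dominance from \Cref{qjt} (via \Cref{sr}). The only cosmetic difference is that the paper works throughout with the single $s$-independent matrix $\bm{S}^{\llbracket \ell_T, m_T \rrbracket}$ — after observing that the support of $\chi'$ lets the $\mathfrak{Q}'$-relation restrict from $\llbracket \ell_i, m_i \rrbracket$ to $\llbracket \ell_T, m_T \rrbracket$ for $j$ in the bulk — whereas you use $\bm{S}^{\llbracket \ell_i, m_i \rrbracket}$ with $i = \lfloor s \rfloor$; both choices work.
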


	\begin{proof}[Proof of \Cref{qv}]
		
		Throughout this proof, we assume for notational convenience that $T$ is an integer. Recalling \Cref{adelta}, we restrict to the event $\mathsf{E}_1 = \mathsf{BND}_{\bm{L}} (\log N)$ throughout, as we may by \Cref{l0eigenvalues}. We further restrict to the events $\mathsf{E}_2$ on which \Cref{qijsalpha} holds; $\mathsf{E}_3$ on which \Cref{centert} holds; $\mathsf{E}_4$ on which \Cref{q2q}, \Cref{sr}, and \Cref{tt} all hold; $\mathsf{E}_5$ on which \Cref{qjqk2} holds; and $\mathsf{E}_6$ on which \Cref{lambdajq} holds.
		
		First observe, since $\supp \chi' \subseteq [-\mathfrak{M}, \mathfrak{M}]$ (by \Cref{mchi}), \Cref{qjqk2} (and our restriction to $\mathsf{E}_5$) implies that $\chi' (\mathfrak{Q}_j (s) - \mathfrak{Q}_k (s)) = 0$ if $|j-k| \ge T^2$. Therefore, this in particular holds if $j \in \llbracket \ell_T + T^2, m_T - T^2 \rrbracket$ and $k \notin \llbracket \ell_T, m_T \rrbracket$ (where we recall the $(\ell_i, m_i)$ from \Cref{qjt2}). Moreover, by our restriction to $\mathsf{E}_4$, \eqref{derivativeqjs2} holds for any $i \in \llbracket 0, T \rrbracket$, for all $j \in \llbracket \ell_T, m_T \rrbracket$ and $s \in [i, i+1]$. Together, these two facts imply for any $j \in \llbracket \ell_T + T^2, m_T - T^2 \rrbracket$ and $s \in [0, T]$ that
		\begin{flalign}
			\label{lambdaqchi} 
			\begin{aligned} 
				\Lambda_j & = \mathfrak{Q}_j' (s) \cdot \Bigg( 2 \displaystyle\sum_{k=\ell_T}^{m_T} \mathfrak{l} (\Lambda_j - \Lambda_k) \cdot \chi' \big( \mathfrak{Q}_j (s) - \mathfrak{Q}_k (s) \big) + 1 \Bigg) \\
				& \qquad \qquad \qquad \qquad - 2 \displaystyle\sum_{k=\ell_T}^{m_T} \mathfrak{Q}_k' (s) \cdot \mathfrak{l} (\Lambda_j - \Lambda_k) \cdot \chi' \big( \mathfrak{Q}_j (s) - \mathfrak{Q}_k(s) \big).
			\end{aligned} 
		\end{flalign}
		
		By \Cref{lambdajq} (and our restriction to $\mathsf{E}_6$), $v_{\eff} (\Lambda_j)$ satisfies a similar equation, namely, we have for any $j \in \llbracket \ell_T, m_T \rrbracket$ (where we observe that $\llbracket \ell_T, m_T \rrbracket \subseteq \llbracket k_1, k_2 \rrbracket \subseteq \llbracket N_1 + 2T(\log N)^6, N_2 - 2T(\log N)^6 \rrbracket$ by \eqref{nk1k2} and \eqref{testimateb}) and $s \in [0, T]$ that
		\begin{flalign}
			\label{lambdavchi} 
			\begin{aligned} 
			\Bigg| & \Lambda_j - v_{\eff} (\Lambda_j) \cdot \Bigg( 2 \displaystyle\sum_{k=\ell_T}^{m_T} \mathfrak{l} (\Lambda_j - \Lambda_k) \cdot \chi' \big( Q_{\varphi_0^{-1} (j)} (s) - Q_{\varphi_0^{-1} (k)} (s) \big) + 1 \Bigg) \\
			& \qquad \quad + 2 \displaystyle\sum_{k=\ell_T}^{m_T} v_{\eff} (\Lambda_k) \cdot \mathfrak{l} (\Lambda_j - \Lambda_k) \cdot \chi' \big( Q_{\varphi_0^{-1} (j)} (s) - Q_{\varphi_0^{-1} (k)} (s) \big) \Bigg| \le B \mathfrak{M}^{-1/2} (\log N)^{21}.
			\end{aligned} 
		\end{flalign}
		
		\noindent A distinction between this bound and \eqref{lambdaqchi} is that the latter replaces $Q_{\varphi_0^{-1} (i)} (s)$ with $\mathfrak{Q}_i (s)$, so let us bound the difference between the associated terms. By the second property in \Cref{q2q} (with our restriction to $\mathsf{E}_4$), we have that $|\mathfrak{Q}_k(s) - Q_{\varphi_0^{-1}(k)}(s)| \le B \mathfrak{M}^{1/2} (\log N)^{20}$ for each $k \in \llbracket \ell_T, m_T \rrbracket$ and $s \in [0, T]$. So it follows for any such $k$, and $j \in \llbracket \ell_T + T^2, m_T - T^2 \rrbracket$, that  
		\begin{flalign}
			\label{chiq2} 
			\begin{aligned}
				\big| \chi' \big( \mathfrak{Q}_j (s) - \mathfrak{Q}_k (s &) \big) - \chi' \big( Q_{\varphi_0^{-1}(j)} (s) - Q_{\varphi_0^{-1}(k)} (s) \big) \big| \\
				& \le B^2 \mathfrak{M}^{-3/2} (\log N)^{20} \cdot \big( \mathbbm{1}_{|\mathfrak{Q}_j (s) - \mathfrak{Q}_k(s)| \le \mathfrak{M}} +  \mathbbm{1}_{|Q_{\varphi_0^{-1}(j)} (s) - Q_{\varphi_0^{-1}(k)} (s)| \le \mathfrak{M}} \big),
			\end{aligned} 
		\end{flalign}
		
		\noindent where we have also used \Cref{mchi}. By \Cref{qjqk2} (and our restriction to $\mathsf{E}_5$), there are at most $3 \mathfrak{M} (\log N)^5$ indices $k \in \llbracket \ell_T, m_T \rrbracket$ for which at least one indicator function on the right side of \eqref{chiq2} is nonzero. Moreover, our restriction to $\mathsf{E}_1$ implies that $|v_{\eff} (\Lambda_k)| \le (\log N)^2$ and $|\mathfrak{l} (\Lambda_j-\Lambda_k)| \le 5 (\log N)^2$ for each $k \in \llbracket \ell_T, m_T \rrbracket$ (using \Cref{derivativev} for the former). Inserting these bounds, with \eqref{chiq2}, into \eqref{lambdavchi} yields 
		\begin{flalign}
			\label{lambdajqjderivative}
			\begin{aligned}
				\Bigg| & \Lambda_j -  v_{\eff} (\Lambda_j) \cdot \bigg( 2 \displaystyle\sum_{k=\ell_T}^{m_T} \mathfrak{l} (\Lambda_j - \Lambda_k) \cdot \chi' ( \mathfrak{Q}_j (s) - \mathfrak{Q}_k (s)) + 1 \bigg) \\
				& \qquad \qquad \qquad \qquad + 2 \displaystyle\sum_{k=\ell_T}^{m_T} v_{\eff} (\Lambda_k) \cdot \mathfrak{l} (\Lambda_j - \Lambda_k) \cdot \chi' (\mathfrak{Q}_j (s) - \mathfrak{Q}_k (s)) \Bigg| \\
				& \qquad \quad \le B \mathfrak{M}^{-1/2} (\log N)^{21} + 2 \cdot (\log N)^2 \cdot 5 (\log N)^2 \cdot 2B^2 \mathfrak{M}^{-3/2} (\log N)^{20} \cdot 3 \mathfrak{M} (\log N)^5 \\
				&  \qquad \quad \le 61 B^2 \mathfrak{M}^{-1/2} (\log N)^{29}.
			\end{aligned} 
		\end{flalign} 
		
		Now, fix $s \in [0,T]$; denote $\mathfrak{w}_j (r) = \mathfrak{Q}_j' (s) - v_{\eff} (\Lambda_j)$ for each $j \in \llbracket \ell_T, m_T \rrbracket$; and set $\bm{\mathfrak{w}} = (\mathfrak{w}_{\ell_T}, \mathfrak{w}_{\ell_T+1}, \ldots , \mathfrak{w}_{m_T})$. Denoting $\bm{\mathfrak{Q}} = (\mathfrak{Q}_{N_1} (s), \mathfrak{Q}_{N_1+1}(s), \ldots , \mathfrak{Q}_{N_2} (s))$ for each $s \in [0, T]$, abbreviate the matrix $\bm{S} = \bm{S}_{\bm{\Lambda}; \bm{\mathfrak{Q}}}^{\llbracket \ell_T, m_T \rrbracket}$ (from \Cref{chi2}). Let $\bm{S} \bm{\mathfrak{w}} = \bm{\mathfrak{u}}$, where $\bm{\mathfrak{u}} = (\mathfrak{u}_{\ell_T}, \mathfrak{u}_{\ell_T+1}, \ldots , \mathfrak{u}_{m_T})$. We claim for each $j \in \llbracket \ell_T, m_T \rrbracket$ that 
		\begin{flalign}
			\label{uj} 
			|\mathfrak{u}_j| \le B^2 \mathfrak{M}^{-1/2} (\log N)^{30} +  \mathbbm{1}_{j \notin \llbracket \ell_T + T^2, m_T - T^2 \rrbracket} \cdot N^6.
		\end{flalign}
		
		If $j \in \llbracket \ell_T + T^2, m_T - T^2 \rrbracket$, we deduce by subtracting \eqref{lambdajqjderivative} from \eqref{lambdaqchi} that
		\begin{flalign*}
			|\mathfrak{u}_j| & = \Bigg|  \mathfrak{w}_j \cdot \bigg( 2 \displaystyle\sum_{k=\ell_T}^{m_T} \mathfrak{l} (\Lambda_j - \Lambda_k) \cdot \chi' \big( Q_{\varphi_0^{-1} (j)} (s) - Q_{\varphi_0^{-1}(k)} (s) \big) + 1 \bigg) \\
			& \qquad \qquad - 2 \displaystyle\sum_{k=\ell_T}^{m_T} \mathfrak{w}_k \cdot \mathfrak{l} (\Lambda_j - \Lambda_k) \cdot \chi' \big( Q_{\varphi_0^{-1} (j)} (s) - Q_{\varphi_0^{-1}(k)} (s) \big) \Bigg| \le B^2 \mathfrak{M}^{-1/2} (\log N)^{30},
		\end{flalign*}
		
		\noindent where we have recalled that $\bm{\mathfrak{u}} = \bm{S} \bm{\mathfrak{w}}$ and the definition of $\bm{S}$ from \Cref{chi2}; this confirms \eqref{uj} these $j$. For the remaining $j \in \llbracket \ell_T, m_T \rrbracket$, we have for sufficiently large $N$ that 
		\begin{flalign*}
			|\mathfrak{u}_j| & \le |\mathfrak{w}_j| \cdot (T^3 + 1) +  2 \displaystyle\sum_{k=\ell_T}^{m_T} (|\mathfrak{w}_j| + |\mathfrak{w}_k|) \cdot |\mathfrak{l} (\Lambda_j - \Lambda_k)| \cdot | \chi' ( \mathfrak{Q}_j - Q_k  ) | \\
			& \le  4N^4 \cdot \displaystyle\sup_{q \in \mathbb{R}} |\chi'(q)| \cdot \displaystyle\max_{k \in \llbracket \ell_T, m_T \rrbracket} |\mathfrak{w}_k| \cdot \displaystyle\max_{k \in \llbracket \ell_T, m_T \rrbracket} |\mathfrak{l} (\Lambda_j-\Lambda_k)| \le N^5 \cdot \displaystyle\max_{k \in \llbracket \ell_T, m_T \rrbracket} |\mathfrak{w}_k| \le N^6,
		\end{flalign*}
		
		\noindent where in the first statement we recalled the definition \eqref{sij2} of $\bm{S}$; in the second we used the facts that $T \le N$ and $m_T - \ell_T + 1 \le N$; in the third we used the second statement in \Cref{mchi}, \eqref{testimateb}, and the fact that $|\mathfrak{l} (\Lambda_j-\Lambda_k)| \le 5 (\log N)^2$ (by our restriction to $\mathsf{E}_1$); and in the fourth we used the fact that $|\mathfrak{w}_k| \le |\mathfrak{Q}_j' (s)| + |v_{\eff} (\Lambda_j)| \le 4 (\log N)^3 \le N$ (as we have $|\mathfrak{Q}_j' (s)| \le (\log N)^3$, by \Cref{q2q} and our restriction to $\mathsf{E}_4$, and $|v_{\eff} (\Lambda_j)| \le (\log N)^2$, by \Cref{derivativev} and our restriction to $\mathsf{E}_1$). This establishes \eqref{uj} in general.
		
		By \Cref{sr} (with our restriction to $\mathsf{E}_4$), \eqref{sjjs2} holds, thereby enabling us to apply \Cref{sk1k2inverse}, with the $\varepsilon$ there equal to $(2 \log N)^{-1}$ here. The $U = \mathfrak{M}^{-1} T^2$ case of \eqref{wivi} then yields for $j \in \llbracket \ell_T + 2T^3, m_T - 2T^3 \rrbracket$ that
		\begin{flalign}
			\label{qj2} 
			\begin{aligned} 
				\big|\mathfrak{Q}_j'(s) - v_{\eff} (\Lambda_j) \big| = |\mathfrak{w}_j| & \le 2 \log N \cdot \displaystyle\max_{k: |\mathfrak{Q}_j - \mathfrak{Q}_k| \le T^2} |\mathfrak{u}_k| + 2 \log N \cdot e^{-T / 16 \log N} \cdot N^6 \\
				& \le 2 \log N \cdot \displaystyle\max_{k: |j - k| \le T^3} |\mathfrak{u}_k| + N^{-1} \le B^2 \mathfrak{M}^{-1/2} (\log N)^{32},
			\end{aligned} 
		\end{flalign}
		
		\noindent where we also used \eqref{uj} and the fact that $|\mathfrak{Q}_j - \mathfrak{Q}_k| \le T^3$ implies that $|j-k| \le T^2$, as quickly follows from \Cref{qijsalpha} and \Cref{centert}, with our restriction to $\mathsf{E}_2 \cap \mathsf{E}_3$ (entirely analogously to in the proof of \Cref{qjqk2}). This, together with the fact that $\llbracket \ell_T+ 2T^3, m_T - 2T^3 \rrbracket \subseteq \llbracket k_1 + T^5, k_2 - T^5 \rrbracket$ (as $(\ell_T, m_T) = (k_1 + T^4, k_2 - T^4)$), implies the proposition.
	\end{proof} 
	
	A quick consequence of \Cref{qv} is the following corollary; it indicates that \Cref{vestimate} holds under more restrictive hypotheses \eqref{jinterval2} than \eqref{n1n2zetat} and \eqref{j0}.

	\begin{cor}
		
		\label{q}
		
		Adopt \Cref{lbetaeta}; let $\theta_0 = \theta_0 (\beta) > 0$ denote the constant from \Cref{beta0theta}; and suppose that $\theta \in (0, \theta_0)$. The following holds with overwhelming probability. Let $j \in \llbracket 1, N \rrbracket$ be an index, and assume that 
		\begin{flalign} 
			\label{jinterval2} 
			10^8 \cdot (\log N)^{60} \le T \le N^{1/10}, \quad \text{and} \quad N_1 + 2 T^5 \le \varphi_0 (j) \le N_2 - 2 T^5,
		\end{flalign} 
		
		\noindent we have
		\begin{flalign*} 
			\displaystyle\sup_{t \in [0, T]} \big|Q_j (t) - Q_j (0) - t v_{\eff} (\lambda_j) \big| \le T^{1/2} (\log N)^{33}.
		\end{flalign*} 		 
	\end{cor}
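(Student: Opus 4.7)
The plan is to combine \Cref{q2q} and \Cref{qv} through a triangle inequality, after choosing the regularization scale $\mathfrak{M}$ to balance the two error terms. This essentially follows the heuristic at the end of \Cref{Estimate2}: the approximation error $|\mathfrak{Q}_j - Q_{\varphi_0^{-1}(j)}|$ is of order $\mathfrak{M}^{1/2}$ while the accumulated derivative error $t|\mathfrak{Q}_j' - v_{\eff}|$ is of order $t \mathfrak{M}^{-1/2}$, so taking $\mathfrak{M} = T$ produces an optimal error of order $T^{1/2}$.

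First I would fix $B = 10$ and construct a smooth odd function $\chi$ satisfying \Cref{mchi} by mollifying $\sgn(\alpha) \cdot \mathbbm{1}_{x > 0}$ on scale $\mathfrak{M}$. Setting $\mathfrak{M} = T$, $k_1 = N_1 + T^5$, and $k_2 = N_2 - T^5$, a routine check using the hypothesis $10^8 (\log N)^{60} \le T \le N^{1/10}$ together with the bounds on $N_1, N_2$ from \Cref{lbetaeta} verifies both \eqref{testimateb} and \eqref{nk1k2}, so \Cref{chi2} is in force.

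Given the index $j \in \llbracket 1, N \rrbracket$ of the Corollary, with $N_1 + 2T^5 \le \varphi_0(j) \le N_2 - 2T^5$, I would set $j' = \varphi_0(j)$; then $j' \in \llbracket k_1 + T^5, k_2 - T^5 \rrbracket$, and from \Cref{qjt2} together with the definition $Q_j(0) = q_{\varphi_0(j)}(0)$ one immediately reads off $\Lambda_{j'} = \lambda_{\varphi_0^{-1}(j')} = \lambda_j$ and $\mathfrak{Q}_{j'}(0) = Q_j(0)$. Working on the overwhelmingly probable event on which both \Cref{q2q} and \Cref{qv} hold, one has
\begin{flalign*}
|\mathfrak{Q}_{j'}(t) - Q_j(t)| \le B T^{1/2} (\log N)^{20}, \qquad |\mathfrak{Q}_{j'}'(s) - v_{\eff}(\lambda_j)| \le B^2 T^{-1/2} (\log N)^{32},
\end{flalign*}
for all $s, t \in [0, T]$. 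Integrating the second bound over $s \in [0, t]$ and applying the triangle inequality with the first (using $\mathfrak{Q}_{j'}(0) = Q_j(0)$) yields
\begin{flalign*}
|Q_j(t) - Q_j(0) - t v_{\eff}(\lambda_j)| \le B T^{1/2} (\log N)^{20} + B^2 T^{1/2} (\log N)^{32} \le T^{1/2} (\log N)^{33},
\end{flalign*}
for $N$ sufficiently large. Taking the supremum over $t \in [0, T]$ yields the claim.

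The substantive work is already complete in the preceding sections, so I do not anticipate a genuine obstacle. The only items requiring care are the bookkeeping verification that $\llbracket k_1 + T^5, k_2 - T^5 \rrbracket$ covers the relevant range of $\varphi_0(j)$ under the hypothesis \eqref{jinterval2}, and the careful translation between the particle index $j' \in \llbracket N_1, N_2 \rrbracket$ (on which $\bm{\mathfrak{Q}}$ is indexed) and the eigenvalue index $j \in \llbracket 1, N \rrbracket$ via the initial localization bijection $\varphi_0$.
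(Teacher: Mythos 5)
Your proof is correct and follows essentially the same route as the paper's: set $\mathfrak{M}=T$, pick $(k_1,k_2)$ so that $\varphi_0(j)\in\llbracket k_1+T^5,k_2-T^5\rrbracket$, verify \eqref{testimateb} and \eqref{nk1k2}, then combine \Cref{q2q} and \Cref{qv} by a triangle inequality. The only differences from the paper's proof are cosmetic — you take $B=10$ where the paper takes $B=100$, $k_1=N_1+T^5$ where the paper takes $k_1=N_1+T^2$, and you observe the exact equality $\mathfrak{Q}_{\varphi_0(j)}(0)=Q_j(0)$ rather than reusing \Cref{q2q} at $t=0$ — none of which changes the argument.
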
 
	
	\begin{proof}

		This will follow from \Cref{qv} and \Cref{q2q}, where we recall the notation from \Cref{qjt2} throughout this proof. We must first set the parameters $(B, \mathfrak{M}, k_1, k_2)$ implicit in the statements of those results. So, let us set $B = 100$ (which guarantees the existence of $\chi$ as in \Cref{mchi}); $\mathfrak{M} = T$; and $k_1 = N_1 + T^2$ and $k_2 = N_2 - T^2$. Under this setup, observe that \eqref{testimateb} and \eqref{nk1k2} hold (due to \eqref{jinterval2}); we therefore adopt \Cref{chi2} and define the proxy dynamics $\bm{\mathfrak{Q}}(s)$ as in \Cref{qjt2}. For the remainder of this proof, we restrict to the event $\mathsf{E}$ on which both \Cref{q2q} and \Cref{qv} hold (for the above choices of parameters). 
		
		Then by \Cref{qv}, for any $j_0 \in \llbracket N_1 + 2T^5, N_2 - 2T^5 \rrbracket \subseteq \llbracket k_1 + T^5, k_2 - T^5 \rrbracket$ and $t \in [0,T]$, we have 
		\begin{flalign}
			\label{q11q} 
			\displaystyle\max_{t \in [0,T]} \big|\mathfrak{Q}_{j_0} (t) - \mathfrak{Q}_{j_0} (0) - t v_{\eff} (\lambda_{\varphi_0^{-1} (j_0)}) \big| \le T \cdot B^2 \mathfrak{M}^{-1/2} (\log N)^{32} = 10^4 \cdot T^{1/2} (\log N)^{32},
		\end{flalign}
		
		\noindent where we have recalled from \Cref{qjt2} that $\Lambda_{j_0} = \lambda_{\varphi_0^{-1} (j_0)}$. Moreover, by \Cref{q2q} we have for $j_0 \in \llbracket N_1 + 2T^5, N_2 - 2T^5 \rrbracket \subseteq \llbracket k_1 + T^4, k_2 - T^4 \rrbracket \subseteq \llbracket \ell_{\lfloor T \rfloor}, m_{\lfloor T \rfloor} \rrbracket$ that 
		\begin{flalign}
			\label{q22q} 
			\displaystyle\max_{t \in [0,T]} \big| \mathfrak{Q}_{j_0} (t) - Q_{\varphi_0^{-1} (j_0)} (t) \big| \le B\mathfrak{M}^{1/2} (\log N)^{20} = 100 T^{1/2} (\log N)^{20}.
		\end{flalign}
		
		\noindent Combining \eqref{q11q} and \eqref{q22q}, and setting $j_0 = \varphi_0 (j)$, we deduce the corollary.
	\end{proof}

	\subsection{Proof of \Cref{vestimate}}
	
	\label{ProofV} 
	
	\begin{proof}[Proof of \Cref{vestimate}]
		
		First observe that if $T \le 10^8 \cdot (30 \log N)^{60}$ then we have $T^{1/2} (\log N)^{35} \ge (T+1) (\log N)^4$, and so the theorem follows from the $(t, t'; \lambda; \varphi, \varphi') = (0, t; \lambda_j; \varphi_0 (j), \varphi_t (j))$ case of \Cref{centerdistance}. Therefore, we assume that $T \ge 10^8 \cdot (30 \log N)^{60}$ in what follows. 
		
		We will show for any fixed $t \in [0,T]$ that, with overwhelming probability, 
		\begin{flalign}
			\label{qjt0} 
		 \big| Q_j (t) - Q_j(0) - tv_{\eff} (\lambda_j) \big| \le T^{1/2} (\log N)^{34}.
		\end{flalign} 
		
		\noindent Let us first verify that this is sufficient to confirm the theorem. Recalling \Cref{adelta}, restrict to the event $\mathsf{F}_1 = \mathsf{BND}_{\bm{L}} (\log N)$ from \Cref{adelta}, which we may by \Cref{l0eigenvalues}. We also restrict to the event $\mathsf{F}_2$ on which \Cref{qijsalpha} holds and to the event $\mathsf{F}_3$ on which \Cref{centerdistance} holds.
		
		Let $\mathsf{F}_4 (t)$ denote the event on which \eqref{qjt0} holds; let $\mathcal{T} \subseteq [0, T]$ denote an $N^{-2}$-mesh of $[0, T]$; and let $\mathsf{F}_4 = \bigcap_{s \in \mathcal{T}} \mathsf{F}_4 (s)$. Restrict to $\mathsf{F}_4$, and fix $t \in [0, T]$. Then, there exists $s \in \mathcal{S}$ for which $|t-s| \le N^{-2}$, meaning that 
		\begin{flalign*}
			\big| Q_j (t) - Q_j (0) - tv_{\eff} (\lambda_j) \big| & \le \big| Q_j (s) - Q_j (0) - sv_{\eff} (\lambda_j) \big| + |t-s| \cdot |v_{\eff} (\lambda_j)| + | Q_j (t) - Q_j (s) | \\
			& \le  T^{1/2} (\log N)^{34} + N^{-2} \cdot (\log N)^2 + | Q_j (t) - Q_j (s) | \\
			& \le T^{1/2} (\log N)^{34} + N^{-1} + 2 (\log N)^4 \le T^{1/2} (\log N)^{35}.
		\end{flalign*} 
		
		\noindent Here, the second inequality follows from our restriction to $\mathsf{F}_4$ and the fact that $|v_{\eff} (\lambda_j)| \le (\log N)^2$ (by \Cref{derivativev} and our restriction to $\mathsf{F}_1$); the third from \Cref{centerdistance} (with our restriction to $\mathsf{F}_3$), as $|t-s| \le 1$; and the fourth from the fact that $N$ is sufficiently large. This establishes the lemma, so it suffices to show that \eqref{qjt0} holds with overwhelming probability. 
		
		To that end we first apply \Cref{q} on a Toda lattice (at thermal equilibrium) on a larger interval, and then use comparison estimates (such as \Cref{a2p2}, \Cref{lleigenvalues2}, and \Cref{lleigenvalues}) to approximate the original Toda lattice by the enlarged one. This will proceed similarly to in the proof of \cite[Theorem 8.5]{LC} given \cite[Theorem 8.2]{LC}. To implement it, first let $\tilde{N}_1 \le \tilde{N}_2$ be integers satisfying 
		\begin{flalign}
			\label{nn} 
			\tilde{N}_1 + N^{10} \le N_1 \le N_2 \le \tilde{N}_2 - N^{10}; \qquad  T^{15} \le \tilde{N} \le N^{30},
		\end{flalign} 
		
		\noindent where $\tilde{N} = \tilde{N}_2 - \tilde{N}_1 + 1$. Let $(\tilde{\bm{a}} (s); \tilde{\bm{b}} (s)) \in \mathbb{R}^{\tilde{N}} \times \mathbb{R}^{\tilde{N}}$ denote the Flaschka variables for a Toda lattice on $\llbracket \tilde{N}_1, \tilde{N}_2 \rrbracket$; letting $\tilde{\bm{a}} (s) = (\tilde{a}_{\tilde{N}_1} (s), \tilde{a}_{\tilde{N}_1+1} (s), \ldots , \tilde{a}_{\tilde{N}_2} (s))$ and $\tilde{\bm{b}} (s) = (\tilde{b}_{\tilde{N}_1} (s), \tilde{b}_{\tilde{N}_1+1} (s), \ldots , \tilde{b}_{\tilde{N}_2}(s))$, they satisfy $\tilde{a}_{\tilde{N}_2} (s) = 0$, and \eqref{derivativepa} holds for each $(j, t) \in \llbracket \tilde{N}_1, \tilde{N}_2 \rrbracket \times \mathbb{R}_{\ge 0}$. We sample the initial data $(\tilde{\bm{a}}(0);\tilde{\bm{b}}(0))$ according to the thermal equilibrium $\mu_{\beta,\theta;\tilde{N}-1,\tilde{N}}$ of \Cref{mubeta2}; we couple $(\tilde{\bm{a}}(0);\tilde{\bm{b}}(0))$ with $(\bm{a}(0);\bm{b}(0))$ so that $(\tilde{a}_i(0),\tilde{b}_i(0)) = (a_i(0), b_i(0))$ for all $i \in \llbracket N_1, N_2 - 1 \rrbracket$. 
		
		For any $s \in \mathbb{R}_{\ge 0}$, denote the Lax matrix associated with $(\tilde{\bm{a}}(s);\tilde{\bm{b}}(s))$ (as in \Cref{matrixl}) by $\tilde{\bm{L}}(s) = [\tilde{L}_{ij}(s)] \in \SymMat_{\tilde{N}\times\tilde{N}}$, and set $\eig \tilde{\bm{L}}(s) = (\tilde{\lambda}_1, \tilde{\lambda}_2, \ldots , \tilde{\lambda}_{\tilde{N}})$. Setting $\tilde{\zeta} = e^{-100 (\log \tilde{N})^{3/2}}$, for each $s \in \mathbb{R}_{\ge 0}$ let $\tilde{\varphi}_s : \llbracket 1, \tilde{N} \rrbracket \rightarrow \llbracket \tilde{N}_1, \tilde{N}_2 \rrbracket$ denote a $\zeta$-localization center bijection for $\tilde{\bm{L}}(s)$. Further let $(\tilde{\bm{p}} (s); \tilde{\bm{q}}(s)) \in \mathbb{R}^{\tilde{N}} \times \mathbb{R}^{\tilde{N}}$ denote the Toda state space variables associated with $(\tilde{\bm{a}}(s); \tilde{\bm{b}}(s))$, as in \Cref{Open}, where we have indexed the $\tilde{N}$-tuples $\tilde{\bm{p}} (s) = (\tilde{p}_{\tilde{N}_1} (s), \tilde{p}_{\tilde{N}_1+1} (s), \ldots , \tilde{p}_{\tilde{N}_2} (s))$ and $\tilde{\bm{q}} (s) = (\tilde{q}_{\tilde{N}_1} (s), \tilde{q}_{\tilde{N}_1+1} (s), \ldots , \tilde{q}_{\tilde{N}_2} (s))$. For each $s \in \mathbb{R}_{\ge 0}$ and $i \in \llbracket 1, \tilde{N} \rrbracket$, denote $\tilde{Q}_i (s) = \tilde{q}_{\tilde{\varphi}_s (i)} (s)$.
		
		We next restrict to seven events.\footnote{Here, when we define an event on which some result (such as \Cref{centert}) holds with the $\bm{L}$ there equal to $\tilde{\bm{L}}$, we also implicitly replace $(N_1, N_2, \bm{q})$ in that result with $(\tilde{N}_1, \tilde{N}_2, \tilde{\bm{q}})$.} Recalling \Cref{adelta}, we first restrict to the event $\mathsf{E}_1 = \bigcap_{r \ge 0} \mathsf{BND}_{\bm{L}(r)} (\log N) \cap \mathsf{BND}_{\tilde{\bm{L}}(r)} (\log N)$, as we may by \Cref{l0eigenvalues}. Further restrict to the event $\mathsf{E}_2$ on which $a_i (0) \ge e^{-(\log N)^2}$ for each $i \in \llbracket N_1, N_2 - 1 \rrbracket$, which we may by the explicit density of the $(a_i)$ from \Cref{mubeta2}. Also restrict to the event $\mathsf{E}_3$ on which \Cref{qijsalpha} holds, with the $\bm{q}(s)$ there equal to both $\bm{q}(s)$ and $\tilde{\bm{q}}(s)$ here. Moreover restrict to the event $\mathsf{E}_4$ on which \Cref{centert} and \Cref{centerdistance} both hold, with the $(\bm{L};\varphi_j)$ there equal to both $(\bm{L}(0); \varphi_0 (j))$ and $(\tilde{\bm{L}}(0); \tilde{\varphi}_0 (j))$ here. Additionally restrict to the event $\mathsf{E}_5$ on which \Cref{q} holds, with the $\bm{L}(s)$ there equal to $\tilde{\bm{L}}(s)$ here; observe that $10^8 \cdot (\log \tilde{N})^{60} \le 10^8 \cdot (30 \log N)^{60} \le T \le N \le \tilde{N}^{1/10}$, verifying the first estimate in its assumption \eqref{jinterval2}. 
		
		To define the sixth event, set $K = T (\log N)^2$, and observe by \Cref{ltl0} that there exists random matrices $\bm{M} = [M_{ij}] \in \SymMat_{\llbracket N_1, N_2 \rrbracket}$ and $\tilde{\bm{M}} = [\tilde{M}_{ij}] \in \SymMat_{\llbracket \tilde{N}_1, \tilde{N}_2 \rrbracket}$ with the same laws as $\bm{L}(0)$ and $\tilde{\bm{L}}(0)$, respectively, and an overwhelmingly probable event $\mathsf{E}_6$, on which we have (as $K \ge T \ge 5 (\log N)^3$) that
		\begin{flalign}
			\label{lm11} 
			\displaystyle\max_{i,j \in \llbracket N_1+K, N_2-K \rrbracket} | M_{ij} - L_{ij} (t) | \le e^{-(\log N)^3}; \quad \displaystyle\max_{i,j \in \llbracket N_1+K, N_2-K \rrbracket} | \tilde{M}_{ij} - \tilde{L}_{ij} (t) | \le e^{-(\log N)^3}.
		\end{flalign} 
		
		\noindent We may further assume on $\mathsf{E}_6$ that $M_{i,i+1}, \tilde{M}_{i,i+1} \ge e^{-(\log N)^2}$ for each $i$, by the explicit densities of these entries from \Cref{mubeta2}. We restrict to $\mathsf{E}_6$ in what follows. To define the seventh event, observe by \Cref{a2p2} (with the $A$ there equal to $\log N$ here, using our restriction to $\mathsf{E}_1$) that 
	\begin{flalign}
		\label{l11} 
		\displaystyle\sup_{t \in [0, T]} \displaystyle\max_{i \in \llbracket N_1 + K, N_2 - K \rrbracket} \big( |a_i (t) - \tilde{a}_i (t)| + |b_i (t) - \tilde{b}_i (t)| \big) \le e^{-(\log N)^3}. 
	\end{flalign}  	
		
		\noindent We may therefore (by \eqref{l11} and \eqref{lm11}) further restrict to the event $\mathsf{E}_7$ on which \Cref{lleigenvalues2} and \Cref{lleigenvalues} both hold, with the $(\delta; \mathcal{D})$ there equal to $(3e^{-(\log N)^3}; \llbracket \tilde{N}_1, \tilde{N}_2 \rrbracket \setminus \llbracket N_1+K, N_2-K \rrbracket)$ here, and the $(\bm{L}, \tilde{\bm{L}})$ equal to any of $(\bm{L}(0), \tilde{\bm{L}}(0))$, $(\bm{M},\bm{L}(t))$, $(\tilde{\bm{M}},\tilde{\bm{L}}(t))$, and $(\tilde{\bm{M}},\bm{M})$ here (viewing $\bm{M}$ as a $\tilde{N} \times \tilde{N}$ matrix by setting $M_{ij} = 0$ if $(i,j) \in \llbracket \tilde{N}_1, \tilde{N}_2 \rrbracket \setminus \llbracket N_1, N_2 \rrbracket$, and similarly for $\bm{L}(s)$). 
		
		Now, by \Cref{q} (and our restriction to $\mathsf{E}_5$), we have for any index $j \in \llbracket 1, \tilde{N} \rrbracket$ satisfying $\tilde{N}_1 + 2T^5 \le \tilde{\varphi}_0 (j) \le \tilde{N}_2 - 2 T^5$ that 
		\begin{flalign}
			\label{q2j0} 
			\displaystyle\sup_{t \in [0,T]} \big| \tilde{Q}_j (t) - \tilde{Q}_j (0) - t v_{\eff} (\tilde{\lambda}_j) \big| \le T^{1/2} (\log \tilde{N})^{33} \le T^{1/2} \cdot (30 \log N)^{33}.
		\end{flalign}
		
		\noindent We must therefore approximate $\tilde{Q}_j (s)$ by $Q_j (s)$ and $\tilde{\lambda}_j$ by $\lambda_j$, which we will do using \Cref{a2p2}, \Cref{lleigenvalues2}, and \Cref{lleigenvalues} (with our restriction to $\mathsf{E}_7$). 
		  
		To do so, fix $j \in \llbracket 1, N \rrbracket$ with 
		\begin{flalign}
			\label{0j}  
			N_1 + 3T(\log \tilde{N})^4 \le \varphi_0 (j) \le N_2 - 3T (\log \tilde{N})^4,
		\end{flalign} 
		
		\noindent which satisfies \eqref{j0}. By \Cref{centert} (with our restriction to $\mathsf{E}_4$), we have 
		\begin{flalign}
			\label{0j1} 
			N_1 + 2 T(\log \tilde{N})^4 \le \varphi_t (j) \le N_2 - 2 T(\log \tilde{N})^4.
		\end{flalign} 
		
		\noindent By \Cref{lleigenvalues} (with our restriction to $\mathsf{E}_7$), there exists a constant $c_2>0$ and an eigenvalue $\mu \in \eig \bm{M}$ satisfying the following properties. We have that $|\mu - \lambda_j| \le c_2^{-1} e^{-c_2 (\log N)^3}$ and $\varphi_t (j)$ is a $N^{-1} \zeta$-localization center of $\mu$ with repect to $\bm{M}$. Again by \Cref{lleigenvalues}, there exists an eigenvalue $\tilde{\mu} \in \eig \tilde{\bm{M}}$ such that $|\mu - \tilde{\mu}| \le c_2^{-1} e^{-c_2 (\log N)^3}$ and $\varphi_t (j)$ is an $N^{-2} \zeta$-localization center of $\tilde{\mu}$ with respect to $\tilde{\bm{M}}$. By \Cref{lleigenvalues2} (and our restriction to $\mathsf{E}_7$), there exists an index $\tilde{j} \in \llbracket 1, \tilde{N} \rrbracket$ such that $|\tilde{\mu} - \tilde{\lambda}_{\tilde{j}}| \le c_2^{-1} e^{-c_2 (\log N)^3}$ and $\varphi_t (j)$ is an $N^{-3} \zeta$-localization center of $\tilde{\lambda}_{\tilde{j}}$ with respect to $\tilde{\bm{L}}(t)$. By \Cref{centerdistance} (with our restriction to $\mathsf{E}_4$), we therefore have $|\varphi_t (j) - \tilde{\varphi}_t (\tilde{j})| \le (\log \tilde{N})^4$; similarly, $|\varphi_0 (j) - \tilde{\varphi}_0 (\tilde{j})| \le (\log \tilde{N})^4$. Combining the above estimates yields
		\begin{flalign}
			\label{lambdalambdaj} 
			|\lambda - \tilde{\lambda}_{\tilde{j}}| \le 3c_2^{-1} e^{-c_2 (\log N)^3}; \quad | \varphi_0 (j) - \tilde{\varphi}_0 (\tilde{j}) | \le (\log \tilde{N})^4; \quad | \varphi_t (j) - \tilde{\varphi}_t (\tilde{j}) | \le (\log \tilde{N})^4.
		\end{flalign} 
		
		 By \eqref{lambdalambdaj}, \eqref{0j}, and \eqref{0j1}, it follows that $N_1 + T(\log \tilde{N})^4 \le \tilde{\varphi}_s (\tilde{j}) \le N_2 - T(\log \tilde{N})^4$ for each $s \in \{ 0, t \}$. Thus, since $\llbracket N_1 + 3T (\log \tilde{N})^4, N_2 -3T(\log \tilde{N})^4 \rrbracket \subseteq \llbracket \tilde{N}_1 + 2T^5, \tilde{N}_2 - 2T^5 \rrbracket$ (by \eqref{nn}), \eqref{q2j0} implies
		\begin{flalign}
			\label{q1} 
			\displaystyle\sup_{t \in [0,T]} \big| \tilde{q}_{\tilde{\varphi}_t(\tilde{j})} (t) - \tilde{q}_{\tilde{\varphi}_0(\tilde{j})} (0) - t v_{\eff} (\tilde{\lambda}_{\tilde{j}}) \big| \le 30^{33} \cdot T^{1/2} (\log N)^{33}.
		\end{flalign}
		
		\noindent Observe by \Cref{derivativev} and \eqref{lambdalambdaj} that 
		\begin{flalign}
		\label{q22} 
			| tv_{\eff} (\lambda_j) - tv_{\eff} (\tilde{\lambda}_{\tilde{j}}) | \le T \cdot (\log N)^2 \cdot 3c_2^{-1} e^{-c_2 (\log N)^3} \le 1.
		\end{flalign} 
		
		\noindent For any $i \in \llbracket N_1+K,N_2-K \rrbracket$ and $s \in \{ 0, T \}$, further observe by \eqref{l11} and \eqref{q00} (with the fact that $a_i (s) \ge e^{-2(\log N)^2}$, by our restriction to $\mathsf{E}_2 \cap \mathsf{E}_6$) that
		\begin{flalign}
			\label{q3} 
			| q_i (s) - \tilde{q}_i (s) | \le 2N \cdot e^{-(\log N)^3}  \cdot 2 e^{2(\log N)^2} \le 1.
		\end{flalign} 
		
		\noindent Hence, for any $j \in \llbracket N_1 + T(\log N)^5, N_2 - T(\log N)^5 \rrbracket$, we have (as $(\log N)^5 > 3 (\log \tilde{N})^4$, by \eqref{nn})  
		\begin{flalign*}
			\big| Q_j (t) - Q_j (0) - t v_{\eff} (\lambda_j) \big| & \le \big| \tilde{Q}_{\tilde{j}} (t) - \tilde{Q}_{\tilde{j}} (0) - t v_{\eff} (\tilde{\lambda}_{\tilde{j}}) \big| + | t v_{\eff} (\lambda_j) - t v_{\eff} (\tilde{\lambda}_{\tilde{j}}) | \\
				& \qquad + | q_{\tilde{\varphi}_0(\tilde{j})} (0) - \tilde{q}_{\tilde{\varphi}_0 (\tilde{j})} (0) | + | q_{\tilde{\varphi}_t(\tilde{j})} (t) - \tilde{q}_{\tilde{\varphi}_t (\tilde{j})} (t) | \\
				& \qquad + | q_{\varphi_0(j)} (0) - q_{\tilde{\varphi}_0(\tilde{j})} (0) | + | q_{\varphi_t (j)} (t) - q_{\tilde{\varphi}_t (\tilde{j})} (t) | \\
				& \le 30^{33} \cdot T^{1/2} (\log N)^{33} + 3 + 2(\log N)^5 \le T^{1/2} (\log N)^{34}.
		\end{flalign*}
		
		\noindent Here, in first bound, we used the definition \eqref{qjs2} of $Q_j$. In the second, we used \eqref{q1} to bound the first term, \eqref{q22} to bound the second, \eqref{q3} to bound the third and fourth, and \eqref{qiqjs4} with \eqref{lambdalambdaj} (using our restriction to $\mathsf{E}_3$) to bound the fifth and sixth. In the third bound, we used that $N$ is sufficiently large. This establishes the theorem.
	\end{proof}
	
	\appendix

	\section{Proofs of Results From \Cref{TProperty}}
	
	\label{ProofT}

	\subsection{Proof of \Cref{rhoexponential} and \Cref{rhorho}}
	
	\label{ProofIntegral}

	In this section we establish \Cref{rhoexponential} and \Cref{rhorho}. We first establish the latter but, before doing so, we require the following lemma, indicating that the density $\varrho$ is bounded (independently of $\theta$, though this will not be needed until the proof of \Cref{beta0theta} in \Cref{IntegralEstimate} below). 
	
	\begin{lem} 
		
		\label{1thetarho} 
		
		There exists a constant $C = C(\beta) > 1$ (that is independent of $\theta>0$) such that $\varrho (x) < C$ for all $x \in \mathbb{R}$.
		
	\end{lem} 
	
	\begin{proof} 
		
		Throughout, we adopt the notation from \Cref{lbetaeta}. By \Cref{lf}, $\varrho$ is the limiting empirical spectral distribution of the random Lax matrix $\bm{L}$, sampled under thermal equilibrium. Due to this, the lemma follows from the Wegner estimate \cite[Theorem 4.1]{R}, whose hypotheses are verified by the fact (from \Cref{mubeta2} and \Cref{matrixl}) that the density of any diagonal entry $L_{ii}$ of $\bm{L}$, conditional on all of the other entries of $\bm{L}$, is bounded above (independently of $\theta$).
	\end{proof}

	\begin{proof}[Proof of \Cref{rhorho}]  
		
		By \cite[Theorem 3.15]{SERL}, these exists a certain constant $Z_{\theta} > 0$ such that 
		\begin{flalign}
			\label{rhox2} 
			\log \varrho_{\beta} (x) = 2 \theta \displaystyle\int_{-\infty}^{\infty} \log |x-y| \varrho_{\beta} (y) dy - \displaystyle\frac{\beta x^2}{2} - \log Z_{\theta},
		\end{flalign}
		
		\noindent and, by \cite[Section 3]{SERL}, this constant $Z_{\theta}$ is explicitly given by 
		\begin{flalign}
			\label{ztheta} 
			Z_{\theta} = \displaystyle\lim_{N\rightarrow \infty} Z_{N; \theta/N} \cdot Z_{N-1; \theta/N}^{-1}
		\end{flalign}
		
		\noindent Here, $Z_{N; v}$ is the partition function of the Gaussian $2v$ Ensemble scaled by $\beta^{1/2}$, namely,
		\begin{flalign*}
			Z_{N;v} = \displaystyle\int_{\mathbb{R}^N} \exp \bigg( -\displaystyle\frac{\beta}{2} \sum_{j=1}^N \lambda_j^2 \bigg)  \displaystyle\prod_{1 \le i < j \le N} |\lambda_i - \lambda_j|^{2v} \displaystyle\prod_{j=1}^N d \lambda_j,
		\end{flalign*} 
		
		\noindent which is equal to (see \cite[Equation (2)]{MME})
		\begin{flalign}
			\label{znab} 
			Z_{N;v} = (2 \pi)^{N/2} \beta^{-v \binom{N}{2} - N/2} \displaystyle\prod_{j=1}^N \displaystyle\frac{\Gamma (1+vj)}{\Gamma (1+v)}.
		\end{flalign}
		
		\noindent Together, \eqref{znab} and \eqref{ztheta} yield $Z_{\theta} = (2\pi)^{1/2} \beta^{-\theta-1/2} \cdot \Gamma (\theta + 1)$. By \eqref{rhox2}, this implies upon adding $\log \theta$ to both sides that
		\begin{flalign*}
			\log ( \theta \cdot \varrho_{\beta} (x) ) & = 2\theta \displaystyle\int_{-\infty}^{\infty} \log |x-y| \varrho_{\beta} (y) dy + \displaystyle\frac{1}{2} \cdot \big( (2 \theta + 1) \log \beta - \log (2\pi) - \beta x^2 \big) \\
			& \qquad  - \log \Gamma(\theta+1) + \log \theta.
		\end{flalign*} 
		
		\noindent Differentiating both sides with respect to $\theta$ yields
		\begin{flalign}
			\label{theta0rho} 
			\displaystyle\frac{\varrho(x)}{\theta \cdot \varrho_{\beta} (x)} = 2 \displaystyle\int_{-\infty}^{\infty} \log|x-y| \varrho (y) dy + \log \beta - \displaystyle\frac{\Gamma'(\theta+1)}{\Gamma(\theta+1)} + \displaystyle\frac{1}{\theta},
		\end{flalign}
		
		\noindent where we used the fact that 
		\begin{flalign*}
			\partial_{\theta} \Bigg( 2\theta \displaystyle\int_{-\infty}^{\infty} \log |x-y| \varrho_{\beta} (y) dy \Bigg) = 2 \displaystyle\int_{-\infty}^{\infty} \log |x-y| \varrho (y) dy,
		\end{flalign*} 
	
		\noindent which holds since $\varrho = \partial_{\theta} (\theta \varrho_{\beta})$ from \Cref{frho}; since $\varrho$ is bounded by \Cref{1thetarho}; and since $\varrho$ has subpolynomial tails by \Cref{fcn}. 
		
		The equality \eqref{theta0rho}, with the fact that $\Gamma' (\theta+1) =  \Gamma (\theta) + \theta \cdot \Gamma' (\theta)$ (as $\Gamma(\theta+1) = \theta \cdot \Gamma (\theta)$), gives   
		\begin{flalign*}
			\displaystyle\frac{\varrho (x)}{\varrho_{\beta} (x)} = 2 \theta \displaystyle\int_{-\infty}^{\infty} \log |x-y| \varrho(y) dy + \theta \log \beta - \theta \cdot \displaystyle\frac{\Gamma'(\theta)}{\Gamma (\theta)}.
		\end{flalign*} 
		
		\noindent Together with the definitions \eqref{operatort} of $\bm{\mathrm{T}}$ and \eqref{alpha} of $\alpha$, this establishes the lemma.
	\end{proof}
	
	\begin{proof}[Proof of \Cref{rhoexponential}]
		
		The first two bounds in \eqref{estimate0} are due to \cite[Lemma 2.2]{RET}, so we must show the third. To that end, by \Cref{rhorho} and the first estimate in \eqref{estimate0}, it suffices to show that there exists a constant $C > 1$ such that $|\bm{\mathrm{T}} \varrho (x)| \le C(|x|+1)$, for all $x \in \mathbb{R}$. This follows from the fact that there exist constants $C_1 > 1$ and $C_2 > 1$ such that 
		\begin{flalign*}
			| \bm{\mathrm{T}} \varrho (x)| & = 2 \Bigg| \displaystyle\int_{-\infty}^{\infty} \log |x-y| \varrho(y) dy \Bigg| \\
			& \le 2  \displaystyle\int_{x-1}^{x+1} \big| \log |x-y| \big|  \varrho(y) dy + 2(|x|+1)  \displaystyle\int_{|y-x| > 1} (|x|+1)^{-1} \cdot \big| \log |x-y| \big| \cdot \varrho (y) dy  \\
			& \le C_1 + 2(|x|+1)  \displaystyle\int_{|y-x| > 1} (|x|+1)^{-1} \cdot \big| \log |x-y| \big| \cdot \varrho (y) dy   \le C_1 + C_2 (|x|+1),
		\end{flalign*}
	
		\noindent where the first and second statements hold by \eqref{operatort}, and the third holds by \Cref{1thetarho}; and the fourth holds by \Cref{fcn} and the bound $(|x|+1)^{-1} \cdot |\log |x-y|| \le C_3 (|y|+1)$ for some constant $C_3 > 1$, whenever $|x-y| \ge 1$.		
	\end{proof}

	\subsection{Proof of \Cref{alphat}}
	
	\label{ProofBound}
	
	In this section we establish \Cref{alphat}. This will proceed by expressing $\bm{\mathrm{T}} \varrho (x) + \alpha$ in terms of the resolvent of a random Lax matrix, and using known estimates on the latter. To implement the former, we require the following two lemmas; the first is a general expression for a certain entry of the resolvent of a tridiagonal matrix (essentially due to \cite{DSRL}, though we provide its quick proof here), and the second is a probabilistic interpretation for $\alpha$.

	\begin{lem}[{\cite[Equation 3]{DSRL}}]
		
		\label{gn1n2}
		
		Fix integers $N_1 \le N_2$, and let $\bm{M} = [M_{ij}] \in \SymMat_{\llbracket N_1, N_2 \rrbracket}$ denote a symmetric, tridiagonal real matrix. For any $E \in \mathbb{R} \setminus \eig \bm{M}$, define $\bm{G}(E) = [G_{ij}(E)] \in \Mat_{\llbracket N_1, N_2 \rrbracket}$ by $\bm{G}(E) = (\bm{M}-E \cdot \Id)^{-1}$. Then, we have 
		\begin{flalign*}
			|G_{N_1 N_2} (E)| = \displaystyle\prod_{i=N_1}^{N_2-1} |M_{i,i+1}| \cdot \displaystyle\prod_{\mu \in \eig \bm{M}} |\mu-E|^{-1}.
		\end{flalign*}
		
	\end{lem}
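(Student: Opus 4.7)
The plan is to prove the formula via the adjugate (Cramer's rule) representation of $(\bm{M}-E)^{-1}$, using the tridiagonal structure of $\bm{M}$ to explicitly compute the relevant cofactor. Concretely, we have
\begin{flalign*}
G_{N_1 N_2}(E) = \big[ (\bm{M} - E)^{-1} \big]_{N_1 N_2} = \frac{(-1)^{N_1+N_2} \cdot \mathfrak{m}(E)}{\det (\bm{M} - E)},
\end{flalign*}
where $\mathfrak{m}(E)$ denotes the determinant of the minor obtained from $\bm{M} - E$ by deleting row $N_2$ and column $N_1$. The denominator equals $\prod_{\mu \in \eig \bm{M}} (\mu - E)$, which will supply the eigenvalue product on the right side of the claim (after taking absolute values). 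The entire content of the lemma therefore reduces to evaluating the minor $\mathfrak{m}(E)$.

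The key observation is that the minor is lower triangular. Indexing its rows by $i \in \llbracket N_1, N_2 - 1 \rrbracket$ and columns by $j \in \llbracket N_1+1, N_2 \rrbracket$, its $(i,j)$ entry equals $M_{ij} - E \cdot \mathbbm{1}_{i=j}$, which vanishes unless $|i-j| \le 1$. In the top row $i = N_1$, the only admissible column is $j = N_1 + 1$ (since $j = N_1$ is excluded and $j = N_1 - 1$ is out of range), and that entry is $M_{N_1, N_1+1}$. Expanding $\mathfrak{m}(E)$ along this row yields $\mathfrak{m}(E) = M_{N_1, N_1+1} \cdot \mathfrak{m}'(E)$, where $\mathfrak{m}'(E)$ is the determinant of the minor obtained by deleting the row labeled $N_1$ and the column labeled $N_1+1$. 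This submatrix is of exactly the same form, with $N_1$ replaced by $N_1 + 1$ throughout. Iterating (or arguing by induction on $N_2 - N_1$) gives
\begin{flalign*}
\mathfrak{m}(E) = \prod_{i = N_1}^{N_2 - 1} M_{i, i+1},
\end{flalign*}
which is independent of $E$. Inserting this into the adjugate expression and taking absolute values yields the claim.

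There is no real obstacle beyond some careful bookkeeping of the rows and columns of the minor; the result is essentially a classical consequence of Laplace expansion exploiting the tridiagonality of $\bm{M}$, and appears in the referenced work \cite{DSRL}. The only mild subtlety is ensuring that the inductive reduction preserves the tridiagonal structure at each step, but this is immediate from the fact that removing the first row and first column of a tridiagonal matrix again produces a tridiagonal matrix of the same form.
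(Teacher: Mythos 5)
Your proposal is correct and follows essentially the same route as the paper: both pass through Cramer's rule / the adjugate, identify the relevant off-corner minor as a triangular matrix thanks to tridiagonality, and conclude that its determinant is $\prod_{i} M_{i,i+1}$. The only cosmetic differences are that you delete row $N_2$ and column $N_1$ (the minor appearing directly in the adjugate formula, and indeed lower triangular), whereas the paper deletes row $N_1$ and column $N_2$ and invokes symmetry of $\bm{M}$, and that you evaluate the triangular determinant by repeated row expansion rather than reading off the diagonal; these choices yield the same computation.
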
 
	
	\begin{proof} 
	
	Set $N = N_2 - N_1 + 1$. Let $\bm{C}(E) = [ C_{ij} (E) ]$ denote the cofactor matrix of $\bm{M} - E \cdot \Id$, and observe that $G_{N_1 N_2} (E) = (-1)^{N+1} \cdot C_{N_1 N_2} (E) \cdot (\det \bm{M} - E \cdot \Id)^{-1}$. Since removing the row of index $N_1$ and column of index $N_2$ from $\bm{M}$ yields an upper triangular $(N-1) \times (N-1)$ matrix with diagonal entries $(M_{i,i+1})_{N_1 \le i < N_2}$, we deduce that $C_{N_1 N_2} (E) = \prod_{i=N_1}^{N_2 - 1} M_{i,i+1}$. Hence,
	\begin{flalign*}
			G_{N_1 N_2} (E) & = (-1)^{N+1} \cdot C_{N_1 N_2} (E) \cdot (\det \bm{M} - E \cdot \Id)^{-1} \\
			& = (-1)^{N+1} \cdot \displaystyle\prod_{i=N_1}^{N_2-1} M_{i,i+1} \cdot \displaystyle\prod_{\mu \in \eig \bm{M}} (\mu - E)^{-1},
	\end{flalign*}
	
	\noindent which confirms the lemma.
	\end{proof}

	\begin{lem}[{\cite[Lemma 3.11]{LC}}]
		
		\label{aintegral} 
		
		Let $\mathfrak{a} > 0$ be a random variable with law $\mathbb{P} [\mathfrak{a} \in da] = 2 \beta^{\theta} \cdot \Gamma(\theta)^{-1} \cdot a^{2\theta-1} e^{-\beta a^2} da$. Denoting $\mathfrak{a} = e^{-\mathfrak{r}/2}$, we have that $\mathbb{E} [\mathfrak{r}] = \alpha$.
	\end{lem}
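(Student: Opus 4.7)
The plan is to directly compute $\mathbb{E}[\mathfrak{r}]$ by reducing the integral against the density of $\mathfrak{a}$ to one against a standard Gamma distribution. Since $\mathfrak{r} = -2 \log \mathfrak{a}$, we have
\begin{flalign*}
\mathbb{E}[\mathfrak{r}] = -2 \mathbb{E}[\log \mathfrak{a}] = -2 \cdot \frac{2 \beta^{\theta}}{\Gamma(\theta)} \int_0^{\infty} (\log a) \cdot a^{2\theta-1} e^{-\beta a^2} \, da.
\end{flalign*}
First I would perform the substitution $u = \beta a^2$, so that $du = 2\beta a \, da$ and $a^{2\theta-1} da = \tfrac{1}{2} \beta^{-\theta} u^{\theta-1} du$ and $\log a = \tfrac{1}{2}(\log u - \log \beta)$. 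After simplification, the density $2\beta^{\theta} \Gamma(\theta)^{-1} a^{2\theta-1} e^{-\beta a^2} da$ becomes the standard Gamma$(\theta)$ density $\Gamma(\theta)^{-1} u^{\theta-1} e^{-u} du$, so the computation reduces to
\begin{flalign*}
\mathbb{E}[\log \mathfrak{a}] = \tfrac{1}{2}\big( \mathbb{E}[\log U] - \log \beta \big), \qquad U \sim \text{Gamma}(\theta).
\end{flalign*}

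Next I would invoke the standard identity $\mathbb{E}[\log U] = \Gamma'(\theta)/\Gamma(\theta)$ (the digamma function), which follows from differentiating $\Gamma(\theta) = \int_0^{\infty} u^{\theta-1} e^{-u} du$ in $\theta$ under the integral sign (justified by dominated convergence, using the rapid decay of $e^{-u}$ for large $u$ and integrability of $|\log u| u^{\theta-1}$ near $0$). Combining this with the preceding display gives
\begin{flalign*}
\mathbb{E}[\mathfrak{r}] = -2 \cdot \tfrac{1}{2} \Big( \tfrac{\Gamma'(\theta)}{\Gamma(\theta)} - \log \beta \Big) = \log \beta - \tfrac{\Gamma'(\theta)}{\Gamma(\theta)},
\end{flalign*}
which equals $\alpha$ by the definition \eqref{alpha}.

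There is no real obstacle here; the entire proof is a one-line change-of-variables reducing the problem to the known formula $\mathbb{E}[\log U] = \psi(\theta)$ for a Gamma random variable. The only thing worth being careful about is verifying the constant in the substitution (i.e., that the prefactor $2\beta^{\theta}/\Gamma(\theta)$ collapses exactly to $1/\Gamma(\theta)$ after the Jacobian is taken into account), which follows from the identity $a^{2\theta-1} \, da = \tfrac{1}{2} \beta^{-\theta} u^{\theta-1} du$ under $u = \beta a^2$.
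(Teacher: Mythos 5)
Your computation is correct: the substitution $u=\beta a^2$ does reduce the law of $\mathfrak{a}$ to a standard Gamma$(\theta)$ density, and combined with $\mathbb{E}[\log U]=\Gamma'(\theta)/\Gamma(\theta)$ it gives $\mathbb{E}[\mathfrak{r}]=\log\beta-\Gamma'(\theta)/\Gamma(\theta)=\alpha$ exactly as in \eqref{alpha}. The present paper does not reprove this lemma (it is imported as \cite[Lemma 3.11]{LC}), and your argument is the standard one-line change of variables one would expect there, so there is nothing further to compare.
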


	We further require the following result indicating that the off-diagonal entries in the resolvent of a Lax matrix, of the Toda lattice under the thermal equilibrium, decay exponentially. 
	
	\begin{lem}[{\cite[Theorem 4]{LRBM}}]
		
		\label{gijexponential}
		
		Adopt \Cref{lbetaeta}. For any real number $s \in (0, 1)$, there exists a constant $c = c(s) > 0$ such that the following holds. For any $E \in \mathbb{R} \setminus \eig \bm{L}$, denote $\bm{G} (E) = [G_{ij} (E) ] = (\bm{L} - E)^{-1}$. We have  
		\begin{flalign}
			\label{gijs}
			\displaystyle\sup_{E \in \mathbb{R}} \mathbb{E} \big[ |G_{ij} (E) |^s \big] \le c^{-1} e^{-c|i-j|}.
		\end{flalign}  
	\end{lem}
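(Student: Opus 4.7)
The plan is to invoke the Aizenman--Molchanov fractional moment method for random tridiagonal Jacobi matrices, as developed in Schenker's paper \cite{LRBM}; since the statement is taken directly from there, I will sketch the argument at a high level. The strategy is to derive a product representation for $|G_{ij}(E)|$ that factorizes into independent per-bond contributions, and then to show that each such contribution is, in the $s$-th moment, bounded by a contraction constant $\gamma(s) < 1$.

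First, using Cramer's rule applied to the tridiagonal matrix $\bm{L} - E$, for $i<j$ one has
$$|G_{ij}(E)| = \prod_{k=i}^{j-1} a_k \cdot \frac{|\det(\bm{L}|_{\llbracket N_1, i-1 \rrbracket} - E)| \cdot |\det(\bm{L}|_{\llbracket j+1, N_2 \rrbracket} - E)|}{|\det(\bm{L} - E)|},$$
along the lines of \Cref{gn1n2}. Applying the Schur complement formula to successively peel off the sites between $i$ and $j$ converts the denominator into a telescoping product, expressing $|G_{ij}(E)|^s$ as a product over bonds $k \in \llbracket i, j-1 \rrbracket$ of factors of the form $a_k^s \cdot |b_k - E - \Sigma_k(E)|^{-s}$; here each self-energy $\Sigma_k(E)$ is constructed to depend only on a disjoint set of the underlying $(a_m, b_m)$ variables, so the factors are jointly independent and expectations factorize. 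For each factor, the Gaussianity of $b_k$ (with bounded density) yields a Wegner-type bound $\mathbb{E}_{b_k}\bigl[|b_k - E - \Sigma_k|^{-s}\bigr] \le C_W(s) < \infty$ uniformly in $E$ and $\Sigma_k$ when $s \in (0, 1)$, while the Gamma-type density of $a_k^2$ gives $\mathbb{E}[a_k^s] = C_a(s) < \infty$; thus the per-bond expectation is bounded by $\gamma(s) := C_W(s) \cdot C_a(s)$.

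The main obstacle, and the technical core of \cite{LRBM}, is establishing $\gamma(s) < 1$ for some $s \in (0,1)$, as separately bounding the two factors only yields $\gamma(s) < \infty$. Schenker addresses this through a joint spectral-averaging argument: rather than integrating $a_k$ and $b_k$ independently, one performs a change of variables that couples them (for instance $b_k \mapsto b_k - \tau a_k$) and uses the joint bounded density of the pair together with a scaling analysis to produce a strict contraction strictly below $1$. Once $\gamma(s) < 1$ is in place, iterating over the $|i-j|$ independent bonds yields $\mathbb{E}[|G_{ij}(E)|^s] \le \gamma(s)^{|i-j|}$ uniformly in $E$, which is the desired exponential decay with rate $c(s) = -\log \gamma(s) > 0$.
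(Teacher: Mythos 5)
The paper does not prove this lemma at all: it is imported verbatim as \cite[Theorem 4]{LRBM} (Schenker), so there is no internal argument to compare yours against. Your sketch is a recognizable account of the fractional-moment (Aizenman--Molchanov/Kunz--Souillard) strategy, and the overall architecture --- a bond-by-bond factorization of $|G_{ij}(E)|$, a uniform-in-$E$ Wegner-type bound from the bounded density of the Gaussian diagonal entries, and a per-bond contraction iterated $|i-j|$ times --- is indeed how results of this type are established.

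That said, as a standalone proof it has two genuine soft spots, both concentrated exactly where the content of the theorem lives. First, the factors in your telescoping product are not jointly independent as written: the self-energies $\Sigma_k(E)$ arising from Schur complements of a tridiagonal matrix depend on overlapping blocks of the $(a_m,b_m)$, so the expectation does not literally factorize; the standard remedy is to iterate a resolvent identity and condition step by step, bounding $\mathbb{E}[|G_{ij}|^s]$ by $\gamma(s)\cdot\mathbb{E}[|G^{(i)}_{i+1,j}|^s]$ with a restricted resolvent, rather than writing a single product of independent terms. Second, the strict contraction $\gamma(s)<1$, uniformly in $E\in\mathbb{R}$ and for \emph{every} $s\in(0,1)$ (the lemma as stated requires all $s$, not just some $s$), is asserted and attributed to a decoupling/change-of-variables device whose details you do not supply; separately bounding $\mathbb{E}[a_k^s]$ and the Wegner factor gives only finiteness, as you note, and closing this gap is the entire technical core. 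So your proposal is a faithful roadmap to the cited result but does not constitute an independent proof; given that the paper itself simply cites \cite{LRBM}, this is an acceptable level of detail, provided you flag clearly that the contraction step is being taken from the reference rather than proved.
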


	Now we can establish \Cref{alphat}.

	\begin{proof}[Proof of \Cref{alphat}]
		
		Let $\varepsilon \in (0, 1)$ be a real number, and define the interval $\mathcal{I}_{\varepsilon} = [x-\varepsilon, x + \varepsilon]$, so that $|\mathcal{I}_{\varepsilon}| = 2 \varepsilon$. Adopt \Cref{lbetaeta} and, for any $E \in \mathbb{R} \setminus \eig \bm{L}$, denote the resolvent $\bm{G} (E) = [ G_{ij} (E) ] = (\bm{L}-E)^{-1}$. Observe that there exists a constant $c > 0$ such that 
		\begin{flalign}
			\label{clambdan}
			\begin{aligned} 
				-cN  \ge \displaystyle\frac{4}{|\mathcal{I}_{\varepsilon}|} \displaystyle\int_{\mathcal{I}_{\varepsilon}} \log \mathbb{E} \big[ | G_{N_1 N_2} (E) |^{1/2} \big] dE & \ge \mathbb{E} \Bigg[ \displaystyle\frac{2}{|\mathcal{I}_{\varepsilon}|} \displaystyle\int_{\mathcal{I}_{\varepsilon}} \log | G_{N_1 N_2} (E) | dE \Bigg] \\
				& = 2 \displaystyle\sum_{j=N_1}^{N_2-1} \mathbb{E} [\log L_{j,j+1}] -  \mathbb{E} \Bigg[ \displaystyle\sum_{j=1}^N \displaystyle\frac{2}{|\mathcal{I}_{\varepsilon}|} \displaystyle\int_{\mathcal{I}_{\varepsilon}} \log |\lambda_i - E| dE \Bigg],
			\end{aligned}
		\end{flalign}
		
		\noindent where the first statement follows from \Cref{gijexponential}; the second from the concavity of $\log x$; and the third from \Cref{gn1n2}. By \Cref{lbetaeta}, \Cref{matrixl}, and \Cref{mubeta2}, if we denote $a_j = L_{j,j+1}$, then $a_j$ has law $\mathbb{P} [a_j \in da] = 2 \beta^{\theta} \cdot \Gamma (\theta)^{-1} \cdot a^{2\theta-1} e^{-\beta a^2} da$. Hence, by \Cref{aintegral}, we have $2 \cdot \mathbb{E} [\log L_{j,j+1}] = -\alpha$. Inserting this into \eqref{clambdan} gives
		\begin{flalign}
			\label{alphaf} 
			\alpha + 2 \cdot \mathbb{E} \Bigg[  \displaystyle\frac{1}{N} \displaystyle\sum_{j=1}^N f(\lambda_j) \Bigg] \ge c, \qquad \text{where} \qquad f(\lambda) = |\mathcal{I}_{\varepsilon}|^{-1} \displaystyle\int_{\mathcal{I}_{\varepsilon}} \log |\lambda - E| dE.
		\end{flalign}
		
		Next observe that there exists a constant $C = C (x, \varepsilon) > 0$ such that $|f(\lambda)| \le C \log (\lambda^2 + 2) \le C(\lambda^2 + 2)$. Therefore, \Cref{lf} and \Cref{fcn} together imply that 
		\begin{flalign*}
			\displaystyle\lim_{N \rightarrow \infty} \mathbb{E} \Bigg[ \displaystyle\frac{1}{N} \displaystyle\sum_{j=1}^N f(\lambda_j) \Bigg] = \displaystyle\int_{-\infty}^{\infty} f(\lambda) \varrho (\lambda) d \lambda = |\mathcal{I}_{\varepsilon}|^{-1} \displaystyle\int_{\mathcal{I}_{\varepsilon}} \displaystyle\int_{-\infty}^{\infty} \log |\lambda - E| \varrho (\lambda) d \lambda dE.
		\end{flalign*} 
		
		\noindent Inserting this into \eqref{alphaf} and letting $\varepsilon$ tend to $0$ gives
		\begin{flalign*}
			c \le \alpha + \displaystyle\lim_{\varepsilon \rightarrow 0} \displaystyle\frac{2}{|\mathcal{I}_{\varepsilon}|} \displaystyle\int_{-\infty}^{\infty} \displaystyle\int_{\mathcal{I}_{\varepsilon}} \log |\lambda-E| \varrho (\lambda) dE d \lambda = \alpha + 2 \displaystyle\int_{-\infty}^{\infty} \log |x-\lambda| \varrho (\lambda) d \lambda,
		\end{flalign*} 
		
		\noindent which upon recalling the definition of $\bm{\mathrm{T}}$ from \eqref{operatort} yields the lemma.
	\end{proof}

	\begin{rem} 
		
		\label{trhoc} 
		
		The constant $c$ in \Cref{alphat} does not depend on $\theta$, as long as $ \theta$ is uniformly bounded. More precisely, for any fixed $\Theta>0$, the constant $c>0$ \Cref{alphat} only depends on $\Theta$, as long as $\theta \in (0, \Theta)$. Indeed, it is quickly verified from the proof \cite[Theorem 4]{LRBM} (see \cite[Equations (2.46) and (2.47)]{LRBM}) that the constant $c$ from \Cref{gijexponential} only depends on the off-diagonal entries $(L_{ij})_{i\ne j}$ of $\bm{L}$ through a lower bound on $\mathbb{P}[|L_{ij}| \le \tau]$, for some fixed $\tau > 1$. Since this probability is uniformly bounded below independently of $\theta$, as long as $\theta \in (0, \Theta)$, the $c$ from \Cref{gijexponential} does not depend on $\theta \in (0,\Theta)$. Thus, the constant $c>0$ in the proof of \Cref{alphat} does not, either.
		
	\end{rem}

	\subsection{Proofs of \Cref{trhobounded} and \Cref{trho}}
	
	\label{ProofT0}
	
	In this section we show \Cref{trhobounded} and \Cref{trho}. Define the function $\sigma : \mathbb{R} \rightarrow \mathbb{R}$ and subspace $\mathcal{H}_{0} \subseteq \mathcal{H}$ by for each $x \in \mathbb{R}$ setting 
	\begin{flalign}
		\label{sigmatheta} 
		\sigma(x) = \theta \cdot \bm{\mathrm{T}} \varrho (x) + \theta \alpha; \qquad \mathcal{H}_0 = \{ f \in \mathcal{H}: \langle f, \varsigma_0 \rangle_{\varrho} = 0 \},
	\end{flalign} 
	
	\noindent and observe that $\sigma \in \mathcal{H}$. We begin by proving \Cref{trhobounded}.

	\begin{proof}[Proof of \Cref{trhobounded}]
		
		Fix $f \in \mathcal{H}$. Observe that there exists a constant $C>1$ such that 
		\begin{flalign*}
			\| \bm{\mathrm{T} \varrho_{\beta}} f \|_{\mathcal{H}}^2 & = 4 \displaystyle\int_{-\infty}^{\infty} \bigg| \displaystyle\int_{-\infty}^{\infty} \log |x-y|  \cdot \varrho_{\beta} (y) f(y) dy \bigg|^2 \varrho (x) dx \\
			& \le 4 \displaystyle\int_{-\infty}^{\infty} | f(y) |^2 \varrho (y) dy \cdot  \displaystyle\int_{-\infty}^{\infty}  \Bigg| \displaystyle\int_{-\infty}^{\infty} (\log |x-y|)^2 \sigma (y)^{-1} \varrho_{\beta} (y) dy \Bigg| \cdot \varrho (x) dx \le C \cdot \| f \|_{\mathcal{H}}^2 ,
		\end{flalign*}
		
		\noindent where in the first statement we used \eqref{betaproduct}; in the second we used the fact that $\varrho (x) = \sigma (x) \cdot \varrho_{\beta} (x)$ by \Cref{rhorho} and \eqref{sigmatheta}; and in the third we used the facts that $\varrho_{\beta}$ is bounded with subexponential decay (by \Cref{rhoexponential}) and $\sigma (y) > c$ for some real number $c>0$ (by \Cref{alphat} and \eqref{sigmatheta}), again with \eqref{betaproduct}. This establishes the lemma.
	\end{proof}

	Next define $\bm{\mathrm{S}} = \theta^{-1} \bm{\sigma} - \bm{\mathrm{T} \varrho}$, whose domain is dense, as it contains any bounded, compactly supported function in $\mathcal{H}$. Thus, $\bm{\mathrm{S}}$ is an unbounded, self-adjoint operator on $\mathcal{H}$; denote its domain by $\mathcal{H}'$. The following lemma essentially indicates that $\bm{\mathrm{S}}$ acts on $\mathcal{H}_0$.

	\begin{lem} 
		
		\label{gbetat}
		
		For any $f \in \mathcal{H}_0 \cap \mathcal{H}'$, we have that $\bm{\mathrm{S}} f \in \mathcal{H}_0$.
	\end{lem}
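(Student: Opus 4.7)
The plan is to compute the inner product $\langle \bm{\mathrm{S}} f, \varsigma_0 \rangle_{\varrho}$ directly and show it vanishes. By the definitions of $\bm{\mathrm{S}}$ and $\sigma$ (from \eqref{sigmatheta}), we have $\theta^{-1} \sigma(x) = \bm{\mathrm{T}} \varrho(x) + \alpha$, so
\begin{flalign*}
\langle \bm{\mathrm{S}} f, \varsigma_0 \rangle_{\varrho}
= \alpha \int_{-\infty}^{\infty} f(x) \varrho(x) \, dx
+ \int_{-\infty}^{\infty} \bm{\mathrm{T}} \varrho(x) \cdot f(x) \varrho(x) \, dx
- \int_{-\infty}^{\infty} \bm{\mathrm{T} \varrho} f(x) \cdot \varrho(x) \, dx.
\end{flalign*}
The first term vanishes since $f \in \mathcal{H}_0$. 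For the remaining two terms, I would expand using the definition \eqref{operatort} of $\bm{\mathrm{T}}$ to obtain
\begin{flalign*}
2 \iint \log|x-y| \, \varrho(y) f(x) \varrho(x) \, dy \, dx
\; - \; 2 \iint \log|x-y| \, f(y) \varrho(y) \varrho(x) \, dx \, dy.
\end{flalign*}
After relabeling $(x,y) \leftrightarrow (y,x)$ in the first double integral and using the symmetry $\log|x-y| = \log|y-x|$, the two integrals cancel exactly, giving $\langle \bm{\mathrm{S}} f, \varsigma_0 \rangle_{\varrho} = 0$.

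The only substantive point to check is that Fubini's theorem applies, allowing both the swap of order of integration and the relabeling. For this I would use that $\varrho$ has at worst logarithmic-polynomial prefactors times Gaussian decay (by \Cref{rhoexponential}), so $\varrho \in L^1(\mathbb{R}) \cap L^{\infty}(\mathbb{R})$; that $f \in \mathcal{H}$ implies $f \cdot \varrho \in L^1 \cap L^2$ (via Cauchy--Schwarz against $\varrho^{1/2}$); and that $\log|x-y|$ is locally integrable with at most logarithmic growth at infinity. These together show the iterated integral $\iint |\log|x-y|| \cdot |f(x)| \varrho(x) \varrho(y) \, dx \, dy$ is finite, since a local bound $|\log|x-y|| \le C(1 + |\log|x-y|| \mathbbm{1}_{|x-y|\le 1} + \log(|x|+|y|+2))$ combined with $L^2$-decay of $f \cdot \varrho^{1/2}$ makes everything absolutely convergent.

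The main (and only) obstacle is this integrability check, which is routine given the decay estimates already collected in \Cref{rhoexponential}; no deeper use of the hypothesis $f \in \mathcal{H}'$ is actually needed beyond ensuring $\bm{\mathrm{S}} f$ is a well-defined element of $\mathcal{H}$, so that $\langle \bm{\mathrm{S}} f, \varsigma_0 \rangle_{\varrho}$ is a priori meaningful. The entire proof should fit in a short paragraph.
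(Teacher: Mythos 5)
Your proposal is correct and uses essentially the same argument as the paper: write out $\langle \bm{\mathrm{S}} f, \varsigma_0 \rangle_{\varrho}$, expand $\bm{\mathrm{T}}$ via \eqref{operatort}, and cancel the two double integrals by the symmetry of the kernel $\log|x-y|$ (the paper phrases this as interchanging the order of integration in one of them; you phrase it as a relabeling, but these are the same move), leaving only $\alpha \int f \varrho = 0$. Your added discussion of the Fubini justification and of the role of the hypothesis $f\in\mathcal{H}'$ goes slightly beyond the paper, which passes over integrability silently; this is a harmless embellishment and does not change the structure of the argument.
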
 
	
	\begin{proof}
		
		The lemma holds due to the sequence of equalities,
		\begin{flalign*}
			\langle \bm{\mathrm{S}} f, \varsigma_0 \rangle_{\varrho} & = \theta^{-1} \displaystyle\int_{-\infty}^{\infty} \sigma (x) f(x) \varrho(x)  dx - \displaystyle\int_{-\infty}^{\infty} \bm{\mathrm{T} \varrho} f(x) \cdot \varrho (x) dx \\ 
			& = \displaystyle\int_{-\infty}^{\infty} ( \bm{\mathrm{T}} \varrho (x) + \alpha ) \cdot f(x) \varrho (x) dx - 2 \displaystyle\int_{-\infty}^{\infty} \displaystyle\int_{-\infty}^{\infty} \log|x-y| \cdot \varrho (y) f(y) \cdot \varrho (x) dy dx \\
			& = \displaystyle\int_{-\infty}^{\infty} ( \bm{\mathrm{T}} \varrho (x) + \alpha ) \cdot f(x) \varrho (x) dx - \displaystyle\int_{-\infty}^{\infty} \bm{\mathrm{T}} \varrho (y) \cdot f(y) \varrho (y) dy = \alpha \displaystyle\int_{-\infty}^{\infty} f(x) \varrho (x) dx = 0.
		\end{flalign*}
		
		\noindent Here, in the first statement we used \eqref{betaproduct} and the definition of $\bm{\mathrm{S}}$; in the second we used the definitions \eqref{sigmatheta} of $\sigma$ and \eqref{operatort} of $\bm{\mathrm{T}}$; in the third we interchanged the order of integration in the second integral and again used \eqref{operatort}; and in the fourth and fifth we used that $f \in \mathcal{H}_0$. 
	\end{proof}
		
	The following (standard) lemma indicates that $\bm{\mathrm{T} \varrho}$ is nonpositive on $\mathcal{H}_0$. 
	
	\begin{lem} 
		
		\label{tg0} 
		
		For any bounded, compactly supported function $f \in \mathcal{H}_0$, we have that $\langle \bm{\mathrm{T} \varrho} f, f \rangle_{\varrho} \le 0$. 
	
	\end{lem} 
	
	\begin{proof}
		
		Denoting $g(x) = f(x) \varrho (x)$ for each $x \in \mathbb{R}$, observe that 
		\begin{flalign}
			\label{trhof}
			\langle \bm{\mathrm{T}} \bm{\varrho} f, f \rangle_{\varrho} = \displaystyle\int_{-\infty}^{\infty} \overline{f(x)} \varrho (x) \bm{\mathrm{T}} \bm{\varrho} f(x) dx = 2 \displaystyle\int_{-\infty}^{\infty} \displaystyle\int_{-\infty}^{\infty} \overline{g(x)} g(y) \log |x-y| dx dy.
		\end{flalign}
		
		\noindent It suffices to show the right side of \eqref{trhof} is nonpositive, which will closely follow the proof of \cite[Lemma 2.6.2(d)]{IRM}. Observe in the below that $g$ is bounded; compactly supported; and (as $f \in \mathcal{H}_0$) satisfies $\int_{-\infty}^{\infty} g(x) dx = 0$. 
		
		To that end, observe for any real number $u \in \mathbb{R} \setminus \{ 0 \}$ that  
		\begin{flalign*}
			\log |u| = \displaystyle\int_1^{|u|} z^{-1} dz = \displaystyle\int_{1}^{|u|} (2z)^{-1} \displaystyle\int_0^{\infty} e^{-w/2} dw dz & = \displaystyle\frac{1}{2}  \displaystyle\int_0^{\infty} t^{-2} \displaystyle\int_1^{|u|}  z e^{-z^2/2t} dz dt \\ 
			&  = \displaystyle\frac{1}{4} \displaystyle\int_0^{\infty} t^{-2} \displaystyle\int_1^{u^2} e^{-v/2t} dv dt \\
			&  = \displaystyle\int_0^{\infty} (2t)^{-1} \cdot (e^{-1/2t} - e^{-u^2/2t}) dt,
		\end{flalign*} 
		
		\noindent where the first two statements follow from performing the integration; the third from changing variables $w = t^{-1} z^2$; the fourth from changing variables $v = z^2$; and the fifth from performing the integration. Therefore, 
		\begin{flalign*}
			\displaystyle\int_{-\infty}^{\infty} \displaystyle\int_{-\infty}^{\infty} \overline{g(x)} g(y) \log |x-y| dx dy = \displaystyle\int_0^{\infty} (2t)^{-1} \displaystyle\int_{-\infty}^{\infty} \displaystyle\int_{-\infty}^{\infty} (e^{-1/2t} - e^{-|x-y|^2/2t}) \cdot \overline{g(x)} g(y) dx dy dt,
		\end{flalign*} 
		
		\noindent where interchanging the order of integration is justified by the facts that $e^{-1/2t} - e^{-|x-y|^2/2t}$ is of order $t^{-1}$ as $t$ tends to $\infty$ and that it decays exponentially in $|x-y|^2/t$ as $t$ tends to $0$. Since $\int_{-\infty}^{\infty} g(x) dx = 0$, it follows that
		\begin{flalign*}
			\displaystyle\int_{-\infty}^{\infty} \displaystyle\int_{-\infty}^{\infty} & \overline{g(x)} g(y) \log |x-y| dx dy \\
			& = -\displaystyle\int_0^{\infty} (2t)^{-1} \displaystyle\int_{-\infty}^{\infty} \displaystyle\int_{-\infty}^{\infty} e^{-(x-y)^2/2t} \overline{g(x)} g(y) dx dy \\
			& = - \displaystyle\int_0^{\infty} (8t\pi)^{-1/2} \displaystyle\int_{-\infty}^{\infty} \displaystyle\int_{-\infty}^{\infty} \displaystyle\int_{-\infty}^{\infty} e^{\mathrm{i} r(x-y) - tr^2/2} \overline{g(x)} g(y) dr dx dy dt \\
			& = -\displaystyle\int_0^{\infty} (8t\pi)^{-1/2} \displaystyle\int_{-\infty}^{\infty} e^{-tr^2/2} \Bigg| \displaystyle\int_{-\infty}^{\infty} e^{-\mathrm{i} ry} g(y) dy \Bigg|^2 dr dt \le 0,
		\end{flalign*}
		
		\noindent which establishes the lemma. 
	\end{proof} 
	
	We next show that $\bm{\mathrm{S}}$ is a bijection from $\mathcal{H}'$ to $\mathcal{H}$, from which we will deduce \Cref{trho}.
	
	\begin{cor}
	
	\label{cs0}
	
	The operator $\bm{\mathrm{S}} : \mathcal{H}' \rightarrow \mathcal{H}$ is a bijection. 
	\end{cor}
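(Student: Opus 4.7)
The plan is to decompose $\mathcal{H}$ orthogonally as $\mathcal{H}_0 \oplus \mathbb{R} \varsigma_0$ and show that $\bm{\mathrm{S}}$ acts as an (invertible) multiplication by $\alpha$ on the $\mathbb{R}\varsigma_0$ summand, and as a coercive self-adjoint operator on the $\mathcal{H}_0$ summand. The key observation is that on constants we have $\bm{\mathrm{S}}(a \varsigma_0) = a\theta^{-1}\sigma\varsigma_0 - a\bm{\mathrm{T}}\varrho = a\alpha\varsigma_0$, where the last equality uses $\theta^{-1}\sigma = \bm{\mathrm{T}}\varrho + \alpha$ (from \eqref{sigmatheta}). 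Thus the $\mathbb{R}\varsigma_0$ block is multiplication by $\alpha$, which is a bijection since $\alpha \ne 0$. The invariance of $\mathcal{H}_0$ under $\bm{\mathrm{S}}$ has already been established in \Cref{gbetat}.

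Next I will establish the coercivity $\langle \bm{\mathrm{S}} f, f \rangle_{\varrho} \ge c \| f \|_{\mathcal{H}}^2$ for every $f \in \mathcal{H}_0 \cap \mathcal{H}'$, where $c > 0$ is the constant from \Cref{alphat}. Writing
\begin{flalign*}
\langle \bm{\mathrm{S}} f, f \rangle_{\varrho} = \displaystyle\int_{-\infty}^{\infty} \big( \bm{\mathrm{T}} \varrho(x) + \alpha \big) |f(x)|^2 \varrho(x) dx - \langle \bm{\mathrm{T}} \bm{\varrho} f, f \rangle_{\varrho},
\end{flalign*}
the first term is at least $c \| f \|_{\mathcal{H}}^2$ by \Cref{alphat}, and the second term is nonnegative by \Cref{tg0}. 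Together these bounds yield the claimed coercivity.

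To conclude bijectivity, I would use that $\bm{\mathrm{S}}$ is self-adjoint on $\mathcal{H}'$ (as noted just before the statement) and invoke the standard fact that a self-adjoint operator satisfying a coercive lower bound $\langle A f, f \rangle \ge c \|f\|^2$ has spectrum contained in $[c, \infty)$, and in particular is a bijection from its domain onto $\mathcal{H}_0$. For surjectivity: given $g \in \mathcal{H}$, split orthogonally as $g = g_0 + b \varsigma_0$ with $g_0 \in \mathcal{H}_0$; solve $\bm{\mathrm{S}} f_0 = g_0$ in $\mathcal{H}_0 \cap \mathcal{H}'$ by coercivity plus self-adjointness, and set $f = f_0 + \alpha^{-1} b \varsigma_0 \in \mathcal{H}'$. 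For injectivity: if $\bm{\mathrm{S}} f = 0$ and $f = f_0 + a \varsigma_0$, then since $\bm{\mathrm{S}} f_0 \in \mathcal{H}_0$ and $\bm{\mathrm{S}}(a \varsigma_0) = a\alpha\varsigma_0 \in \mathbb{R}\varsigma_0$ lie in orthogonal summands, both must vanish, forcing $a = 0$ and $f_0 = 0$ (the latter by coercivity).

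The main obstacle, which is minor, is the unboundedness of $\bm{\mathrm{S}}$: one must check that $\varsigma_0 \in \mathcal{H}'$ (which holds since $\sigma \in \mathcal{H}$ by \Cref{rhoexponential} and \Cref{alphat}, and $\bm{\mathrm{T}}\varrho$ grows only logarithmically, hence also lies in $\mathcal{H}$), and that the orthogonal decomposition is preserved at the level of the domain, i.e., $\mathcal{H}' = (\mathcal{H}_0 \cap \mathcal{H}') \oplus \mathbb{R} \varsigma_0$. Once these routine domain issues are handled, the coercivity plus self-adjointness argument immediately yields the corollary.
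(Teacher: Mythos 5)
Your proposal is correct and follows essentially the same route as the paper's proof: decompose $\mathcal{H}$ as $\mathcal{H}_0 \oplus \mathbb{C}\varsigma_0$, use \eqref{salpha} to see that $\bm{\mathrm{S}}$ acts as multiplication by $\alpha \ne 0$ on $\mathbb{C}\varsigma_0$, and use \Cref{alphat} together with \Cref{tg0} to show $\bm{\mathrm{S}} - c$ is nonnegative on $\mathcal{H}_0 \cap \mathcal{H}'$, invoking the spectral theorem for unbounded self-adjoint operators to conclude bijectivity there. The only difference is cosmetic — you spell out the surjectivity/injectivity split and the domain bookkeeping more explicitly, while the paper cites \cite[Propositions 3.18 and 5.13]{USOS} for the coercivity-implies-bijectivity step.
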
 
	
	\begin{proof} 
		
		We begin by observing that
		\begin{flalign} 
			\label{salpha} 
			\bm{\mathrm{S}}  \varsigma_0 = \theta^{-1} \cdot \sigma (x) - \bm{\mathrm{T}} \varrho (x)  = \alpha \cdot \varsigma_0,
		\end{flalign}
		
		\noindent where the first statement holds by the definition of $\bm{\mathrm{S}}$, and the second from that \eqref{sigmatheta} of $\sigma$. Hence, $\mathbb{C} \cdot \varsigma_0$ is a one-dimensional eigenspace of $\bm{\mathrm{S}}$ that is not in its kernel, so $\bm{\mathrm{S}}$ acts bijectively on it. 
		
		Therefore, it remains to show that $\bm{\mathrm{S}}: \mathcal{H}_0 \cap \mathcal{H}' \rightarrow \mathcal{H}_0 \cap \mathcal{H}$ is a bijection, as $\mathcal{H}_0 \cap \mathcal{H}'$ is the orthogonal complement of $\mathbb{C} \cdot \varsigma_0$ in $\mathcal{H}'$. By the spectral theorem for unbounded operators, namely, \cite[Proposition 5.13]{USOS} with \cite[Proposition 3.18]{USOS}, it suffices to show that there exists a constant $c>0$ such that $\bm{\mathrm{S}} - c$ is nonnegative as an operator on $\mathcal{H}_0 \cap \mathcal{H}'$. To that end, it suffices to show that $\langle (\bm{\mathrm{S}} - c) f, f \rangle_{\varrho} \ge 0$ for any $f \in \mathcal{H}_0 \cap \mathcal{H}'$, for $c>0$ sufficiently small. We may assume that $f$ is bounded and compactly supported, as such functions are dense in $\mathcal{H}'$. Then, we have for $c$ sufficiently small that
		\begin{flalign*}
			\langle (\bm{\mathrm{S}}-c) f, f \rangle_{\varrho} = \theta^{-1} \displaystyle\int_{-\infty}^{\infty} ( \sigma(x) - c \theta) \cdot | f(x) |^2 \varrho (x) dx - \langle \bm{\mathrm{T} \varrho} f, f \rangle_{\varrho} \ge 0, 
		\end{flalign*} 
		
		\noindent where the first statement follows from the definition of $\bm{\mathrm{S}}$, and the second from \Cref{tg0} and the fact (by \Cref{alphat} and the definition \eqref{sigmatheta} of $\sigma$) that $\sigma(x) > \theta c$.
	\end{proof}

	\begin{proof}[Proof of \Cref{trho}]
		
		Observe by the definition of $\bm{\mathrm{S}}$, the definition \eqref{sigmatheta} of $\sigma$, and \Cref{rhorho} that $\bm{\mathrm{S}} = (\theta^{-1} - \bm{\mathrm{T} \varrho_{\beta}}) \bm{\sigma}$. Define $\mathcal{G} = \{ \bm{\sigma} f : f \in \mathcal{H}' \}$. Since \Cref{alphat} yields a constant $c>0$ such that $\sigma(x)>c$ for all $x \in \mathbb{R}$, the multiplication operator $\bm{\sigma} : \mathcal{H}' \rightarrow \mathcal{G}$ is a bijection. Together with the fact from \Cref{cs0} that $\bm{\mathrm{S}} = (\theta^{-1} - \bm{\mathrm{T} \varrho_{\beta}}) \bm{\sigma}: \mathcal{H}' \rightarrow \mathcal{H}$ is a bijection, it follows that the operator $(\theta^{-1} - \bm{\mathrm{T}} \bm{\varrho_{\beta}}) : \mathcal{G} \rightarrow \mathcal{H}$ is a bijection. Hence, it admits an inverse $(\theta^{-1} - \bm{\mathrm{T} \varrho_{\beta}})^{-1} : \mathcal{H} \rightarrow \mathcal{G}$, which is a bijection. By \Cref{trhobounded}, we have $\mathcal{H} \subseteq \mathcal{G}$, so to confirm that $(\theta^{-1} - \bm{\mathrm{T} \varrho_{\beta}}) : \mathcal{H} \rightarrow \mathcal{H}$ is a bijection it suffices to show that $\mathcal{G} \subseteq \mathcal{H}$. 
		
		To that end, fix $g \in \mathcal{G}$,  so that $g = \bm{\sigma} f$ for some $f \in \mathcal{H}' \subseteq \mathcal{H}$ and $(\theta^{-1} - \bm{\mathrm{T} \varrho_{\beta}}) g \in \mathcal{H}$. We must show that $g \in \mathcal{H}$, or equivalently that $\bm{\mathrm{T} \varrho_{\beta}} g \in \mathcal{H}$. Since $\varrho_{\beta} g = \varrho f$ (as $\sigma (x) \cdot \varrho_{\beta} (x) = \varrho (x)$, by \Cref{rhorho} and \eqref{sigmatheta}), this follows similarly to in the proof of \Cref{trhobounded}, from the estimates
		\begin{flalign*}
			\| \bm{\mathrm{T} \varrho} f \|_{\mathcal{H}}^2 & = 4 \displaystyle\int_{-\infty}^{\infty} \bigg| \displaystyle\int_{-\infty}^{\infty} \log |x-y|  \cdot \varrho (y) f(y) dy \bigg|^2 \varrho (x) dx \\
			& \le 4 \displaystyle\int_{-\infty}^{\infty} | f(y) |^2 \varrho (y) dy \cdot  \displaystyle\int_{-\infty}^{\infty}  \Bigg| \displaystyle\int_{-\infty}^{\infty} (\log |x-y|)^2  \varrho (y) dy \Bigg| \cdot \varrho (x) dx \le C \cdot \| f \|_{\mathcal{H}}^2 ,
		\end{flalign*}
	
		\noindent for some constant $C>1$, where we used the fact that $\varrho$ is bounded and has subexponential decay in the last bound (from \Cref{rhoexponential}).
	\end{proof} 
	
	\begin{rem} 
		
		\label{sigmaalpha0} 
		
		Observe by \eqref{salpha} and the fact that $\bm{\mathrm{S}} = (\theta^{-1} - \bm{\mathrm{T} \varrho_{\beta}}) \bm{\sigma}$ that $(\theta^{-1} - \bm{\mathrm{T} \varrho_{\beta}}) \sigma = 0$ if $\alpha = 0$. Thus, \Cref{trho} is false if $\alpha = 0$.
		
	\end{rem}

	\subsection{Proofs of \Cref{rho0} and \Cref{vt}}
	
	\label{Proofrho} 
	
	In this section we establish \Cref{rho0} and \Cref{vt}. We first require the following lemma.

	\begin{lem} 
		
		\label{hbetarhof} 
		
		Let $f \in \mathcal{H}$ be a function.
		
		\begin{enumerate} 
			
			\item We have that $(\theta^{-1} - \bm{\varrho_{\beta}} \bm{\mathrm{T}}) \bm{\varrho_{\beta}} (\theta^{-1}  - \bm{\mathrm{T}} \bm{\varrho_{\beta}})^{-1} f = \bm{\varrho_{\beta}} f$. 
			\item  If $(\theta^{-1} - \bm{\varrho_{\beta}} \bm{\mathrm{T}}) \bm{\varrho_{\beta}} f = 0$, then $f = 0$. 
			
		\end{enumerate} 
		
	\end{lem}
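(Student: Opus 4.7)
The plan is to reduce both claims to the single algebraic identity
\begin{flalign*}
(\theta^{-1} - \bm{\varrho_{\beta}} \bm{\mathrm{T}}) \bm{\varrho_{\beta}} = \bm{\varrho_{\beta}} (\theta^{-1} - \bm{\mathrm{T}} \bm{\varrho_{\beta}}),
\end{flalign*}
which I would verify first. This identity is simply pointwise commutativity: for any $g \in \mathcal{H}$, the left side applied to $g$ reads $\theta^{-1} \varrho_{\beta} g - \varrho_{\beta} \cdot \bm{\mathrm{T}}(\varrho_{\beta} g)$, which matches the right side applied to $g$ upon factoring out the scalar function $\varrho_{\beta}(x)$. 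The only thing to check here is that all the relevant expressions lie in $\mathcal{H}$, which follows because $\varrho_{\beta}$ is a bounded multiplier (by \Cref{rhoexponential}) and $\bm{\mathrm{T}} \bm{\varrho_{\beta}}$ is a bounded operator on $\mathcal{H}$ (by \Cref{trhobounded}).

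For the first statement of the lemma, I would fix $f \in \mathcal{H}$, set $g = (\theta^{-1} - \bm{\mathrm{T}} \bm{\varrho_{\beta}})^{-1} f \in \mathcal{H}$ (which exists by \Cref{trho}), and apply the above identity to $g$ to obtain
\begin{flalign*}
(\theta^{-1} - \bm{\varrho_{\beta}} \bm{\mathrm{T}}) \bm{\varrho_{\beta}} g = \bm{\varrho_{\beta}} (\theta^{-1} - \bm{\mathrm{T}} \bm{\varrho_{\beta}}) g = \bm{\varrho_{\beta}} f,
\end{flalign*}
which is exactly the desired equality.

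For the second statement, given $f \in \mathcal{H}$ with $(\theta^{-1} - \bm{\varrho_{\beta}} \bm{\mathrm{T}}) \bm{\varrho_{\beta}} f = 0$, the same identity yields $\bm{\varrho_{\beta}} (\theta^{-1} - \bm{\mathrm{T}} \bm{\varrho_{\beta}}) f = 0$ in $\mathcal{H}$. Since $\varrho_{\beta}(x) > 0$ for every $x \in \mathbb{R}$ (as noted in the footnote following \Cref{frho}, using that $\mathfrak{F}(\theta; x) \ne 0$), this forces $(\theta^{-1} - \bm{\mathrm{T}} \bm{\varrho_{\beta}}) f = 0$ almost everywhere, and hence as an element of $\mathcal{H}$. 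By \Cref{trho}, this operator is a bijection on $\mathcal{H}$ and in particular injective, so $f = 0$.

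There is no substantive obstacle here; the entire argument is the commutation identity combined with the invertibility already established in \Cref{trho}. The only point that requires any care is making sure the domains line up so that the commutation identity is applied to elements for which all intermediate expressions lie in $\mathcal{H}$, but this is immediate from boundedness of $\bm{\varrho_{\beta}}$ and of $\bm{\mathrm{T}} \bm{\varrho_{\beta}}$.
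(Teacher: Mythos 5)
Your proof is correct and follows essentially the same route as the paper: both arguments rest on the commutation identity $(\theta^{-1} - \bm{\varrho_{\beta}} \bm{\mathrm{T}}) \bm{\varrho_{\beta}} = \bm{\varrho_{\beta}} (\theta^{-1} - \bm{\mathrm{T}} \bm{\varrho_{\beta}})$, combined with the nonvanishing of $\varrho_{\beta}$ and the bijectivity of $(\theta^{-1} - \bm{\mathrm{T}}\bm{\varrho_{\beta}})$ from \Cref{trho}. No substantive difference.
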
 
	
	\begin{proof} 
		The first statement of the lemma follows from the fact that $(\bm{g} - \bm{\varrho}_{\beta} \bm{\mathrm{T}}) \bm{\varrho_{\beta}} = \bm{\varrho_{\beta}} (\bm{g} - \bm{\mathrm{T}} \bm{\varrho_{\beta}})$. The second follows from the fact that $(\theta^{-1} - \bm{\varrho_{\beta} \mathrm{T}}) \bm{\varrho_{\beta}} = \bm{\varrho_{\beta}} (\theta^{-1} - \bm{\mathrm{T} \varrho_{\beta}})$; the fact that $\varrho_{\beta} (x) \ne 0$ for all $x \in \mathbb{R}$ (by \Cref{frho}); and the invertibility of $(\theta^{-1} - \bm{\mathrm{T} \varrho_{\beta}}): \mathcal{H} \rightarrow \mathcal{H}$, by \Cref{trho}.
	\end{proof}

	\begin{proof}[Proof of \Cref{rho0}]
		
		The first statement of \eqref{rho2} implies the second, by \Cref{rhorho}, so it suffices to establish the former. To that end, observe by \Cref{rhorho} that, for any $x \in \mathbb{R}$, we have 
		\begin{flalign*}
			\theta^{-1} \cdot \varrho (x) -  \varrho_{\beta} (x) \cdot \bm{\mathrm{T}} \varrho (x) = \alpha \cdot \varrho_{\beta} (x),
		\end{flalign*}
		
		\noindent or equivalently 
		\begin{flalign}
			\label{theta2t}
			(\theta^{-1} - \bm{\varrho_{\beta}} \bm{\mathrm{T}}) \varrho = \alpha \cdot \varrho_{\beta}.
		\end{flalign} 
		
		\noindent Further observe by the $f=\varsigma_0$ case of the first statement of \Cref{hbetarhof} (and \Cref{fdr}) that $(\theta^{-1} -\bm{\varrho_{\beta}} \bm{\mathrm{T}}) \bm{\varrho_{\beta}} \varsigma_0^{\dr} = \varrho_{\beta}$. Together with \eqref{theta2t}, this yields 
		\begin{flalign*} 
			(\theta^{-1} - \bm{\varrho_{\beta}} \bm{\mathrm{T}}) \bm{\varrho_{\beta}} (\varsigma_0^{\dr} - \alpha^{-1} \cdot \varrho_{\beta}^{-1} \cdot \varrho) = 0.
		\end{flalign*} 
		
		\noindent By the second statement of \Cref{hbetarhof}, with the fact that $\varrho_{\beta}^{-1} \cdot \varrho \in \mathcal{H}$ (which is a quick consequence of \Cref{rhorho}), this gives $\varrho = \alpha \cdot \varsigma_0^{\dr} \cdot \varrho_{\beta}$, yielding the first equality in \eqref{rho2} and thus the corollary. 
	\end{proof}

	\begin{proof}[Proof of \Cref{vt}]   
		
		This follows from the equalities 
		\begin{flalign*}
			 (\theta^{-1} \cdot \bm{\varsigma_0^{\dr}} - \alpha^{-1} \cdot \bm{\mathrm{T}} \bm{\varrho}) v_{\eff}  = \theta^{-1} \varsigma_1^{\dr} - \alpha^{-1} \cdot \bm{\mathrm{T}} \bm{\varrho} \cdot v_{\eff} & = \theta^{-1} \varsigma_1^{\dr} - \bm{\mathrm{T}} \bm{\varrho_{\beta}} \bm{\varsigma_0^{\dr}} \cdot v_{\eff} \\
			& = \theta^{-1} \varsigma_1^{\dr} - \bm{\mathrm{T}} \bm{\varrho_{\beta}} \cdot \varsigma_1^{\dr} = (\theta^{-1} - \bm{\mathrm{T}} \bm{\varrho_{\beta}}) \varsigma_1^{\dr} = \varsigma_1.
		\end{flalign*}
		
		\noindent Here, the first equality follows from the fact (by \Cref{v}) that $v_{\eff} = (\varsigma_0^{\dr})^{-1} \cdot \varsigma_1^{\dr}$; the second from the fact (by the first statement in \eqref{rho2}) that $\alpha^{-1} \cdot \varrho = \varrho_{\beta} \cdot \varsigma_0^{\dr}$; the third again from the fact that $v_{\eff} = (\varsigma_0^{\dr})^{-1} \cdot \varsigma_1^{\dr}$; and the fourth and fifth from the definition \Cref{fdr} of $\varsigma_1^{\dr}$.
	\end{proof}

	\subsection{Proof of \Cref{xf2}, \Cref{derivativefg}, and \Cref{derivativev}}
	
	\label{DerivativefProof} 
	
	\begin{proof}[Proof of \Cref{xf2}]
		
		As in \eqref{sigmatheta}, denote $\sigma (x) = \theta \cdot \bm{\mathrm{T}} \varrho (x) + \theta \alpha$, which is positive and bounded away from $0$ by \Cref{alphat}. Further denote $g(x) = f^{\dr} (x)$ and $h(x) = \sigma(x)^{-1} \cdot g(x)$. Since $(\theta^{-1} - \bm{\mathrm{T} \varrho_{\beta}}) g = f$, we have
		\begin{flalign}
			\label{2g} 
			g(x) = \theta f(x) + 2 \theta \displaystyle\int_{-\infty}^{\infty} \log |x-y| \varrho_{\beta} (y) g(y) dy.
		\end{flalign}
		
		\noindent To bound the integral on the right side of \eqref{2g}, observe that
		\begin{flalign*}
			\Bigg| \displaystyle\int_{-\infty}^{\infty} \log |x-y| \varrho_{\beta} (y) g(y) dy \Bigg| \le \Bigg( \displaystyle\int_{-\infty}^{\infty} |h(y)|^2 \varrho (y) dy \Bigg)^{1/2} \Bigg( \displaystyle\int_{-\infty}^{\infty} (\log |x-y|)^2 \varrho (y) dy \Bigg)^{1/2},
		\end{flalign*}
		
		\noindent where we have used the fact from \Cref{rhorho} that $\varrho (y) = \sigma(y) \cdot \varrho_{\beta} (y)$ for each $y \in \mathbb{R}$. By the boundedness and exponential decay of $\varrho$ (from \Cref{rhoexponential}), there exists $C_1 > 1$ such that 
		\begin{flalign*}
			\Bigg| \displaystyle\int_{-\infty}^{\infty} (\log |x-y|)^2 \varrho (y) dy \Bigg| \le C_1 (\log (|x|+2) )^2,
		\end{flalign*}
		
		\noindent from which we deduce
		\begin{flalign*}
			\Bigg| \displaystyle\int_{-\infty}^{\infty} \log |x-y| \varrho_{\beta} (y) g(y) dy \Bigg| \le  C_1 \| h \|_{\mathcal{H}} \cdot \log (|x|+2).
		\end{flalign*}

		\noindent Together with \eqref{2g}, this yields, for any $x \in \mathbb{R}$,
		\begin{flalign}
			\label{gh} 
				|g(x) - \theta f(x)| \le 2 \theta C_1 \| h \|_{\mathcal{H}} \cdot \log (|x|+2). 
		\end{flalign}
		
		We next bound $\| h \|_{\mathcal{H}}$. To do so, recall the operator $\bm{\mathrm{S}} = \theta^{-1} \bm{\sigma} - \bm{\mathrm{T} \varrho}$ from \Cref{ProofT0}. By \Cref{rhorho} (with the definition of $\sigma$), we have $\bm{\mathrm{S}} = (\theta^{-1} - \bm{\mathrm{T} \varrho_{\beta}}) \bm{\sigma}$, and so it follows since $h = \sigma^{-1} g = \sigma^{-1} f^{\dr}$ that $\bm{\mathrm{S}} h = (\theta^{-1} - \bm{\mathrm{T} \varrho_{\beta}}) f^{\dr} = f$. Recalling the space $\mathcal{H}_0$ from \eqref{sigmatheta}, denote $h = h_0 + h_1$, where $h_0 \in \mathcal{H}_0$ and $h_1 = \langle h, \varsigma_0 \rangle_{\varrho} \cdot \varsigma_0 \in \mathbb{C} \cdot \varsigma_0$. Then, for some $c \in (0,  \alpha^2)$, we have
		\begin{flalign}
			\label{festimate0} 
			\begin{aligned} 
			\| f \|_{\mathcal{H}}^2 = \| \bm{\mathrm{S}} h \|_{\mathcal{H}}^2 & = \| \bm{\mathrm{S}} h_0 \|_{\mathcal{H}}^2 + \alpha^2 \cdot \| h_1 \|_{\mathcal{H}}^2 \\
			& \ge \langle \bm{\mathrm{S}} h_0, h_0 \rangle_{\varrho}^2 \cdot \|h_0 \|_{\mathcal{H}}^{-2} +  \alpha^2 \cdot \| h_1 \|_{\mathcal{H}}^2 \\
			& = \big( \theta^{-1}  \langle \sigma h_0,  h_0 \rangle_{\varrho}  - \langle \bm{\mathrm{T} \varrho} h_0, h_0 \rangle_{\varrho} \big)^2 \cdot \| h_0 \|_{\mathcal{H}}^{-2}  +  \alpha^2 \cdot \| h_1 \|_{\mathcal{H}}^2 \\
			& \ge  \theta^{-2}  \langle \sigma h_0,  h_0 \rangle_{\varrho}^2 \cdot \| h_0 \|_{\mathcal{H}}^{-2}  +  \alpha^2 \cdot \| h_1 \|_{\mathcal{H}}^2 \\
			& \ge c \cdot \langle h_0, h_0 \rangle_{\varrho} + \alpha^2 \cdot \| h_1 \|_{\mathcal{H}}^2 \ge c \| h \|_{\mathcal{H}}^2,
			\end{aligned} 
		\end{flalign} 
		
		\noindent where in the first statement we used the fact that $f = \bm{\mathrm{S}} h$; in the second we used the fact that $\mathcal{H}_0 \subseteq \mathcal{H}$ is the orthogonal complement of $\mathbb{C} \cdot \varsigma_0 \subseteq \mathcal{H}$, and from \eqref{salpha} that $\varsigma_0$ is an eigenfunction of $\bm{\mathrm{S}}$ with eigenvalue $\alpha$; in the third and fourth we used the definition of $\bm{\mathrm{S}}$; in the fifth we used the fact from \Cref{tg0} that $\bm{\mathrm{T} \varrho}$ is nonpositive on $\mathcal{H}_0$; in the sixth we used the fact from \Cref{alphat} that $\theta^{-1} \cdot \sigma (x) > c$ for all $x \in \mathbb{R}$; and in the seventh we again used that $\mathcal{H}_0 \subseteq \mathcal{H}$ is the orthogonal complement of $\mathbb{C} \cdot \varsigma_0 \subseteq \mathcal{H}$. Inserting \eqref{festimate0} into \eqref{gh} yields a constant $C_2 > 1$ such that, for any $x \in \mathbb{R}$,
		\begin{flalign}
			\label{fh0} 
			|f^{\dr} (x)| = |g(x)| \le C_2 \cdot |f(x)| + C_2 \| f \|_{\mathcal{H}}  \cdot \log (|x|+2),
		\end{flalign}
		
		\noindent proving the lemma.
	\end{proof}

	\begin{proof}[Proof of \Cref{derivativefg}]
		
		Denote $g(x) = f^{\dr} (x)$ for each $x \in \mathbb{R}$. Then, observe that 
		\begin{flalign}
			\label{gderivativef}
			g' = (f' + \bm{\mathrm{T} \varrho_{\beta}'} g)^{\dr}.
		\end{flalign} 
		
		\noindent Indeed, differentiating the relation $g(x) = \theta f(x) + \theta \cdot \bm{\mathrm{T} \varrho_{\beta}} g (x)$ in $x$ (and recalling \eqref{operatort}) yields 
		\begin{flalign*}
			g'(x) = \theta f'(x) + 2 \theta \displaystyle\int_{-\infty}^{\infty} \log |y| \cdot \partial_x \big( g(x-y) \varrho_{\beta} (x-y) \big) dy = \theta f' + \theta \bm{\mathrm{T} \varrho_{\beta}} g' + \theta \bm{\mathrm{T} \varrho_{\beta}'} g,
		\end{flalign*} 
		
		\noindent from which we obtain \eqref{gderivativef}. Applying \Cref{xf2}, it follows there exists a constant $C_1>1$ such that, for any $x \in \mathbb{R}$, 
		\begin{flalign}
			\label{gderivativexestimate0} 
			|g'(x)| \le C_1 \cdot ( |f'(x)| + |\bm{\mathrm{T} \varrho_{\beta}'} g(x)| ) + C_1 \cdot ( \| f' \|_{\mathcal{H}} + \| \bm{\mathrm{T} \varrho_{\beta}'} g \|_{\mathcal{H}} ) \cdot \log (|x|+2).
		\end{flalign}
		
		We must now bound $\bm{\mathrm{T} \varrho_{\beta}'} g$. To that end, observe from \eqref{operatort} that 
		\begin{flalign}
			\label{tderivativeg} 
			\begin{aligned} 
			|\bm{\mathrm{T} \varrho_{\beta}'} g(x)| & \le 2  \displaystyle\int_{-\infty}^{\infty} \log |x-y| \cdot \varrho_{\beta}' (y) |g(y)| dy  \\
			& \le 2 \Bigg( \displaystyle\int_{-\infty}^{\infty} |g(y)|^2 \varrho (y) dy \Bigg)^{1/2} \Bigg( \displaystyle\int_{-\infty}^{\infty} (\log |x-y|)^2 \varrho (y)^{-1} \varrho_{\beta}' (y)^2 dy \Bigg)^{1/2}.
			\end{aligned}
		\end{flalign} 
		
		\noindent From \Cref{rhoexponential}, \Cref{rhorho}, and \Cref{alphat}, there exists constants $C_2 > 1$ and $C_3 >1$ that 
		\begin{flalign*}
			|\varrho_{\beta}' (y)| \le C_2 (|y|+1) \cdot \varrho_{\beta} (y) \le C_3 (|y|+1) \cdot \varrho (y).
		\end{flalign*}
		
		\noindent Inserting this into \eqref{tderivativeg} (and using the boundedness and exponential decay of $\varrho$, from \Cref{rhoexponential}) yields for some constants $C_4 > 1$ and $C_5 > 1$ that
		\begin{flalign*} 
			|\bm{\mathrm{T} \varrho_{\beta}'} g(x)| \le C_4 \| g \|_{\mathcal{H}} \cdot \Bigg( \displaystyle\int_{-\infty}^{\infty} (\log |x-y|)^2 (|y|+1) \varrho (y) dy \Bigg)^{1/2} \le C_5 \| g \|_{\mathcal{H}} \cdot \log (|x|+2).
		\end{flalign*} 
		
		\noindent By \Cref{xf2}, we have $\| g \|_{\mathcal{H}} \le C_6 \| f \|_{\mathcal{H}}$ for some $C_6 > 1$, so it follows for some $C_7 > 1$ that 
		\begin{flalign*}
			|\bm{\mathrm{T} \varrho_{\beta}'} g(x)| \le C_7 \| f \|_{\mathcal{H}} \cdot \log (|x|+2); \qquad	\| \bm{\mathrm{T} \varrho_{\beta}'} g \|_{\mathcal{H}} \le C_7 \| f \|_{\mathcal{H}}.
		\end{flalign*}
		
		\noindent Applying these bounds in \eqref{gderivativexestimate0} then yields the lemma.
	\end{proof}

	\begin{proof}[Proof of \Cref{derivativev}]
		
		Fix $x \in [-A, A]$. Observe that there exists a constant $C_1>1$ such that 
		\begin{flalign*} 
			| v_{\eff} (x) | \le | \varsigma_0^{\dr} (x) |^{-1} \cdot | \varsigma_1^{\dr} (x) | \le C_1 (A + \log A) \le 2C_1 A,
		\end{flalign*} 
		
		\noindent where  in the first statement we used \Cref{v}; in the second we used \Cref{positive0} and \Cref{xf2}; and in the third we used the fact that $A \ge 2$. This confirms the first statement of the corollary. Similarly, there exists a constant $C_2 > 1$ such that
		\begin{flalign*}
			| \partial_x v_{\eff} (x) | & \le | \varsigma_0^{\dr} (x) |^{-1} \cdot | \partial_x \varsigma_1^{\dr} (x) | + | \varsigma_0^{\dr} (x) |^{-2} \cdot | \varsigma_1^{\dr} (x) | \cdot | \partial_x \varsigma_0^{\dr} (x) | \\
			& \le C_2 \log A + C_2 (\log A)^2 + C_2 A \log A \le 2C_2 A \log A,
		\end{flalign*} 
		
		\noindent where in the first statement we used \Cref{v}; in the second we used \Cref{xf2}, \Cref{derivativefg}, and \Cref{positive0}; and in the third we used the fact that $A \ge 2$.  This confirms the second statement of the corollary.
	\end{proof}

	\subsection{Proof of \Cref{beta0theta}}
	
	\label{IntegralEstimate}

	\begin{proof}[Proof of \Cref{beta0theta}]
		
		By \Cref{1thetarho} and the explicit form of $\mathfrak{l} (x) = \log (x^2 + \mathfrak{d}^2)/2$, there exists a constant $C = C(\beta) > 1$ such that
		\begin{flalign}
			\label{integral00} 
			\displaystyle\sup_{\lambda \in \mathbb{R}} \displaystyle\int_{\lambda-1}^{\lambda+1} |\mathfrak{l} (x-\lambda)| \varrho (x) dx \le C.
		\end{flalign}
		
		\noindent Now, observe that $\Gamma' (\theta) \cdot \Gamma (\theta)^{-1}$ tends to $-\infty$ as $\theta$ tends to $0$. Therefore, there exists a constant $\theta_0 = \theta_0 (\beta) > 0$ such that, for any $\theta \in (0, \theta_0)$, we have 
		\begin{flalign*}
			\alpha = \log \beta - \Gamma' (\theta) \cdot \Gamma (\theta)^{-1} > 8C.
		\end{flalign*}
		
		\noindent Together with \eqref{integral00}, this implies for $\theta \in (0, \theta_0)$ that 
		\begin{flalign*}
			4 \alpha^{-1} \cdot \displaystyle\sup_{\lambda \in \mathbb{R}} \displaystyle\int_{\lambda-1}^{\lambda+1} |\mathfrak{l} (x-\lambda)| \varrho (x) d x  < \displaystyle\frac{1}{2}.
		\end{flalign*} 
	
		\noindent Hence, since $\mathfrak{l}(x-\lambda) \ge 0$ whenever $|x-\lambda| \ge 1$, it follows for $\theta \in (0, \theta_0)$ and any $\lambda \in \mathbb{R}$ that
		\begin{flalign*}
			\Bigg| 2 \alpha^{-1} \displaystyle\int_{-\infty}^{\infty} \mathfrak{l} (x-\lambda) \varrho (x) d x + 1 \Bigg| & \ge  2 \alpha^{-1} \displaystyle\int_{|x-\lambda| > 1} \mathfrak{l} (x-\lambda) \varrho (x) d x + 1 -  2 \alpha^{-1} \displaystyle\int_{\lambda-1}^{\lambda+1} |\mathfrak{l} (x-\lambda)| \varrho (x) d x  \\
			& >  2 \alpha^{-1} \displaystyle\int_{-\infty}^{\infty} |\mathfrak{l}(x-\lambda)| \varrho (x) dx  + \frac{1}{2},
		\end{flalign*}
		
		\noindent thereby establishing the lemma.
	\end{proof}

	\section{Proof of \Cref{centerdistance}} 
	
	\label{Proofst} 
	
	In this section we prove \Cref{centerdistance}. We adopt the notation and assumptions of that lemma throughout. We further call a unit vector $\bm{v} = (v_{N_1}, v_{N_1+1}, \ldots , v_{N_2}) \in \mathbb{R}^N$ nonnegatively normalized if $v_j > 0$, where $j \in \llbracket N_1, N_2 \rrbracket$ is the minimal index such that $v_j \ne 0$. For each index $j \in \llbracket 1, N \rrbracket$ and real number $s \ge 0$, let $\bm{u}_j (s) = (u_j (N_1;s), u_j (N_1+1;s), \ldots , u_j (N_2; s))$ denote the nonnegatively normalized, unit eigenvector of $\bm{L}(s)$ with eigenvalue $\lambda_j$. 
	
	We will use the following result showing approximate uniqueness for localization centers of $\bm{L}(t)$.

	\begin{lem}[{\cite[Proposition 2.9]{LC}}]
		\label{centerdistance2} 
		
		Adopt \Cref{lbetaeta}, but assume more generally that $\zeta \ge N^3 e^{-200 (\log N)^{3/2}}$. The following holds with overwhelming probability. Fix $t \in [0, T]$; $\lambda \in \eig \bm{L}$; and $\zeta$-localization centers $\varphi, \tilde{\varphi} \in \llbracket N_1, N_2 \rrbracket$ of $\lambda$ with respect to $\bm{L}(t)$. If $N_1 + T (\log N)^3 \le \varphi \le N_2 - T (\log N)^3$, then $|\varphi - \tilde{\varphi}| \le (\log N)^3$. 
		
	\end{lem}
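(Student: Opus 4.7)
My plan is to prove \Cref{centerdistance2} by reducing $\bm{L}(t)$ to an independent-entries proxy and applying quantitative exponential localization of its eigenvectors. First, setting $K = 100(\log N)^2$ (so that $e^{-K/5} \ll e^{-10(\log N)^2}$, triggering the hypothesis of \Cref{lleigenvalues}), \Cref{ltl0} furnishes a random matrix $\bm{M}$ whose law coincides with that of $\bm{L}(0)$---and whose entries are therefore the independent thermal variables of \Cref{mubeta2}---that agrees with $\bm{L}(t)$ on $\llbracket N_1 + K, N_2 - K\rrbracket$ up to error $e^{-K/5}$. Since $\varphi$ lies comfortably in this window by the bulk hypothesis $\varphi \in \llbracket N_1 + T(\log N)^3, N_2 - T(\log N)^3\rrbracket$, \Cref{lleigenvalues} produces a unique $\mu \in \eig \bm{M}$ with $|\mu - \lambda| \le e^{-c(\log N)^2}$ for which $\varphi$ is an $N^{-1}\zeta$-localization center of $\mu$ with respect to $\bm{M}$. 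The same conclusion will hold for $\tilde{\varphi}$ provided it also lies in the bulk; the alternative case $\tilde{\varphi} \notin \llbracket N_1+K, N_2-K\rrbracket$ forces $|\varphi - \tilde{\varphi}| > T(\log N)^3$ and will be ruled out by the same eigenvector-decay argument below, applied asymmetrically (a peak site $\varphi^\ast$ inherited from the bulk localization of $\varphi$ cannot coexist with an excess mass in the boundary layer). It therefore suffices to bound the separation of any two $N^{-1}\zeta$-localization centers of a common eigenvalue of $\bm{M}$.

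The key tool is a quantitative exponential decay estimate for the unique (up to sign) unit eigenvector $\bm{u}$ of $\bm{M}$ associated with $\mu$---uniqueness holding since $\bm{M}$ is tridiagonal with almost-surely nonzero subdiagonals, so its spectrum is simple. Starting from the fractional-moment Green's function estimate of \Cref{gijexponential}, $\sup_{E\in\mathbb{R}} \mathbb{E}[|G_{ij}(E)|^s] \le c^{-1} e^{-c|i-j|}$, a standard resolvent-to-eigenvector spectral-averaging argument---combining a Markov bound with probabilistic budget $e^{(\log N)^{3/2}}$, a $O(N^{-100})$-mesh of $O(N^{100})$ reference energies, the eigenvalue-separation estimate of \Cref{l0eigenvalues}, and the residue identity $u_j \bar{u}_k = \frac{1}{2\pi i}\oint_{\mu} G_{jk}(E)\,dE$ for a small loop around $\mu$---upgrades the fractional-moment bound to the almost-sure pointwise statement: with overwhelming probability, every unit eigenvector $\bm{u}$ of $\bm{M}$ admits a peak site $\varphi^\ast = \varphi^\ast(\bm{u})$ and a deterministic rate $\gamma = \gamma(\beta,\theta) > 0$ such that $|u_j| \le e^{(\log N)^{3/2}} e^{-\gamma|j - \varphi^\ast|}$ for all $j \in \llbracket N_1, N_2\rrbracket$.

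Inserting this bound at $j = \varphi$ and $j = \tilde{\varphi}$ against the lower bound $N^{-1}\zeta \ge N^2 e^{-200(\log N)^{3/2}}$ from the hypothesis, and taking logarithms, yields $|\varphi - \varphi^\ast|,\ |\tilde{\varphi} - \varphi^\ast| \le \gamma^{-1}(201+o(1))(\log N)^{3/2}$, and hence by the triangle inequality $|\varphi - \tilde{\varphi}| \le 402\, \gamma^{-1} (\log N)^{3/2} < (\log N)^3$ for large $N$, which is the desired conclusion. The main obstacle is the passage from the fractional-moment bound of \Cref{gijexponential} to the almost-sure pointwise eigenvector-decay statement \emph{uniform} across all $O(N)$ eigenvectors and all $O(N)$ sites simultaneously; this is a routine but delicate union-bound exercise in the random-Schr\"odinger-operator literature, made feasible here because the overwhelming-probability budget $e^{-c(\log N)^2}$ comfortably dominates the polynomial-in-$N$ combinatorial losses, while the generous gap between the stretched-exponential decay scale $e^{-\gamma |j-\varphi^\ast|}$ and the polynomial-scale target $(\log N)^3$ absorbs all constants incurred in composing \Cref{lleigenvalues}.
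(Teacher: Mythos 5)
First, a framing point: the paper does not prove this lemma at all --- it is imported verbatim as \cite[Proposition 2.9]{LC} --- so there is no internal proof to compare against. Your strategy (couple $\bm{L}(t)$ to a fresh thermal-equilibrium matrix $\bm{M}$ via \Cref{ltl0}, transfer both localization centers to a single eigenvector of $\bm{M}$ via \Cref{lleigenvalues}, then invoke exponential eigenvector localization coming from the fractional-moment bound \Cref{gijexponential}) is the natural and expected route, and it matches the machinery the companion paper evidently uses (compare the eigenvector-decay estimate \cite[Equation (5.1)]{LC} quoted in the proof of \Cref{sumlambdaiexpectation}).

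There are, however, concrete gaps. (i) \emph{Parameters.} \Cref{ltl0} requires $K \ge T\log N$, so $K = 100(\log N)^2$ is inadmissible once $T$ is large (it may be as large as $N(\log N)^{-6}$); you need $K$ of order $T(\log N)^2$, which is still compatible with the bulk hypothesis on $\varphi$ but must be arranged so that $e^{-K/5} \le e^{-10(\log N)^2}$ as well. Relatedly, $N^3 e^{-200(\log N)^{3/2}} < e^{-150(\log N)^{3/2}}$ for large $N$, so the hypothesis on $\zeta$ does \emph{not} place you in the admissible range of \Cref{lleigenvalues} as stated; the lemmas quoted in this paper do not literally chain together, and one has to go back to the (sharper) statements in \cite{LC}. (ii) \emph{The core localization upgrade.} Passing from $\sup_E \mathbb{E}[|G_{ij}(E)|^s] \le c^{-1}e^{-c|i-j|}$ to an almost-sure bound $|u_k(j)| \le e^{(\log N)^{3/2}} e^{-\gamma|j-\varphi^*|}$ holding simultaneously for all $k$ is the entire content of the result and cannot be dispatched as routine; worse, the mechanism you sketch does not work as stated. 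A Markov bound at a mesh of real reference energies gives no control of $G_{ij}(E)$ on a contour of radius comparable to the eigenvalue gap around the random eigenvalue $\mu$: on that contour $|G_{ij}(E)| \le 2\delta^{-1}\sum_l |u_l(i)||u_l(j)|$ and nothing better, so the residue identity merely reproduces the eigenfunction correlator you are trying to bound. The standard repair is to estimate $\mathbb{E}\big[\sum_k |u_k(i)||u_k(j)|\big] \le Ce^{-c|i-j|}$ directly by spectral averaging in the diagonal entries (available here since the $b_i$ are Gaussian), after which Markov and a union bound over pairs with $|i-j| \ge (\log N)^2$ give, with overwhelming probability, $|u_k(\varphi)||u_k(\tilde{\varphi})| \le e^{-c(\log N)^2/2} < (N^{-1}\zeta)^2$ whenever $|\varphi - \tilde{\varphi}| \ge (\log N)^2$; this finishes the argument without introducing a peak site. (iii) \emph{The boundary case.} If $\tilde{\varphi}$ lies within $O(T\log N)$ of $N_1$ or $N_2$, then $\bm{L}(t)$ and $\bm{M}$ need not be close there, \Cref{lleigenvalues} does not apply to $\tilde{\varphi}$, and your ``asymmetric'' argument would require decay of the eigenvector of $\bm{L}(t)$ itself rather than of $\bm{M}$; this case is acknowledged but not actually handled and needs its own short argument.
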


	\begin{proof}[Proof of \Cref{centerdistance}]
		
		The proof of this lemma is similar to that of \cite[Lemma 5.2]{LC}. Recalling \Cref{adelta}, we restrict to the event $\mathsf{E}_1 =  \bigcap_{s \ge 0} \mathsf{BND}_{\bm{L}(s)} (\log N)$, as we may by \Cref{l0eigenvalues}. We further restrict to the events $\mathsf{E}_2$ on which \Cref{qijsalpha} holds and $\mathsf{E}_3$ on which \Cref{centerdistance2} holds.  
		
		Next, we recall a fact concerning the evolution of the Lax matrix $\bm{L}(s)$. For each $s \in \mathbb{R}$, define the tridiagonal skew-symmetric matrix $\bm{P} (s) = [ P_{ij} (s) ] \in \Mat_{\llbracket N_1, N_2 \rrbracket}$ as follows. For each $i \in \llbracket N_1, N_2 - 1 \rrbracket$, set $P_{i,i+1} (s) = a_i (s)/2$ and $P_{i+1,i} (s) = -a_i (s)/2$; for all $(i, j) \in \llbracket N_1, N_2 \rrbracket^2$ not of the above form, set $P_{i,j} (s) = 0$. For any real number $s \in \mathbb{R}_{\ge 0}$, further let $\bm{V}(s) = [ V_{ij} (s) ] \in \Mat_{\llbracket N_1, N_2 \rrbracket}$ satisfy the ordinary differential equation $\partial_s \bm{V}(s) = \bm{P}(s) \cdot \bm{V}(s)$, with initial data $\bm{V}(0) = \Id$; the existence of such a matrix $\bm{V}(s)$ follows from the Picard--Lindel\"{o}f theorem. Similarly, fixing $r \ge 0$, for any real number $s \ge r$, let $\bm{V}(r;s) = [\bm{V} (r;s)] \in \Mat_{\llbracket N_1, N_2 \rrbracket}$ satisfy the ordinary differential equation $\partial_s \bm{V} (r; s) = \bm{P} (s) \cdot \bm{V} (r; s)$, with initial data $\bm{V}(r;r) = \bm{V}(r)$. Observe that $\bm{V} (0;r) = \bm{V}(r)$. For any $(i, j) \in \llbracket N_1, N_2 \rrbracket^2$, the $(i,j)$-entry of $\bm{V}(r;s)$ is more explicitly given by 
		\begin{flalign}
			\label{vijs}
			V_{ij} (r;s) = V_{ij} (r) + \displaystyle\sum_{k=1}^{\infty}  \displaystyle\sum_{i_1=N_1}^{N_2} \cdots \displaystyle\sum_{i_k=N_1}^{N_2}  \displaystyle\int_r^s \cdots \displaystyle\int_r^s \mathbbm{1}_{s_1 > s_2 > \cdots > s_k} \cdot V_{i_k j} (r) \displaystyle\prod_{h=1}^k P_{i_{h-1}, i_h} (s_h) ds_h,
		\end{flalign} 
		
		\noindent where we have set $i_0 = i$. Indeed, it is quickly verified under this choice that $\bm{V}(r;r) = \bm{V}(r)$ and $\partial_s \bm{V}(r;s) = \bm{P}(s) \cdot \bm{V}(r;s)$, the latter by differentiating both sides of \eqref{vijs} with respect to $s$. Then, by \cite[Section 2]{FMPL}, we have $\bm{V} (r; s)^{-1} \cdot \bm{L}(s) \cdot \bm{V}(r; s) = \bm{L}(r)$. 
		
		This implies that $\bm{L}(s) = \bm{V}(r; s) \cdot \bm{L}(r) \cdot \bm{V}(r; s)^{\mathsf{T}}$, as $\bm{V}(r; s)$ is orthogonal (since $\bm{V}(0) = \Id$, $\partial_s \bm{V}(r; s) = \bm{P}(s) \cdot \bm{V}(r; s)$, and $\bm{P}(s)$ is skew-symmetric). Hence, letting $\bm{U}(s) = [ U_{ij} (s) ] \in \Mat_{N \times N}$ denote matrix of eigenvectors of $\bm{L}(s)$, whose $(i,j)$-entry is given by $U_{ij}(s) = u_j (i; s)$ for each $(i, j) \in \llbracket N_1, N_2 \rrbracket \times \llbracket 1, N \rrbracket$, we have $\bm{U}(s) = \bm{V}(r; s) \cdot \bm{U}(r)$. In particular,
		\begin{flalign}
			\label{vijsu0} 
			u_j (i; s) = \displaystyle\sum_{k = N_1}^{N_2} V_{ik}(r; s) \cdot u_j (k; r).
		\end{flalign}
		
		Now observe whenever $|i-j| \ge 20 (s-r) \log N$ that 
		\begin{flalign}
			\label{u0vijs}
			| V_{ij} (r; s) |\le \displaystyle\sum_{k = |i-j|}^{\infty} \displaystyle\frac{(s-r)^k}{k!} \cdot (2 \log N)^k \le \displaystyle\sum_{k=|i-j|}^{\infty} \bigg( \displaystyle\frac{2e (s-r) \log N}{k} \bigg)^k \le \displaystyle\sum_{k=|i-j|}^{\infty} e^{-k} \le 2e^{-|i-j|}.
		\end{flalign}
		
		\noindent Here, in the first inequality we used \eqref{vijs}, with the facts that each $| P_{ij} (s_h) | \le \log N$ (as we have restricted to the event $\mathsf{E}_1$), that $P_{ij} = 0$ whenever $|i-j| \ne 1$ (meaning that there are at most two choices for each $i_h$ that gives rise to a nonzero summand in \eqref{vijs}), and that $|V_{i'j'} (r)| \le 1$ for all $i',j' \in \llbracket N_1, N_2 \rrbracket$ (as $\bm{V}(r)$ is orthogonal); in the second we used the bound $k! \ge (e^{-1} k)^k$ for each $k \ge 0$; in the third we used the bound $2k^{-1} e(s-r) \log N \le e^{-1}$ for $k \ge |i-j| \ge 20 (s-r) \log N$; and in the fourth we performed the sum. 
		
		Now we can establish the lemma. We assume that $t' \ge t$, as the proof when $t' < t$ is entirely analogous. Let $j \in \llbracket 1, N \rrbracket$ be such that $\lambda = \lambda_j$, and denote $\Delta = (|t-t'| + 1) (\log N)^3$ and $\zeta_0 = N^3 e^{-200 (\log N)^{3/2}}$. Assuming to the contrary that $|\varphi - \varphi'| > (|t-t'|+2) (\log N)^3 = \Delta + (\log N)^3$, the first statement of the lemma follows from the bounds 
		\begin{flalign*}
			| u_j (\varphi'; t') | & \le \displaystyle\sum_{k = N_1}^{N_2} \mathbbm{1}_{|k-\varphi'| \ge \Delta} \cdot | V_{\varphi' k}(t; t') | + \displaystyle\sum_{k=N_1}^{N_2} \mathbbm{1}_{|k-\varphi| > (\log N)^3} \cdot | u_j (k; t) | \\
			& \le 2 \displaystyle\sum_{k = N_1}^{N_2} \mathbbm{1}_{|k-\varphi'| \ge \Delta} \cdot e^{-|k-\varphi'|} + N \zeta_0 \le N^5 e^{-200 (\log N)^{3/2}} < \zeta,
		\end{flalign*}
		
		\noindent which contradicts the fact that $\varphi'$ is a $\zeta$-localization center of $\bm{u}_j (t')$. Here, the first inequality follows from \eqref{vijsu0}, together with the facts that $| u_j (k; 0) | \le 1$ (as $\bm{u}_j$ is a unit vector) and $| V_{mk} (t; t') | \le 1$ (as $\bm{V} (t; t')$ is orthogonal); the second follows from \eqref{u0vijs} (with the fact that $\Delta > 20 (t'-t) \log N $ for sufficiently large $N$) and the fact from \Cref{centerdistance2} (and our restriction to $\mathsf{E}_3$) that $k$ is not a $\zeta_0$-localization center for $\bm{u}_j (t)$ if $|k-\varphi| > (\log N)^3$; and the third and fourth follow from performing the sum and recalling that $\Delta \ge (\log N)^3$, that $\zeta_0 = N^3 e^{-200 (\log N)^{3/2}}$, and that $\zeta \ge e^{-150 (\log N)^{3/2}}$.
		
		Therefore, since $\varphi \in \llbracket N_1 + T (\log N)^4, N_1 - T (\log N)^4 \rrbracket$, we have $\varphi' \in \llbracket N_1 + T(\log N)^3, N_2 - T(\log N)^3 \rrbracket$, so \Cref{qijsalpha} applies with the $(i, j)$ there equal to $(\varphi, \varphi')$ here. The second statement of the lemma then follows from the estimates  
		\begin{flalign*}
			|q_{\varphi} (t) - q_{\varphi'} (t')| & \le  |q_{\varphi} (t) - q_{\varphi} (t')| + |q_{\varphi} (t') - q_{\varphi'} (t')| \\
			& \le |t-t'| \cdot \displaystyle\sup_{s \in [0, T]} |b_{\varphi} (s)| + \alpha \cdot |\varphi - \varphi'| + |\varphi - \varphi'|^{1/2} (\log N)^2 \\
			& \le |t-t'| \cdot \log N + 2 \alpha (|t-t'|+2) (\log N)^{7/2} \le (|t-t'|+1) (\log N)^4,
		\end{flalign*}
		
		\noindent where the second inequality holds by \eqref{qtpt}, \eqref{abr}, and \eqref{qiqjs4} (with our restriction to $\mathsf{E}_2$); the third holds from our restriction to $\mathsf{E}_1$ and the first part of the lemma; and the fourth holds since $N$ is sufficiently large.		
	\end{proof}

	\section{Proof of \Cref{xfsum}}
	
	\label{Proofxf}
	
	In this section we establish \Cref{xfsum}. Setting $|\mathcal{I}| = n$, we assume throughout that $m = n$ and $\mathcal{J}_i = \{ i \}$ for each $i \in \llbracket 1, n \rrbracket$, as the proof is entirely analogous in the general case. Further set $A_k = \Infl_{x_k} (F;p)$ for each $k \in \llbracket 1, n \rrbracket$. We begin with the following lemma that exhibits a set $\mathcal{Y} \subseteq \mathbb{R}^n$ that $\bm{x}$ is likely to lie in, on which $F$ changes by a bounded amount upon perturbing a given coordinate. In what follows, for any integer $k \in \llbracket 1, n \rrbracket$ and $n$-tuple $\bm{w} = (w_1, w_2, \ldots , w_n) \in \mathbb{R}^n$, we let $\bm{w}^{(k)} = (w_1, w_2, \ldots , w_{k-1}, w_{k+1}, \ldots , w_n) \in \mathbb{R}^{n-1}$ denote the $(n-1)$-tuple obtained by removing $w_k$ from $\bm{w}$. 
	
	\begin{lem} 
		
		\label{yprobability}
		
		There exists a subset $\mathcal{Y} \subseteq \mathbb{R}^n$ with $\mathbb{P}[\bm{x} \in \mathcal{Y}] \ge 1 - 2np^{1/2}$, such that for each $k \in \llbracket 1, n \rrbracket$ we have
		\begin{flalign*}
			| F(\bm{u}) - F(\bm{v})| \le A_k, \qquad \text{for any $\bm{u}, \bm{v} \in \mathcal{Y}$ with $\bm{u}^{(k)} = \bm{v}^{(k)}$}.
		\end{flalign*}

	\end{lem}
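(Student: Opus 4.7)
The natural strategy is to build the good set one coordinate at a time: for each $k$ construct a set $\mathcal{Y}_k \subseteq \mathbb{R}^n$ satisfying the required oscillation bound for direction $k$, and then take $\mathcal{Y} = \bigcap_{k=1}^n \mathcal{Y}_k$. The union bound then gives $\mathbb{P}[\mathcal{Y}^{\complement}] \le \sum_k \mathbb{P}[\mathcal{Y}_k^{\complement}]$, so it suffices to arrange $\mathbb{P}[\mathcal{Y}_k^{\complement}] \le 2p^{1/2}$ for each $k$. The construction of each $\mathcal{Y}_k$ is the real content.

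Fix $k$. Write $\bm{w} = \bm{x}^{(k)}$ and condition on it; the function $t \mapsto H_{\bm{w}}(t) := F(\bm{w}, t)$ becomes deterministic given $\bm{w}$, and $F(\bm{x}) = H_{\bm{w}}(x_k)$. With $x_k'$ an independent copy of $x_k$ and $\bm{y} = (\bm{w}, x_k')$, the definition of $A_k = \Infl_{x_k}(F;p)$ gives
\begin{flalign*}
\mathbb{E}_{\bm{w}}\bigl[\mathbb{P}[\,|H_{\bm{w}}(x_k) - H_{\bm{w}}(x_k')| \ge A_k \mid \bm{w}\,]\bigr] = \mathbb{P}\bigl[\,|F(\bm{y}) - F(\bm{x})| \ge A_k\,\bigr] \le p.
\end{flalign*}
A Markov inequality over $\bm{w}$ then supplies a ``good'' set $\mathcal{G}_k$ of parameters with $\mathbb{P}[\bm{w} \in \mathcal{G}_k] \ge 1 - p^{1/2}$, on which the conditional deviation probability is at most $p^{1/2}$.

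The key step is then, for each $\bm{w} \in \mathcal{G}_k$, to select a subset $\tilde{\mathcal{Y}}_k(\bm{w}) \subseteq \mathbb{R}$ of $x_k$-values on which $H_{\bm{w}}$ has small oscillation and which contains almost all the mass of the law of $x_k$. I would do this via an interquantile construction on the pushforward distribution $\nu_{\bm{w}}$ of $H_{\bm{w}}(x_k)$: if $b_{q}$ and $b_{1-q}$ denote its $q$- and $(1-q)$-quantiles, then whenever $b_{1-q} - b_q > A_k$, two independent samples from $\nu_{\bm{w}}$ exceed $A_k$ with probability at least $2q^2$. Choosing $q \approx p^{1/4}$ forces $b_{1-q} - b_q \le A_k$ for $\bm{w} \in \mathcal{G}_k$, and then $\tilde{\mathcal{Y}}_k(\bm{w}) := H_{\bm{w}}^{-1}([b_q, b_{1-q}])$ has $\nu_{\bm{w}}$-mass $\ge 1 - 2q$ while the image of $H_{\bm{w}}$ on it has diameter at most $A_k$. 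I would then set $\mathcal{Y}_k = \{\bm{x} : \bm{x}^{(k)} \in \mathcal{G}_k,\ x_k \in \tilde{\mathcal{Y}}_k(\bm{x}^{(k)})\}$, so that $\bm{u}, \bm{v} \in \mathcal{Y}_k$ with $\bm{u}^{(k)} = \bm{v}^{(k)}$ automatically satisfy $|F(\bm{u}) - F(\bm{v})| = |H_{\bm{w}}(u_k) - H_{\bm{w}}(v_k)| \le A_k$.

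The main obstacle is making the two small-probability contributions compatible: the diameter bound $A_k$ (rather than a harmless $2A_k$) essentially forces $q \gtrsim p^{1/4}$ in the quantile step, giving $\mathbb{P}[\mathcal{Y}_k^{\complement}] = O(p^{1/4})$, while the lemma demands $O(p^{1/2})$. Closing this gap is the delicate part; I would attempt it by replacing the crude Markov step with a sharper tail bound for the conditional deviation probability, or by exploiting the fact that an interval carrying mass at least $1 - \sqrt{2\varepsilon}$ already exists whenever the conditional probability equals $\varepsilon$. Any loss of constants in the oscillation can then be absorbed into the factor $2$ already allowed by the statement. Once $\mathbb{P}[\mathcal{Y}_k^{\complement}] \le 2p^{1/2}$ is in hand, the union bound yields $\mathbb{P}[\bm{x} \in \mathcal{Y}] \ge 1 - 2np^{1/2}$, and the per-coordinate construction furnishes the required $|F(\bm{u}) - F(\bm{v})| \le A_k$ uniformly on $\mathcal{Y}$.
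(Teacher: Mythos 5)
Your overall framework (per-coordinate good sets $\mathcal{Y}_k$, intersection plus union bound, Markov over $\bm{w} = \bm{x}^{(k)}$ to obtain a good parameter set $\mathcal{G}_k$) is exactly the paper's, and the paper is in fact terse on precisely the inner step you spend the most effort on: it simply asserts that "a Markov bound yields" a set $\mathcal{Y}_{k,2}(\bm{w})$ of $x_k$-mass $\ge 1 - p^{1/2}$ on which $H_{\bm{w}}$ has diameter at most $A_k$, without saying how. You are right to distrust this. The mismatch you identify is genuine and cannot be closed: from $\mathbb{P}_{x_k,x_k'}[\,|H_{\bm{w}}(x_k) - H_{\bm{w}}(x_k')| \ge A_k\,] \le p^{1/2}$ one cannot in general extract a diameter-$A_k$ set of mass $1 - O(p^{1/2})$. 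Concretely, if $H_{\bm{w}}$ pushes the law of $x_k$ forward to a measure with mass $\delta$ at $0$, mass $1 - 2\delta$ at $A_k/2$, and mass $\delta$ just beyond $A_k$, the two-sample deviation probability is $2\delta^2$, yet every interval of length $A_k$ omits one of the two extreme atoms, so its mass is at most $1 - \delta = 1 - (p^{1/2}/2)^{1/2}$. Your quantile construction is tight against this example, and neither of your two proposed fixes reaches $p^{1/2}$ (rebalancing the Markov split can push $p^{1/4}$ to $p^{1/3}$, no further). Also note that the factor $2$ in the statement decorates the failure probability $2np^{1/2}$, not the oscillation $A_k$, so it cannot absorb a doubled diameter as you suggest.

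The clean repair is to relax the diameter to $2A_k$, which makes the quantile step unnecessary: for $\bm{w} \in \mathcal{G}_k$ the averaged bound produces a fixed reference value $t^*(\bm{w})$ with $\mathbb{P}_{x_k}[\,|H_{\bm{w}}(x_k) - H_{\bm{w}}(t^*)| \ge A_k\,] \le p^{1/2}$, and $\mathcal{Y}_{k,2}(\bm{w}) := \{t : |H_{\bm{w}}(t) - H_{\bm{w}}(t^*)| < A_k\}$ then has mass $\ge 1 - p^{1/2}$ and diameter $< 2A_k$ by the triangle inequality. Either repair—diameter $2A_k$ with failure probability $2np^{1/2}$, or your diameter $A_k$ with failure probability $O(np^{1/4})$—feeds into \Cref{xfsum} with only adjusted constants ($S$ becomes $4S$ in the former; $2mp^{1/2}$ becomes $O(mp^{1/4})$ in the latter), and since the sole application of \Cref{xfsum} is in the proof of \Cref{concentrationh2} with $p = e^{-\mathfrak{c}(\log N)^2}$, such polynomial changes in $p$ do not affect the conclusion. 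So: same approach as the paper, the gap you flag is real but equally present in the paper's argument, and it is harmless once either repair is adopted.
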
 
	
	\begin{proof} 
		
		Fix $k \in \llbracket 1, n \rrbracket$. Define the $n$-tuple $\bm{y} = (y_1, y_2, \ldots , y_n) \in \mathbb{R}^n$ of mutually independent random variables, by setting $\bm{y}^{(k)} = \bm{x}^{(k)}$, and setting $y_k$ to be a random variable with the same law as $x_k$ that is independent from $\bm{x} \cup \bm{y}^{(k)}$. Observe by \Cref{xf} that $\mathbb{P} [ |F(\bm{x}) - F(\bm{y})| \ge A_k ] \le p$. Therefore, a Markov bound yields subsets $\mathcal{Y}_{k,1} \subseteq \mathbb{R}^{n-1}$ and $\mathcal{Y}_{k,2} (\bm{w}^{(k)}) \subseteq \mathbb{R}$ for each $\bm{w}^{(k)} \in \mathcal{Y}_{k,1}$, satisfying the following two properties. First, for each $\bm{w}^{(k)} \in \mathcal{Y}_{k,1}$, we have  
		\begin{flalign}
			\label{yk12} 
			\mathbb{P}[ \bm{x}^{(k)} \subseteq \mathcal{Y}_{k,1}] \ge 1 - p^{1/2}, \qquad \text{and} \qquad \mathbb{P} [ x_k \subseteq \mathcal{Y}_{k,2} (\bm{w}^{(k)}) ] \ge 1 - p^{1/2}.
		\end{flalign}
		
		\noindent Second, denoting $\mathcal{Y}_k = \{ \bm{w} = (w_1, w_2, \ldots , w_n) \in \mathbb{R}^n : \bm{w}^{(k)} \in \mathcal{Y}_{k,1}, w_k \in \mathcal{Y}_{k,2} (\bm{w}^{(k)}) \}$, we have 
		\begin{flalign}
			\label{akfuv} 
			| F (\bm{u}) - F(\bm{v}) | \le A_k, \qquad \text{for any $\bm{u}, \bm{v} \in \mathcal{Y}_k$ such that $\bm{u}^{(k)} = \bm{v}^{(k)}$}.
		\end{flalign} 
		
		Now set $\mathcal{Y} = \bigcap_{k=1}^n \mathcal{Y}_k$. By \eqref{yk12} and the independence between $\bm{x}^{(k)}$ and $x_k$, we have that $\mathbb{P}[ \bm{x} \in \mathcal{Y}_k] \ge (1 - p^{1/2})^2 \ge 1 - 2p^{1/2}$ for each $k \in \llbracket 1, n \rrbracket$. Hence, a union bound yields $\mathbb{P}[\bm{x} \in \mathcal{Y}] \ge 1 - 2n p^{1/2}$, which verifies the first estimate in the lemma. The second follows from \eqref{akfuv}.
	\end{proof} 
	
	Now we can establish \Cref{xfsum}.
	
	\begin{proof}[Proof of \Cref{xfsum}]
		
		Let $\mathcal{Y} \subseteq \mathbb{R}^n$ denote the set from \Cref{yprobability}, and define the function $G: \mathbb{R}^n \rightarrow \mathbb{R}$ by setting $G(\bm{x}) = F(\bm{x}) \cdot \mathbbm{1}_{\bm{x} \in \mathcal{Y}}$, for each $\bm{x} \in \mathcal{Y}$. Then, for any $j \in \llbracket 1, n \rrbracket$ and $\bm{u}, \bm{v} \in \mathbb{R}^n$ with $\bm{u}^{(j)} = \bm{v}^{(j)}$, we have  $( G(\bm{u}) - G(\bm{v}) )^2 \le A_j^2$, by \Cref{yprobability}. Hence, \cite[Theorem 12]{CI} gives 
		\begin{flalign*}
			\mathbb{P} \big[ \big| G(\bm{x}) - \mathbb{E} [ G(\bm{x}) ] \big| \ge RS^{1/2} \big] \le 2e^{-R^2/4}.
		\end{flalign*}
		
		\noindent Moreover, by \Cref{yprobability}, we have $\mathbb{P} [ F(\bm{x}) \ne G(\bm{x}) ] \le 2np^{1/2}$, and so it follows that 
		\begin{flalign}
			\label{fgprobability}
			\mathbb{P} \big[ \big| F(\bm{x}) - \mathbb{E} [ G(\bm{x}) ] \big| \ge RS^{1/2} \big] \le 2np^{1/2} + 2e^{-R^2/4}.
		\end{flalign}
		
		\noindent Additionally, we have 
		\begin{flalign*}
			\big| \mathbb{E} [ G(\bm{x}) - F (\bm{x}) ] \big| \le \mathbb{E} [ | F(\bm{x}) | \cdot \mathbbm{1}_{\bm{x} \notin \mathcal{Y}} ] \le \mathbb{P} [\bm{x} \notin \mathcal{Y}]^{1/2} \cdot \mathbb{E} [ F(\bm{x})^2 ]^{1/2} \le U(2np^{1/2})^{1/2},
		\end{flalign*}
		
		\noindent where we used the definition of $G$, \Cref{yprobability}, and the definition of $U$ from \eqref{sdelta}. Upon insertion into \eqref{fgprobability}, this yields the lemma.	
	\end{proof}

	\section{Proof of \Cref{sumlambdaiexpectation}}
	
	\label{ProofLambdaH} 
	
	In this section we prove \Cref{sumlambdaiexpectation}; we adopt the notation of \Cref{ProofH} throughout.

	An approximation of the linear functional $\sum_{i=1}^N H(\lambda_i)$ of $\bm{L}$ is provided by \Cref{lf}, but without an effective error. To remedy this, we will ``embed'' $\bm{L}$ into a much larger matrix $\bm{\mathfrak{L}}$; compare expectations of their linear functionals; and apply \Cref{lf} to $\bm{\mathfrak{L}}$. In what follows, we abbreviate $\bm{a}(0) = \bm{a} = (a_{N_1}, a_{N_1+1},\ldots , a_{N_2-1})$ and $\bm{b}(0) = \bm{b} = (b_{N_1}, b_{N_1+1}, \ldots , b_{N_2})$. 
	
	Let $\mathfrak{N}_1 < \mathfrak{N}_2$ be integers with $\mathfrak{N}_1 < N_1$ and $N_2 < \mathfrak{N}_2$. Set $\mathfrak{N} = \mathfrak{N}_2 - \mathfrak{N}_1 + 1$; we will take $\mathfrak{N}$ sufficiently large so that \eqref{hsumnu} below holds. Sample $(\bm{\mathfrak{a}}; \bm{\mathfrak{b}}) \in \mathbb{R}^{\mathfrak{N}-1} \times \mathbb{R}^{\mathfrak{N}}$ according to the thermal equilibrium $\mu_{\beta,\theta;\mathfrak{N}-1,\mathfrak{N}}$ (from \Cref{mubeta2}), and denote $\bm{\mathfrak{a}} = (\mathfrak{a}_{\mathfrak{N}_1}, \mathfrak{a}_{\mathfrak{N}_1+1}, \ldots , \mathfrak{a}_{\mathfrak{N}_2-1})$ and $\bm{b} = (\mathfrak{b}_{\mathfrak{N}_1}, \mathfrak{b}_{\mathfrak{N}_1+1}, \ldots , \mathfrak{b}_{\mathfrak{N}_2})$; we couple $(\bm{\mathfrak{a}}; \bm{\mathfrak{b}})$ with $(\bm{a}; \bm{b})$ so that $\mathfrak{a}_i = a_i$ for each $i \in \llbracket N_1, N_2-1 \rrbracket$ and $\mathfrak{b}_i = b_i$ for each $i \in \llbracket N_1, N_2 \rrbracket$. Define $\bm{\mathfrak{L}} = [\mathfrak{L}_{ik}] \in \SymMat_{\llbracket \mathfrak{N}_{1}, \mathfrak{N}_{2} \rrbracket}$ (as in \Cref{matrixl}) by setting $\mathfrak{L}_{i,i} = \mathfrak{b}_i$ for each $i \in \llbracket \mathfrak{N}_{1}, \mathfrak{N}_{2} \rrbracket$; setting $\mathfrak{L}_{i,i+1} = \mathfrak{L}_{i+1,i} = \mathfrak{a}_i$ for each $i \in \llbracket \mathfrak{N}_{1}, \mathfrak{N}_{2}-1 \rrbracket$; and setting $\mathfrak{L}_{i,k} = 0$ for all $(i,k) \in \llbracket \mathfrak{N}_{1}, \mathfrak{N}_{2} \rrbracket^2$ not of the above form. By \Cref{lf}, we may choose $\mathfrak{N}$ sufficiently large (relative to $N$ and $H$) so that 
	\begin{flalign}
		\label{hsumnu}
		\begin{aligned}  
			& \mathbb{P} \Bigg[ \bigg| \displaystyle\frac{1}{\mathfrak{N}} \cdot \displaystyle\sum_{\nu \in \eig \bm{\mathfrak{L}}} H(\nu) - \displaystyle\int_{-\infty}^{\infty} H(\lambda) \varrho(\lambda) d \lambda \bigg| \ge \displaystyle\frac{A}{N} \Bigg] \le e^{-(\log N)^2}; \\
			& \mathbb{P} \big[ \# \{ \nu \in \eig \bm{\mathfrak{L}} : |\nu| \ge \log N \} \ge N^{-1} \cdot \mathfrak{N} \big] \le e^{-(\log N)^2},
		\end{aligned}
	\end{flalign}
	
	\noindent where in the last bound we used the superexponential decay of $\varrho$ from \Cref{rhoexponential}.
	
	For any $z \in \mathbb{C}$, denote the resolvents $\bm{G}(z) = [G_{ik} (z)] = (\bm{L}-z)^{-1}$ and $\bm{\mathfrak{G}} (z) = [\mathfrak{G}_{ik} (z)] = (\bm{\mathfrak{L}} - z)^{-1}$. The following lemma compares the diagonal entries of $\bm{G}$ and $\bm{\mathfrak{G}}$.  
	
	\begin{lem} 
		
		\label{glambda2}
		
		There exists a constant $c>0$ such that the following holds with probability at least $1 - c^{-1} e^{-c(\log N)^2}$. Set $\eta = e^{-(\log N)^3}$, and define $\Omega = \{ z \in \mathbb{C}: -N \le \Real z \le N, \eta \le \Imaginary z \le 1 \}$. Then, for any complex number $z \in \Omega$ and integer $i \in \llbracket N_1 + (\log N)^4, N_2 - (\log N)^4 \rrbracket$, we have
		\begin{flalign*}
			| \mathfrak{G}_{ii} (z) - G_{ii} (z) | \le c^{-1} e^{-c (\log N)^4}.
		\end{flalign*}
	\end{lem}

	The proof of \Cref{glambda2} will make use of the below estimate on resolvents of perturbations of random Lax matrices, as in the context of \Cref{lmatrixl}.

	\begin{lem}[{\cite[Lemma 5.4]{LC}}]
		
		\label{lgdifference}
		
		There exists a constant $c \in (0,1)$ such that the following holds with probability at least $1 - c^{-1} e^{-c(\log N)^2}$. Adopt \Cref{lmatrixl} and, for any $z \in  \mathbb{C}$, denote $\bm{G}(z) = [ G_{ij} (z) ] = (\bm{L} - z)^{-1}$ and $\tilde{\bm{G}} (z) = [ \tilde{G}_{ij} (z) ] = (\tilde{\bm{L}} - z)^{-1}$. Let $\eta \in [\delta, 1]$ be a real number; and define the set $\Omega = \{ z \in \mathbb{C} : -N \le \Real z \le N, \eta \le \Imaginary z \le 1 \}$. For any $i, j \in \llbracket N_1, N_2 \rrbracket$, we have 
		\begin{flalign}
			\label{gkke} 
			\displaystyle\sup_{z \in \Omega} | G_{ij} (z) - \tilde{G}_{ij} (z) | \le e^{ (\log N)^2} \eta^{-2} (\delta^{1/4} + e^{-c \dist (i, \mathcal{D}) - c \dist (j, \mathcal{D})}).
		\end{flalign}
		
	\end{lem}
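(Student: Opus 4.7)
The plan is to prove \Cref{lgdifference} via the second-resolvent identity together with pointwise exponential decay estimates for $\bm{G}(z)$, the latter extracted from the Schenker bound already recorded as \Cref{gijexponential}. The starting point is
\[
G_{ij}(z) - \tilde{G}_{ij}(z) = \sum_{k,l\in\llbracket N_1,N_2\rrbracket} G_{ik}(z)(\tilde{L}_{kl}-L_{kl})\tilde{G}_{lj}(z),
\]
where tridiagonality of $\bm{L}$ and $\tilde{\bm{L}}$ restricts the sum to $|k-l|\le 1$. I would split this into a \emph{bulk} part $\mathcal{P}_1=(\llbracket N_1,N_2\rrbracket\setminus\mathcal{D})^2$, where the second inequality in \eqref{ll} gives $|\tilde{L}_{kl}-L_{kl}|\le\delta$, and a \emph{boundary} part $\mathcal{P}_2$, where at least one of $k,l$ lies in $\mathcal{D}$ and the entries are merely bounded by $3\log N$ on the overwhelmingly probable event $\mathsf{BND}_{\bm{L}}(\log N)\cap\mathsf{BND}_{\tilde{\bm{L}}}(\log N)$ from \Cref{l0eigenvalues} (and the first inequality in \eqref{ll}).

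For the bulk part I would let $\bm{E}$ be the tridiagonal matrix agreeing with $\tilde{\bm{L}}-\bm{L}$ on $\mathcal{P}_1$ and vanishing on $\mathcal{P}_2$; then $\|\bm{E}\|_{\mathrm{op}}\le 3\delta$ by the Gershgorin-type row-sum bound, while self-adjointness of $\bm{L},\tilde{\bm{L}}$ yields $\|\bm{G}(z)\|_{\mathrm{op}},\|\tilde{\bm{G}}(z)\|_{\mathrm{op}}\le\eta^{-1}$ on $\Omega$. Hence
\[
\bigl|(\bm{G}(z)\bm{E}\tilde{\bm{G}}(z))_{ij}\bigr|\le \|\bm{G}(z)\|_{\mathrm{op}}\|\bm{E}\|_{\mathrm{op}}\|\tilde{\bm{G}}(z)\|_{\mathrm{op}} \le 3\eta^{-2}\delta \le 3\eta^{-2}\delta^{1/4},
\]
using $\delta<1$, which already fits inside the claimed estimate. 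For the boundary part, tridiagonality forces each contributing $(k,l)$ to satisfy $\dist(k,\mathcal{D})\le 1$ and $\dist(l,\mathcal{D})\le 1$, so it suffices to establish that, with probability $\ge 1-c^{-1}e^{-c(\log N)^2}$,
\[
\sup_{z\in\Omega}|G_{ik}(z)| \le e^{(\log N)^2/2}\eta^{-1}e^{-c'|i-k|} \quad \text{for all }i,k,
\]
and the analogous bound for $\tilde{\bm{G}}$. Inserting this into the boundary sum, using $|\tilde{L}_{kl}-L_{kl}|\le 3\log N$, and performing the geometric sum over $k,l$ near $\mathcal{D}$ then produces the required factor $e^{(\log N)^2}\eta^{-2}e^{-c\dist(i,\mathcal{D})-c\dist(j,\mathcal{D})}$ (at the cost of shrinking $c$).

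To obtain the pointwise decay for $\bm{G}$, I would upgrade \Cref{gijexponential} by three standard steps: (i) applying Markov's inequality to the bound $\mathbb{E}|G_{ik}(E)|^s\le c^{-1}e^{-c|i-k|}$ to get $\mathbb{P}[|G_{ik}(E)|\ge Re^{-c|i-k|/2}]\le c^{-1}R^{-s}e^{-cs|i-k|/2}$; (ii) extending from real $E$ to complex $z\in\mathbb{C}^+$ (Schenker's a priori fractional-moment input is valid on the upper half-plane, so re-running the argument at complex energy is routine); and (iii) discretizing $\Omega$ by an $(\eta^{10}N^{-10})$-net of cardinality $\le e^{C(\log N)^2}$ and using the trivial Lipschitz bound $|\partial_z G_{ik}(z)|\le\eta^{-2}$. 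Choosing $R=e^{(\log N)^2/4}$ and taking a union bound over the net and over all $O(N^2)$ index pairs gives the uniform exponential decay displayed above. The same argument applies to $\tilde{\bm{G}}$ after verifying that the Schenker input survives replacing the entries of $\bm{L}$ on $\mathcal{D}$ by arbitrary values bounded by $2\log N$ (which it does, since Schenker's proof uses only the independence and bounded density of the free entries).

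The main obstacle will be step (ii)--(iii) of this upgrade: the fractional exponent $s\in(0,1)$ must be taken small enough that $R^s=e^{s(\log N)^2/4}$ absorbs both the polynomial discretization loss and the union bound over pairs, yet large enough to preserve the exponential decay rate after halving it in (i). A related subtlety is ensuring the exponential decay rate in Schenker's bound for $\tilde{\bm{L}}$ does not degrade with $\log N$-sized diagonal perturbations on $\mathcal{D}$; I expect this to follow from the same Kunz--Souillard style transfer-matrix argument that underlies \Cref{gijexponential}, but it requires careful bookkeeping. Once these ingredients are in place, the rest is a mechanical combination of the resolvent identity with the exponential sum.
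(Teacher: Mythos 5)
First, note that this paper does not actually prove \Cref{lgdifference}: the statement is imported verbatim from \cite[Lemma 5.4]{LC}, so there is no in-paper argument to compare yours against. Judged on its own terms, your skeleton is reasonable: the second resolvent identity, the split into a bulk part (both indices outside $\mathcal{D}$) and a boundary part (indices within distance $1$ of $\mathcal{D}$), and the operator-norm bounds $\|\bm{G}(z)\|,\|\tilde{\bm{G}}(z)\|\le\eta^{-1}$, $\|\bm{E}\|\le 3\delta$ for the bulk term are all correct, and the bulk contribution is handled properly.

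The gap is in the ingredient you rely on for the boundary part: a bound $\sup_{z\in\Omega}|G_{ik}(z)|\le e^{(\log N)^2/2}\eta^{-1}e^{-c'|i-k|}$ off an event of probability $e^{-c(\log N)^2}$, obtained from \Cref{gijexponential} by Markov, a net, and the Lipschitz bound $|\partial_z G|\le\eta^{-2}$. This cannot work as described, for two reasons. (i) The quantity that must survive the passage from a net point $z_0$ to a nearby $z$ is $e^{-c'|i-k|}$, which can be as small as $e^{-c'N}$ (and in the application \Cref{glambda2} one genuinely needs the conclusion for $\dist(i,\mathcal{D})\ge(\log N)^4$, i.e.\ an error tolerance of $e^{-c(\log N)^4}$); since the Lipschitz error is $\rho\eta^{-2}$, the spacing $\rho$ must be exponentially small in $|i-k|$, the net cardinality becomes exponentially large in $N$, and no union bound against individual probabilities of size $R^{-s}$ with $R\le e^{(\log N)^2}$ can close. (ii) Independently, \Cref{glambda2} invokes the lemma with $\eta=e^{-(\log N)^3}$, so your claimed $(\eta^{10}N^{-10})$-net has cardinality $\gtrsim\eta^{-20}=e^{20(\log N)^3}$, not $e^{C(\log N)^2}$. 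The tension you flag as ``the main obstacle'' is therefore not bookkeeping: the net-plus-fractional-moment route is structurally unable to deliver a supremum over a continuum of spectral parameters with exponential off-diagonal decay at this probability scale. The standard repair --- and, given that \cite{LC} establishes uniform eigenvector decay as its Equation (5.1), quoted here as \eqref{gku}, evidently the intended one --- is to work on the single overwhelmingly probable event on which every eigenvector satisfies $|u_k(i)|\le Ce^{-c|i-\varphi(k)|}$; the spectral decomposition $G_{ij}(z)=\sum_k u_k(i)u_k(j)(\lambda_k-z)^{-1}$ then yields $|G_{ij}(z)|\le\eta^{-1}\sum_k|u_k(i)||u_k(j)|\le C\eta^{-1}e^{-c|i-j|}$ simultaneously for all $z$ with $\Imaginary z\ge\eta$, with no discretization at all. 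One must then prove the analogous localization for $\tilde{\bm{L}}$, which is where the hypotheses \eqref{ll} really enter; with that replacement, the remainder of your argument goes through.
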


	\begin{proof}[Proof of \Cref{glambda2}]
		
		This lemma would follow from \Cref{lgdifference} except for the fact that, if it were used directly, the $N$ there must be $\mathfrak{N}$ here, and this would cause the prefactor $e^{(\log \mathfrak{N})^2}$ in \eqref{gkke} to be too large. To circumvent this, we use \Cref{lgdifference} inductively.
		
		Fix an integer $r \ge 1$ and a triple of integers $(\mathfrak{N}_{0;m}, \mathfrak{N}_{1;m}, \mathfrak{N}_{2;m})$ for each $m \in \llbracket 0, r \rrbracket$, satisfying the following properties. First, we have that $\mathfrak{N}_{0;m} = \mathfrak{N}_{2;m} - \mathfrak{N}_{1;m} + 1 \ge 1$ and $r \le \log \mathfrak{N}$. Second, we have $(\mathfrak{N}_{0;0}, \mathfrak{N}_{1;0}, \mathfrak{N}_{2;0}) = (\mathfrak{N}, \mathfrak{N}_1, \mathfrak{N}_2)$ and $(\mathfrak{N}_{0;r}, \mathfrak{N}_{1;r}, \mathfrak{N}_{2;r}) = (N, N_1, N_2)$. Third, for each $m \in \llbracket 0, r-1 \rrbracket$, we have 
		\begin{flalign}
			\label{nm} 
			\mathfrak{N}_{1;m} < \mathfrak{N}_{1;m+1} < \displaystyle\frac{\mathfrak{N}_{1;m}}{10} < 0 < \displaystyle\frac{\mathfrak{N}_{2;m}}{10} < \mathfrak{N}_{2;m+1} < \mathfrak{N}_{2;m}.
		\end{flalign} 
		
		\noindent It is quickly verified that such integers exist, since $\mathfrak{N}_1 < N_1 < 0 < N_2 < \mathfrak{N}_2$. 
		
		For each $m \in \llbracket 0, r \rrbracket$, define $\bm{\mathfrak{L}}_m = [\mathfrak{L}_{i,k;m}] \in \SymMat_{\llbracket \mathfrak{N}_{1;m}, \mathfrak{N}_{2;m} \rrbracket}$ (as in \Cref{matrixl}) by setting $\mathfrak{L}_{i,i;m} = \mathfrak{b}_i$ for $i \in \llbracket \mathfrak{N}_{1;m}, \mathfrak{N}_{2;m} \rrbracket$; setting $\mathfrak{L}_{i,i+1;m} = \mathfrak{L}_{i+1,i;m} = \mathfrak{a}_i$ for $i \in \llbracket \mathfrak{N}_{1;m}, \mathfrak{N}_{2;m}-1 \rrbracket$; and setting $\mathfrak{L}_{i,k;m} = 0$ for $(i, k) \in \llbracket \mathfrak{N}_{1;m}, \mathfrak{N}_{2;m} \rrbracket^2$ not of the above form. For any $z \in \mathbb{C}$, denote $\bm{\mathfrak{G}}_m (z) = [ \mathfrak{G}_{i,k;m} (z) ] = (\bm{\mathfrak{L}}_m - z)^{-1}$. Then, $\bm{\mathfrak{L}}_0 = \bm{\mathfrak{L}}$; $\bm{\mathfrak{L}}_r = \bm{L}$; $\bm{\mathfrak{G}}_0 (z) = \bm{\mathfrak{G}} (z)$; and $\bm{\mathfrak{G}}_r (z) = \bm{G} (z)$. 
		
		For each $m \in \llbracket 0, r \rrbracket$, define the set $\Omega_m = \{ z \in \mathbb{C} : -\mathfrak{N}_{0;m} \le \Real z \le \mathfrak{N}_{0;m}, \eta \le \Imaginary z \le 1 \}$. We next apply \Cref{lgdifference}, with the $(N, \delta; \bm{L})$ there equal to $(\mathfrak{N}_{m-1}; 0; \bm{\mathfrak{L}}_{m-1})$ here, and the $\tilde{\bm{L}}$ there given by the extension by $0$ of $\bm{\mathfrak{L}}_m$ to $\llbracket \mathfrak{N}_{1;m-1}, \mathfrak{N}_{2;m-1} \rrbracket$ here, meaning that we set $\mathfrak{L}_{i,k;m} = 0$ if $i$ or $k$ is in $\llbracket \mathfrak{N}_{1;m-1}, \mathfrak{N}_{2;m-1} \rrbracket \setminus \llbracket \mathfrak{N}_{1;m}, \mathfrak{N}_{2;m} \rrbracket$; observe that \Cref{l0eigenvalues} verifies the assumption \eqref{ll}, with the $\mathcal{D}$ there equal to $\llbracket \mathfrak{N}_{1;m-1}, \mathfrak{N}_{2;m-1} \rrbracket \setminus \llbracket \mathfrak{N}_{1;m}, \mathfrak{N}_{2;m} \rrbracket$ here. In this way, the $\Omega$, $G_{ii} (z)$, and $\tilde{G}_{ii}(z)$ from \Cref{lgdifference} become $\Omega_m$, $\mathfrak{G}_{ii;m-1} (z)$, and $\mathfrak{G}_{ii;m}(z)$ here, respectively. Thus, there exists a constant $c_1 > 0$ and an event $\mathsf{G}_m$ with $\mathbb{P} [ \mathsf{G}_m^{\complement} ] \le c_1^{-1} e^{-c_1(\log \mathfrak{N}_{0;m})^2}$, such that the following holds on $\mathsf{G}_m$. For any complex number $z \in \Omega_m$ and integer $i \in \llbracket \mathfrak{N}_{1;m} + (\log \mathfrak{N}_{0;m})^4, \mathfrak{N}_{2;m} - (\log \mathfrak{N}_{0;m})^4 \rrbracket$ (meaning that $\dist (i, \llbracket \mathfrak{N}_{1;m-1}, \mathfrak{N}_{2;m-1} \rrbracket \setminus \llbracket \mathfrak{N}_{1;m}, \mathfrak{N}_{2;m} \rrbracket) \ge (\log \mathfrak{N}_{0;m})^4$), we have 
		\begin{flalign}
			\label{gim1m} 
			| \mathfrak{G}_{ii;m-1} (z) - \mathfrak{G}_{ii;m} (z) | \le c_1^{-1} \eta^{-2} \cdot e^{(\log \mathfrak{N}_{0;m-1})^2 -c_1 (\log \mathfrak{N}_{0;m})^4}. 
		\end{flalign}
		
		 Thus, denoting $\mathsf{G} = \bigcap_{m=1}^r \mathsf{G}_m$, we deduce that 
		\begin{flalign*} 
			\mathbb{P} [\mathsf{G}^{\complement}] \le \displaystyle\sum_{m=1}^r \mathbb{P} [\mathsf{G}_m^{\complement}] \le c_1^{-1} \sum_{m=1}^r e^{-c_1 (\log \mathfrak{N}_{0;m})^2} \le 2c_1^{-1} e^{-c_1 (\log N)^2/2},
		\end{flalign*} 
	
		\noindent where the last inequality holds by \eqref{nm}. Moreover, by summing the estimates \eqref{gim1m}, we deduce that there exists a constant $c>0$ so that, for all $z \in \bigcap_{m=1}^r \Omega_m = \Omega$ and $i \in \llbracket N_1 + (\log N)^4, N_2 - (\log N)^4 \rrbracket$, we have 
		\begin{flalign*} 
			|\mathfrak{G}_{ii}(z) - G_{ii} (z)| \le \displaystyle\sum_{m=1}^r | \mathfrak{G}_{ii;m-1} (z) - \mathfrak{G}_{ii;m} (z) \big| & \le  c_1^{-1} \eta^{-2} \displaystyle\sum_{m=1}^r e^{(\log \mathfrak{N}_{0;m-1})^2 - c_1 (\log \mathfrak{N}_{0;m})^4} \\
			& \le c^{-1} e^{-c (\log N)^4},
		\end{flalign*} 
	
		\noindent where the last inequality again holds by \eqref{nm} (with the fact that $\eta = e^{-(\log N)^3}$). This establishes the lemma.
	\end{proof}

	We next have the following lemma, expressing linear functionals of $\bm{L}$ and $\bm{\mathfrak{L}}$ through their resolvent entries. We then deduce \Cref{sumlambdaiexpectation} as a consequence.

	\begin{lem} 
		
		\label{estimatehlambdanu}
		
		There exists a constant $c>0$ such that the following holds with probability at least $1 - c^{-1} e^{-c(\log N)^2}$. Letting $n_1' = n_1 + (\log N)^5$; $n_2' = n_2 - (\log N)^5$; $\mathfrak{N}_1' = \mathfrak{N}_1 + (\log \mathfrak{N})^5$; $\mathfrak{N}_2' = \mathfrak{N}_2 - (\log \mathfrak{N})^5$; and $\eta = e^{-(\log N)^3}$, we have  
		\begin{flalign}
			\label{sumh}
			\begin{aligned}  
				& \Bigg| \displaystyle\sum_{i=n_1}^{n_2}  H(\Lambda_i) - \displaystyle\frac{1}{\pi} \displaystyle\int_{-\log N}^{\log N} H(E) \displaystyle\sum_{i=n_1'}^{n_2'} \Imaginary G_{ii} (E + \mathrm{i} \eta) dE \Bigg| \le 6A (\log N)^5; \\
				& \Bigg| \displaystyle\frac{N}{\mathfrak{N}} \displaystyle\sum_{\nu \in \eig \bm{\mathfrak{L}}} H(\nu) - \displaystyle\frac{N}{\mathfrak{N} \pi} \displaystyle\int_{-\log N}^{\log N} H(E) \displaystyle\sum_{i=\mathfrak{N}_1'}^{\mathfrak{N}_2'} \Imaginary \mathfrak{G}_{ii} (E + \mathrm{i} \eta) dE \Bigg| \le 7A (\log N)^5.
			\end{aligned} 
		\end{flalign}
	\end{lem}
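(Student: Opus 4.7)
The plan is to use the spectral resolution of the resolvent to rewrite each integral $\frac{1}{\pi}\int_{-\log N}^{\log N} H(E)\,\Imaginary G_{ii}(E+\mathrm{i}\eta)\,dE$ as a weighted sum over eigenvalues, and then compare with the ``raw'' eigenvalue sum on the left. Writing $\bm{u}_j$ for the orthonormal eigenbasis of $\bm{L}$, the Poisson identity $\Imaginary G_{ii}(E+\mathrm{i}\eta)=\sum_j \eta u_j(i)^2/((\lambda_j-E)^2+\eta^2)$ gives, after interchanging sum and integral,
\begin{flalign*}
\frac{1}{\pi}\int_{-\log N}^{\log N} H(E)\sum_{i=n_1'}^{n_2'}\Imaginary G_{ii}(E+\mathrm{i}\eta)\,dE = \sum_{j=1}^N \omega_j \cdot \mathfrak{P}_\eta H(\lambda_j), \qquad \omega_j=\sum_{i=n_1'}^{n_2'} u_j(i)^2,
\end{flalign*}
where $\mathfrak{P}_\eta H(\lambda)=\frac{1}{\pi}\int_{-\log N}^{\log N}\eta H(E)/((\lambda-E)^2+\eta^2)\,dE$ is a truncated Poisson smoothing. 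The target is to compare this with $\sum_{j}H(\lambda_j)\cdot\mathbbm{1}_{\varphi_0(j)\in\llbracket n_1,n_2\rrbracket}$, so the two sources of error are (i) replacing $\mathfrak{P}_\eta H(\lambda_j)$ by $H(\lambda_j)$, and (ii) replacing $\omega_j$ by $\mathbbm{1}_{\varphi_0(j)\in\llbracket n_1,n_2\rrbracket}$.

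For error (i), I restrict to the overwhelmingly probable event $\mathsf{BND}_{\bm{L}}(\log N)$ of \Cref{l0eigenvalues}, so that $|\lambda_j|\le\log N$ for all $j$; then the third bullet of \Cref{fgab} (which $H$ inherits via the defining truncation, together with \eqref{hxhy2}) gives $|H(E)-H(\lambda_j)|\le A e^{-(\log N)^2}$ on $|E-\lambda_j|\le e^{-(\log N)^{5/2}}$. Choosing this as the ``bulk'' region of the Poisson kernel (which is allowed since $\eta=e^{-(\log N)^3}\ll e^{-(\log N)^{5/2}}$), and using the tail estimate $\int_{|E-\lambda_j|\ge e^{-(\log N)^{5/2}}}\eta/((\lambda_j-E)^2+\eta^2)\,dE=O(\eta\cdot e^{(\log N)^{5/2}})$, shows $|\mathfrak{P}_\eta H(\lambda_j)-H(\lambda_j)|\le A e^{-(\log N)^2}$ for each $j$. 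Summed over $j\in\llbracket 1,N\rrbracket$, this contributes at most $AN e^{-(\log N)^2}\le 1$.

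For error (ii), I partition the eigenvalues by the location of their localization centers $\varphi_0(j)$. If $\varphi_0(j)\in\llbracket n_1',n_2'\rrbracket$, then any other $\zeta$-localization center of $\lambda_j$ is within $(\log N)^3$ of $\varphi_0(j)$ by \Cref{centerdistance2} (hence still in $\llbracket n_1,n_2\rrbracket$), and $\omega_j=1-\sum_{i\notin\llbracket n_1',n_2'\rrbracket}u_j(i)^2$; since every index $i$ with $|i-\varphi_0(j)|\ge(\log N)^3$ fails to be a $\zeta_0$-localization center (with $\zeta_0=N^3\zeta$, by \Cref{centerdistance2} applied to a rescaled threshold), we obtain $|u_j(i)|\le N^3\zeta\le N^4 e^{-100(\log N)^{3/2}}$ for all such $i$, so $|\omega_j-1|\le N^{11}e^{-200(\log N)^{3/2}}$, while $\mathbbm{1}_{\varphi_0(j)\in\llbracket n_1,n_2\rrbracket}=1$. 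Symmetrically, if $\varphi_0(j)\notin\llbracket n_1-(\log N)^3,n_2+(\log N)^3\rrbracket$, then both $\omega_j$ and $\mathbbm{1}_{\varphi_0(j)\in\llbracket n_1,n_2\rrbracket}$ are bounded by $N^{11}e^{-200(\log N)^{3/2}}$. The remaining indices, for which $\varphi_0(j)$ lies in the boundary strip $\llbracket n_1-(\log N)^3,n_1'\rrbracket\cup\llbracket n_2',n_2+(\log N)^3\rrbracket$, number at most $2(\log N)^5+4(\log N)^3\le 3(\log N)^5$ since $\varphi_0$ is a bijection; each contributes at most $2A$, giving a total boundary error $\le 6A(\log N)^5$. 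Adding (i) and (ii) yields the first bound in \eqref{sumh}.

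For the second bound in \eqref{sumh}, the same spectral identity applied to $\bm{\mathfrak{L}}$ gives $\frac{N}{\mathfrak{N}\pi}\int H(E)\sum_{i=\mathfrak{N}_1'}^{\mathfrak{N}_2'}\Imaginary\mathfrak{G}_{ii}(E+\mathrm{i}\eta)\,dE=\frac{N}{\mathfrak{N}}\sum_\nu \tilde\omega_\nu \mathfrak{P}_\eta H(\nu)$ with $\tilde\omega_\nu=\sum_{i=\mathfrak{N}_1'}^{\mathfrak{N}_2'}\mathfrak{u}_\nu(i)^2\in[0,1]$. Restricting to $\bigcap_{t}\mathsf{BND}_{\bm{\mathfrak{L}}}(\log\mathfrak{N})$ together with the second estimate in \eqref{hsumnu}, the eigenvalues with $|\nu|>\log N$ contribute at most $AN/\mathfrak{N}\cdot\mathfrak{N}/N=A$ on either side. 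The Poisson smoothing error is again bounded via \Cref{fgab} and contributes at most $A N e^{-(\log N)^2}$ after the $N/\mathfrak{N}$ rescaling. The boundary error is now controlled by the same localization reasoning applied to $\bm{\mathfrak{L}}$: there are at most $2(\log\mathfrak{N})^5$ eigenvalues $\nu$ whose localization centers lie in $\llbracket\mathfrak{N}_1,\mathfrak{N}_1'\rrbracket\cup\llbracket\mathfrak{N}_2',\mathfrak{N}_2\rrbracket$, each contributing at most $2A$, so their total contribution is at most $4A(\log\mathfrak{N})^5\cdot N/\mathfrak{N}\le 4A(\log N)^5$ once $\mathfrak{N}$ is chosen large enough that $(\log\mathfrak{N})^5/\mathfrak{N}\le(\log N)^5/N$; remaining eigenvalues have $\tilde\omega_\nu=1-O(N^{11}e^{-200(\log\mathfrak{N})^{3/2}})$. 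The main obstacle is bookkeeping the three scales $\eta\ll e^{-(\log N)^{5/2}}$, $\zeta\sim e^{-100(\log N)^{3/2}}$, and the boundary width $(\log N)^5$ so that both the smoothing error and the exterior eigenvector mass are dwarfed by the unavoidable $O(A(\log N)^5)$ boundary contribution; once this is done the two inequalities follow by direct summation.
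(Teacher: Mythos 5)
Your overall strategy mirrors the paper's: both rewrite the integrated resolvent as a weighted sum $\sum_j \omega_j$ (times a Poisson-smoothed version of $H$) over eigenvalues, both replace $\omega_j$ by $\mathbbm{1}_{\varphi_0(j)\in\llbracket n_1,n_2\rrbracket}$ using eigenvector localization, and both identify the boundary strip of width $\sim(\log N)^5$ as the unavoidable error. However, your eigenvector-decay input is weaker than what the paper uses, and this creates a genuine gap.

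The paper invokes \cite[Equation (5.1)]{LC}, which gives the unconditional bound $|u_k(i)|\le e^{-c(\log N)^5}$ whenever $|i-\varphi_0(k)|\ge(\log N)^5$, valid for all eigenvectors and all indices. You instead rely on \Cref{centerdistance2}, which only asserts that two $\zeta$-localization centers $\varphi,\tilde\varphi$ are within $(\log N)^3$ \emph{provided} $\varphi$ lies in the bulk $\llbracket N_1 + T(\log N)^3, N_2 - T(\log N)^3\rrbracket$. Since the lemma allows $n_1,n_2$ to be arbitrary in $\llbracket N_1,N_2\rrbracket$ (only $n\ge(\log N)^5$ is assumed), and since the second case of your partition concerns $j$ with $\varphi_0(j)$ anywhere outside $\llbracket n_1-(\log N)^3,n_2+(\log N)^3\rrbracket$ — including near the endpoints $N_1,N_2$ — the hypothesis of \Cref{centerdistance2} can fail, and you cannot conclude $\omega_j\approx 0$ or $\omega_j\approx 1$ for those $j$. (Swapping the roles so that the bulk condition is imposed on $i\in\llbracket n_1',n_2'\rrbracket$ helps with the second case, but does not salvage the first case when $\llbracket n_1,n_2\rrbracket$ itself is near an endpoint.) Separately, the numerical bound $|\omega_j - 1|\le N^{11}e^{-200(\log N)^{3/2}}$ presupposes $\zeta$ equals its lower floor $e^{-100(\log N)^{3/2}}$; \Cref{lbetaeta} only gives $\zeta\ge e^{-100(\log N)^{3/2}}$, and in practice $\zeta\sim N^{-1}$, for which your bound on $|u_j(i)|<\zeta$ gives only $\omega_j\approx 1-O(N^{-1})$ — still adequate for the final $O(A(\log N)^5)$ estimate, but not what you wrote.

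A second, more minor issue: you truncate the Poisson kernel to $E\in[-\log N,\log N]$ inside the definition of $\mathfrak{P}_\eta H$, so its total mass is strictly below $1$. For $\lambda_j$ within $O(\eta)$ of $\pm\log N$ the ``missing mass'' term $H(\lambda_j)\big(1-\tfrac1\pi\int_{-\log N}^{\log N}\tfrac{\eta\,dE}{(\lambda_j-E)^2+\eta^2}\big)$ is of order $A$, not $Ae^{-(\log N)^2}$; your error (i) bound silently assumes the eigenvalues are bounded strictly away from $\pm\log N$, which $\mathsf{BND}_{\bm{L}}(\log N)$ does not give. The paper sidesteps this by truncating $H$ (introducing $\tilde H$ with $\supp\tilde H\subseteq[-\log N,\log N]$) rather than truncating the Poisson integral, and working with the full Poisson kernel whose mass is exactly $1$.
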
 
	
	\begin{proof}
		
		Define the function $\tilde{H} : \mathbb{R} \rightarrow \mathbb{R}$ by setting $\tilde{H}(\lambda) = H(\lambda) \cdot \mathbbm{1}_{|\lambda| \le \log N}$ for each $\lambda \in \mathbb{R}$. Throughout, we restrict to the event $\mathsf{BND}_{\bm{L}} (\log N)$, which we may do by \Cref{l0eigenvalues}. Then we claim that, with probability at least $1 - c^{-1} e^{-c(\log N)^2}$,
		\begin{flalign}
			\label{sumh2} 
			\begin{aligned} 
				& \Bigg| \displaystyle\frac{1}{n} \displaystyle\sum_{i=n_1}^{n_2}  \tilde{H} (\Lambda_i) - \displaystyle\frac{1}{\pi n} \displaystyle\int_{-\infty}^{\infty} \tilde{H}(E) \displaystyle\sum_{i=n_1'}^{n_2'} \Imaginary G_{ii} (E + \mathrm{i} \eta) dE \Bigg| \le \displaystyle\frac{6A}{n} \cdot (\log N)^5; \\
				& \Bigg| \displaystyle\frac{1}{\mathfrak{N}} \displaystyle\sum_{\nu \in \eig \bm{\mathfrak{L}}} \tilde{H}(\nu) - \displaystyle\frac{1}{\pi \mathfrak{N}} \displaystyle\int_{-\infty}^{\infty} \tilde{H} (E) \displaystyle\sum_{i=\mathfrak{N}_1'}^{\mathfrak{N}_2'} \Imaginary \mathfrak{G}_{ii} (E + \mathrm{i} \eta) dE \Bigg| \le \displaystyle\frac{6A}{\mathfrak{N}} \cdot (\log \mathfrak{N})^5.
			\end{aligned} 
		\end{flalign}
		
		\noindent Then the first bound in \eqref{sumh2} yields the first statement in \eqref{sumh}, since $H(\lambda) = \tilde{H}(\lambda)$ for each $\lambda \in \eig \bm{L}$ (by our restriction to $\mathsf{BND}_{\bm{L}} (\log N)$); since $H(E) = \tilde{H}(E)$ whenever $|E| \le \log N$; and since $\tilde{H}(E) = 0$ whenever $|E| > \log N$. Moreover, multiplying the second bound in \eqref{sumh2} by $N$ and using the facts that $N \le \mathfrak{N}$; that $\tilde{H} (\nu) \ne H(\nu)$ only if $|\nu| \ge \log N$; that $| \tilde{H}(\nu) | \le A$ for such $\nu$; and that there are at most $N^{-1} \cdot \mathfrak{N}$ many such $\nu$ with probability at least $1 - e^{-(\log N)^2}$ (by the second bound in \eqref{hsumnu}), we deduce the second statement in \eqref{sumh}. 
		
		Therefore, it suffices to verify \eqref{sumh2}. The proofs of both bounds there are entirely analogous (observe if $(n_1, n_2) = (N_1, N_2)$ then the second bound of \eqref{sumh2} takes a similar form to the first one, with the $\bm{L}$ there replaced by $\bm{\mathfrak{L}}$), so we only show the former. To that end, first observe that
		\begin{flalign*}
			G_{ii} (E+\mathrm{i}\eta) = \displaystyle\sum_{k=1}^N \displaystyle\frac{u_k (i)^2}{\lambda_k - E - \mathrm{i} \eta},
		\end{flalign*}
		
		\noindent for any $i \in \llbracket N_1, N_2 \rrbracket$, and so
		\begin{flalign}
			\label{h2} 
			\begin{aligned} 
				\displaystyle\frac{1}{\pi} \displaystyle\int_{-\infty}^{\infty} \tilde{H}(E) \displaystyle\sum_{i=n_1'}^{n_2'} \Imaginary G_{ii} (E+\mathrm{i}\eta) dE &= \displaystyle\frac{1}{\pi} \displaystyle\int_{-\infty}^{\infty} \tilde{H}(E) \displaystyle\sum_{k=1}^N \displaystyle\sum_{i=n_1'}^{n_2'} u_k (i)^2 \cdot \Imaginary (\lambda_k - E - \mathrm{i} \eta)^{-1} dE.
			\end{aligned} 
		\end{flalign} 
		
		It will next be useful to restrict the sum over $k$ on the right side of \eqref{h2}. To do so, we have by \cite[Equation (5.1)]{LC} and a union bound that there exists a constant $c>0$ and an event $\mathsf{G}$ with $\mathbb{P} [ \mathsf{G}^{\complement} ] \le c^{-1} e^{-c(\log N)^2}$ such that, on $\mathsf{G}$, we have
		\begin{flalign}
			\label{gku} 
			| u_k (i) | \le e^{-c(\log N)^5}, \qquad \text{whenever $| i - \varphi_0(k) | \ge (\log N)^5$}.
		\end{flalign}  
		
		\noindent We restrict to $\mathsf{G}$ in what follows. Then, denoting $\mathcal{K} = \{ \varphi_0^{-1} (m) : m \in \llbracket n_1, n_2 \rrbracket \}$, we have 
		\begin{flalign*}
			\begin{aligned} 
				\Bigg|  \displaystyle\int_{-\infty}^{\infty} & \tilde{H}(E) \displaystyle\sum_{k=1}^N \displaystyle\sum_{i=n_1'}^{n_2'}  u_k (i)^2 \cdot \Imaginary (\lambda_k - E - \mathrm{i} \eta)^{-1} dE - \displaystyle\int_{-\infty}^{\infty} \tilde{H}(E) \displaystyle\sum_{k=n_1}^{n_2} \Imaginary (\Lambda_k - E - \mathrm{i} \eta)^{-1} dE \Bigg|  \\
				& = \Bigg|  \displaystyle\int_{-\infty}^{\infty} \tilde{H}(E) \displaystyle\sum_{k=1}^N \displaystyle\sum_{i=n_1'}^{n_2'} u_k (i)^2 \cdot \Imaginary (\lambda_k - E - \mathrm{i} \eta)^{-1} dE \\
				& \qquad    - \displaystyle\int_{-\infty}^{\infty} \tilde{H}(E) \displaystyle\sum_{k \in \mathcal{K}} \displaystyle\sum_{i=N_1}^{N_2} u_k (i)^2 \cdot \Imaginary (\lambda_k - E - \mathrm{i} \eta)^{-1} dE \Bigg| \\ 
				& \le \Bigg| \displaystyle\int_{-\infty}^{\infty} \tilde{H}(E) \displaystyle\sum_{i=n_1'}^{n_2'} \displaystyle\sum_{k \notin \mathcal{K}} u_k (i)^2 \cdot \Imaginary (\lambda_k - E - \mathrm{i} \eta)^{-1} dE \Bigg| \\
				& \qquad    + \Bigg| \displaystyle\int_{-\infty}^{\infty} \tilde{H} (E) \displaystyle\sum_{k \in \mathcal{K}} \displaystyle\sum_{i \notin \llbracket n_1', n_2' \rrbracket} u_k (i)^2 \cdot \Imaginary (\lambda_k - E - \mathrm{i} \eta)^{-1} dE \Bigg|,
			\end{aligned} 
		\end{flalign*}
		
		\noindent where we used the orthonormality of the $(\bm{u}_k)$, the fact that $\Lambda_k = \lambda_{\varphi_0^{-1} (k)}$, and the definition of $\mathcal{K}$. Setting $n_1'' = n_1 - (\log N)^5$ and $n_2'' = n_2 + (\log N)^5$, it follows that 
		\begin{flalign}
			\label{h7} 
			\begin{aligned} 
			\Bigg|  \displaystyle\int_{-\infty}^{\infty} & \tilde{H}(E) \displaystyle\sum_{k=1}^N \displaystyle\sum_{i=n_1'}^{n_2'}  u_k (i)^2 \cdot \Imaginary (\lambda_k - E - \mathrm{i} \eta)^{-1} dE - \displaystyle\int_{-\infty}^{\infty} \tilde{H}(E) \displaystyle\sum_{k=n_1}^{n_2} \Imaginary (\Lambda_k - E - \mathrm{i} \eta)^{-1} dE \Bigg|  \\
			&  \le \Bigg| \displaystyle\int_{-\infty}^{\infty} \tilde{H}(E) \displaystyle\sum_{k \in \mathcal{K}} \displaystyle\sum_{i \in \llbracket n_1'', n_2'' \rrbracket \setminus \llbracket n_1', n_2' \rrbracket} u_k (i)^2 \cdot \Imaginary (\lambda_k - E - \mathrm{i} \eta)^{-1} dE \Bigg| \\
			& \qquad    + N^2  \displaystyle\int_{-\infty}^{\infty} | \tilde{H}(E) | \cdot \displaystyle\max_{i \in \llbracket N_1', N_2' \rrbracket} \displaystyle\max_{k: |i - \varphi_0(k)| \ge (\log N)^5} u_k (i)^2 \cdot \Imaginary (\lambda_k - E - \mathrm{i} \eta)^{-1} dE,
			\end{aligned} 
		\end{flalign}
		
		\noindent where we used the fact that $| i - \varphi_0 (k) | \ge (\log N)^5$ if $i \in \llbracket n_1', n_2' \rrbracket$ and $k \notin \mathcal{K}$, or if $k \in \mathcal{K}$ and $i \notin \llbracket n_1'', n_2'' \rrbracket$ (by the definition of $\mathcal{K}$). Moreover,
		\begin{flalign}
			\label{h8} 
			\begin{aligned} 
				N^2  \displaystyle\int_{-\infty}^{\infty} | \tilde{H}(E) | \cdot \displaystyle\max_{i \in \llbracket N_1', N_2' \rrbracket} \displaystyle\max_{k: |i - \varphi_0(k)| \ge (\log N)^5} & u_k (i)^2 \cdot \Imaginary (\lambda_k - E - \mathrm{i} \eta)^{-1} dE \\
				& \le 2AN^3 \eta^{-1} e^{-c(\log N)^5} \le Ae^{-(\log N)^3},
			\end{aligned} 
		\end{flalign}
		
		\noindent where in the first statement we used the facts that $\supp \tilde{H} \subseteq [-\log N, \log N] \subseteq [-N, N]$, that $| \tilde{H} (E) | \le A$ for all $E \in \mathbb{R}$, and that \eqref{gku} holds (by our restriction to $\mathsf{G}$); in the second, we used the fact that $\eta = e^{-(\log N)^3}$. Additionally, we have
		\begin{flalign}
			\label{h1}
			\begin{aligned} 
				&  \Bigg| \displaystyle\int_{-\infty}^{\infty} \tilde{H}(E) \displaystyle\sum_{k \in \mathcal{K}} \displaystyle\sum_{i \in \llbracket n_1'', n_2'' \rrbracket \setminus \llbracket n_1', n_2' \rrbracket} u_k (i)^2 \cdot \Imaginary (\lambda_k - E - \mathrm{i} \eta)^{-1} dE \Bigg| \\
				& \qquad \qquad \le 4 A (\log N)^5 \cdot \displaystyle\max_{k \in \llbracket 1, N \rrbracket} \displaystyle\int_{-\infty}^{\infty} \Imaginary (\lambda_k - E - \mathrm{i} \eta)^{-1}  dE = 4 \pi A (\log N)^5,
			\end{aligned} 
		\end{flalign}
		
		\noindent where in the first statement we used the facts that $| \tilde{H}(E) | \le A$ for each $E \in \mathbb{R}$, that there are most $4 (\log N)^5$ indices $i \in \llbracket n_1'', n_2'' \rrbracket \setminus  \llbracket n_1', n_2' \rrbracket$, and that the $\bm{u}_k$ are orthonormal; in the second, we used the fact that $\int_{-\infty}^{\infty} \Imaginary (\Lambda_k - E - \mathrm{i} \eta)^{-1} dE = \pi$. 
		
		Combining \eqref{h2}, \eqref{h7}, \eqref{h8}, and \eqref{h1}, we deduce that 
		\begin{flalign}
			\label{h9}
			\begin{aligned} 
				\Bigg| \displaystyle\sum_{i=n_1}^{n_2} \tilde{H} (\Lambda_i) - \displaystyle\frac{1}{\pi} & \displaystyle\int_{-\infty}^{\infty} \tilde{H}(E) \displaystyle\sum_{i=n_1'}^{n_2'} \Imaginary G_{ii} (E + \mathrm{i} \eta) dE \Bigg| \\
				& \le \Bigg| \displaystyle\sum_{i=n_1}^{n_2} \tilde{H} (\Lambda_i) - \displaystyle\frac{1}{\pi} \displaystyle\int_{-\infty}^{\infty} \tilde{H}(E) \displaystyle\sum_{k=n_1'}^{n_2'} \Imaginary (\Lambda_k - E - \mathrm{i} \eta)^{-1} dE \Bigg| +5A(\log N)^5 \\
				& = \Bigg| \displaystyle\sum_{k=n_1}^{n_2} \displaystyle\frac{\eta}{\pi} \displaystyle\int_{-\infty}^{\infty} \displaystyle\frac{( \tilde{H}(\Lambda_i) - \tilde{H}(E) ) dE}{(\Lambda_k - E)^2 + \eta^2} \Bigg| + 5A (\log N)^5,
			\end{aligned}
		\end{flalign}
		
		\noindent where in the last statement we used the identity $\eta \int_{-\infty}^{\infty} ( (\Lambda_k - E)^2 + \eta^2 )^{-1} dE = \pi$ for each $k \in \llbracket n_1, n_2 \rrbracket$.  Now, denoting $\eta' = e^{-(\log N)^{5/2}}$, we have
		\begin{flalign*} 
			\Bigg| \displaystyle\sum_{k=n_1}^{n_2} & \displaystyle\frac{\eta}{\pi} \displaystyle\int_{-\infty}^{\infty} \displaystyle\frac{( \tilde{H} (\Lambda_k) - \tilde{H}(E) ) dE}{(\Lambda_k-E)^2 + \eta^2} \Bigg| \\
			&  \le \displaystyle\sum_{k=n_1}^{n_2} \displaystyle\frac{\eta}{\pi} \Bigg( \displaystyle\int_{|E-\Lambda_k| \le \eta'} \displaystyle\frac{| \tilde{H}(\Lambda_k) - \tilde{H}(E) |}{(\Lambda_k - E)^2 + \eta^2} + 2A \displaystyle\int_{|E-\Lambda_k| > \eta'} \displaystyle\frac{dE}{(\Lambda_k - E)^2 + \eta^2} \Bigg) \\
			& \le \displaystyle\sum_{k=n_1}^{n_2} \displaystyle\frac{\eta}{\pi} \Bigg( Ae^{-(\log N)^2} \displaystyle\int_{|E-\Lambda_k| \le \eta'} \displaystyle\frac{dE}{(\Lambda_k-E)^2 + \eta^2} + 4A \eta'^{-1} \Bigg) \\
			& \le AN (e^{-(\log N)^2} + 4 \eta'^{-1} \eta ) \le 5 AN e^{-(\log N)^2},
		\end{flalign*} 
		
		\noindent where in the first bound we used the fact that $| \tilde{H}(\Lambda) | \le A$ for all $\Lambda \in \mathbb{R}$; in the second we used \eqref{hxhy2}; in the third we evaluated the integral; and in the fourth we used the fact that $\eta'^{-1} \eta \le e^{-(\log N)^2}$. This, with \eqref{h9}, verifies the first bound in \eqref{sumh2}. As mentioned above, the second is shown analogously; this establishes the lemma.
	\end{proof}

	\begin{proof}[Proof of \Cref{sumlambdaiexpectation}]
		
		We adopt the notation for the parameters $\eta = e^{-(\log N)^3}$, $(n_1', n_2')$, and $(\mathfrak{N}_1', \mathfrak{N}_2')$ from \Cref{estimatehlambdanu} throughout, and also set $n' = n_2'-n_1'+1$ and $\mathfrak{N}' = \mathfrak{N}_2' - \mathfrak{N}_1' + 1$. We may assume that $(\mathfrak{N}_1, \mathfrak{N}_2)$ are such that there exists an integer $\mathfrak{K} \ge 1$ such that $\mathfrak{N}' = \mathfrak{K} \cdot n'$. 
		
		Let us first show for any indices $i_1, i_2 \in \llbracket \mathfrak{N}_1', \mathfrak{N}_2' \rrbracket$ with $i_2 - i_1 +1 = n'$ that
		\begin{flalign}
			\label{h4}
			\Bigg| \mathbb{E} \bigg[ \displaystyle\int_{-\log N}^{\log N} H(E) \displaystyle\sum_{i=i_1}^{i_2} \Imaginary \mathfrak{G}_{ii} (E + \mathrm{i} \eta) dE \bigg] - \mathbb{E} \bigg[ \displaystyle\int_{-\log N}^{\log N} H(E) \displaystyle\sum_{i=n_1'}^{n_2'} \Imaginary G_{ii} (E + \mathrm{i} \eta) dE \bigg] \Bigg| \le A.
		\end{flalign}
		
		\noindent Denote $I = i_1 - n_1 = i_2 - n_2$ (where the last equality holds since $i_2 - i_1 = n' - 1 = n_2 - n_1)$. By \Cref{glambda2} (shifting the row and column indices of $\bm{\mathfrak{L}}$ there by $I$), there exists a constant $c>0$ such that we can couple $\bm{G}$ and $\bm{\mathfrak{G}}$ such that the following holds. With probability at least $1 - c^{-1} e^{-c(\log N)^2}$, we have $| G_{ii} (E + \mathrm{i} \eta) - \mathfrak{G}_{i+I,i+I} (E + \mathrm{i} \eta) | \le c^{-1} e^{-c (\log N)^2}$ for each $i \in \llbracket n_1', n_2' \rrbracket$ and $E \in [-N, N]$. Summing over $i \in \llbracket n_1', n_2' \rrbracket$; multiplying by $H(E)$; using the fact that $| H(E) | \le A$ for all $E \in \mathbb{R}$; and integrating over $E \in [-\log N, \log N]$ then gives with probability at least $1 - c^{-1} e^{-c(\log N)^2}$ that 
		\begin{flalign*}
			\Bigg| \displaystyle\int_{-\log N}^{\log N} H(E) \displaystyle\sum_{i=i_1}^{i_2} \mathfrak{G}_{ii} (E + \mathrm{i} \eta) dE - \displaystyle\int_{-\log N}^{\log N} H(E) \displaystyle\sum_{i=n_1'}^{n_2'} G_{ii} (E + \mathrm{i} \eta) dE \Bigg| \le c^{-1} AN e^{-c(\log N)^2}.
		\end{flalign*} 
		
		\noindent Therefore, \eqref{h4} follows from taking expectations of both sides (and again using the facts that $| H(E) | \le A$ and $\int_{-\infty}^{\infty} \Imaginary \mathfrak{G}_{ii} (E+\mathrm{i}\eta) dE = \pi$ for all $E \in \mathbb{R}$).
		
		Averaging \eqref{h4} over $\mathfrak{K} = n'^{-1} \cdot \mathfrak{N}'$ disjoint intervals $\llbracket i_1, i_2 \rrbracket$ covering $\llbracket \mathfrak{N}_1', \mathfrak{N}_2' \rrbracket$, we deduce
		\begin{flalign}
			\label{h5} 
			\begin{aligned} 
				\Bigg| \mathbb{E} \bigg[ \displaystyle\frac{1}{\mathfrak{K}} \displaystyle\int_{-\log N}^{\log N} H(E) \displaystyle\sum_{i=\mathfrak{N}_1'}^{\mathfrak{N}_2'} \Imaginary \mathfrak{G}_{ii} (E + \mathrm{i} \eta) dE \bigg] - \mathbb{E} \bigg[ \displaystyle\int_{-\log N}^{\log N} H(E) \displaystyle\sum_{i=n_1'}^{n_2'} \Imaginary G_{ii} (E + \mathrm{i} \eta) dE \bigg] \Bigg| \le A. 
			\end{aligned}  
		\end{flalign}
		
		\noindent By \Cref{estimatehlambdanu}, taking the expectation of \eqref{sumh}, and using the fact that $| H(\lambda) | \le A$ for all $\lambda \in \mathbb{R}$ (and also the fact that $2 \mathfrak{K} \cdot N \ge 2 \mathfrak{K} \cdot n' \ge 2 \mathfrak{N}' \ge \mathfrak{N}$), we obtain
		\begin{flalign*}
			& \Bigg| \mathbb{E} \bigg[ \displaystyle\sum_{i=n_1}^{n_2} H(\Lambda_i) \bigg] - \mathbb{E} \bigg[ \displaystyle\frac{1}{\pi} \displaystyle\int_{-\log N}^{\log N} H(E) \displaystyle\sum_{i=n_1'}^{n_2'} \Imaginary G_{ii} (E + \mathrm{i} \eta) dE \bigg] \Bigg| \le 7A(\log N)^5; \\
			& \Bigg| \mathbb{E} \bigg[ \displaystyle\frac{1}{\mathfrak{K}} \displaystyle\sum_{\nu \in \eig \bm{\mathfrak{L}}} H(\nu) \bigg] - \mathbb{E} \bigg[ \displaystyle\frac{1}{\mathfrak{K} \pi} \displaystyle\int_{-\log N}^{\log N} H(E) \displaystyle\sum_{i=\mathfrak{N}_1'}^{\mathfrak{N}_2'} \Imaginary \mathfrak{G}_{ii} (E + \mathrm{i} \eta) d E \bigg] \Bigg| \le 15 A (\log N)^5.
		\end{flalign*}
		
		\noindent Together with \eqref{h5}, this yields 
		\begin{flalign}
			\label{h6}
			\Bigg| \mathbb{E} \bigg[ \displaystyle\sum_{i=n_1}^{n_2} H(\Lambda_i) \bigg] - \mathbb{E} \bigg[ \displaystyle\frac{1}{\mathfrak{K}} \displaystyle\sum_{\nu \in \eig \bm{\mathfrak{L}}} H(\nu) \bigg] \Bigg| \le 23 A (\log N)^5.
		\end{flalign}
		
		\noindent Further multiplying the first bound in \eqref{hsumnu} by $\mathfrak{N} \mathfrak{K}^{-1} \le 2N$; using the bound
		\begin{flalign*} 
			\bigg| \displaystyle\frac{\mathfrak{N}}{\mathfrak{K}} - n \bigg| = n \cdot \bigg| \displaystyle\frac{\mathfrak{N}}{\mathfrak{N}'} \cdot \displaystyle\frac{n'}{n} - 1 \bigg| \le n \cdot \big| \big( ( 1 + 3 n^{-1} (\log N)^5 \big)^2 - 1 \big| \le 7 (\log N)^5
		\end{flalign*}
		
		\noindent as $n - 2(\log N)^5 \le n' \le n$ and $\mathfrak{N} - 2 (\log \mathfrak{N})^5 \le \mathfrak{N}' \le \mathfrak{N}$; using the fact that $| H(\lambda) | \le A$ for all $\lambda \in \mathbb{R}$; and taking expectations further yields 
		\begin{flalign*} 
			\Bigg| \mathbb{E} \bigg[ \displaystyle\frac{1}{\mathfrak{K}} \displaystyle\sum_{\nu \in \eig \bm{\mathfrak{L}}} H(\nu) \bigg] - n \displaystyle\int_{-\infty}^{\infty} H(\lambda) \varrho(\lambda) d \lambda \Bigg| \le 7A(\log N)^5 + 2A + 2ANe^{-(\log N)^2} \le 8A(\log N)^5.
		\end{flalign*} 
		
		\noindent Together with \eqref{h6}, this gives 
		\begin{flalign*}
			\Bigg| \mathbb{E} \bigg[ \displaystyle\sum_{i = n_1}^{n_2} H(\Lambda_i) \bigg] - n \displaystyle\int_{-\infty}^{\infty} H(\lambda) \varrho(\lambda) d \lambda \Bigg| \le 31 A (\log N)^5,
		\end{flalign*}
		
		\noindent which yields the lemma.
	\end{proof}

\end{document}